\let\orig@chapter\@chapter
\def\@chapter[#1]#2{\ifnum \c@secnumdepth >\m@ne
	\if@mainmatter
	\refstepcounter{chapter}%
	\typeout{\@chapapp\space\thechapter.}%
	\addcontentsline{toc}{chapter}%
	{CHAPTER~\protect\numberline{\thechapter:}#1}%
	\else
	\addcontentsline{toc}{chapter}{#1}%
	\fi
	\else
	\addcontentsline{toc}{chapter}{#1}%
	\fi
	\chaptermark{#1}%
	\addtocontents{lof}{\protect\addvspace{10\p@}}%
	\addtocontents{lot}{\protect\addvspace{10\p@}}%
	\if@twocolumn
	\@topnewpage[\@makechapterhead{#2}]%
	\else
	\@makechapterhead{#2}%
	\@afterheading
	\fi}
\titleformat{\chapter}[hang] 
{\centering\normalfont\bfseries}{\MakeUppercase{\chaptertitlename}\ \thechapter:\ }{0em}{} 
\titleformat{\section}{\bfseries}{\thesection}{1em}{}
\titleformat{\subsection}{\bfseries}{\thesubsection}{1em}{}
\titlespacing{\chapter}{0pt}{0.62in}{-\parskip}
\titlespacing{\section}{0pt}{\parskip}{-\parskip}
\titlespacing{\subsection}{0pt}{\parskip}{-\parskip}
\titlespacing{\appendix}{0pt}{0in}{-\parskip}
\titlespacing{\paragraph}{0in}{0.08in}{0.07in}
\newcommand*{\noaddvspace}{\renewcommand*{\addvspace}[1]{}}
\newcommand{\sean}[1]{}
\renewcommand{\cite}{\citep}
\renewcommand{\contentsname}{\centering Contents}
\renewcommand{\mod}{\operatorname{mod}}
\newcommand{\kwfont}[1]{\textsf{#1}\xspace} 
\newcommand{\mkkw}[2]{
	\newcommand{#1}[0]{\kwfont{#2}}
}
\newcommand{\T}{\mathcal{T}}
\newcommand{\Or}{\mathcal{O}}
\newcommand{\A}{\mathcal{A}}
\newcommand{\B}{\mathcal{B}}
\newcommand{\V}{\mathcal{V}}
\newcommand{\F}{\mathcal{F}}
\renewcommand{\S}{\mathcal{S}}
\newcommand{\tr}[1]{ \begin{list}{}{\setlength{\leftmargin}{#1em}} \item}
	\newcommand{\tl}{ \end{list}}
\newcommand{\incompGraphLambda}{$G(A^{i,\lambda}_l,B^{i,\lambda}_l)$}
\newcommand{\mywhile}{\textrm{\bf{while}}}
\newcommand{\myif}{\textrm{\bf{if}}}
\newcommand{\myargmin}[2]{ 
	\begin{array}{c l}
		\textrm{argmin} & #2 \\
		#1
	\end{array}
}
\newcommand{\mylim}[2]{
	\begin{array}{l l} 
		\begin{array}{c}
			\textrm{lim}\\
			#1
		\end{array} 
		&  #2
	\end{array}
}
\DeclareMathOperator{\argmin}{argmin}
\providecommand{\abs}[1]{\lvert#1\rvert}
\providecommand{\norm}[1]{\lVert#1\rVert}
\newcommand{\labitem}[2]{%
	\def\@itemlabel{\textbf{#1}}
	\item
	\def\@currentlabel{#1}\label{#2}}
\newcommand{{\convd}}{{\ \buildrel d \over \longrightarrow \ }}
\mkkw{\cfs}{CFS}
\mkkw{\edf}{EDF}
\mkkw{\edfwm}{EDF-WM}
\mkkw{\fp}{FP}
\mkkw{\fprm}{RM}
\mkkw{\fpdm}{DM}
\mkkw{\gedf}{G-EDF}
\mkkw{\gsnedf}{GSN-EDF}
\mkkw{\gfp}{G-FP}
\mkkw{\pedf}{P-EDF}
\mkkw{\pfp}{P-FP}
\mkkw{\cedf}{C-EDF}
\mkkw{\pssched}{PS}
\mkkw{\pfsched}{PF}
\mkkw{\pd}{PD}
\mkkw{\pds}{PD$^2$}
\mkkw{\cpds}{C-PD$^2$}
\mkkw{\jlfp}{JLFP}
\mkkw{\jldp}{JLDP}
\mkkw{\pfpgi}{P-FP-Rm}
\mkkw{\pfpdi}{P-FP-R1}
\mkkw{\pedfgi}{P-EDF-Rm}
\mkkw{\pedfdi}{P-EDF-R1}
\mkkw{\cedfiigi}{C2-EDF-Rm}
\mkkw{\cedfiidi}{C2-EDF-R1}
\mkkw{\cedfiiigi}{C6-EDF-Rm}
\mkkw{\cedfiiidi}{C6-EDF-R1}
\mkkw{\gedfgi}{G-EDF-Rm}
\mkkw{\gedfdi}{G-EDF-R1}
\mkkw{\cedfalldi}{C24-EDF-R1}
\mkkw{\capdsiigi}{C2-aPD$^2$-Rm}
\mkkw{\capdsiidi}{C2-aPD$^2$-R1}
\mkkw{\capdsiiigi}{C6-aPD$^2$-Rm}
\mkkw{\capdsiiidi}{C6-aPD$^2$-R1}
\mkkw{\gapdsgi}{G-aPD$^2$-Rm}
\mkkw{\gapdsdi}{G-aPD$^2$-R1}
\mkkw{\cspdsiigi}{C2-sPD$^2$-Rm}
\mkkw{\cspdsiidi}{C2-sPD$^2$-R1}
\mkkw{\cspdsiiigi}{C6-sPD$^2$-Rm}
\mkkw{\cspdsiiidi}{C6-sPD$^2$-R1}
\mkkw{\gspdsgi}{G-sPD$^2$-Rm}
\mkkw{\gspdsdi}{G-sPD$^2$-R1}
\mkkw{\schedfifo}{SCHED\_FIFO}
\mkkw{\schedrr}{SCHED\_RR}
\mkkw{\schedother}{SCHED\_OTHER}
\mkkw{\schedspor}{SCHED\_SPORADIC}
\mkkw{\prioprot}{PRIO\_PROTECT}
\mkkw{\scheddl}{SCHED\_DEADLINE}
\mkkw{\npcs}{NCP}
\mkkw{\srp}{SRP}
\mkkw{\pcp}{PCP}
\mkkw{\msrp}{MSRP}
\mkkw{\dpcp}{DPCP}
\mkkw{\mpcp}{MPCP}
\mkkw{\mpcpvs}{MPCP-VS}
\mkkw{\fmlp}{FMLP}
\mkkw{\fmlpp}{FMLP$^{\mathrm{+}}$}
\mkkw{\npfmlpp}{NP-FMLP$^{\mathrm{+}}$}
\mkkw{\omlp}{OMLP}
\mkkw{\pip}{PIP}
\mkkw{\pft}{PF-T}   
\mkkw{\pfc}{PF-C}   
\mkkw{\pfq}{PF-Q}  
\mkkw{\rwlin}{LX-RW}
\mkkw{\tft}{TF-T}
\mkkw{\tfq}{TF-Q}
\mkkw{\mtxt}{MX-T}
\mkkw{\mtxq}{MX-Q}
\long\def\symbolfootnote[#1]#2{\begingroup%
	\def\thefootnote{\fnsymbol{footnote}}\footnote[#1]{#2}\endgroup}
\theoremstyle{plain}
\theoremstyle{plain}
\newtheorem{thm}{Theorem}[section]
\newtheorem{cor}[thm]{Corollary}
\newtheorem{lem}[thm]{Lemma}
\newtheorem{prop}[thm]{Proposition}
\newtheorem{rem}{Remark}
\theoremstyle{definition}
\newtheorem{defn}[thm]{Definition}
\newcommand\rightparend[1]{{%
		\unskip\nobreak\hfil\penalty50
		\hskip2em\hbox{}\nobreak\hfil\textbf{#1}%
		\parfillskip=0pt \finalhyphendemerits=0 \par}}
\newtheorem{xxexample}{Example}[chapter]
\begin{document}


\newgeometry{left=1.25in,top=2in,right=1.25in,bottom=1in,nohead}
\pagenumbering{roman}

\begin{titlepage}
	\begin{singlespace}
		\begin{center}
			\vspace{2in}
			TREE ORIENTED DATA ANALYSIS
			
			\vspace{1in}
			
			Sean Skwerer
			
			\vspace{1in}
			
			\noindent A dissertation submitted to the faculty at the University of North Carolina at Chapel Hill in partial fulfillment of the requirements for the degree of Doctor of Philosophy in
			the Department of Statistics and Operations Research.
			
			\vspace{1in}
			
			Chapel Hill\\
			2014
		\end{center}
	\end{singlespace}
	
	\vspace{.5in}
	
	\begin{flushright}
		\begin{minipage}[t]{1.5in}
			Approved by:\\
			Scott Provan \\
			J.S. Marron \\
			Ezra Miller \\
			Gabor Pataki \\
			Shu Lu 
		\end{minipage}
	\end{flushright}
	
\end{titlepage}

\newgeometry{left=1.25in,top=8.33in,right=1.25in,bottom=1in,nohead}

\begin{center}
	\begin{singlespace}
		\copyright 2014\\
		Sean Skwerer \\
		ALL RIGHTS RESERVED
	\end{singlespace}
\end{center}

\clearpage

\newgeometry{left=1.25in,top=2in,right=1.25in,bottom=1in,nohead}

\begin{center}
	\vspace{2in}
	\textbf{ABSTRACT}
	
	\begin{singlespace}
		Sean Skwerer: Tree Oriented Data Analysis \\
		(Under the direction of J. S. Marron and Scott Provan)
	\end{singlespace}
\end{center}

Complex data objects arise in many areas of modern science including evolutionary biology, nueroscience, dynamics of gene expression and medical imaging. Object oriented data analysis (OODA) is the statistical analysis of datasets of complex objects.  
Data analysis of tree data objects is an exciting research area with interesting questions and challenging problems. 
This thesis focuses on 
tree oriented
statistical methodologies, and algorithms for solving related mathematical
optimization problems. 

This research is motivated by the goal of analyzing a data set of 
images of human brain arteries.
The approach we take here is to use a novel representation of
brain artery systems as points in phylogenetic treespace. 
The treespace property of unique global geodesics leads
to a notion of geometric center called a Fr\'{e}chet mean.
For a sample of data points, the Fr\'{e}chet function is the sum of squared distances
from a point to the data points, and the Fr\'{e}chet mean is
the minimizer of the Fr\'{e}chet function. 

In this thesis we use properties of the Fr\'{e}chet function to develop an algorithmic system for computing Fr\'{e}chet means. 
Properties of the Fr\'{e}chet function are also used to
show a sticky law of large numbers which
describes a surprising stability of the topological tree structure of sample Fr\'{e}chet means at that of the population Fr\'{e}chet mean.
We also introduce non-parametric regression of brain artery tree structure as a response variable to age based on weighted Fr\'{e}chet means.

\clearpage

\restoregeometry

\begin{center}
	\vspace*{52pt}
	For Robert and Laurie, my beloved and caring parents.
\end{center}

\pagebreak

\begin{center}
	\vspace*{52pt}
	\textbf{PREFACE}
\end{center}

\indent This thesis is based on collaborative work with my advisors J. S. Marron and Scott Provan in the domain of tree oriented data analysis.
Chapter 1, which provides an overview of tree oriented data analysis and background material, is the original work of the student.

Chapter 2, which describes the treespace Fr\'{e}chet mean optimization problem, and related theory and methods, is the original work of the student. Section 2.1 provides an overview and discussion of the problem, and these perspectives are an original contribution by the student.
Section 2.2 and section 2.3 focus on summarizing existing theory and methodology. 
Sections 2.4 and 2.5 contain novel unpublished research by the student.

Chapter 3 summarizes data analytic results.
This analysis was conducted by the student, with one exception, the result attributed to Megan Owen in Section 3.1. 
The discussion of Fr\'{e}chet mean degeneracy in this chapter  is an extension of research from a paper written by this student in collaboration with a number of authors (see reference) \cite{Skwerer2014}. 

Chapter 4 contains novel theoretical analysis by the student. Section 4.1 contains background material. Section 4.2 provides a summary of related research that the student contributed to as part of a collaborative research group. Section 4.3 contains an original presentation of basic definitions and novel unpublished results by the student. Section 4.4 contains the main result of the chapter which is unpublished research by the student.

Chapter 5 contains a novel method for regression of tree data objects against a Euclidean response variable. This method was described by  J. S. Marron and implemented by the student. The data analysis was conducted by the student. This is unpublished research.
\clearpage



\renewcommand{\contentsname}{\centerline{TABLE OF CONTENTS}}
\renewcommand{\cfttoctitlefont}{\hfill\bfseries}
\renewcommand{\cftdotsep}{1.5}

\cftsetpnumwidth{0.5em}
\cftsetrmarg{0.5em}

\setlength{\cftbeforetoctitleskip}{61pt}
\setlength{\cftaftertoctitleskip}{1em}

\renewcommand{\cftchapfont}{\normalfont}
\renewcommand{\cftchappagefont}{\normalfont}
\renewcommand{\cftchapleader}{\cftdotfill{\cftdotsep}}

\setlength{\cftbeforesecskip}{10pt}
\setlength{\cftbeforesubsecskip}{10pt}
\setlength{\cftbeforesubsubsecskip}{10pt}

\titlespacing{\chapter}{0pt}{1in}{-\parskip}

\begin{singlespace}
	\tableofcontents
\end{singlespace}

\clearpage


\renewcommand{\listtablename}{LIST OF TABLES}
\phantomsection
\addcontentsline{toc}{chapter}{LIST OF TABLES}

\setlength{\cftbeforelottitleskip}{-11pt}
\setlength{\cftafterlottitleskip}{1em}
\renewcommand{\cftlottitlefont}{\hfill\bfseries}
\renewcommand{\cftafterlottitle}{\hfill}

\setlength{\cftbeforetabskip}{10pt}

\addtocontents{lot}{\protect\noaddvspace}

\titlespacing{\chapter}{0pt}{0in}{-\parskip}

\cftsetpnumwidth{1em}
\cftsetrmarg{2em}

\begin{singlespace}
	\listoftables
\end{singlespace}

\clearpage


\renewcommand{\listfigurename}{LIST OF FIGURES}
\phantomsection
\addcontentsline{toc}{chapter}{LIST OF FIGURES}

\addtocontents{lof}{\protect\noaddvspace}

\setlength{\cftafterloftitleskip}{1em}
\renewcommand{\cftloftitlefont}{\hfill\bfseries}
\renewcommand{\cftafterloftitle}{\hfill}

\setlength{\cftbeforefigskip}{10pt}
\cftsetrmarg{1.0in}

\setlength{\cftbeforeloftitleskip}{-2in}

\begin{singlespace}
	\listoffigures
\end{singlespace}
\clearpage

\phantomsection
\addcontentsline{toc}{chapter}{LIST OF ABBREVIATIONS AND SYMBOLS}

\begin{center}
	\textbf{LIST OF ABBREVIATIONS AND SYMBOLS}
\end{center}

\newcommand{\Ab}[2]{\noindent  #1 \> #2 \\}
\newcommand{\Abi}[2]{\noindent #1 \hspace{1.5cm} \= #2 \\}

\begin{tabbing}
	\Abi{BHV}{Billera, Holmes, and Vogtman}%
	\Abi{LLN}{Law of Large Numbers}%
	\Abi{OODA}{Object Oriented Data Analysis}%
	\Abi{SLLN}{Strong Law of Large Numbers}%
\end{tabbing}

\clearpage


\pagenumbering{arabic}


\titlespacing{\chapter}{0pt}{.75in}{-\parskip}

\chapter{INTRODUCTION}
\label{ch:intro}
\section{Dissertation Overview}

Complex data objects arise in many areas of modern science including evolutionary biology, longitudinal studies, dynamics of gene expression and medical imaging. Object oriented data analysis (OODA) is the statistical analysis of datasets of complex objects. The \emph{atoms of a statistical analysis} are traditionally a number or a vector of numbers. In functional data analysis the atoms of interest are curves; for excellent treatment of functional data analysis see \cite{Ramsay2002,Ramsay2005}. OODA progresses from functions to more complex objects such as images, two-dimensional or three-dimensional shapes, and combinatorial structures such as graphs or trees. 

Data analysis of tree data objects, or \emph{tree oriented data analysis},  is an exciting research area with interesting questions and challenging problems. 
This thesis focuses on 
tree oriented
statistical methodologies, and algorithms for solving related mathematical
optimization problems. 
The mathematical focus
of this thesis is driven by the goal of analyzing a data set of 
images of human brain arteries collected by the CASILab at UNC-CH \cite{Handle}. From this perspective, this thesis
is aimed at making contributions to morphology, the study of the form and 
structure of organisms. 

In this thesis, trees are primarily modeled as points in \emph{Billera, Holmes, Vogtman treespaces}.
The original motivation for creating BHV treespaces was
to create a firm mathematical basis for statistical
inference of evolutionary histories by
developing a geometric space of phylogenetic trees. 
In phylogenetics, trees can be used as abstract representations 
of evolutionary histories. In such an abstraction the root
represents some common ancestor, the leaves represent
species, branches represent speciation events, and length
represents passage of time.
A \emph{phylogenetic tree}, (i) is a tree, (ii) has a positive length associated with
each of its edges, and (iii) has leaves which are in bijection with an
index set $\{0,1,\ldots,r\}$, which corresponds to a list of species. 
A BHV treespace is a geometric space of all phylogenetic trees with leaves
in bijection with the same index set i.e. $\{0,1,\ldots,r\}$.
For a more detailed description of phylogenetic trees
and BHV treespaces see Sec. \ref{sec:PhylogeneticTrees}.

A central question of tree oriented data analysis is ``what are appropriate notions of mean and variance for a set of trees?" 
Typically, the mean of a dataset is specified as the sum of the observations divided by the number of observations. The mean of real numbers could also be specified as the solution to an optimization problem. An arithmetic mean is the real number that 
minimizes the sum of squared distances to data points.
A more general notion of mean for metric spaces, called a \emph{Fr\'{e}chet mean} (a.k.a, \emph{barycenter} or \emph{center of mass}), is a point that minimizes the sum of squared distances to the data points. The Fr\'{e}chet mean is equivalent
to the arithmetic mean in the case when data points are vectors. 
BHV treespaces have nice properties for statistics,
including the existence of a unique shortest path between
every  pair of points, and the existence of a unique
Fr\'{e}chet mean for a set of points.
The focus of Ch. \ref{ch:FMmethods}
is mathematical theory and methods for solving
Fr\'{e}chet mean optimization problems defined for data sets
on BHV treespace. 

Prior to the research for this thesis, 
tube tracking algorithms 
were applied to a brain angiography dataset 
from the CASILab to 
reconstruct 3D models for the brain artery systems i.e.
tubular 3D trees \cite{Bullitt2002}. 
The main research advance for representing these trees
as points in a BHV treespace was finding a morphologically interpretable
fixed index set. 
The index set used in this research 
was determined by a technology in neuroimage analysis, called group-wise
landmark based shape correspondence, which optimally places landmarks
on the cortical surface of each member in a sample. This algorithm simultaneously
optimizes a term which spreads landmarks out in each subject and a term which forces landmarks
to similar positions for all subjects.
More details about this representation of artery trees as points in a BHV treespace are explained in Sec. \ref{sec:MapBrainArteryData}.

Results from analysis of this cortical landmark and brain artery data using Fr\'{e}chet means in BHV treespaces are presented in Ch. \ref{ch:AnalysisAngiographyData}. In summary, these results show there is little similarity in
the topological connections of brain arteries
from the base of the brain to points where arteries are nearest to cortical landmarks, with the level of resolution available. 

Fr\'{e}chet means in BHV treespaces exhibit an unusual stability
property which is known as \emph{stickiness}.
Contrasting the typical behavior of sample means, where typically small changes in the data result
in small changes in the sample mean, stickiness refers to the phenomenon of 
the sample mean sticking in place regardless of small changes in the dataset. 
Stickiness examples and a rigorous definition for stickiness are presented in Ch. \ref{ch:Stickiness}.
The observation of this property was attributed to Seth Sullivant, and studied further by the
the SAMSI working group for data sample on manifold stratified spaces, e.g. data sampled from BHV treespaces, during the 2012 Object Oriented Data Analysis program at SAMSI.
In this thesis, a new contribution to stickiness research is made, we characterize
the limiting behavior of Fr\'{e}chet sample means on BHV treespaces
as obeying a \emph{sticky law of large numbers}. This is the main result in Ch. \ref{ch:Stickiness}.

Kernel smoothing is a flexible method for studying the relationships between variables. It is used in estimating probability densities and in regression. 
In Ch. \ref{ch:TreespaceKernelSmoothing}, we present a novel
method for kernel smoothing regression of tree-valued response against a real-valued predictive variable. This method is applied to
the brain artery systems from the CASILab which will first be introducted in Sec. \ref{sec:MapBrainArteryData}.

The rest of this chapter is about phylogenetic trees and mapping brain artery trees to phylogenetic treespace. 

\section{Phylogenetic trees and BHV treespaces}\label{sec:PhylogeneticTrees}

\subsection{Graphs and trees}
A \emph{graph} is a set of points, called \emph{vertices} (use \emph{vertex} for a single point), and a set of lines connecting pairs of vertices, called \emph{edges}.  
A \emph{tree} is a connected graph which has no cycles of edges and vertices.
The \emph{degree} of a vertex is the number of edges connected to it. 
The vertices of a tree with degree one are called leaves, and the edges connected to them are called \emph{pendants}. 
Non-leaf vertices are called \emph{interior vertices}.
Edges which are not connected to leaves are called \emph{interior edges}.
An \emph{edge weighted tree} is a tree $T$ together with a positive length $|e|_T$ associated with every edge $e \in T$.
\emph{Contracting} an edge means its length shrinks to zero thereby identifying its two endpoints to form a single vertex.
A tree topology $T'$ that is created by contracting some edges in a tree $T$ is called a \emph{contraction} of $T$.
A \emph{star tree} is a tree with only pendant edges.

\subsection{Phylogenetic trees}

Evolutionary histories or hierarchical relationships are often represented graphically as phylogenetic trees.  In biology, the evolutionary history of species or operational taxonomic units (OTU's) is represented by a tree. Figure \ref{HaeckelTree} is a very early graphical depiction of a phylogenetic tree from \cite{Haeckel}.

\begin{figure}[H]
	\centering
	\includegraphics[width=1 \textwidth]{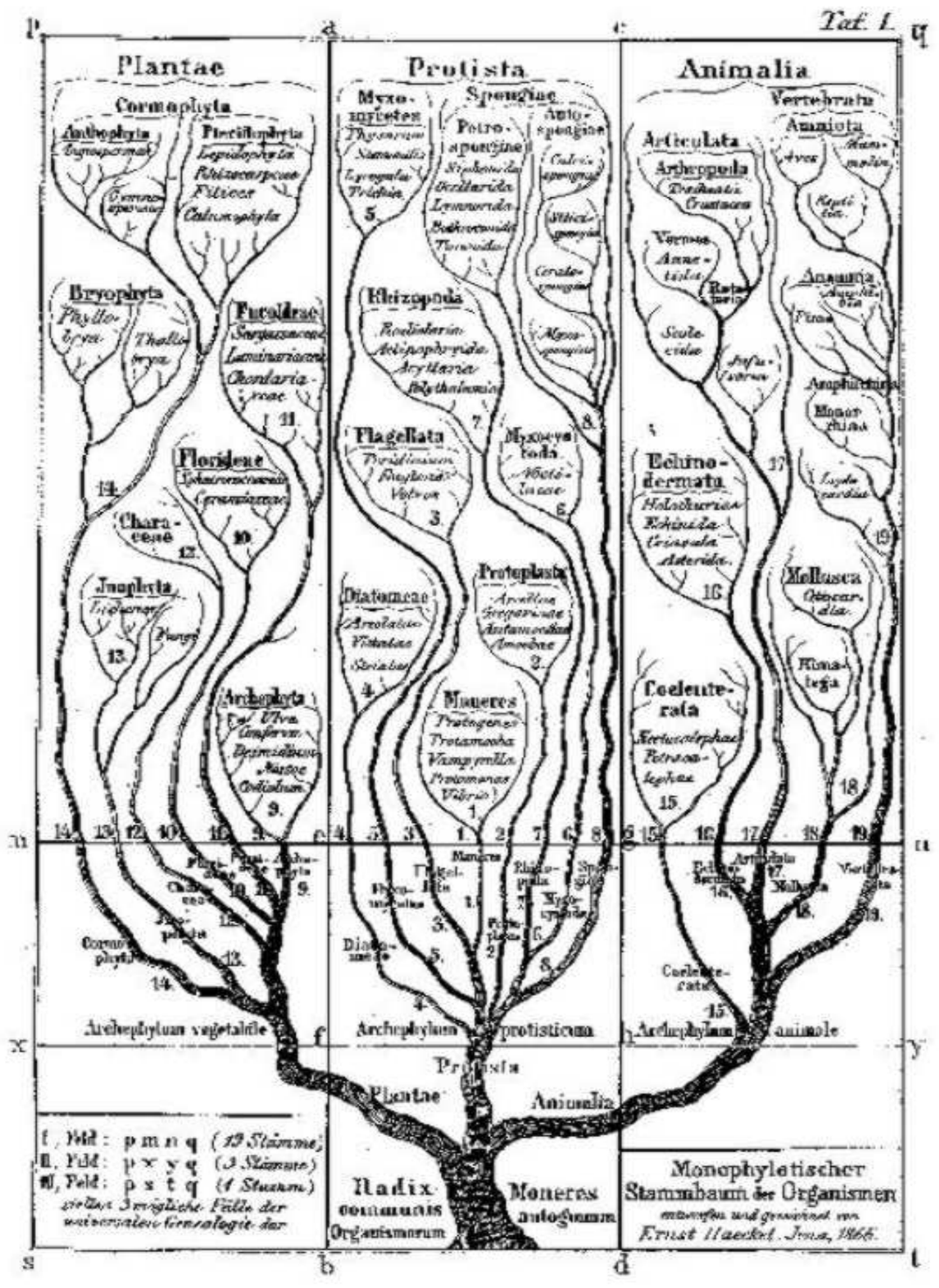}
	\caption{Phylogenetic tree depicting divergence of plants, protozoa, and animal kingdoms, Haeckel (1866).}
	\label{HaeckelTree}
\end{figure}

The root of the tree corresponds to a common ancestor. Branches indicate speciation of a nearest common ancestor into two or more distinct taxa. The leaves of the tree correspond to the present species whose history is depicted by the tree.

A \emph{labeled tree} is a tree $T$ with $r+1$ leaves distinctly labeled using the index set $I =\{0,1,\ldots,r\}$.
A \emph{phylogenetic tree} is a labeled edge-weighted tree.
The set of edges for a tree $T$ is written $E_T$.
Edge $e$ in $T$ is associated with a \emph{split}, $X_e\cup \bar{X}_e$ in $T$. This is a partition of $I$ into two disjoint sets of labels, $X_e$ and $\bar{X}_e$ on the two components of $T$ that result from deleting $e$ from $T$, with $X_e$ containing the index $0$.
The \emph{topology of a phylogenetic tree} is the underlying graph and pendant labels separated from the edge lengths.
The topology of a phylogenetic tree is uniquely represented by the set of splits associated with its edges.
Formally, two splits $X_e \cup \bar{X}_e$ and $X_f\cup \bar{X}_f$ 
are \emph{compatible} if and only if $X_e \subset X_f$ and $\bar{X}_f \subset \bar{X}_e$, or $X_f \subset X_e$ and $\bar{X}_e \subset \bar{X}_f$. 
Compatibility can be interpreted in terms of subtrees: the subtree with leaves in bijection with $\bar{X}_e$
contains the subtree with leaves in bijection with $\bar{X}_f$, or vice versa.
If every pair of splits in a set of splits is compatible then that set is said to be a compatible set.
Each distinct set of compatible splits is equivalent to a unique phylogenetic tree topology. A maximal tree topology is one in which no additional interior edges can be introduced i.e. $|E_T|=2r-1$, or equivalently every interior vertex has degree 3.

\subsection{Construction of BHV Treespaces}\label{sec:TreeSpace}
A BHV treespace, $\T_r$ is a geometric space in which each point represents a phylogenetic tree having leaves in bijection with a fixed label set $\{0,1,2,\ldots,r\}$.

A \emph{non-negative orthant} is a copy of the subset of $n$-dimensional Euclidean space defined by making each coordinate non-negative, $\mathbb{R}^{n}_{\geq 0}$. Here, only non-negative orthants are used, so we use orthant to mean non-negative orthant.
An \emph{open orthant}
is the set of positive points in an orthant.
Phylogenetic treespace is a union of many orthants, each corresponding to a distinct tree topology,
wherein the coordinates of a point are interpreted as the lengths of edges.
For a given set of compatible edges $E$, the associated orthant is denoted $\Or(E)$, 
and for a given tree $T$, the orthant in treespace containing
that point is denoted $\Or(T)$.
Trees in $\mathcal{T}_r$ have at most $r-2$ interior edges.
Each orthant of dimension $r-2$ corresponds to a combination of $r-2$ compatible edges. 
Orthants are glued together along common faces.
The shared faces of facets with $k$ positive coordinates are called the $k$-dimensional faces of 
treespace.

Take $\T_4$ as an example.
There are fifteen possible pairs of compatible edges.
Each compatible pair is associated with a copy of $\mathbb{R}^2_{\geq 0}$,
one axis of the orthant for each edge in the pair.  
The fifteen orthants are glued together along common axes.
Views of the two main features of $\T_4$ are displayed in Figure \ref{T_4_parts}.
See Figure \ref{T_4_link} for a visualization of the split-split compatibility graph of $\T_4$ \footnote{An interesting fact is that the compatibility graph of $\T_4$ is a Peterson graph. }.

\begin{figure}[H]
	\centering
	\subfigure[An open half book with three pages.  In this diagram the \emph{pages} of the book are three copies of $\mathbb{R}_{\geq 0}^2$ and the \emph{spine} is a copy of $\mathbb{R}_{\geq 0}$. The spine is labeled with the split $\{0,1\}|\{2,3,4\}$. Each page has the spine as one axis and the other axis is labeled with a split compatible with $\{0,1\}|\{2,3,4\}$. In $\T_4$ every one of the ten splits of $\{0,1,2,3,4\}$ is the label for the spine of the open half book.]{\includegraphics[width=0.8\textwidth]{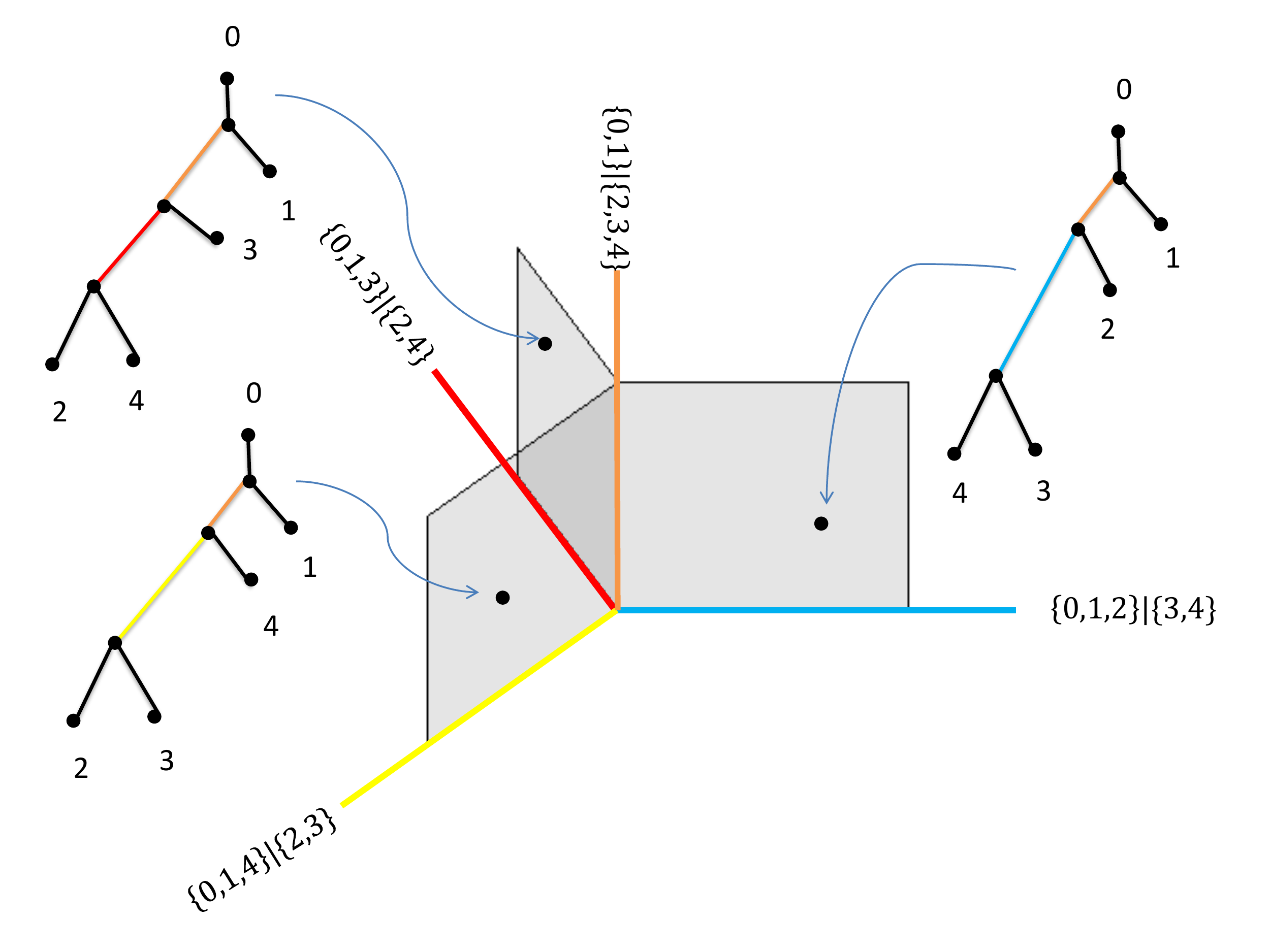}\label{open_book}}
	\subfigure[A five-cycle. A \emph{five-cycle} is five copies of $\mathbb{R}^2_{\geq 0}$ glued together along commonly labeled axes and at their origins.]{\includegraphics[width=0.8\textwidth]{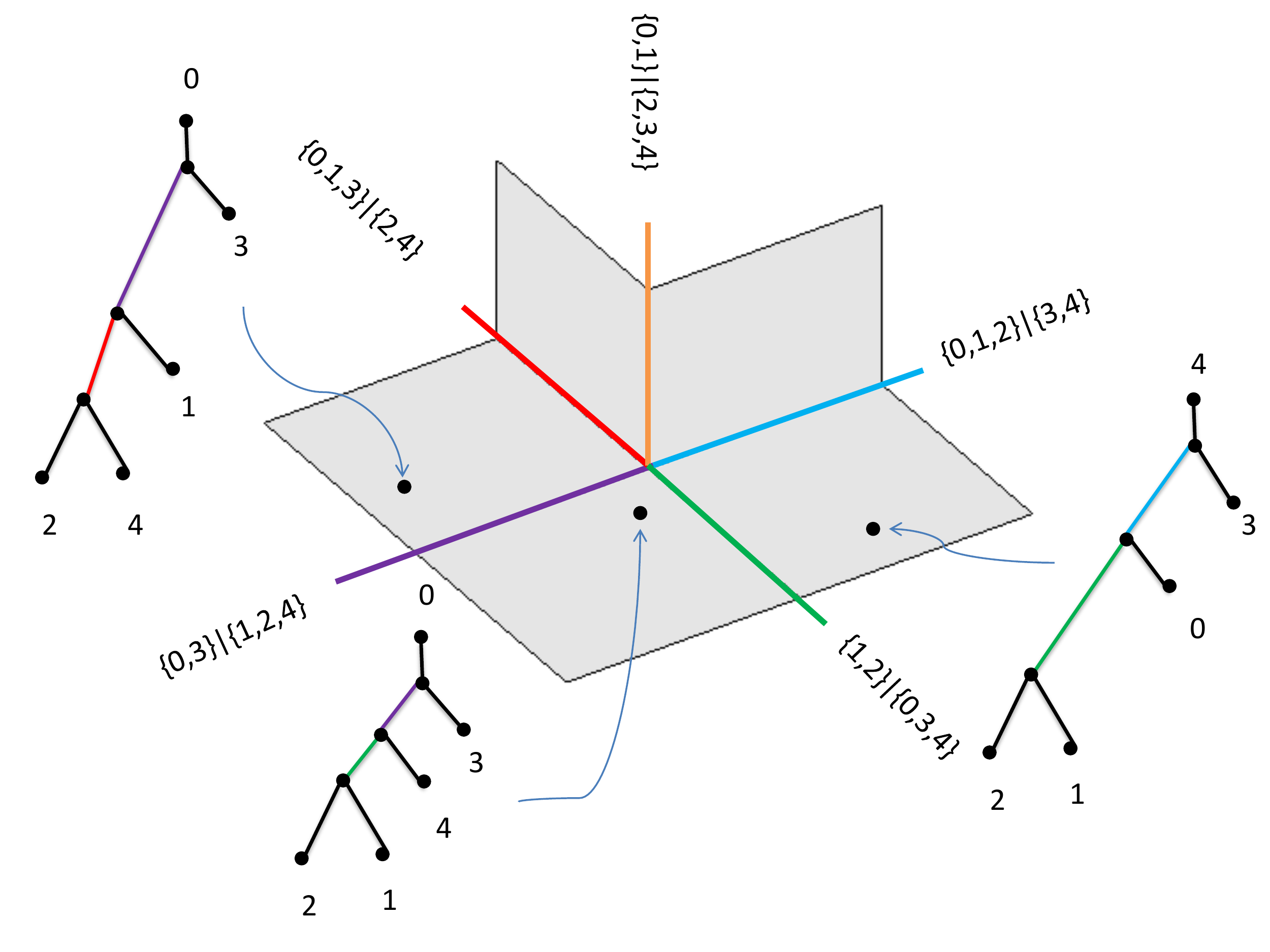}\label{five_cycle}}
	
	\caption{Spaces called a half-open book with three pages, depicted in (a) and a five cycle of quadrants, depicted in (b), appear in $\T_4$ repeatedly.
		For a global view of the structure of $\T_4$ see Fig. \ref{T_4_link}.}
	\label{T_4_parts}
\end{figure}

Each clique in the split-split compatibility graph represents a compatible combination of splits, or equivalently the topology of a phylogenetic tree. A graph is \emph{complete} if there is an edge between every pair of vertices. In a graph, a \emph{clique} is a complete subgraph. Each full phylogenetic tree is a maximal clique in the split-split compatibility graph because a clique represents a set of mutually compatible splits. The split-split compatibility graph of $\T_4$ has fifteen maximal cliques, each of which is represented an edge in the graph. The split-split compatibility graph of $\T_4$ determines how the orthants of $\T_4$ are glued.

\begin{figure}[H]
	\centering
	\includegraphics[width = 0.8\textwidth]{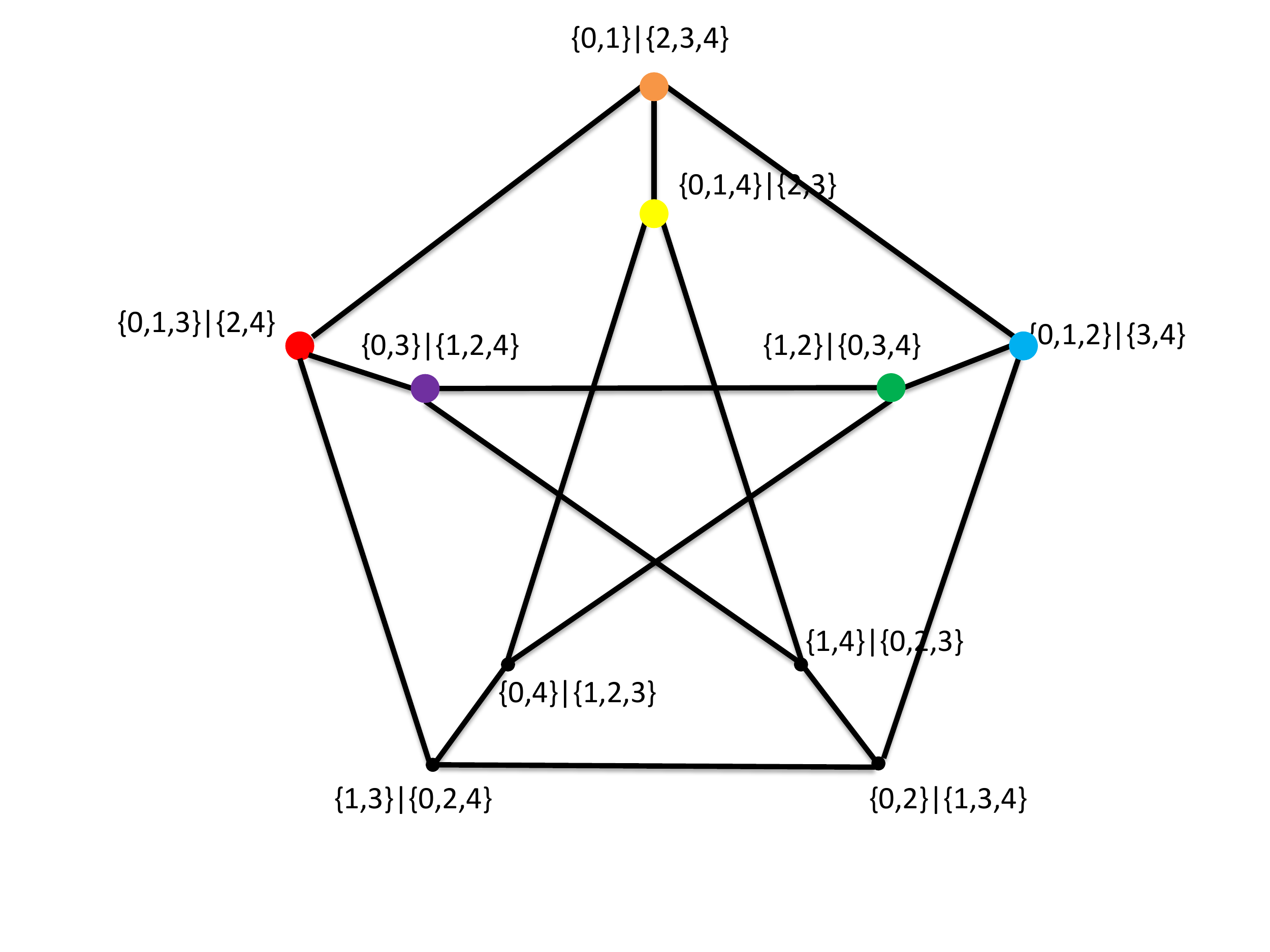}
	\caption{Split-split compatibility graph of ${\mathcal T}_4$. Each split has a node. Two splits are compatible if they are joined by an arc. This shows the overall connectivity of $T_4$, all possible splits for $\{0,1,2,3,4\}$, and all possible topologies for 4-trees. Each vertex and the three edges emanating from it in the graph corresponds to a copy of an open book like in Figure \ref{open_book}. Each five-cycle in the graph is a copy of a five-cycle depicted in Figure \ref{five_cycle}}
	\label{T_4_link}
\end{figure}

\section{Brain artery data}\label{sec:MapBrainArteryData}
The brain artery trees used in this study were reconstructed from a data set of Magenetic Resonance (MR) brain images collected by the CASILab at the University of North Carolina at Chapel Hill. This data set is publicly available and hosted at the MIDAG website \cite{Handle}. The database has images for various magnetic resonance modalities, including T1, T2, Magenetic Resonance Angiography (MRA), and Diffusion Tensor Imaging (DTI). The study enrolled 109 apparently healthy subjects. Each image is tagged with subject features of age, sex, handedness and self-identified race. The MRA was aquired at 0.5 x 0.5 x 0.8 mm$^3$ accuracy. 

Arteries branch out mostly as a tree from the heart and deliver blood to the entire body. In particular arteries transport oxygen and nutrient-rich blood to the brain. Magnetic Resonance Angiography (MRA) is a technique in medical imaging to visualize arteries. MRA uses the fact that blood flowing in the arteries has a distinct magnetic signature. Full 3D image acquisition is achieved by combining cross sectional 2D images. See Figures \ref{fig:MRA} and \ref{fig:MRArecon}  below for
an MRA slice and an artery reconstruction for the same
subject. A limiting factor is that MRA has a resolution threshold and consequently there are arteries that are too small for detection. MRA detects only arteries which feed blood rich with oxygen and nutrients to the body, and not veins which carry depleted blood back to the heart.

\begin{figure}[h]
	\centering
	\includegraphics[width=0.25\textwidth]{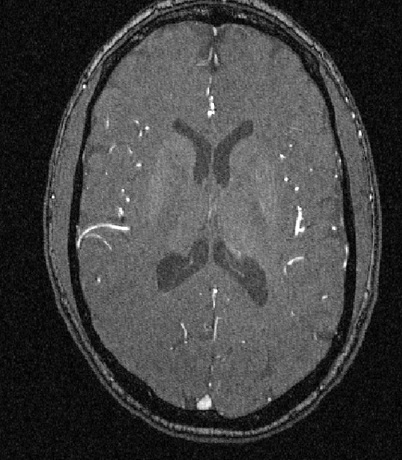} 
	\caption{One slice of a Magnetic Resonance Angiography (MRA) image. Bright regions indicate blood flow. Bright regions are tracked through MRA slices to recover arteries tubes as shown in Figure \ref{fig:MRArecon}.}
	\label{fig:MRA}
\end{figure}
\begin{figure}[h]
	\centering
	\includegraphics[width=0.45\textwidth]{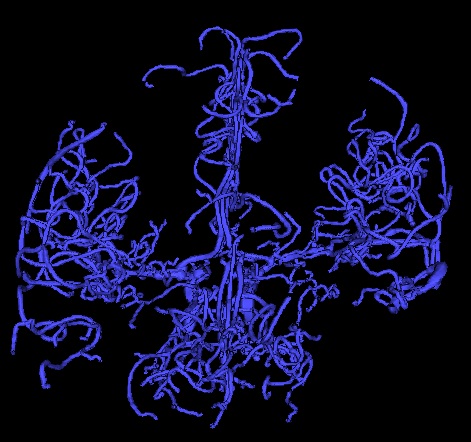} 
	\caption{The data object is a reconstructed brain artery tree consisting of four subtrees, left, right, anterior, and posterior. The goal of this research is statistical analysis of a sample of such data objects.}
	\label{fig:MRArecon}
\end{figure}

The arteries visible in MRA are generally naturally described as a tree. In most regions of the body and at the level of resolution possible, arteries branch like a tree without any loops.
A major part of this research has been opening up the possibility of using the space of phylogenetic trees as a mathematical basis for developing statistical methods for the study of artery trees. 
Phylogenetic trees have a common leaf set. However artery trees do not.
A common leaf set is artificially introduced by determining points on the cortical surface that
correspond across different people.  The next section describes the details of representing brain artery systems as points in BHV treespace.

\section{Mapping Brain Artery Systems to BHV treespace}

Figure \ref{fig:tree_landmarks} gives a detailed view of artery centerlines. 
Each tree consists of branch segments, and each branch segment consists of a sequence of spheres fit to the bright regions in the MRA image. 
The sphere centers are 3D points on the center line of the artery, and the radius approximates the arterial thickness.
A method for visualizing the structure of large trees was used to detect any remaining discrepancies \cite{AydinVis}.

\subsection{Correspondence}\label{subsec:correspondence}

In addition to the brain artery trees, the data set also includes reconstructions of the cortical surface.
A \emph{cortical correspondence} is made by determining points on the cortical surface that correspond across different people.
A group-wise shape correspondence algorithm based on spatial locations is used to place \emph{landmarks} on the cortical surface \cite{oguz2008cortical}. 
Each landmark is located in a corresponding spot on the cortical surface in every subject. 
In this study we use sixty-four landmarks for the right hemisphere and sixty-four landmarks for the left hemisphere (see Figure \ref{fig:tree_landmarks}).
The landmarks are combined with the artery tree by the following procedure. 
For each landmark, find the closest point on the tree of artery centerlines, called the landmark projection (see Figure \ref{fig:attach_lm}).
Each landmark and the line segment to its closest point become part of the tree. 
The tree is extracted by tracing the parts of the tree that are between the root and the landmarks.
Any parts of the artery tree that are not between a landmark projection and the base are trimmed (see Figure \ref{fig:subtree}). 
The base of the tree, called the \emph{root}, and the 128 landmarks, add up to a total of 129 common leaves. 
Once the tree has been extracted each edge is associated with a positive length.
The weight for each interior edge is the arc-length for the centerline of the artery tube.
The pendant for each landmark has length equal to the projection distance plus the
artery tube length from the projection point to the nearest artery branch. The pendant
length for the root of the tree is zero.

\begin{figure}[H]
	\centering
	\subfigure[]{\includegraphics[width = 0.4\textwidth]{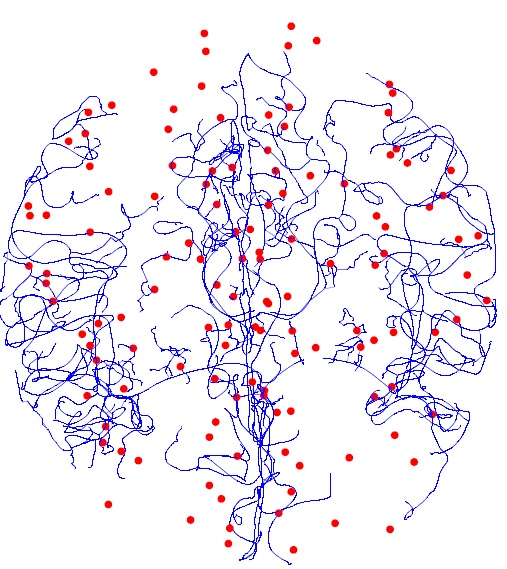}\label{fig:tree_landmarks}}
	\subfigure[]{\includegraphics[width = 0.4\textwidth]{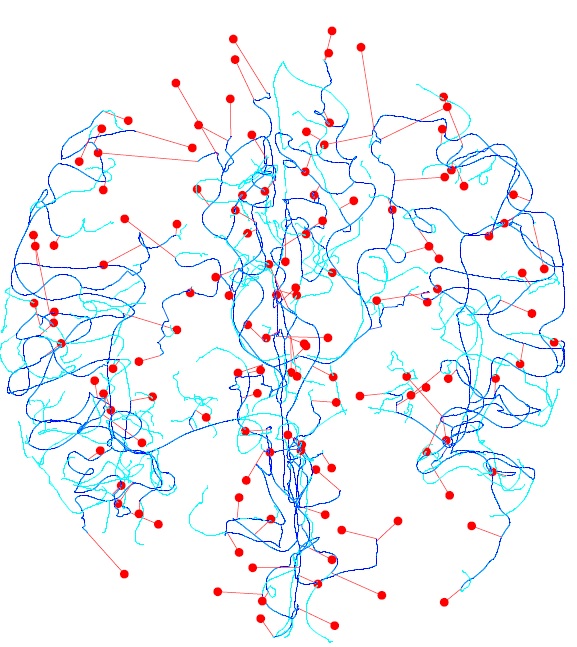}\label{fig:attach_lm}}
	\subfigure[]{\includegraphics[width=0.4\textwidth]{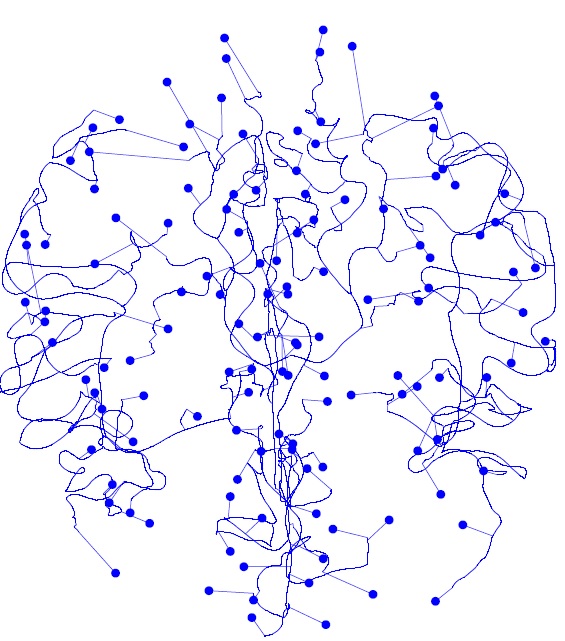}\label{fig:subtree}}
	\caption{Illustrates steps for \emph{cortical correspondence}. (a) Artery centerlines (blue) with the cortical landmarks (red). (b) Find the closest point on the artery tree to each landmark (red segments are landmark projections). Arteries which are not between the landmark projections and the base (cyan) are trimmed. (c) The result of this procedure is a cortical correspondence tree. }
	\label{fig:CortCorr}
\end{figure}

\section{Other approaches to tree oriented data analysis}
The branching structures of blood arteries and pulmonary airways are naturally modeled as trees. 
There is a large scope for progress in statistics for a population of trees. Currently, at the time that this dissertation is being written, research in tree oriented data analysis includes four major directions: combinatorial trees, Dyck paths, treeshapes and phylogenetic trees.

A seminal papaer in the \emph{combinatorial tree approach} to studying populations of anatomical trees laid foundations by proposing a metric, several notions of center, variation, and a method of principal components \cite{WangH}. Later fast PCA algorithms for 
combinatorial trees were developed \cite{AydinPCA}. 
The most recent innovation for combinatorial tree
is smoothing method for nonparametric regression with combinatorial tree structures as response variables against a univariate Euclidean predictor \cite{WangY} . 

Another approach uses
a Dyck path mapping of 
trees to curves \cite{Harris}. 
A Dyck path for a tree is produced 
by recording the distance on the 
tree to the root vertex during a depth
first left to right traversal. 
Representing trees as Dyck paths
opens up the possibility to use methods such as functional 
PCA \cite{ShenFunctionalTree}.

The \emph{treeshape approach} is an area of active research.
A tree-shape is a graph theoretic tree with a real matrix of fixed dimensions associated with each edge of the tree. Tree-shapes and statistics for tree-shapes were introduced in \cite{Feragen}. This approach allows very general descriptions of trees and thus allows for much richer representations of anatomical trees such as lungs or arteries than the above approaches. The generality of this approach comes with the cost of a very complicated sample space.
A related approach, unlabled-trees, is a special case of treeshape, where 
the edge attributes are restricted 
to be non-negative numbers.

\subsection{BHV treespace geodesics}\label{sec:Geodesics}
We now give an explicit description of geodesics in treespace.

Let $X \in \T_r$ be a variable point and let $T \in \T_r$ be a fixed point.
Let $\Gamma_{XT}= \{\gamma(\lambda)|0\leq \lambda \leq 1\}$ be the geodesic path from $X$ to $T$.
Let $C$ be the set of edges which are compatible in both trees, that is the union of the largest subset of $E_X$ which is compatible with every edge in $T$ and the largest subset of $E_T$ which is compatible with every edge in $X$. 

The following notation for the Euclidean norm of the lengths of a set of edges $A$ in a tree $T$ will be used frequently,
\begin{equation}
	||A||_T = \sqrt{ \sum_{e \in A}{|e|_T^2}}
\end{equation}
or without the subscript when it is clear to which tree the lengths are from.

A support sequence is a pair of disjoint partitions, $A_1\cup \ldots \cup A_k =E_X\setminus C$ and $B_1\cup\ldots\cup B_k=E_T \setminus C$.
\begin{thm}\label{thm:supports}\cite{OwenProvan}
	A support sequence $(\A,\B)=(A_1,B_1),\ldots,(A_k,B_k)$ corresponds to a geodesic if and only if it satisfies the following three properties:
	\begin{itemize}
		\item[\rm (P1)] For each $i>j$, $A_i$ and $B_j$ are compatible
		\item[\rm (P2)] $\frac{\norm{A_1}}{\norm{B_1}} \leq \frac{\norm{A_2}}{\norm{B_2}} \leq \ldots \leq \frac{\norm{A_{k}}}{\norm{B_{k}}}$
		\item[\rm (P3)] For each support pair $(A_i, B_i)$, there is no nontrivial partition $C_1 \cup C_2$ of $A_i$, and partition $D_1 \cup D_2$ of $B_i$, such that $C_2$ is compatible with $D_1$ and $ \frac{\norm{C_1}}{\norm{D_1}} < \frac{\norm{C_2}}{\norm{D_2}}$
	\end{itemize}
	The geodesic between $X$ and $T$ can be represented in ${\mathcal T}_n$ with legs
	\begin{displaymath}
		\Gamma_l=\left\{\begin{array}{ll}
			\left[\gamma(\lambda):\; \frac{\lambda}{1-\lambda}\leq\frac{\norm{A_1}}{\norm{B_1}}\right],
			&l=0\\[.7em]
			\left[\gamma(\lambda):\; \frac{\norm{A_i}}{\norm{B_i}}\leq\frac{\lambda}{1-\lambda}\leq\frac{\norm{A_{i+1}}}{\norm{B_{i+1}}}\right],
			&l=1,\ldots,k-1,\\[.7em]
			\left[\gamma(\lambda):\; \frac{\lambda}{1-\lambda}\geq\frac{\norm{A_k}}{\norm{B_k}}\right],
			&l=k\end{array}\right.
	\end{displaymath}
	The points on each leg $\Gamma_l$ are associated with tree $T_l$ having edge set
	
	\begin{displaymath}
		\begin{array}{rcl}
			E_l&=&B_1\cup\ldots\cup B_l\cup A_{l+1}\cup\ldots\cup A_k\cup C
		\end{array}
	\end{displaymath}
	
	Lengths of edges in $\gamma(\lambda)$ are
	
	\begin{displaymath}
		|e|_{\gamma(\lambda)}=\displaystyle\left\{\begin{array}{ll}
			\frac{(1-\lambda)\norm{A_j}-\lambda \norm{B_j}}{\norm{A_j}}|e|_X&e\in A_j\\[1em]
			\frac{\lambda \norm{B_j}-(1-\lambda)\norm{A_j}}{\norm{B_j}}|e|_{T}&e\in B_j\\[1.5em]
			(1-\lambda)|e|_X+\lambda |e|_{T}&e\in C\\
		\end{array}.\right.
	\end{displaymath}
	
	The length of $\Gamma$ is
	\begin{equation}\label{pathlength}
		d(X,T)=\bigg\Arrowvert(\norm{A_1}+\norm{B_1},\ldots,\norm{A_k}+\norm{B_k},|e_C|_{_X}-|e_C|_{_{T}})\bigg\Arrowvert
	\end{equation}
	and we call this the geodesic distance from $X$ to $T$.
\end{thm}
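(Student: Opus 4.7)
My plan is to prove both directions: that any support sequence satisfying (P1)--(P3) yields a well-defined path in $\T_r$ whose length matches (\ref{pathlength}) and is globally optimal, and conversely that the unique geodesic $\Gamma_{XT}$ induces such a support sequence. The first step is to verify that the path is well-defined as a continuous curve in treespace: condition (P1) is exactly what is needed to ensure that the intermediate edge set $E_l = B_1 \cup \ldots \cup B_l \cup A_{l+1} \cup \ldots \cup A_k \cup C$ is a compatible set for each $l$, so that $T_l$ is a valid phylogenetic tree and each leg $\Gamma_l$ lies in the closed orthant $\Or(T_l) \subset \T_r$. Successive legs then glue along a common face on which the edges of $A_{l+1}$ have shrunk to zero while the edges of $B_{l+1}$ are emerging, so the path traces a polygonal curve through the cell complex.

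For a fixed support sequence, I would compute the length of a piecewise-linear path as a function of a schedule $0 < \lambda_1 < \cdots < \lambda_k < 1$ of transition times, with each leg parametrised affinely in $\lambda$. Using the explicit edge-length formulas in the statement, the squared Euclidean speed on leg $l$ has a clean quadratic dependence on $\norm{A_i}$, $\norm{B_i}$, and the components $|e|_X - |e|_T$ for $e \in C$. Minimising the total arclength either by Lagrange multipliers or by Cauchy--Schwarz collapses each contiguous block into the form $\norm{A_i}+\norm{B_i}$ and gives the minimiser $\lambda_i/(1-\lambda_i) = \norm{A_i}/\norm{B_i}$, which is precisely the parametrisation in the theorem. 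The requirement that these optimal transition times respect the assumed ordering $\lambda_1 < \cdots < \lambda_k$ reduces exactly to (P2); if (P2) fails, the unconstrained minimiser forces two transitions to coincide, and regrouping the corresponding support pairs yields a shorter path with one fewer pair.

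The heart of the argument is (P3), and this is where I expect the main difficulty. For sufficiency, suppose the path built from $(\A,\B)$ is not optimal; I would argue that there must exist some pair $(A_i,B_i)$ admitting a nontrivial bipartition $A_i = C_1 \cup C_2$, $B_i = D_1 \cup D_2$ with $C_2$ compatible with $D_1$ and $\norm{C_1}/\norm{D_1} < \norm{C_2}/\norm{D_2}$. One can then insert an intermediate leg in the orthant whose edge set contains $C_2$ together with $D_1$; the strict ratio inequality, combined with the Cauchy--Schwarz identity $\sqrt{(a+c)^2+(b+d)^2} \le \sqrt{a^2+b^2}+\sqrt{c^2+d^2}$ becoming strict when $a/b \neq c/d$, yields a strictly shorter path. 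Conversely, if (P3) holds, no such local refinement improves the path, which by induction on the number of support pairs together with (P1)--(P2) shows that the path is locally shortest within its combinatorial type.

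Global optimality then follows by leveraging the CAT(0) property of $\T_r$ established by Billera--Holmes--Vogtmann: the geodesic between $X$ and $T$ exists and is unique, so it suffices to exhibit it as the minimum over the finite collection of candidate support sequences satisfying (P1). The main obstacle I foresee is formalising the ``local swap'' in the sufficiency argument for (P3): one must verify that the refined sequence obtained by splitting $(A_i,B_i)$ into $(C_1,D_1),(C_2,D_2)$ itself satisfies (P1) with respect to all other pairs, so that the induction is well-founded and the process terminates at a sequence meeting all three conditions.
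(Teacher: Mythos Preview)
The paper does not prove this theorem: it is stated with the citation \cite{OwenProvan} and used as background, with no proof supplied in the thesis itself. So there is no ``paper's own proof'' to compare against.

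That said, your outline is a faithful sketch of the Owen--Provan argument. The identification of (P1) as the compatibility condition making each $E_l$ a valid tree, the optimisation over transition times yielding $\lambda_i/(1-\lambda_i)=\norm{A_i}/\norm{B_i}$ and hence (P2), and the Cauchy--Schwarz/Minkowski refinement step for (P3) are all the right ingredients. One point to be careful about: your sufficiency argument for (P3) shows that a violation of (P3) permits a strictly shorter path, which is the correct direction for \emph{necessity} of (P3) for optimality; the harder direction is that (P1)--(P3) together \emph{suffice} for the path to be the global geodesic. Your appeal to CAT(0) uniqueness plus finiteness of support sequences is the right closing move, but the step ``no local refinement improves the path, hence by induction it is globally shortest'' needs more: you must argue that among all proper paths (those satisfying (P1) and (P2)) the shortest one necessarily satisfies (P3), and then that this shortest proper path is in fact the CAT(0) geodesic. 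Owen--Provan handle this by showing any two proper-path supports can be connected by a sequence of the split/merge moves you describe, so local optimality under (P3) implies global optimality among proper paths.
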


\chapter{Methods for Fr\'{e}chet Means in Phylogenetic Treespace}
\label{ch:FMmethods}

\section{Introduction}
The central research problem of this chapter is efficient computation of the Fr\'{e}chet mean of a discrete
sample of points in BHV treespaces.
The novel algorithmic system designed in this research project
improves upon the current solution methodologies in \cite{Miller} \cite{Bacak}.
These methods are applied to the sample of brain artery trees introduced in Chapter \ref{ch:intro}.

The contents of this Chapter are organized as follows:
In Sec. \ref{sec:FMproblem} we define the Fr\'{e}chet mean in BHV treespace, and give an overview discussion of the Fr\'{e}chet optimization 
problem.
In Sec. \ref{sec:global search methods} we present global methods
for optimizing the Fr\'{e}chet function i.e.\ methods which 
can move from one orthant of treespace to another. 
In Sec. \ref{sec:VistalCells} we describe how the combinatorics of treespace 
geodesics lead to polyhedral subdivision of treespace
into regions where the Fr\'{e}chet function has a fixed algebraic form.
The focus of Sec. \ref{sec:DiffAnalysis} is differential
properties of the Fr\'{e}chet function.
In Sec. \ref{sec:DampedNewton}, a method finding the minimizer
of the Fr\'{e}chet function in a fixed orthant of treespace is presented.
Application of these method to brain artery systems is the 
focus of Ch. \ref{ch:AnalysisAngiographyData}.
\sean{
	
	More definitions:
	\begin{itemize}
		\item An orthant $\Or^r$, $r \in {\mathcal N}$ is a copy of $\mathbb{R}^r_{>0}$; the superscript will be omitted when it is not important to emphasize the dimension.
		\item The closure of an orthant $\Or$ is a copy of $\mathbb{R}^r_{\geq 0}$ and is written $\bar{\Or}$.
		\item For a tree $T$, let $\Or(T)$ denote the minimal dimension open orthant containing $T$.
		\item Tangent directions at a point $X$ in an $n$-dimensional face of treespace are directions which pass through $X$. 
		\item The normal directions at a point $X$ a all directions issuing from that point which are orthogonal to every direction in the tangent space at that point. 
		\item neighborhood
		\item restricted neighborhood in facet a.k.a. in normal directions
		\item ball
		\item derivatives (based on geodesic distances)
		\item directional derivative
		\item gradient
		\item smooth, $C^\infty$, $C^1$ etc...
		\item equilibrium point
		\item relative/absolute minimum
		\item conditions for minimum
		\item Lipshitz continuous
		\item Taylor Expansion (Second order)
		\item {\bf vector norm}
		Let consider a vector space, $V$, $v \in V$, $u\in V$, and $\alpha \in \mathbb{R}$. A norm $\norm{\;}$ for a vector space $V$ is a function $\norm{\;}:V \to \mathbb{R}_{\geq 0}$ with the following properties:
		\begin{enumerate}
			\item[i.]$\norm{v}\geq 0$ and $\norm{v}=0$ iff $v=0$ (non-negativity) 
			\item[ii.]$\norm{av}=\abs{a}\norm{v}$ (scaling) 
			\item[iii.] $\norm{x+y}\leq \norm{x}+\norm{y}$ (triangle-inequality)
		\end{enumerate}
		\item {\bf matrix norm} A norm for a matrix $A$ is a function $\norm{\;}:\mathbb{R}^{n\times m}\to \mathbb{R}_{\geq 0}$ has the same properties as a vector norm. The induced norm for matrices is $\displaystyle \norm{A}=\textrm{max}\left\{\frac{\norm{Av}}{\norm{v}}:v\in \mathbb{R}^m,v\neq 0\right\}$. This matrix norm has the same properties as vector norms and the following properties:
		\begin{enumerate}
			\item[iv.] $\norm{Av} \leq \norm{A}\norm{v}$
			\item[v.] $\norm{AB} \leq \norm{A}\norm{B}$ (sub-multiplicativity)
		\end{enumerate}
		\item One tree is said to be a \emph{contraction} of another if...
		\item star tree
		\item The dimension of a tree, $dim(T)$ is the cardinality of $E_T$.
		\item A clique $c$ is \emph{maximal} in a graph $G$ if there is no other clique in $G$ which contains $c$.
	\end{itemize}
	\subsubsection{Key treespace properties}
	\begin{itemize}
		\item non-positive curvature
		\item geodesic uniqueness
	\end{itemize}
}

\subsection{Fr\'{e}chet means in BHV treespace}\label{sec:FMproblem}
For a given data set of $n$ phylogenetic trees $T^1, T^2,\ldots,T^n$ in $\T_r$, 
the \emph{Fr\'{e}chet function} is the sum of squares of geodesic distances from the data trees to a variable tree $X$. 
A geodesic $\gamma:[0,1]\to \T_r$ is the shortest path between its endpoints.
The geodesic from $X$ to $T^i$ is characterized by a geodesic support, $(\A^i,\B^i) = \left ( (A^i_1,B^i_1),\ldots, (A^i_{k^i},B^i_{k^i}) \right )$ (Thm. \ref{thm:supports}).
Given the geodesic supports  $(\A^1,\B^1),\ldots, (\A^n,\B^n)$ the Fr\'{e}chet function is
\begin{equation}\label{FrechetFunction}
	F(X) = \sum_{i = 1}^n d(X,T^i)^2=\sum_{i = 1}^n \left ( \sum_{l = 1}^{k^i} (\norm{x_{A_l^i}}+\norm{B_l^i})^2 + \sum_{e \in C^i} (|e|_T - |e|_{T^i})^2 \right).
\end{equation}
The objective is to solve the Fr\'{e}chet optimization problem
\begin{equation}\label{FrechetOptimization}
	\min_{X \in \T_r}{F(X)}
\end{equation}
Elementary Fr\'{e}chet function properties:
\begin{itemize}
	\item The Fr\'{e}chet function is continuous because the geodesic distances $d(X,T^i)$ are continuous \cite{OwenProvan,Miller}.
	\item The Fr\'{e}chet function $F(X)$ is strictly convex \cite{Sturm}, that is $F\circ \gamma:[0,1] \to \mathbb{R}$ is (strictly) convex for every geodesic $\gamma(\lambda)$ in $\T_r$. 
\end{itemize}
As a consequence of these properties we have the following result.
\begin{lem}
	The Fr\'{e}chet mean exists and is unique.
\end{lem}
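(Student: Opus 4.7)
The plan is to prove existence and uniqueness separately, each drawing on one of the two elementary properties of $F$ just listed.

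For existence, I would first establish that $F$ is coercive with respect to the geodesic metric. The triangle inequality gives $d(X,T^i)\geq d(X,T^1)-d(T^1,T^i)$ for every data tree $T^i$ and every $X\in\T_r$, so with $M:=\max_{i}d(T^1,T^i)$ one obtains $F(X)\geq n\bigl(d(X,T^1)-M\bigr)^2$ whenever $d(X,T^1)>M$. Hence $F(X)\to\infty$ as $d(X,T^1)\to\infty$, and the infimum of $F$, if attained, is attained inside some closed ball $\bar{B}(T^1,R)\subset\T_r$.

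Next I would argue that $\bar{B}(T^1,R)$ is compact. Since $\T_r$ is a finite union of closed Euclidean orthants (one for each topology of a phylogenetic tree on the label set $\{0,1,\ldots,r\}$), any geodesic segment is covered by finitely many straight pieces in these orthants, and a closed ball of finite radius intersects each orthant in a closed, bounded, hence compact subset. The finite union of these compact sets is compact. Continuity of $F$, which follows from continuity of each $d(\cdot,T^i)$, therefore guarantees that $F$ attains its minimum on $\bar{B}(T^1,R)$, and by coercivity this minimum is global.

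For uniqueness, suppose there were two distinct minimizers $X^{*},Y^{*}$ with $F(X^{*})=F(Y^{*})=m$. Treespace has a unique geodesic between any two points, so let $\gamma:[0,1]\to\T_r$ be the geodesic joining $X^{*}$ and $Y^{*}$. Strict convexity of $F$ along $\gamma$ gives
\begin{equation*}
F(\gamma(\tfrac{1}{2})) < \tfrac{1}{2}F(X^{*})+\tfrac{1}{2}F(Y^{*}) = m,
\end{equation*}
contradicting minimality of $m$. Hence the minimizer is unique.

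The only real subtlety is the compactness step, since coercivity and strict convexity by themselves do not suffice on a non-complete or non-proper space; everything else is a one-line application of continuity, the triangle inequality, and strict convexity. Invoking the finite orthant decomposition of $\T_r$ is what makes the standard existence argument from Euclidean convex analysis go through in this setting.
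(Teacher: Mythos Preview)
Your argument is correct. Both existence and uniqueness are handled soundly: the coercivity estimate via the triangle inequality is valid, the compactness of closed balls in $\T_r$ follows exactly as you say from the finite orthant decomposition, and the midpoint contradiction is the standard uniqueness argument from strict convexity.

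By way of comparison, the paper's own proof is a two-sentence sketch: it asserts that a strictly convex function ``either has a unique minimizer or can be made arbitrarily low,'' and then observes that finiteness of the data points rules out the latter. Your proof is essentially a rigorous unpacking of that sketch. In particular, the paper's stated dichotomy is not literally true for strictly convex functions on general spaces (think of $e^x$ on $\mathbb{R}$, which is strictly convex, bounded below, and has no minimizer), so what is really being invoked is exactly the coercivity-plus-properness argument you supply. Your version therefore buys correctness at the level of detail the paper omits; the paper's version is shorter but relies implicitly on the reader filling in the compactness step (or on the background fact, due to Sturm, that Fr\'{e}chet means on CAT(0) spaces exist and are unique for square-integrable measures).
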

\begin{proof}
	A strictly convex function either has a unique minimizer or can be made
	arbitrarily low. 
	Assuming that the data points are finite, 
	then a minimizer of the Fr\'{e}chet function must also be finite.
	Therefore the Fr\'{e}chet function has a unique minimizer.  
\end{proof}
The Fr\'{e}chet mean $\bar{T}$ is the unique minimizer of the Fr\'{e}chet function. 

\subsubsection{Problem Discussion}
The Fr\'{e}chet optimization problem, in BHV treespace, requires both selecting the minimizing tree topology and specifying its edge lengths.
Tree topologies are discrete and so the problem of selecting the minimizing tree topology is a combinatorial optimization problem; 
however it is possible to make search strategies which take advantage of the continuity of BHV treespace to find
the correct tree topology. 
It is natural to consider this problem in two modes of search: global, i.e.\ strategies which change the topology and edge lengths; and local, i.e.\ strategies which only adjust edge lengths. 
One motivation to consider global search and local search separately is that the local optimization problem
is convex optimization constrained to a Euclidean orthant.

The global search problem is challenging because 
the geometry of treespace creates difficulty 
in two essential parts of optimization (1) making progress towards optimality and (2)
verifying optimality. 
In treespace a metrically small neighborhood can actually be quite large in a certain sense. 
In constructing the space, the topological identification of the shared faces of orthants may create points in the closure of many orthants. 
In terms of trees, the neighborhood around a tree $X$, $N(X)$, is comprised
not only of trees with the same topology as $X$ but also
trees which have $X$ as a contraction. However, the list of tree topologies which have a particular tree $X$ as a contraction can be quite large. For example, if $X$ is a star tree then $X$ is a contraction of any tree i.e.\ $X$ is a contraction of $(2r-3)!!$ maximal phylogenetic tree topologies. 

Local optimality conditions for non-differentiable functions
are based on the rate of change of the objective function along directions issuing from
a point. 
Since the neighborhood of a point $X$ contains
all trees which have $X$ as a contraction, verifying 
that $X$ is optimal requires demonstrating that
any tree which contains $X$ as a contraction
has a larger Fr\'{e}chet function value.
For example, when $X$ is a star tree, $N(X)$ contains every tree
with the same pendants as $X$, and having infinitesimal interior edge lengths. In this sense finding a descent direction
can be essentially as hard as finding the topology of the Fr\'{e}chet mean itself.

\sean{For a clique $q$ in $G$ let $C_q$ be the set of cliques
	such that $q \subset c\; \forall\; c \in C_q$, and let $\bar{C}_q$ be the set 
	of maximal cliques such that $q \subset c\; \forall\; c \in \bar{C}_q$. 
	Consider a clique $q$ and a point $X \in \Or(q)$.
	An $\epsilon$-neighborhood of $X$ is the set $N_{\epsilon}(X)=\{Y | d(X,Y)^2 < \epsilon\}$. The set ${N}_{\epsilon}(X)$ can be decomposed into
	a union of sets which are intersections of Euclidean orthants with hyperspheres, that is
	$N_{\epsilon}(X)=\cup_{c \in C_q} \{Y \in \Or(q) |d(X,Y)^2\}$.
	For a small neighborhood around a point $X$, $N_\epsilon(X)$, the set of geodesic segments $\{\Gamma(X,Y)| Y \in {N}_{\epsilon}(X)\}$ are either
	contained in the same orthant as $X$ or 
	move into orthants which contain $X$ in their closure.
	Geodesic passing through $X$ in the minimal orthant containing $X$, $\Or(X)$
	equate to changes in the positive length edges of $X$.
	Geodesic issuing into higher-dimensional orthants containing
	$X$ equate to changing edge lengths and increasing the lengths of 
	edges from zero to form a tree which contains $X$ as a contraction.

	These local geodesics issuing from $X$ can be decomposed into a set of tangent directions, ${\mathcal T}_X$, and normal directions, ${\mathcal N}_X$.
	The tangent directions are the derivatives of curves  The tangent space at $X$ is a copy of $\mathbb{R}^{dim(X)}$. 
	The normal directions are the set of all directions in $\{\Gamma(X,Y)| Y \in {N}_{\epsilon}(X)\}$ which are orthagonal to every direction in ${\mathcal T}_X$. The normal space is the cone over the clique complex of $C_X$,
	\begin{displaymath}
		{\mathcal C} (q: q \in C_X) = \bigcup_{q \in C_X}{ \Or(q) },
	\end{displaymath}
	with topological identification of the shared faces of orthants.
}

Proximal point algorithms, a broad class of algorithms, are globally convergent
not only for the Fr\'{e}chet optimization problem, but are globally 
convergent for any well defined lower-semicontinuous convex optimization problem 
on a non-positively curved metric space \cite{Bacak}. 
This class of algorithms has nice theoretical properties, and certain implementations of proximal point algorithms are practical for the Fr\'{e}chet optimization problem
on non-positively curved orthant spaces. 

Proximal point algorithms are applicable to optimization problems on metric spaces. The general problem is minimizing a function $f$ on a metric space $M$ with distance function $d:M\times M \to \mathbb{R}$.
A proximal point algorithm solves a sequence of penalized optimization problems
of the form
\begin{align}
	\begin{displaystyle}
		P_k(f): \min_{x^{k}}{ \quad f(x^k)+\alpha_k d^2(x^{k-1},x^k)}
	\end{displaystyle}
\end{align}
where $\alpha_k$ influences the proximity of a solution to the point $x^{k-1}$.
\sean{Proximal point algorithms have a long history and 
	variations of proximal point algorithms for Euclidean 
	optimization problems appear in many areas 
	such as signal processing, statistics, and machine learning.}
Some good references for proximal point algorithms are \cite{Bacak2012a,Bertsekas2011,Li2009,Rockafellar1976}.

Implementing a generic proximal point algorithm to minimize the Fr\'{e}chet function on treespace does not seem advantageous. In particular, given a non-optimal point $X^0$ finding a point $X$ such that $F(X) < F(X^0)$ is not made any easier
by penalizing  the objective function $F(X)+\alpha d^2(X^0,X)$.
Penalizing the objective function with $\alpha d(X^0)$ does not 
provide any additional structure for checking the neighborhood, $N(X^0)$, of $X^0$ for a descent direction.

Split proximal point algorithms avoid directly tackling the complicated problem
of minimizing $F(X)$ by solving many much easier subproblems. 
For objective functions which can be expressed as a sum of functions, $f = f^1+\ldots +f^m$, a split proximal point algorithm alternates among penalized optimization problems for each function. 
Let $\{1,2,\ldots,m\}$ index the functions $f^1,\ldots,f^m$. 
A generic split proximal point algorithm is:
choose some sequence $i_1,i_2,\ldots$ where each 
term in the sequence is an element of $\{1,2,\ldots,m\}$ and sequentially solve
the split proximal point optimization problem:
\begin{align}
	\begin{displaystyle}
		P_k(f^{i_k}): \min_{x^{k}}{ \quad f(x^k)+\alpha_k d^2(x^{k-1},x^k)}
	\end{displaystyle}
\end{align}
Different versions of split proximal point algorithms are based on
the choice of the sequence $i_1,i_2,\ldots$ and the choice of the sequence $\{\alpha_k\}$. Naturally, a split proximal point procedure can be applied to the Fr\'{e}chet optimization problem by separating the Fr\'{e}chet function into a sum of squared distance functions, $F(X) = d^2(X,T^1)+\ldots +d^2(X,T^n)$.
For the Fr\'{e}chet function the split proximal point optimization problem is
\begin{align}
	\begin{displaystyle}
		P_k(d^2(X^k,T^{i_k})): \min_{x^{k}}{ \quad d^2(X^k,T^{i_k})+\alpha_k d^2(X^{k-1},X^k)}
	\end{displaystyle}
\end{align}
For the Fr\'{e}chet mean optimization problem on a geodesic non-positively curved space, the solution to a split proximal point optimization problem 
can be obtained easily in terms of geodesics.
The solution to  $P_k(d^2(X^k,T^{i_k}))$
must be on the geodesic between $X^{k-1}$ and $T^{i_k}$.
The term $d^2(X^k,T^{i_k})$ is the squared distance from the 
variable point to $T^{i_k}$ and the term $d^2(x^{k-1},x^k)$ is the squared
distance from the search point to $X^{k-1}$. Given
any point, there is at least one point
on the geodesic between $X^{k-1}$ and $T^{i_k}$ for
which the value of both terms is at least as small.
Since $X^k$ must be on this geodesic, the distance from $X^k$ to $X^{k-1}$
and the distance form $X^k$ to $T^{i_k}$ can be parameterized
in terms of  the proportion $t:0 \leq t \leq 1$ along the geodesic from $X^{k-1}$ to $T^{i_k}$:  $d(X^k,T^{i_k})=(1-t)d(X^{k-1},T^{i_k})$ and $d(X^k,X^{k-1})=td(X^{k-1},T^{i_k})$.
Parameterizing $d^2(X^k,T^{i_k})+\alpha_k d^2(X^{k-1},X^k)$
in terms of $t$ makes $P_k(d^2(X^k,T^{i_k}))$ into a problem
of minimizing a quadratic function in $t$.
The optimal step length is $t = \min\{1,\frac{\alpha}{1+\alpha}\}$.
Even more importantly, several versions of split proximal point algorithms have been shown to converge globally to the Fr\'{e}chet mean \cite{Bacak}.  
Split proximal point algorithms for Fr\'{e}chet means
on non-positively curved orthant spaces are discussed further in Section \ref{sec:global search methods}.

The overall strategy for minimizing the Fr\'{e}chet function will
be to use a split proximal point algorithm for global
search and switch to a local search procedure.
The motivation for switching to a local search procedure 
is that the if local search is initialized close to the optimal solution then faster convergence can be achieved.
The local optimization problem is minimizing the Fr\'{e}chet function in a 
fixed orthant $\Or$. 
One feature of the local optimization problem is that the Fr\'{e}chet function is a piecewise $C^\infty$ function
whose algebraic form depends on the geodesic distance from $X$ to the data trees.
But, the Fr\'{e}chet function is only $C^1$ when $X$ is restricted to the interior of a maximal dimension orthant. Analysis of differential properties of the Fr\'{e}chet function is presented in Sec. \ref{sec:DiffAnalysis}.

In $\T_r$, the Owen-Provan algorithm for the geodesic $\Gamma(X,T^i)$ has complexity $O(r^4)$, and with $n$ data points the total complexity of finding the algebraic form of $F(X)$ would be $O(n r^4)$. 
New algorithms for dynamically updating the algebraic form of $F(X)$ as $X$ varies are presented
in Section \ref{sec:updating}. 
Such algorithms will be especially useful in updating the objective function 
after small changes are made in the edge lengths of $X$; 
in particular these methods help accelerate
local search algorithms.

Here is a high-level outline of the algorithmic system for solving the Fr\'{e}chet mean optimization problem developed in this chapter:
\begin{flushleft}
	{\bf Treespace Fr\'{e}chet mean algorithm}\\
	{\bfseries input:} $T^1,T^2,\ldots,T^n,X^0 \in \T_r$, $\epsilon>0$, $\delta>0$, $a>0$, $0<c<1$, $\{\alpha_k\}$, $K\in \mathbb{N}$\\
	$F_0=\inf$; $F_1=F(X^0)$\\
	{\bfseries while} $F_1-F_0>a$\\
	\hspace*{1cm} SPPA for $K$ steps (Sec. \ref{sec:global search methods})\\
	\hspace*{1cm} c=1;\\
	\hspace*{1cm} {\bfseries while} $c \leq K$\\
	\hspace*{1cm} \hspace*{1cm} $X^k=argmin_{X} \left \{d^2(X,T^{i_k})+\alpha_k d^2(X^{k-1},X)\right\}$\\
	\hspace*{1cm}{\bfseries endwhile}\\
	\hspace*{1cm}{\bfseries while} approximate optimality conditions (\ref{eq:ApproxOptimal}) are not satisfied  \\
	\hspace*{1cm}\hspace*{1cm} compute a descent direction $P$ (Sec. \ref{sec:NewtonSteps})\\
	\hspace*{1cm}\hspace*{1cm} find a step-length, $\alpha$, satisfying decrease condition (Sec. \ref{sec:StepLength})\\
	\hspace*{1cm}\hspace*{1cm} {\bfseries let} $X^{k+1}=X^k+\alpha P$\\
	\hspace*{1cm}\hspace*{1cm} {\bfseries if} $|e| < \epsilon$ {\bfseries then} remove $e$\\
	\hspace*{1cm}{\bfseries endwhile}\\
	{\bfseries endwhile}
\end{flushleft}

The following sections discuss implementation details 
and present theoretical analysis pertaining to certain aspects
of the problem. The next section presents specific global
search procedures, both of which are versions of split proximal point algorithms. The remaining sections focus on aspects of the local search problem.

\section{Global search methods}\label{sec:global search methods}
Proximal point algorithms as applied to the Fr\'{e}chet optimization
problem have been studied in \cite{Bacak} and \cite{Sturm}.
The former views the Fr\'{e}chet mean problem from the paradigm 
of convex optimization while the later studies Fr\'{e}chet means
in the context of stochastic analysis in metric spaces of non-positive curvature.
In this Chapter two global search algorithms are discussed, the \emph{Inductive Mean Algorithm}, which is stochastic, and 
the \emph{Cyclic Split Proximal Point Algorithm}, which is deterministic.
Both of these are versions of split proximal point algorithms.

\subsection{Inductive means}
The \emph{Inductive Mean Algorithm} is a method for calculating the Fr\'{e}chet Mean based on \cite[Thm. 4.7]{Sturm}. This algorithm was developed independently in \cite{Bacak} and \cite{Miller}
and has been called Sturm's algorithm, named after
the K. T. Sturm who is attributed with its discovery
in proving that inductive means converge to the Fr\'{e}chet mean
in the more general context of probability distributions on non-positively curved metric spaces.

Consider a sequence $(Y_i)_{i\in \mathbb N}$ of independent identically distributed random observations from the uniform distribution on $\{T^1,T^2,\ldots,T^n\}$. 
Define a new sequence of points $(S_k)_{k \in \mathbb{N}}$ in $\T_r$ by induction on $k$ as follows:
\begin{displaymath}
	S_1 := Y_1,
\end{displaymath}
and
\begin{displaymath}
	S_k := \left(1-\frac{1}{k}\right)S_{k-1} + \frac{1}{k}Y_k,
\end{displaymath}
where the right hand side denotes the point $\frac{1}{k}$ fraction of the distance along the geodesic from $S_{k-1}$ to $Y_k$.
The point $S_k$ is called the \emph{inductive mean value} of $Y_1,\ldots,Y_k$. The expected squared distance between $S_k$ and $\bar{T}$ is less than or equal to $F(\bar{T})/k$.

\subsection{Cyclic Split Proximal Point Algorithm}
Choose a permutation, $(p_0,p_1,\ldots,p_{n-1})$, of $\{1,2,\ldots,n\}$.
Define a new sequence of points $(R_k)_{k \in \mathbb{N}}$ in $\T_r$ by induction on $k$ as follows:
\begin{displaymath}
	R_1 := Y_{p_1},
\end{displaymath}
and
\begin{displaymath}
	R_k := \left(1-\frac{1}{k}\right)R_{k-1} + \frac{1}{k}Y_{p_{\left( k\mod n \right)}}.
\end{displaymath}
The sequence $\{ R_k\}$ converges to $\bar{T}$ as $k$ approaches infinity \cite{Bacak}.

\sean{
	\subsection{Analysis of inductive means}\label{AnalysisInductive}
	Inductive means are known to converge in probability to the Fr\'{e}chet Mean of a distribution.
	The error in the $k$th inductive mean is $\epsilon_k = d(S_k,\bar{T})$.
	How good is the upper bound $E(\epsilon_k^2) \leq F(\bar{T})/k$ given by \cite[Thm. 4.7]{Sturm}?
	Can it be transformed in to a deterministic upper-bound?
	At what iteration does an inductive mean contain the topology of the Fr\'{e}chet Mean as a contraction?
	It is known that for some $k$, an inductive mean is guaranteed to contain the Fr\'{e}chet Mean, that is $E_{Y_k} \cap E_{\bar{T}} = E_{\bar{T}}$, but the question remains, what is the smallest $k$ for which this is guaranteed, or at least highly probably?
	The proposal here is to study the distance-wise and combinatorial error in inductive means.
	Developing a theory for the combinatorial convergence of inductive means
	will require disentangling the
	combinatorial changes in the inductive mean as it approaches the Fr\'{e}chet Mean.

	\subsection{Analysis}
	
	The following results give methods for approximating the distance wise error in a special case, called \emph{balanced inductive means}.
	The following error bound and error approximation are data dependent.
	
	The following theorem gives a sequential upperbound on the distance between the balanced inductive mean and the Fr\'{e}chet Mean of $T^1,\ldots,T^n$.
	\begin{thm}\label{BalancedIndUB}
		The distance $d(s_{ir},\bar{T})$ between the $i$'th balanced inductive mean and the Fr\'{e}chet Mean of $T^1,\ldots,T^n$ is bounded  as follows
		\begin{displaymath}
			d^2(s_{ir},\bar{T}) \leq  F(\bar{T})-\frac{1}{ir}\displaystyle\sum_{k=2}^{ir}\frac{k-1}{k}{d^2(s_{k-1},y_{k})}
		\end{displaymath}
		for all $i \in \mathbb{N}$.
	\end{thm}
	\begin{proof}
		Inequality \cite[(2.3)]{Sturm} is restated here for points in $\T_r$. Let $z$, $x_0$ and $x_1$ be points in $\T_r$, and let $x_t$ be the point $t \in [0,1]$ along the geodesic from $x_0$ to $x_1$. Then
		\begin{equation}\label{SturmInequality}
			d^2(z,x_t) \leq (1-t)d^2(z,x_0)+td^2(z,x_1)-t(1-t)d^2(x_0,x_1).
		\end{equation}
		By definition $s_{ir}$ can be expressed in terms of $s_{ir-1}$ and $y_{ir}$, and thus
		\begin{displaymath}
			\begin{array}{rl}
				d^2(\bar{T},s_{ir}) \\
				= &  d^2 \left ( \bar{T},\frac{ir-1}{ir}s_{ir-1}+\frac{1}{ir}y_{ir}\right)
			\end{array}
		\end{displaymath}
		Next Inequality (\ref{SturmInequality}) is applied to $d^2(\bar{T},s_{ir})$.
		\begin{displaymath}
			\begin{array}{rl}
				\leq & \frac{ir-1}{ir}d^2(\bar{T},s_{ir-1})+\frac{1}{ir}d^2(\bar{T},y_{ir})-\frac{ir-1}{(ir)^2}d^2(s_{ir-1},y_{ir})\\
			\end{array}
		\end{displaymath}
		Next $s_{ir-1}$ is expanded using its definition and then Inequality \ref{SturmInequality} is applied to $d^2(\bar{T},s_{ir-1})$.
		\begin{displaymath}
			\begin{array}{rl}
				= & \frac{ir-1}{ir}d^2(\bar{T},\frac{ir-2}{ir-1}s_{ir-2}+\frac{1}{ir-1}y_{ir-1})+\frac{1}{ir}d^2(\bar{T},y_{ir})-\frac{ir-1}{(ir)^2}d^2(s_{ir-1},y_{ir})\\
				\leq & (\frac{ir-1}{ir}) \left ( \frac{ir-2}{ir-1}d^2(\bar{T},s_{ir-2})+\frac{1}{ir-1}d^2(\bar{T},y_{ir-1})-\frac{ir-2}{(ir-1)^2}d^2(s_{ir-2},y_{ir-1})\right) \\
				& +\frac{1}{ir}d^2(\bar{T},y_{ir})-\frac{ir-1}{(ir)^2}d^2(s_{ir-1},y_{ir}) \\
				= & \frac{ir-2}{ir}d^2(\bar{T},s_{ir-2})+\frac{1}{ir}d^2(\bar{T},y_{ir-1}) - \frac{(ir-2)}{ir(ir-1)}d^2(s_{ir-2},y_{ir-1}) \\
				& \frac{1}{ir}d^2(\bar{T},y_{ir})-\frac{ir-1}{(ir)^2}d^2(s_{ir-1},y_{ir}) \\
			\end{array}
		\end{displaymath}
		Next $s_{ir-2}$ is expanded and then Inequality (\ref{SturmInequality}) is applied to $d^2(\bar{T},s_{ir-2})$.
		\begin{displaymath}
			\begin{array}{rl}
				= & \frac{ir-2}{ir}d^2(\bar{T},\frac{ir-3}{ir-2}s_{ir-3}+\frac{1}{ir-2}y_{ir-2})+\frac{1}{ir}d^2(\bar{T},y_{ir-1}) - \frac{(ir-1)(ir-2)}{ir(ir-1)}d^2(s_{ir-2},y_{ir-1}) \\
				& \frac{1}{ir}d^2(\bar{T},y_{ir})-\frac{ir-1}{(ir)^2}d^2(s_{ir-1},y_{ir}) \\
				\leq & \frac{ir-2}{ir}\left ( \frac{ ir-3}{ir-2}d^2(\bar{T},s_{ir-3})+\frac{1}{ir-2}d^2(\bar{T},y_{ir-2})-\frac{ir-3}{(ir-2)^2}d^2(s_{ir-3},y_{ir-2})\right ) \\
				& +\frac{1}{ir}d^2(\bar{T},y_{ir-1}) - \frac{(ir-2)}{ir(ir-1)}d^2(s_{ir-2},y_{ir-1})  \\
				& + \frac{1}{ir}d^2(\bar{T},y_{ir})-\frac{ir-1}{(ir)^2}d^2(s_{ir-1},y_{ir}) \\
				= & \frac{ir-3}{ir}d^2(\bar{T},s_{ir-3}) + \frac{1}{ir}d^2(\bar{T},y_{ir-2})-\frac{(ir-3)}{(ir)(ir-2)}d^2(s_{ir-3},y_{ir-2}) \\
				&  +\frac{1}{ir}d^2(\bar{T},y_{ir-1}) - \frac{(ir-2)}{ir(ir-1)}d^2(s_{ir-2},y_{ir-1})  \\
				& + \frac{1}{ir}d^2(\bar{T},y_{ir})-\frac{ir-1}{(ir)^2}d^2(s_{ir-1},y_{ir})
			\end{array}
		\end{displaymath}
		Now iteratively apply the expansion to $s_{ir-k}$ and Inequality \ref{SturmInequality} to $d^2(\bar{T},s_{ir-k})$ for $k = 3,\ldots, r-1$ to obtain the following upper bound.
		\begin{displaymath}
			\begin{array}{rl}
				\leq & \frac{(i-1)r}{ir} d^2(\bar{T},s_{(i-1)r})+\frac{1}{ir}\displaystyle\sum_{k = 0}^{r-1}d^2(\bar{T},y_{ir-k}) -\frac{1}{ir}\sum_{k=0}^{r-1}\frac{ir-k-1}{ir-k}{d^2(s_{ir-k-1},y_{ir-k})}\\
			\end{array}
		\end{displaymath}
		Now iteratively apply the expansion to $s_{ir-k}$ and Inequality \ref{SturmInequality} to $d^2(\bar{T},s_{ir-k})$ for $k=r,\ldots,ir-1$ to obtian the following upper bound for $d^2(\bar{T},s_{kr})$.
		\begin{displaymath}
			\begin{array}{rl}
				d^2(\bar{T},s_{ir}) \leq &
				\frac{1}{ir}\displaystyle\sum_{k=0}^{ir-1}d^2(\bar{T},y_{ir-k})-\frac{1}{ir}\displaystyle\sum_{k=0}^{ir-2}\frac{ir-k-1}{ir-k}{d^2(s_{ir-k-1},y_{ir-k})}\\
			\end{array}
		\end{displaymath}
		The expression $\displaystyle\sum_{k=0}^{ir-1}d^2(\bar{T},y_{ir-k})$ simplifies to $ir \mathbb{V}(\bar{T})$ because each subsequence $y_{(i-1)r+1},\ldots,y_{(i-1)r+r}$ with $i \in \mathbb{N}$ is some ordering of $T^1,\ldots,T^n$.
		Substituting and simplifying gives the following upper-bound for the distance between $\bar{T}$ the Fr\'{e}chet Mean the balanced inductive mean in terms of the variance $F(\bar{T}$), and a combination of the distance between each inductive mean values $s_k$ and the next point $y_{k+1}$.
		\begin{displaymath}
			\begin{array}{rl}
				d^2(\bar{T},s_{ir}) \leq &
				\mathbb{V}(\bar{T})-\frac{1}{ir}\displaystyle\sum_{k=0}^{ir-2}\frac{ir-k-1}{ir-k}{d^2(s_{ir-k-1},y_{ir-k})}\\
				= & \mathbb{V}(\bar{T})-\frac{1}{ir}\displaystyle\sum_{k=2}^{ir}\frac{k-1}{k}{d^2(s_{k-1},y_{k})}\\
			\end{array}
		\end{displaymath}
	\end{proof}
	In computational applications of \ref{BalancedIndUB} $F(\bar{T})$ will be approximated by evaluating $F$ at $S_{ir}$, and $d(s_{k-1},y_k)$ must be computed sequentially.
	After the proof of this theorem comes a method for improving the approximation.
	
	\begin{lem}
		The upper bound given in Theorem \ref{BalancedIndUB} is exact if $T^1,\ldots,T^n$ are in the same orthant of treespace.
	\end{lem}
	\begin{proof}
		The proof is to show that $\mathbb{V}(\bar{T}) = \frac{1}{r} \displaystyle\sum_{k=2}^{r}{\frac{k-1}{k}d^2(s_{k-1},y_{k})}$. The edge sets of the trees $T^1,T^2, \ldots,T^n$ are the same, and for simplicity let this set be denoted $E$. First note that $d^2(s_{k-1},y_{k}) = \sum_{e \in E}{(|e|_{s_{k-1}} - |e|_{y_{k}})^2}$.
		\begin{displaymath}
			\begin{array}{rl}
				& \displaystyle\sum_{k=2}^{r}\frac{k-1}{k} d^2(s_{k-1},y_{k})\\
				=& \displaystyle\sum_{k=2}^{r}\frac{k-1}{k} \displaystyle\sum_{e \in E}{(|e|_{s_{k-1}} - |e|_{y_{k}})^2} \\
				= & \displaystyle\sum_{e \in E} \displaystyle\sum_{k=2}^{r}\frac{k-1}{k}{(|e|_{s_{k-1}} - |e|_{y_{k}})^2} \\
			\end{array}
		\end{displaymath}
		Note that $|e|_{s_{k}} = \displaystyle\sum_{j=1}^k{\frac{1}{k}|e|_{s_j}}$.
		Next the expression $\displaystyle\sum_{k=2}^{r}\frac{k-1}{k}{(|e|_{s_{k-1}} - |e|_{y_{k}})^2}$ is expanded and simplified.
		\begin{displaymath}
			\begin{array}{rl}
				&\displaystyle\sum_{k=2}^{r}\frac{k-1}{k}{(|e|_{s_{k-1}} - |e|_{y_{k}})^2}\\
				= & \frac{1}{2}(|e|_{T^1}-|e|_{T^2})^2\\
				&+\frac{2}{3}(\frac{1}{2}(|e|_{T^1}+|e|_{T^2})-|e|_{T^3})^2\\
				&+\frac{3}{4}(\frac{1}{3}(|e|_{T^1}+|e|_{T^2}+|e|_{T^3})-|e|_{T^4})^2\\
				&+\ldots\\
				&+ \frac{r-1}{r}\left(\frac{1}{r-1}\displaystyle\sum_{j=1}^{r-1}{|e|_{T^j}}-|e|_{T^{r-1}}\right)^2\\
			\end{array}
		\end{displaymath}
		\begin{displaymath}
			\begin{array}{rl}
				=&\frac{1}{2}\left( |e|_{T^1}^2+|e|_{T^2}^2-2|e|_{T^1}|e|_{T^2}\right)\\
				&+\frac{1}{3}\frac{1}{4} \left( \displaystyle\sum_{i=1}^2|e|^2_{T^i}+4|e|^2_{T^3}+2\sum_{i=1}^2\sum_{j\neq i}{|e|_{T^i}|e|_{T^j}}-4|e|_{T^3}\sum_{i=1}^2{|e|_{T^i}}\right ) \\
				&+\ldots \\
				&+\displaystyle \frac{r-1}{r}\frac{1}{(r-1)^2} \left ( \sum_{i=1}^{r-1} |e|_{T^i}^2 +(r-1)^2 |e|_{T^n}^2+2\sum_{i=1}^{r-1}\sum_{i \neq j} |e|_{T^i} |e|_{T^j}-2(r-1)|e|_{T^n}\sum_{i=1}^{r-1}{|e|_{T^i}}\right)\\
				=
			\end{array}
		\end{displaymath}
	\end{proof}

	The error bound in Theorem \ref{BalancedIndUB} can be improved by using an approximation of $\bar{T}$. Let $\tilde{d}^2(z,x_t) = (1-t)d^2(z,x_0)+td^2(z,x_1)-t(1-t)d^2(x_0,x_1)$ denote the left hand side of Inequality \ref{SturmInequality} and let the \emph{slack} be dentoed $e= d^2(z,x_t)-\tilde{d}^2(z,x_t)$. If $z'$ is a point nearby $z$ then the slack $e' = d^2(z',x_t)-\tilde{d}^2(z',x_t)$ is approximately equal to $e$. In the proof of Theorem \ref{BalancedIndUB} inequality \ref{SturmInequality} is used repeatedly to derive the upper bound. In each step, when the inequality is used to substitute out $d^2(\bar{T},S_{ir-k})$ the error in this substitution can be corrected, at least approximately, by using a point nearby $\bar{T}$ to get an approximation of the error in inequality \ref{SturmInequality}. Let $e_{k} = d^2(\bar{T},s_{k})-\tilde{d}^2(\bar{T},s_{k})$ and $e'_{k} = d^2(X,s_{k})-\tilde{d}^2(X,s_{k})$. If $X=\bar{T}$ then $e_{k} = e'_{k}$ and the correction will be exact. Now $d^2(\bar{T},s_{ir})= \tilde{d}^2(\bar{T},s_{ir})+e_{ir}$ $\approx \tilde{d}^2(\bar{T},s_{ir})+e'_{ir}$, $d^2(\bar{T},s_{ir-1} \approx \tilde{d}^2(\bar{T},s_{ir-1})+e'_{ir-1}$, and so on until $d^2(\bar{T},s_2) \approx \tilde{d}^2(\bar{T},s_1)+e_2$.
	Now, in the derivation of the upper bound in Theorem \ref{BalancedIndUB}, instead of $\tilde{d}^2(\bar{T},s_{ir-k})$ use the corrected approximation $\tilde{d}^2(\bar{T},s_{ir-k})+e'_{ir-k}$ for substituting out $d^2(\bar{T},s_{ir})$. The corrections $e'_{ir-k}$ accumulate to $\displaystyle \sum_{k=0}^{ir-1}{\left ( \prod_{j=0}^k{\frac{ir-k}{ir}} \right ) e'_{ir-k}}$.
	
	The next research step proposed here is to calculate the order of error in the approximation of $d(\bar{T},s_k)^2$ after using the above correction.
	
	\begin{lem}
		The curvature correction is an underestimate as long as the point $z$ is a contraction of
		the Fr\'{e}chet mean. 
	\end{lem}
	
	\begin{figure}[H]
		\centering
		\includegraphics[width = 0.4\textwidth]{TriCurv1.pdf}
		\caption{}
		\label{TriCurve}
	\end{figure}

	Q: How are you going to study combinatorial convergence? R: Example of combinatorial convergence in simple cases.
	
	\subsection{Simulation Study}
	Simulation studies can be used to gain insight into the convergence of inductive means to
	their Fr\'{e}chet Mean. Here the research proposal is to do a simulation study of the
	combinatorial and distance-wise convergence of inductive means.
	The study will be performed on datasets ranging in dimension and topological heterogeneity.
	The topology of a phylogenetic tree is the combinatorial structure of
	the tree divorced from the edge lengths.
	A dataset of phylogenetic trees is more topologically heterogeneous if
	many of the edges in the dataset are compatible in only a few trees in the
	dataset.
	Dimension will vary over three levels: trees with eight, sixteen and thirty two leaves.
	Topological heterogeneity will vary over four levels. datasets of fifty trees will
	be simulated with one hundred replications at each level.
	For each iteration $k$,
	the distance wise error $d(Y_k,\bar{T})$, the edges common with the Fr\'{e}chet Mean
	$E_{Y_k} \cap E_{\bar{T}}$, and the edges not common with the Fr\'{e}chet Mean $E_{Y_k} \setminus E_{\bar{T}}$ will be recored over all replication.
	Then these results can be summarized and compared across levels.
	
	Once a simulation study is completed it will be easier to know what kinds of theoretical
	results must be reasonable.
	To simplify the analysis the theory will focus on special cases of inductive means
	that are deterministic.
	The first case is called a \emph{balanced inductive mean}.
	If each subsequence $Y_{(i-1)r+1},\ldots,Y_{(i-1)r+r}$ with $i\in \mathbb{N}$ is some permutation of $T^1,\ldots,T^n$ , then $s_{ir}$ $i \in \mathbb{N}$ is called the $i$'th \emph{balanced inductive mean} of $T^1,\ldots,T^n$.
	The second case is called a \emph{greedy inductive mean}.
	If $S_k$ is created by choosing $Y_k$ greedily as $ \displaystyle Y_k = argmin_{Y \in T^1,\ldots T^n} (1-\frac{1}{k})S_{k-1}+\frac{1}{k}Y$, then $S_k$ is called the $k$th \emph{greedy inductive mean}.
	This special case of inductive mean may be easier to analyze and might converge
	to the Fr\'{e}chet Mean more quickly.
	
	Here is a simulation study of the distance-wise and combinatorial convergence of inductive means to their Fr\'{e}chet Mean.
	
	This simulation study indicates that convergence of inductive means to their Fr\'{e}chet Mean is only sublinear.
	Therefore it is not the aim of this work to calculate an upper-bound with a higher order
	of convergence (since these simulations indicate no such thing could exist),
	but rather develop error approximations that are more accurate.
	
}

\section{Vistal cells and squared treespace}\label{sec:VistalCells}

The  value of 
the Fr\'{e}chet function at $X$
depends on the geodesics from $X$ to each of the data trees.
The goal in this section is to describe how treespace can be subdivided
into regions where the combinatorial form of geodesics from $X$ to the data
trees are all fixed. 
Descriptions of such regions will be used in analyzing the differential properties
of the Fr\'{e}chet function.

Analysis of the Fr\'{e}chet function starts at the level of a geodesic from a
\emph{variable tree} $X$ to a fixed \emph{source tree} $T$.
Given a fixed tree $T$, a vistal cell is a region $\V$ of treespace where the form 
of the geodesic from any tree $X$ in $\V$ to $T$ is constant.
The description of geodesics in Section \ref{sec:Geodesics} is now built upon further to study how the
combinatorics of geodesic supports for $\Gamma(X,T)$ can vary as $X$ varies.

\sean{
	BHV treespace can be subdivided into regions in which the Fr\'{e}chet function
	has constant form called \emph{multi-vistal cells} \cite[Thm. 3.27]{Miller}.
	
	Let $\Gamma$ be the geodesic between two phylogenetic trees in $\T_r$, $T$ and $T'$.
	The support for the geodesic between $T$ and
	$T'$ depends on the locations of $T$ and $T'$.
	If the tree $T$ is allowed to vary then the geodesic $\Gamma$ will
	vary as well.
	Small changes in the location of $T$ will result in small changes to the path $\Gamma$.
	The length of the path $\Gamma$ is a continuous function of $T$.
	If $T$ changes the support of $\Gamma$ may also change, but
	there will also be trees nearby $T$ with geodesics to $T'$ having the same support as $\Gamma$.
	For a fixed tree $T'$ the space of phylogenetic trees ${\mathcal T}_n$ can be subdivided into regions, called \emph{vistal cells}, for whose points all geodesic to $T'$ have the same support.
	
	Formally, the combinatorial structure of the geodesic $\Gamma$ from $T$ to $T'$ is the same for any
	$T$ in the vistal cell ${\cal V}(T';\Or;\A,\B;\mathcal{S})$, where $\Or = \Or(T)$ is the minimal
	orthant containing $T$.
	The space of squared coordinates $\T_r^2$ defined by the bijection from any point $x \in \T_r$
	$x_e \rightarrow y_e = x_e^2$.
	The image of a vistal cell in $\T^2_n$ is called the \emph{squared vistal cell} (s-vistal for short)
	${\cal V}^2(T';\Or;\A,\B;\mathcal{S})$.
	It is more convenient to describe the s-vistal cells in $\T_r^2$ because these sets have
	the benefit of being polyhedral cones. }

\begin{defn}
	\cite[Def. 3.3]{Miller} Let $T$ be a tree $\T_r$. Let $\Or$
	be a maximal orthant containing $T$. The \emph{previstal facet}, ${\cal V}(T,\Or ; \A, \B)$, is the set of variable trees, $X$ $\in \Or$, for which the geodesic joining $X$ to $T$ has support $(\A,\B)$ satisfying $(P2)$ and
	$(P3)$ with strict inequalities.
\end{defn}

The description of the previstal facet $\V(T,\Or; \A, \B)$ becomes linear after a simple change of variables. Let $x_e$ denote the coordinate in $\T_r$ indexed by edge $e$.
\begin{defn}
	\cite[Def. 3.4]{Miller} 
	The \emph{squaring map} $\T_r \to \T_r$ acts on $x \in T_r \subset \mathbb{R}^{E}_+$ by squaring
	coordinates:
	\begin{displaymath}
		(x_e | e\in E) \to (\xi_e | e\in E), \textrm{ where } \xi_e = x^2_e
	\end{displaymath}
\end{defn}

Denote by $\T_r^2$ the image of this map, and let $\xi_e = x_e^2$ denote the coordinate indexed by
$e \in E$. The image of an orthant in $\T_r$ is then the equivalent orthant in $T_r^2$, and the image of a previstal facet ${\cal V}(T,\Or;\A,\B)$ in $\T_r^2$ is a \emph{vistal facet} denoted by ${\cal V}^2(T,\Or;\A,\B)$. 
With this change of variables, $\norm{S} = \sum_{e \in S}{\xi_e}$.

The squaring map induces on the Fr\'{e}chet function $F$ a corresponding pullback function
\begin{displaymath}
	F^2(\xi) = F(\sqrt{\xi})\textrm{, where}(\sqrt{\xi})_e = \sqrt{\xi_e}.
\end{displaymath}
Since the Fr\'{e}chet function $F(X)$ has a unique minimizer $F^2(\xi)$ must also have a unique minimizer.
\begin{prop}
	\cite[Prop. 3.5]{Miller} The vistal facet ${\cal V}^2(T,\Or;\A,\B)$ is a convex polyhedral cone in $\T_r^2$ defined by the following inequalities on $\xi \in \mathbb{R}^{r-2}$.
	\begin{itemize}
		\item[(O)] $\xi \in \Or$; that is, $\xi_e \geq 0$ for all $e \in E$, and $\xi_e = 0$ for $e \notin  E$, where $\Or = \mathbb{R}^{r-2}_{\geq 0}$. 
		\item[(P2)] $\norm{B_{i+1}}^2$ $\displaystyle{ \sum_{e \in A_i} \xi_e \leq \norm{B_i}^2 \sum_{e \in A_{i+1}}{\xi_e}}$
		for all $i = 1,\ldots, k-1$.
		\item[(P3)] $\norm{B_i \setminus J}$ $\displaystyle{ \sum_{e \in A_i \setminus I} \xi_e \geq \norm{J} \sum_{e \in I}{\xi_e}}$
		for all $i = 1, \ldots, k$ and subsets $I \subset A_i$, $J \subset B_i$ such that $I \cup J$ is compatible.
	\end{itemize}
\end{prop}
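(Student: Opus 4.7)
The plan is to translate the three defining conditions from Theorem \ref{thm:supports} into the squared coordinates $\xi_e = x_e^2$ and verify that each becomes a homogeneous linear constraint. Since $\norm{S}^2 = \sum_{e \in S} x_e^2 = \sum_{e \in S} \xi_e$ for any edge set $S$ of the variable tree $X$, the squaring map linearizes all squared-norm expressions arising in the support conditions. The quantities $\norm{B_i}$, $\norm{J}$, and $\norm{B_i \setminus J}$ all refer to the fixed tree $T$, so they are constants in $\xi$.

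First I would handle (O): the previstal facet sits in the orthant $\Or$, and the squaring map is a coordinate-wise bijection $\mathbb{R}_{\geq 0} \to \mathbb{R}_{\geq 0}$, so the image inherits the orthant description $\xi_e \geq 0$ for $e \in E$ and $\xi_e = 0$ otherwise. Next, for (P2), I would rewrite each ratio inequality $\norm{A_i}/\norm{B_i} \leq \norm{A_{i+1}}/\norm{B_{i+1}}$ by cross-multiplying through positive quantities to get $\norm{A_i}\,\norm{B_{i+1}} \leq \norm{A_{i+1}}\,\norm{B_i}$, then square both sides (legal since all terms are non-negative) to obtain $\norm{B_{i+1}}^2 \sum_{e \in A_i} \xi_e \leq \norm{B_i}^2 \sum_{e \in A_{i+1}} \xi_e$, which is linear in $\xi$ because $\norm{B_i}$ and $\norm{B_{i+1}}$ depend only on $T$.

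For (P3), the non-existence condition converts into a universal statement: for every pair $I \subset A_i$, $J \subset B_i$ with $I \cup J$ compatible (the partition case corresponds to the labels $C_1 = A_i \setminus I$, $C_2 = I$, $D_1 = J$, $D_2 = B_i \setminus J$), the inequality $\norm{I}/\norm{J} \leq \norm{A_i \setminus I}/\norm{B_i \setminus J}$ must hold. Applying the same cross-multiply-and-square step as in (P2) yields the stated linear inequality in $\xi$. There are only finitely many indices $i$ and finitely many subset pairs $(I,J)$, so this step produces a finite family of half-space constraints.

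Finally I would observe that all the resulting constraints (orthant, (P2), (P3)) are homogeneous linear in $\xi$, so their common intersection is a convex polyhedral cone; combined with the equalities $\xi_e = 0$ for $e \notin E$, this gives the description claimed. The main obstacle I anticipate is bookkeeping for (P3): one must verify that the passage from ``no bad partition exists'' to the universal quantifier over all $(I,J)$ with $I \cup J$ compatible faithfully captures the (P3) condition in Theorem \ref{thm:supports}, and that the cross-multiply-and-square operation does not weaken or strengthen the inequality (which holds only because every quantity involved is non-negative). A secondary subtlety is the distinction between the strict inequalities in the definition of the previstal facet and the non-strict inequalities describing the cone; one should view the stated cone as the closure of the vistal facet, with the open facet itself obtained by taking (P2) and (P3) strict.
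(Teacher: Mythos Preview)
The paper does not supply its own proof of this proposition; it is quoted verbatim from \cite{Miller}. Your overall strategy---rewrite the three conditions of Theorem~\ref{thm:supports} in the squared coordinates $\xi_e = x_e^2$, note that $\norm{S}^2 = \sum_{e\in S}\xi_e$ linearizes the variable norms while the $B$-norms are constants---is exactly the right one, and your handling of (O) and (P2) is correct.

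There is, however, a bookkeeping slip in your (P3) step. With your labeling $C_1 = A_i\setminus I$, $C_2 = I$, $D_1 = J$, $D_2 = B_i\setminus J$, the failure condition in Theorem~\ref{thm:supports} is $\norm{C_1}/\norm{D_1} < \norm{C_2}/\norm{D_2}$, i.e.\ $\norm{A_i\setminus I}/\norm{J} < \norm{I}/\norm{B_i\setminus J}$. Its negation is
\[
\frac{\norm{A_i\setminus I}}{\norm{J}} \;\ge\; \frac{\norm{I}}{\norm{B_i\setminus J}},
\]
not the inequality $\norm{I}/\norm{J} \le \norm{A_i\setminus I}/\norm{B_i\setminus J}$ that you wrote; you have swapped which $B$-piece sits under which $A$-piece. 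These two inequalities are genuinely different (take $\norm{I}=1$, $\norm{J}=4$, $\norm{A_i\setminus I}=2$, $\norm{B_i\setminus J}=1$: yours holds, the correct one fails). Cross-multiplying and squaring the correct version gives $\norm{B_i\setminus J}^2 \sum_{e\in A_i\setminus I}\xi_e \ge \norm{J}^2 \sum_{e\in I}\xi_e$, which is what the proposition should say (the displayed statement is missing the squares on $\norm{B_i\setminus J}$ and $\norm{J}$; compare the paper's own later use in \eqref{(P3)basic}). Once you fix the swapped denominators, your argument goes through and your closing remarks about strict versus non-strict inequalities are appropriate.
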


\begin{prop}
	\cite[Prop. 3.6]{Miller} The vistal facets are of dimension $2r-1$, have pairwise disjoint interiors, and cover $\T_r^2$.
	A point $\xi \in \T^2_r$ lies interior to a vistal facet $\V^2(T,\Or; \A, \B)$ if and only if the inequalities in 
	(O), (P2), and (P3) are strict.
\end{prop}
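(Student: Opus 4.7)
The plan is to establish the three claims---coverage, pairwise disjoint interiors, and full dimension---by combining the uniqueness of geodesics in $\T_r$ with the geodesic--support characterization in Theorem~\ref{thm:supports}; the same circle of ideas will also yield the strict-inequality description of the interior.

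For coverage, I would take an arbitrary $\xi \in \T_r^2$ and let $X = \sqrt{\xi} \in \T_r$ be its preimage under the squaring map. Choosing a maximal orthant $\Or$ with $X \in \Or$, Theorem~\ref{thm:supports} produces a support sequence $(\A,\B)$ satisfying (P1)--(P3) for the unique geodesic from $X$ to $T$, which places $\xi$ inside $\V^2(T,\Or;\A,\B)$. For disjoint interiors, suppose $\xi$ lies in the interior of both $\V^2(T,\Or;\A,\B)$ and $\V^2(T,\Or';\A',\B')$. Strictness of (O) forces $\xi$ to sit in the topological interior of a unique maximal orthant, so $\Or=\Or'$. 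By Theorem~\ref{thm:supports}, each support encodes a geodesic from $X$ to $T$; geodesic uniqueness in the non-positively curved space $\T_r$ makes these two geodesics coincide, and since the combinatorial trajectory of a geodesic determines its support partition block-by-block, we conclude $(\A,\B)=(\A',\B')$. Reading this argument backward gives the interior-iff-strict characterization: whenever some inequality in (O), (P2), or (P3) is tight at $\xi$, the point $\xi$ sits on a boundary face of the polyhedral cone $\V^2(T,\Or;\A,\B)$ and therefore not in its interior.

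For full dimension, I would exhibit an interior point of each vistal facet. Given a support $(\A,\B)$ satisfying (P1)--(P3), I would assign edge lengths to $X$ in $\Or$ so that the block ratios $\norm{A_1}/\norm{B_1} < \cdots < \norm{A_k}/\norm{B_k}$ are strictly increasing---for instance, by uniformly rescaling the $A_j$-coordinates within each block---then perturb lengths inside each block so that every compatible sub-partition $(I,J)$ yields a (P3) ratio strictly separated from its threshold, and finally keep all common-edge lengths $|e|_X$ for $e\in C$ strictly positive. The squared coordinates of this $X$ satisfy (O), (P2), and (P3) strictly, so the interior of $\V^2(T,\Or;\A,\B)$ is nonempty; since $\V^2(T,\Or;\A,\B)\subseteq \Or$ with $\dim \Or = 2r-1$, this forces the facet to be $(2r-1)$-dimensional.

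The main obstacle I anticipate is this last construction: one must verify that for every support sequence actually realized by some tree in $\Or$, the strict versions of (P2) and (P3) are simultaneously satisfiable inside the relative interior of $\Or$. Because (P3) contributes one inequality for each pair of compatible sub-partitions of each block $(A_i,B_i)$, the constraint system is combinatorially large, and showing that no (P3) constraint collapses identically to an equality over a full-dimensional subset of $\Or$ requires a genuine feasibility argument---arguably the only step beyond routine bookkeeping.
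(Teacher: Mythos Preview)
The paper does not supply its own proof of this proposition: it is quoted verbatim from \cite[Prop.~3.6]{Miller} and used as background, with no argument given in the thesis. So there is nothing in the paper to compare your proposal against directly.

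That said, your outline is a reasonable reconstruction of the standard argument. Coverage via Theorem~\ref{thm:supports} and disjointness of interiors via geodesic uniqueness are the right ideas. Two points deserve care. First, in the disjoint-interiors step, the claim that ``the combinatorial trajectory of a geodesic determines its support partition block-by-block'' is only true when (P2) holds strictly; you are implicitly using that interior points satisfy strict (P2), which is fine, but you should make that dependency explicit. Second, in the interior characterization you assert that tightness of any single (O), (P2), or (P3) inequality forces $\xi$ onto the boundary of the cone. This presumes each such inequality is facet-defining (non-redundant); otherwise a tight redundant inequality would not place $\xi$ on the boundary. Establishing irredundancy is part of what Miller et al.\ actually prove, and it is bound up with the full-dimension claim you flagged as the main obstacle.

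On that last point you are right to be cautious: showing that every valid support $(\A,\B)$ admits edge lengths satisfying (P2) and (P3) strictly is the substantive step, and your perturbation sketch does not yet constitute a proof. The cleanest route in the literature is to observe that any point realizing the support can be perturbed within its orthant so that all (P2) ratios become distinct and all (P3) min-cut values become non-degenerate, using the flow interpretation of (P3) from \cite{OwenProvan}; this is essentially how \cite{Miller} handles it.
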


Points which are not on the interior of vistal facets are in \emph{vistal cells},
the faces of vistal facets.
A point $\xi$ is on a vistal facet precisely when some of the inequalities in (O), (P2) or (P3) are satisfied at equality. In such a situation, there is more
than one valid support for the geodesic from $T$ to the pre-image $X$ of $\xi$.

A system of equations defining a vistal cell can be formed by combining 
the systems of equations from adjacent vistal facets and forcing
appropriate constraints to equality. 
A canonical description of vistal cells is given in \cite[Sec. 3.2.5]{Miller}.

Let $T^1,\ldots,T^n$ be a set of points in $\T_r$.
A region $\V$ in squared treespace where the geodesics $\Gamma_{XT^1},\dots, \Gamma_{XT^n}$ can 
be represented by a fixed set of supports is called 
a \emph{multi-vistal cell}. 
A multi-vistal cell is an intersection of vistal facets of $T^1,\ldots,T^n$.
Multi-vistal cells and their pre-images in $\T_r$, \emph{pre-multi-vistal cells},
are regions where the Fr\'{e}chet function can be represented
with a fixed algebraic form.

The systems of equations defining pre-vistal facets and pre-vistal cells
are quadratic cones with cone points at the origin of treespace. 
In squared treespace, the vistal facets and vistal cells are polyhedral cones.
Multi-vistal facets are  also polyhedral cones with cones points
at the origin of treespace, because they are
intersections of polyhedral cones with cone points
at the origin of treespace. 
This nice geometric structure is useful both in determining 
when a search point is on the boundary of a vistal cell, and thus when
the objective function has multiple forms, and for dynamically updating the objective function during line searches,
as described in Sec. \ref{sec:updating}.

\sean{
	The following definition describes the specific form of a support for $\Gamma_{XT}$.
	\begin{defn}
		\cite[Def. 3.7]{Miller} Fix a source tree $T \in \T_r$, a (not necessarily maximal) orthant $\Or \subset \T_r$, and a support $(\A,\B)$. A \emph{signature} associated with the support $(\A,\B)$ is  a length $k-1$ sequence
		$\S = (\sigma_1,\ldots,\sigma_{k-1})$ of symbols $\sigma_i \in \{=,\leq\}$. The \emph{previstal cell} defined by $\Or$, $\A$, $\B$, and $\S$ is the set $\V(T,\Or;\A,\B;\S)$ of points $X$ in $\T_r$ for which the ratio sequence for $(\A,\B)$ at the point $X$ has the following specific form:
		\begin{displaymath}
			\frac{\norm{A_1}}{\norm{B_1}} \sigma_1 \frac{\norm{A_2}}{\norm{B_2}} \sigma_2 \ldots
			\sigma_{k-2}\frac{\norm{A_{k-1}}}{\norm{B_{k-1}}} \sigma_{k-1} \frac{\norm{A_k}}{\norm{B_k}}.
		\end{displaymath}
		The \emph{vistal cell} $\V^2(T,\Or;\A,\B; \S) \subset \T^2_r$ is the image of $\V(T,\Or; \A, \B; \S)$ under squaring.
	\end{defn}

	\begin{thm}\cite[Thm. 3.25]{Miller}
		Fix a tree $T \in \T_r$.
		\begin{enumerate}
			\item Vistal cells associated with geodesics to $T$ are exactly those of the
			form ${\cal V}^2(T,\Or; \A,\B;\S)$, where $(\A,\B)$ is a valid support sequence
			for $(\Or,T)$ and ${\cal S}$ is a signature on $(\A,\B)$. Here a signature is a list
			of ``=", ``$<$", and ``$\leq$" symbols in (P2).
			\item The dimension of a vistal cell ${\cal V}(T,\Or; \A,\B; {\cal S})$ is $dim(\Or)-m({\cal S})$, where $m({\cal S})$ is the number of ``=" components in ${\cal S}$.
			\item The representation by a valid support sequence and signature is unique up to reordering the
			support sets within each equality subsequence of $\cal S$.
		\end{enumerate}
	\end{thm}
	
	\begin{defn}
		\cite[Def. 3.31]{Miller} A \emph{premultivistal cell} for a collection $\bold{T}$ of trees is a set of the form
		\begin{displaymath}
			\V(\bold{T};\Or;\A^\bold{T},\B^\bold{T}) = \bigcap_{i = 1}^n{\V(T^i, \Or; \A^i, \B^i)},
		\end{displaymath}
		where $\V(T^i, \Or; \A^i, \B^i)$ are previstal cells and
		\begin{displaymath}
			(\A^\bold{T},\B^\bold{T}) = \{(A^1,B^1),\ldots, (\A^n,\B^n)\}
		\end{displaymath}
		is a collection of support pairs for $(T^i,X)$-geodesics. A \emph{multivistal cell} (m-vistal) is the image in $\T^2_n$ of 
		a premultivistal cell. 
	\end{defn}
}
%

\section{Differential analysis of the Fr\'{e}chet function in treespace}\label{sec:DiffAnalysis}

Analysis of how $F(X)$ changes with respect to $X$ provides 
useful insights for designing fast optimization algorithms.
This analysis is aimed at providing summaries for how the value of $F(X)$
changes with respect to $X$. 
These results also play an important role in Ch. \ref{ch:Stickiness}, which focuses
on stickiness of Fr\'{e}chet means in treespace. 


Let $X$ and $Y$ be points in $\T_r$ such that $X$ and $Y$ share a multi-vistal facet.
If this is the case, then either (i) $X$ and $Y$ have the same topology, (ii) $X$ is a contraction of $Y$ or (iii) $Y$ is a contraction of $X$.
Assume that if the topologies of trees $X$ and $Y$ differ then $X$ is a contraction of $Y$, that is $\Or(X) \subseteq \Or(Y)$. 
Let $\Gamma(X,Y;\alpha)$ be the parameterized geodesic from $X$ to $Y$
with $0 \leq \alpha \leq 1$.
\sean{
	Let $p_e = |e|_Y-|e|_X$.
	Let $Z_\alpha$ be a point on the geodesic segment from $X$ to $Y$, such that $|e|_Z=|e|_X + \alpha p_e$, where $0 \leq \alpha \leq 1$. }
\begin{defn}\label{def:DirDer}
	The \emph{directional derivative from $X$ to $Y$} is 
	\begin{align}
		F'(X,Y) =&
		\lim_{\alpha \to 0}\frac{F(\Gamma(X,Y;\alpha))-F(X)}{\alpha}
	\end{align}
\end{defn}
\sean{
	The focus here is the directional derivative. 
	Another common notion of derivative is
	the \emph{normalized directional derivative from $X$ to $Y$},
	\begin{align}
		\bar{F}'(X,Y) =&
		\lim_{d(X,Z) \to 0}\frac{F(Z)-F(X)}{d(X,Z)}=\lim_{\alpha \to 0}\frac{F(Z)-F(X)}{\alpha d(X,Y)}
	\end{align}
	Both types are mentioned here to clarify from the beginning
	that the focus of this analysis is the directional derivative and not
	the normalized directional derivative.}
The main results of this section are summarized as follows:
Cor. \ref{cor:DirDerTangentSpace} gives
the value of the directional derivative when $\Or(X)=\Or(Y)$ and
Thm. \ref{thm:DirDerValue} gives the value of the directional
derivative when $\Or(X) \subseteq \Or(Y)$,
when assuming $Y$ is contained on the interior of a
multi-vistal facet.
In Lem. \ref{lem:DDWellDefined} we show that when
when $Y$ is on a multi-vistal face 
the value of the directional derivative 
can be expressed equivalently using any of the
representations for the geodesics from $T^1,\ldots,T^n$ to $Y$.
In Lem. \ref{lem:DirDerContinuous} and Lem. \ref{lem:DirDerConvex}
we show that the directional derivative is continuous and convex 
with respect to $Y$ on $\T_r$. Thm. \ref{thm:DirDerDecomposition} states that the value of the directional derivative can be decomposed into
a contribution from the change in $F(X)$ resulting in adjusting positive
length edges in $X$, and a contribution from the 
change in $F(X)$ resulting in increasing the lengths of edges
from zero. 

\sean{
	The assumption that $X$ and $Y$ are in a shared multi-vistal cell can be made without loss of generality
	because directional derivatives 
	describe local behavior. Even if $X$ and $Y$ are not close enough to share a multi-vistal cell, the end of the geodesic containing $X$
	will have a small segment completely contained in the same 
	multi-vistal cell as $X$. Thus as $Z$ approaches $X$ along the geodesic between $X$ and $Y$, $Z$ will eventually stay within a shared mult-vistal cell with $X$.
}

When both $X$ and $Y$ are in the relative interior of the same maximal orthant of treespace,
where the gradient at $X$ is well defined in $\Or(Y)$, the directional derivative can be expressed
in terms of the gradient at $X$ inside $\Or(Y)$. However when $\Or(X) \subsetneqq \Or(Y)$, the
gradient at $X$ might not be well defined in $\Or(Y)$. Analysis of the directional derivative in the
later situation, which is one of the main focuses of this section, is important because it
facilitates concise specification of optimality conditions and an efficient algorithm for verifying that a point on a lower dimensional face of an orthant $\Or$ is the minimizer of the Fr\'{e}chet function within $\Or$. 
\begin{thm}\label{thm:Grad}
	\cite{Miller}[Thm. 2.2] The gradient of $F$ is well defined on the interior of every maximal orthant $\Or$.
\end{thm}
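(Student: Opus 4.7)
The plan is to exploit the explicit geodesic-length formula from Theorem~\ref{thm:supports} together with the multi-vistal decomposition introduced in Section~\ref{sec:VistalCells}. Fix a maximal orthant $\Or$; a point $X$ in the interior of $\Or$ has a fixed topology and all positive edge lengths, so it can be identified with a point of $\mathbb{R}^{2r-1}_{>0}$ whose coordinates are the lengths $|e|_X$. The multi-vistal facets intersected with $\mathrm{int}(\Or)$ partition $\mathrm{int}(\Or)$ into relatively open regions on each of which the support sequences $(\A^i,\B^i)$ for $\Gamma(X,T^i)$ are constant.

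First I would differentiate the distance formula \eqref{pathlength} term by term. On the interior of a single multi-vistal facet, every $\norm{A^i_l}_X$ is strictly positive (since $A^i_l \subseteq E_X$ and edge lengths are positive in $\mathrm{int}(\Or)$), so the maps $|e|_X \mapsto \norm{A^i_l}_X$ are $C^\infty$. This yields the explicit partial derivatives
\begin{equation*}
\partial_{|e|_X}\, d(X,T^i)^2 \;=\; \begin{cases} 2\bigl(|e|_X - |e|_{T^i}\bigr), & e \in C^i,\\[0.3em] 2\,|e|_X\left(1 + \dfrac{\norm{B^i_l}_{T^i}}{\norm{A^i_l}_X}\right), & e \in A^i_l. \end{cases}
\end{equation*}
Summing over $i$ gives a closed-form $\nabla F(X)$ on the interior of each multi-vistal facet intersected with $\Or$.

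The main obstacle is showing that these partial derivatives agree across the boundaries between adjacent multi-vistal facets inside $\Or$, so that $\nabla F$ is not merely piecewise defined. By the classification in Section~\ref{sec:VistalCells}, such a boundary corresponds to either a (P2)-equality (two consecutive support pairs $(A^i_l,B^i_l),(A^i_{l+1},B^i_{l+1})$ may be merged) or a (P3)-equality (a pair may be split at compatible $I\cup J$). In the (P2) case, the boundary is characterized by a common ratio $r = \norm{A^i_l}_X/\norm{B^i_l}_{T^i} = \norm{A^i_{l+1}}_X/\norm{B^i_{l+1}}_{T^i}$; a short calculation using $\norm{A\cup A'}^2=\norm{A}^2+\norm{A'}^2$ shows $\norm{A^i_l \cup A^i_{l+1}}_X/\norm{B^i_l \cup B^i_{l+1}}_{T^i} = r$ as well, and hence the factor $(1 + \norm{B}/\norm{A})$ above is unchanged. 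The (P3) case reduces to the same identity after splitting. Alternatively, I could invoke uniqueness of geodesics in $\T_r$: the initial tangent vector of $\Gamma(X,T^i)$ at $X \in \mathrm{int}(\Or)$ is an intrinsic element of $T_X\Or \cong \mathbb{R}^{2r-1}$, and the standard identity $\nabla_X d(X,T^i)^2 = -2\,\log_X(T^i)$ then makes the continuity manifest without inspecting the cases.

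Having established both smoothness on each multi-vistal piece and agreement of the partial derivatives along the shared faces of those pieces inside $\Or$, I conclude that $\nabla F$ extends to a well-defined continuous vector field on $\mathrm{int}(\Or)$, completing the proof.
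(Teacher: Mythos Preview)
Your proposal is correct and follows essentially the same approach as the paper: smoothness of $F$ on each multi-vistal facet together with agreement of the partial-derivative formula across facet boundaries via the equal-ratio condition $\norm{A_l}/\norm{B_l}=\norm{A_{l+1}}/\norm{B_{l+1}}$ (and its (P3) analogue). Your write-up is in fact more detailed than the paper's brief ``idea of proof,'' and your alternative observation via $\nabla_X d(X,T^i)^2=-2\log_X(T^i)$ is a nice conceptual shortcut that the paper does not mention.
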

\noindent Idea of proof.
The Fr\'{e}chet function is smooth in each multi-vistal facet, and it can be shown that the gradient
function has the same value in every multi-vistal facet containing $X$ in the interior of $\Or$. Therefore the gradient is well -defined on the interior of $\Or$. 

\begin{cor}\label{cor:DirDerTangentSpace}
	When $\Or(X)=\Or(Y)$ the value of directional derivative from $X$ to $Y$ can be expressed
	in terms of the gradient at $X$, and the differences in edge lengths $p_e =|e|_Y-|e|_X$, as
	\begin{align}
		F'(X,Y) = \sum_{e\in E_X} p_e \left[\nabla F(X)\right]_e
	\end{align}
\end{cor}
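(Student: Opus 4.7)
The plan is to leverage Theorem \ref{thm:Grad} directly. Since $\Or(X) = \Or(Y)$, both trees share the same topology and lie in a common Euclidean orthant $\Or$. Within a single Euclidean orthant the geodesic metric coincides with the Euclidean metric, so $\Gamma(X,Y;\alpha)$ is simply the straight line segment whose coordinates are $|e|_X + \alpha p_e$ for every $e \in E_X$. In particular, the whole segment stays inside the interior of $\Or$ whenever $X$ and $Y$ are in the interior of $\Or$; if they sit on a lower-dimensional face of $\Or$, the argument is applied within the minimal orthant containing them.

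Next, I would invoke the polyhedral structure of multi-vistal facets from Sec.~\ref{sec:VistalCells}: the collection of multi-vistal facets covers $\Or$, each is a closed polyhedral region, and $X$ sits in finitely many of them. Hence for sufficiently small $\alpha > 0$ the point $\Gamma(X,Y;\alpha)$ lies in some multi-vistal facet $\V$ that also contains $X$. Within $\V$ the Fr\'echet function has a fixed algebraic form that is $C^\infty$ in the edge-length coordinates, so the ordinary multivariable chain rule applies to the smooth restriction $F|_\V$ along the line segment, yielding
\begin{equation}
F(\Gamma(X,Y;\alpha)) = F(X) + \alpha \sum_{e\in E_X} p_e \, [\nabla F|_\V(X)]_e + o(\alpha).
\end{equation}
Dividing by $\alpha$ and letting $\alpha \to 0$ gives the desired expression, with the gradient being the gradient of $F|_\V$ at $X$.

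The final step is to remove the dependence on $\V$: Theorem~\ref{thm:Grad} asserts that $\nabla F$ is well defined on the interior of $\Or$, i.e.\ the gradient computed in any multi-vistal facet meeting the interior of $\Or$ agrees. Therefore $\nabla F|_\V(X) = \nabla F(X)$, and the formula
\begin{equation}
F'(X,Y) = \sum_{e \in E_X} p_e\, [\nabla F(X)]_e
\end{equation}
follows. The only subtlety, and the one I would be careful about, is the choice of $\V$ when $X$ lies on the common boundary of several multi-vistal facets inside $\Or$: the direction $Y-X$ may point the segment into one of them rather than another, but by Theorem~\ref{thm:Grad} they all yield the same gradient vector, so the limit is independent of the choice and the corollary holds.
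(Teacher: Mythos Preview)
Your proposal is correct and follows the same approach as the paper: the paper's proof is a single sentence appealing to standard calculus via Theorem~\ref{thm:Grad}, and you have simply unpacked that appeal by explaining why the Fr\'echet function is smooth enough along the segment (the multi-vistal facet structure) and why the resulting gradient is independent of the facet chosen (Theorem~\ref{thm:Grad}). Your version is more detailed but not substantively different.
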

\begin{proof}
	Expression of a directional derivative of a smooth function in terms of its gradient is a standard technique in calculus.
\end{proof}

The gradient may not be well defined on a lower-dimensional orthant of treespace.
For a point on a lower dimensional orthant of
treespace, a well defined analogue to the gradient is the \emph{restricted gradient}. 
\sean{
	The following lemma is used for justifying that the restricted gradient (in the proceeding definition)
	is well-defined on the interior of a multi-vistal cell relative to a particular orthant of treespace. 
	\begin{lem}
		Let $X$ be a point in the relative interior of a multi-vistal cell in any orthant of treespace, $\Or$, 
		and assume, without loss of generality, that $X$ does not have any zero length edges.
		The geodesic from $X$ to $T^i$ has exactly one valid support sequence.
	\end{lem}
	\begin{proof}
		When $X$ is interior to a maximal orthant, then it is on the interior of a vistal facet.
		When $X$ is on the interior of a lower-dimensional orthant it is on the shared orthant boundaries
		of several multi-vistal cells, and the positivity constraints $(O)$ for each of these
		are at equality.
		However, since $X$ is in the interior of a multi-vistal cell relative to $\Or$, all (P2) and (P3) inequalities hold at strict inequality. Therefore there is only one valid support sequence for the geodesic from $X$ to $T^i$.
	\end{proof}
}

\begin{defn}\label{def:FrGrad}
	Let $(A^i_1,B^i_1),\ldots,(A^i_{k^i},B^i_{k^i})$ be a support
	sequence for the geodesic from $X$ to $T^i$. The \emph{restricted gradient}
	is the vector of partial derivatives which correspond to points $Y$ with $\Or(X) \subseteq \Or(Y)$ and $Y-X$ parallel to the axes of $\Or(X)$.
	If $|e|_X >0$ then
	\begin{align}
		\left[\nabla F(X)\right]_e &= \lim_{\Delta e \to 0} \frac{F(X+\Delta e)-F(X)}{\Delta e}\\
		&= \sum_{i=1}^n\left\{ \begin{array}{ll}
			|e|_X \left(1+ \frac{||B^i_l||}{||A^i_l||}\right) & \textrm{if $e \in A_l^i$} \\
			\left(|e|_X-|e|_{T^i} \right) & \textrm{if $e \in C^i$}\\
		\end{array} \right.
	\end{align}
	and  if $|e|_X = 0$ then $\left[\nabla F(X)\right]_e=0$. 
\end{defn}
When $X$ is on the interior of a maximal orthant of treespace then the restricted gradient is the same as the gradient.
Note that in the case when $A^i_l=\{e\}$, $|e|_X \left(1+ \frac{||B^i_l||}{||A^i_l||}\right)=|e|_X + \norm{B^i_l}$.

Second order derivatives will be used in calculating Newton directions
in Sec. \ref{sec:IntPointMethods}.

\begin{defn}\label{def:FrHess}
	Let $X$ be a point in the interior of a multi-vistal cell relative to an orthant $\Or$ of treespace. 
	The restricted Hessian on $\Or$ is the matrix of second order partial derivatives
	with entries having the following values:
	\begin{equation}\label{FrHess}
		\left[\nabla^2 F(X)\right]_{ef} = 2 \sum_{i=1}^{r} { \left\{ \begin{array}{ll}
				1+\frac{\norm{B_l^i}}{\norm{A_l^i}} - \frac{\norm{B^i_l}}{\norm{A^i_l}^3}x_e^2 & \textrm{if $e=f$, $e \in A^i_l$, $|A^i_l|>1$} \\
				1& \textrm{if $e=f$, $e \in A^i_l$, $A^i_l=\{e\}$} \\
				1 & \textrm{if $e=f$ $e \in C^i$} \\
				-\frac{\norm{B^i_l}}{\norm{A_l^i}^3}x_e x_f & \textrm{if $e \neq f$ $e,f \in A_l^i$} \\
				0 & \textrm{otherwise}
			\end{array} \right. }
	\end{equation}
	If either $|e|_X=0$ or $|f|_X=0$ then $\left[\nabla^2 F(X)\right]_{ef}=0$. 
\end{defn}

\begin{thm}
	The value of the restricted gradient at a point $X$ can be expressed equivalently using the algebraic form of the Fr\'{e}chet function from any of the multi-vistal facets containing $X$.
\end{thm}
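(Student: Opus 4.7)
My plan is to reduce to a single data tree and show that the gradient contribution per edge depends only on geometric data intrinsic to $X$ and the geodesic $\Gamma(X,T^i)$, not on the particular valid support sequence used to write down the algebraic form of $F$. Since the restricted gradient is additive over the sum $F(X) = \sum_i d^2(X,T^i)$, it suffices to verify the claim one data tree at a time. For each such geodesic, the set $C^i$ of edges common to $X$ and $T^i$ is determined combinatorially by the two topologies, independent of the chosen support, so the contributions of the form $(|e|_X - |e|_{T^i})$ from $e \in C^i$ are automatically support-independent. Thus the only real issue is to show that the contribution $|e|_X(1 + \norm{B^i_l}/\norm{A^i_l})$ for $e \in A^i_l$ does not depend on which valid support contains $e$.

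Next, I would observe, using Thm.~\ref{thm:supports}, that two distinct valid supports at the same point $X$ can only differ by two types of elementary moves: (i) a \emph{merge} of consecutive pairs $(A^i_l, B^i_l)$ and $(A^i_{l+1}, B^i_{l+1})$ whenever (P2) holds at equality between them, and (ii) a \emph{split} of a pair $(A^i_l, B^i_l)$ along a compatible bipartition into $(C_1, D_1)$ and $(C_2, D_2)$ for which (P3) holds at equality. I would then verify the critical invariance: the ratio $\norm{B}/\norm{A}$ of the affected pair(s) is preserved by each move. For a split with $\norm{C_1}/\norm{D_1} = \norm{C_2}/\norm{D_2} = r$, a direct calculation gives $\norm{A^i_l}^2 = \norm{C_1}^2 + \norm{C_2}^2 = r^2(\norm{D_1}^2 + \norm{D_2}^2) = r^2\norm{B^i_l}^2$, so the parent pair also has ratio $r$; the analogous calculation handles the merge. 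Consequently $|e|_X(1 + \norm{B^i_l}/\norm{A^i_l})$ is invariant under each elementary move, and so each term in Def.~\ref{def:FrGrad} is unchanged.

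The main obstacle will be establishing that any two valid support sequences at $X$ are actually connected by a finite chain of these elementary moves. I would handle this by constructing a \emph{canonical maximally refined} support at $X$, obtained by greedily applying (P3)-equality splits until no further split is admissible, and then arguing by induction on the number of support pairs that every valid support at $X$ refines to this canonical form through a finite sequence of splits (or equivalently, coarsens to any other valid support by interleaving merges and splits). Because each elementary move preserves the gradient contribution edge-by-edge, reduction to the canonical form preserves the value of Def.~\ref{def:FrGrad}, which then agrees across all valid supports and therefore across all algebraic forms of $F$ on multi-vistal facets containing $X$.
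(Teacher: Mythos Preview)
Your proposal is correct and follows essentially the same line as the paper: reduce to a single data tree, observe that the common-edge contributions are automatically support-independent, and show that the ratio $\norm{B^i_l}/\norm{A^i_l}$ attached to any edge $e\in A^i_l$ is the same across all valid supports at $X$. The only real difference is in how you justify this last point. The paper invokes the structural result from \cite{Miller} that the strict-(P2) support is unique and that any other valid support arises by partitioning or reordering within equality subsequences of (P2), where the ratios automatically agree; it does not explicitly argue connectivity via elementary moves. Your merge/split framework and the reduction to a canonical maximally refined form amount to reproving a piece of that structural result from scratch. This makes your argument more self-contained, but the connectivity step you flag as the ``main obstacle'' is exactly what the paper sidesteps by citing \cite{Miller}; if you are willing to cite that result, your third paragraph becomes unnecessary.
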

\begin{proof}
	The restricted gradient has the same value
	using any of the valid support sequences defined by vistal cells on the relative interior of $\Or$.
	Here we verify that at $X$ the gradient of $d^2(X,T^i)$ is the same for every valid support and signature. The gradient of $d^2(X,T^i)$ for the support $(\A,\B)$ is given as follows. Let the variable length of edge $e$ in $X$ be written as $x_e$.
	\begin{equation}\label{graddistance}
		\frac{\partial d^2(X,T^i)}{\partial x_e}= \left\{ \begin{array}{ll}
			2 \left(1+ \frac{||B_l||}{||A_l||}\right)x_e & \textrm{if $e \in A^i_l$} \\
			2 \left(x_e-|e|_T^i \right) & \textrm{if $e \in C^i$}
		\end{array} \right.
	\end{equation}
	The geodesic $\Gamma$ has a unique support$(\A,\B)$ satisfying
	\begin{equation}\label{facetcondition}
		\frac{\norm{A_1}}{\norm{B_1}} < \frac{\norm{A_2}}{\norm{B_2}} < \ldots < \frac{\norm{A_k}}{\norm{B_k}}.
	\end{equation}
	From \cite{Miller}, any other support $(\A',\B')$ for $\Gamma$ must have a signature $\mathcal{S}'$ in $(P3)$ with some equality subsequences. Suppose that $A_j'$ and $B_j'$ are in some equality subsequence satisfying $(P2)$ with $B_j'$ containing the edge $e$. Then for the support pair $A_i$ and $B_i$ such that $B_i$ contains $e$, it must hold that $\frac{\norm{A_i'}}{\norm{B_i'}} = \frac{\norm{A_j}}{\norm{B_j}}$. Now we can see that $ \left(1+ \frac{||B_j'||}{||A_j'||}\right)x_e = \left(1+ \frac{||B_i||}{||A_i||}\right)x_e$, and that the gradient of $d^2(X,T^i)$ is the same on every multi-vistal facet containing $X$ on the relative interior of $\Or$.
\end{proof}

Now we extend the results for directional derivatives to the situation when $\Or(X) \subset \Or(Y)$.
\begin{thm}\label{thm:DirDerValue}
	Suppose that $Y$ lies in the interior of multi-vistal facet $V_Y$,
	and $X$ is some point in $V_Y$. Let $(A^i_1,B^i_1),\ldots, (A^i_n,B^i_n)$ be the support 
	pairs for the geodesic from $Y$ to $T^i$ and let $C^i$ be the set of edges
	in $Y$ which are common in $T^i$. Let $E_X$ be the set of edges with positive lengths in $X$. 
	Let $P$ be the vector with components $p_e = |e|_Y-|e|_X$ so that $\Gamma(X,Y;\alpha)=X+\alpha P$, and let $Z_\alpha:=\Gamma(X,Y;\alpha)$. 
	Then the value of directional derivative from $X$ to $Y$ is 
	\begin{displaymath}
		F'(X,Y)=\sum_{e \in E_X} p_e \left[\nabla F(X)\right]_e + 2 \sum_{i=1}^n \left ( \sum_{l: \norm{A^i_l}_X = 0} \left(  \norm{A^i_l}_P\norm{B^i_l}_T \right) -\sum_{e \in C^i \setminus E_X} p_e|e|_{T^i}  \right).
	\end{displaymath} 
\end{thm}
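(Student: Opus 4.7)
The plan is to parametrize $F$ along the segment $Z_\alpha := X + \alpha P$, use the fixed algebraic expression of $F$ on the multi-vistal facet $V_Y$ to get a closed form for $F(Z_\alpha)$, and then compute its one-sided derivative at $\alpha = 0^+$. First, I would argue that the segment lies in $V_Y$ for $\alpha \in [0,1]$, so that by \thmref{supports} the supports $(A^i_l, B^i_l)_l$ and common-edge sets $C^i$ valid for the geodesic from $Y$ to $T^i$ are also valid for each geodesic from $Z_\alpha$ to $T^i$. Granting this, the Fr\'{e}chet function has the form
\begin{equation*}
F(Z_\alpha) = \sum_{i=1}^n \left[\sum_l \bigl(\norm{z_{A^i_l}(\alpha)} + \norm{B^i_l}\bigr)^2 + \sum_{e \in C^i} (|e|_X + \alpha p_e - |e|_{T^i})^2\right],
\end{equation*}
where $\norm{z_{A^i_l}(\alpha)}^2 = \sum_{e \in A^i_l}(|e|_X + \alpha p_e)^2$. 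The directional derivative $F'(X,Y)$ is then the right derivative of this expression at $\alpha=0$.

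Next, I would split the differentiation according to whether $\norm{A^i_l}_X$ vanishes. For a block with $\norm{A^i_l}_X > 0$, the map $\alpha \mapsto \norm{z_{A^i_l}(\alpha)}$ is smooth at $\alpha = 0$ with derivative $\norm{A^i_l}_X^{-1}\sum_{e \in A^i_l \cap E_X} |e|_X p_e$; edges in $A^i_l\setminus E_X$ enter only through the $\alpha^2$ term and contribute nothing at first order. The chain rule then yields the contribution $2\sum_{e \in A^i_l \cap E_X} p_e |e|_X\bigl(1 + \norm{B^i_l}/\norm{A^i_l}_X\bigr)$, which by Def.~\ref{def:FrGrad} equals the restricted-gradient summand $\sum_{e \in A^i_l \cap E_X} p_e [\nabla F(X)]_e$. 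Likewise each $e \in C^i \cap E_X$ contributes $2 p_e(|e|_X - |e|_{T^i}) = p_e[\nabla F(X)]_e$. Summing over $i$, $l$, and $e$ reproduces $\sum_{e \in E_X} p_e [\nabla F(X)]_e$. In the complementary ``normal'' case, a block with $\norm{A^i_l}_X = 0$ has every $e \in A^i_l$ satisfying $|e|_X = 0$ and hence $p_e = |e|_Y$; therefore $\norm{z_{A^i_l}(\alpha)} = \alpha\norm{A^i_l}_P$ for $\alpha \geq 0$, and the right derivative of $(\norm{z_{A^i_l}(\alpha)} + \norm{B^i_l})^2$ at the origin is $2\norm{A^i_l}_P\norm{B^i_l}$. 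Similarly each $e \in C^i \setminus E_X$ contributes $-2 p_e |e|_{T^i}$, and collecting these reproduces the second term.

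The main obstacle is the first step: $V_Y$ is a convex polyhedral cone in squared coordinates with $Y^2$ interior and $X^2$ in its closure, but the path $\alpha \mapsto Z_\alpha^2$ is not linear in squared coordinates, so containment of the segment in $V_Y$ must be checked inequality by inequality (positivity, (P2), (P3)), exploiting the particular quadratic structure of each constraint together with the endpoint information. Once containment is established, the rest of the proof is term-by-term differentiation; the only mild subtlety is the one-sided behavior of $\sqrt{\cdot}$ at the origin when $\norm{A^i_l}_X = 0$, which the closed form $\norm{z_{A^i_l}(\alpha)} = \alpha\norm{A^i_l}_P$ resolves cleanly.
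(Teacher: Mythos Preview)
Your approach is essentially the same as the paper's: parametrize along $Z_\alpha = X + \alpha P$, write $F(Z_\alpha)$ using the fixed support sequences from $V_Y$, and differentiate term by term, splitting according to whether $\norm{A^i_l}_X$ vanishes. The paper's proof does exactly this, computing the limit $\lim_{\alpha\to 0}\bigl(d^2(Z_\alpha,T^i)-d^2(X,T^i)\bigr)/\alpha$ for each $i$ and separating the three types of terms (common edges, support pairs with $\norm{A^i_l}_X>0$, support pairs with $\norm{A^i_l}_X=0$) just as you do.

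One remark on what you flag as ``the main obstacle'': the paper does not verify that the entire segment $\{Z_\alpha : \alpha\in[0,1]\}$ lies in $V_Y$, and in fact it never addresses this point. For the directional derivative you only need validity of the support description for \emph{small} $\alpha>0$, not on all of $[0,1]$; and even that can fail if $X$ sits on a (P2) or (P3) face of $V_Y$. The resolution (which the paper handles separately in \lemref{DDWellDefined}) is that on such a face the adjacent vistal facet carries an alternative valid support which yields the same value for the derivative, so the formula is insensitive to which side of the face $Z_\alpha$ lands on. So your instinct that something needs checking is right, but the fix is the well-definedness lemma rather than a direct containment argument.
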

\begin{proof}
	Let $Z$ be a point on the geodesic segment between $X$ and $Y$. The length of edge $e$ in $Z$ be $|e|_Z = |e|_X + \alpha p_e$.
	The Fr\'{e}chet function is the sum of squared distances from a variable point to each of the data points $T^1,\ldots, T^n$, so the directional derivative of the Fr\'{e}chet function
	is the sum over the indexes of the data points of the directional derivatives of
	the square distances.  
	\begin{align}
		F'(X,Y) &=
		\lim_{\alpha \to 0}\frac{F(Z_\alpha)-F(X)}{\alpha}\\
		& = \lim_{\alpha \to 0} \frac{\sum_{i=1}^n d^2(Z_\alpha,T^i)-\sum_{i=1}^n d^2(X,T^i)}{\alpha}\\
		& = \sum_{i=1}^n \left ( \lim_{\alpha \to 0} \frac{d^2(Z_\alpha,T^i)-d^2(X,T^i)}{\alpha} \right )
	\end{align}
	For a set of edges $A$, let $\norm{A}_X = \sqrt{ \sum_{e \in A} |e|_X}$.  If an edge $e$ has zero length in a tree, $X$, or is compatible with $X$ but not present then take $|e|_X$ to be $0$.
	The squared distance from $Z_\alpha$ to $T^i$ can be expressed as 
	\begin{align}
		d^2(Z_\alpha,T^i) & = \sum_{l=1}^{k^i} (\norm{A^i_l}_{Z_\alpha}+\norm{B^i_l})^2+\sum_{e\in C^i} (|e|_{Z_\alpha}-|e|_{T^i})^2.
	\end{align}
	The squared distance has three types of terms: a term representing
	the contribution from common edges, terms for support pairs with $\norm{A^i_l}_X>0$, and terms for support pairs with $\norm{A^i_l}_X=0$.
	In the first two cases the gradient is well-defined, and taking the inner-product of the directional vector and the gradient will yield their contributions
	to the directional derivative. In the third case the gradient is undefined, and its value will be obtained by analyzing the limit directly as follows.
	\begin{align}
		&\left ( \lim_{\alpha \to 0} \frac{\sum_{l=1}^{k^i} (\norm{A^i_l}_{Z_\alpha}+\norm{B^i_l})^2-\sum_{l=1}^{k^i} (\norm{A^i_l}_X+\norm{B^i_l})^2}{\alpha} \right ).\label{eq:DirDerSqDist2}
	\end{align}
	Bringing out the sum and canceling in the numerators yields,
	\begin{align}
		\sum_{l=1}^{k^i} \lim_{\alpha \to 0} \frac{\norm{A^i_l}^2_Z-\norm{A^i_l}^2_X+2\norm{B^i_l}\left(\norm{A^i_l}_Z-\norm{A^i_l}_X\right)}{\alpha}.
	\end{align}
	The limit of the fraction can be split into the sum of two limits, 
	\begin{align}
		&\lim_{\alpha \to 0} \frac{\norm{A^i_l}^2_Z-\norm{A^i_l}^2_X+2\norm{B^i_l}\left(\norm{A^i_l}_Z-\norm{A^i_l}_X\right)}{\alpha}\\
		=&\lim_{\alpha \to 0} \frac{\norm{A^i_l}^2_Z-\norm{A^i_l}^2_X}{\alpha}+\lim_{\alpha \to 0}\frac{2\norm{B^i_l}\left(\norm{A^i_l}_Z-\norm{A^i_l}_X\right)}{\alpha} .\label{eq:LimitFrac1}
	\end{align}
	If every edge in $A^i_l$ has length zero in $X$, and thus $\norm{A^i_l}_X=0$,
	the expression the limit on the left is zero and the limit on the right simplifies to 
	\begin{align}
		2\norm{B^i_l} \norm{A^i_l}_P. \label{eq:Incomp2}
	\end{align}
	
	\sean{
		The directional derivative of the squared distance is
		\begin{align}
			&\left ( \lim_{\alpha \to 0} \frac{d^2(Z_\alpha,T^i)-d^2(X,T^i)}{\alpha} \right ) \label{eq:DirDerSqDist1}\\
			=&\left ( \lim_{\alpha \to 0} \frac{\sum_{l=1}^{k^i} (\norm{A^i_l}_{Z_\alpha}+\norm{B^i_l})^2-\sum_{l=1}^{k^i} (\norm{A^i_l}_X+\norm{B^i_l})^2}{\alpha} \right )\label{eq:DirDerSqDist2}\\
			&+\lim_{\alpha \to 0} \frac{\sum_{e\in C^i} (|e|_{Z_\alpha}-|e|_{T^i})^2-\sum_{e \in C^i} (|e|_X-|e|_{T^i})^2}{\alpha}.\label{eq:DirDerSqDist3}
		\end{align}
		The directional derivative of the squared distance from $T^i$ to $Y$, above, is simplified into
		three types of terms:
		a term representing the contribution from common edges, equation (\ref{eq:CommonPart}), terms
		for support pairs with $\norm{A^i_l}_X>0$, equation (\ref{eq:Incomp1}), and terms for support pairs with $\norm{A^i_l}_X =0$, equation (\ref{eq:Incomp2}).
		
		For the expression on line (\ref{eq:DirDerSqDist3}), first substituting $|e|_Z = |e|_X +\alpha p_e$ and simplifying, yields
		\begin{align}
			&\sum_{e \in C^i} \lim_{\alpha \to 0} \frac{|e|^2_Z+2|e|_Z|e|_{T^i}-|e|^2_X-2|e|_X|e|_{T^i}}{\alpha}\\
			=&\sum_{e\in C^i} \lim_{\alpha \to 0} \frac{(|e|_X+\alpha p_e -|e|_{T^i})^2-(|e|_X-|e|_{T^i})^2}{\alpha}\\
			=&\sum_{e\in C^i} \lim_{\alpha \to 0} \frac{2
				\alpha p_e(|e|_X-|e|_{T^i})}{\alpha}\\
			=&\sum_{e \in C^i}2p_e(|e|_X-|e|_{T^i}) \label{eq:CommonPart}
		\end{align}
		Simplification of the expression on line (\ref{eq:DirDerSqDist2}), by bringing out the sum and canceling in the numerators yields,
		\begin{align}
			\sum_{l=1}^{k^i} \lim_{\alpha \to 0} \frac{\norm{A^i_l}^2_Z-\norm{A^i_l}^2_X+2\norm{B^i_l}\left(\norm{A^i_l}_Z-\norm{A^i_l}_X\right)}{\alpha}
		\end{align}
		The limit of the fraction can be split into the sum of two limits, 
		\begin{align}
			&\lim_{\alpha \to 0} \frac{\norm{A^i_l}^2_Z-\norm{A^i_l}^2_X+2\norm{B^i_l}\left(\norm{A^i_l}_Z-\norm{A^i_l}_X\right)}{\alpha}\\
			=&\lim_{\alpha \to 0} \frac{\norm{A^i_l}^2_Z-\norm{A^i_l}^2_X}{\alpha}+\lim_{\alpha \to 0}\frac{2\norm{B^i_l}\left(\norm{A^i_l}_Z-\norm{A^i_l}_X\right)}{\alpha} \label{eq:LimitFrac1}
		\end{align}
		Each support pair $(A^i_l,B^i_l)$ falls into one of two cases, either (i) at least one edge in $A^i_l$ has positive length in $X$ or (ii) every edge in $A^i_l$ is not in $X$. Suppose that case (i) holds for $A^i_l$, 
		then $\norm{A^i_l}^2_Z-\norm{A^i_l}^2_X = \sum_{e \in A^i_l} \alpha^2 p_e^2 +2p_e |e|_X$, and
		\begin{align}
			\lim_{\alpha \to 0} \frac{\norm{A^i_l}^2_Z-\norm{A^i_l}^2_X}{\alpha}=\lim_{\alpha \to 0} \frac{ \sum_{e \in A^i_l} \alpha^2 p_e^2 +2\alpha p_e |e|_X}{\alpha} = 2\sum_{e \in A^i_l} p_e |e|_X
		\end{align}
		Using conjugacy to simplify the left hand term in line (\ref{eq:LimitFrac1}) yields
		\begin{align}
			&\lim_{\alpha \to 0}\frac{2\norm{B^i_l}\left(\norm{A^i_l}_Z-\norm{A^i_l}_X\right)}{\alpha}\\
			=&\lim_{\alpha \to 0}\frac{2\norm{B^i_l}\left(\norm{A^i_l}^2_Z-\norm{A^i_l}^2_X\right)}{\alpha (\norm{A^i_l}_Z+\norm{A^i_l}_X)}\\
			=&2\frac{\norm{B^i_l}}{\norm{A^i_l}}\sum_{e \in A^i_l} {p_e|e|_X}
		\end{align}
		In summary when $\norm{A^i_l}_X > 0$, the expression on line (\ref{eq:LimitFrac1}) simplifies to
		\begin{align}
			2 \sum_{e \in A^i_l} \left ( p_e |e|_X \left (1+\frac{\norm{B^i_l}}{\norm{A^i_l}}\right) \right) \label{eq:Incomp1}
		\end{align}
		On the other hand, in case (ii), if every edge in $A^i_l$ has length zero in $X$, and thus $\norm{A^i_l}_X=0$,
		the expression on line (\ref{eq:LimitFrac1}) simplifies to 
		\begin{align}
			2\norm{B^i_l} \sqrt{\sum_{e\in A^i_l} p_e^2} \label{eq:Incomp2}
		\end{align}
		The directional derivative of the squared distance is 
		\begin{align}
			& \lim_{\alpha \to 0} \frac{d^2(Z,T^i)-d^2(X,T^i)}{\alpha} \\
			=& 2\sum_{e \in C^i} p_e (|e|_X-|e|_{T^i})\\
			&+2\sum_{l:\norm{A^i_l}_X > 0}\left( \sum_{e \in A^i_l} \left ( p_e |e|_X \left (1+\frac{\norm{B^i_l}}{\norm{A^i_l}}\right) \right) \right)\\
			&+2\sum_{l:\norm{A^i_l}_X = 0}\left( \norm{B^i_l} \sqrt{\sum_{e\in A^i_l} p_e^2} \right)
		\end{align}
		The partial derivative of the squared distance from $X$ to $T^i$ with respect to the length of edge $e$, that is the component for edge $e$ in the restricted gradient vector, is
		\begin{align}
			(\nabla d^2(X,T^i))_e = \left\{ \begin{array}{ll}
				|e|_X \left(1+ \frac{||B^i_l||}{||A^i_l||}\right) & \textrm{if $e \in A_l^i$} \\
				\left(|e|_X-|e|_{T^i} \right) & \textrm{if $e \in C^i$}.
			\end{array} \right.
		\end{align}
		Making substitutions for $[\nabla d^2(X,T^i)]_e$ into the directional derivative
		of the squared distance yields
		\begin{align}
			& \lim_{\alpha \to 0} \frac{d^2(Z,T^i)-d^2(X,T^i)}{\alpha}\\
			=& \sum_{e \in C^i \cap E_X} p_e (\nabla d^2(X,T^i))_e \\
			&-2\sum_{e \in C^i \setminus E_X} p_e |e|_{T^i} \\
			&+\sum_{l:\norm{A^i_l}_X > 0}\left(  \sum_{e \in A^i_l \cap E_X} \left ( p_e [\nabla d^2(X,T^i)]_e \right) \right) \\
			&+2\sum_{l:\norm{A^i_l}_X = 0}\left( \norm{B^i_l} \sqrt{\sum_{e\in A^i_l} p_e^2} \right)
		\end{align}
	}
	
	The partial derivative of the squared distance from $X$ to $T^i$ with respect to the length of edge $e$, that is the component for edge $e$ in the restricted gradient vector, is
	\begin{align}
		[\nabla d^2(X,T^i)]_e = \left\{ \begin{array}{ll}
			|e|_X \left(1+ \frac{||B^i_l||}{||A^i_l||}\right) & \textrm{if $e \in A_l^i$} \\
			\left(|e|_X-|e|_{T^i} \right) & \textrm{if $e \in C^i$}.
		\end{array} \right.
	\end{align}
	The directional derivative of the squared distance simplifies to 
	\begin{align}
		&\lim_{\alpha \to 0} \frac{d^2(Z,T^i)-d^2(X,T^i)}{\alpha}\\
		&=\sum_{e \in  E_X} p_e [\nabla d^2(X,T^i)]_e+2\sum_{l:\norm{A^i_l}_X = 0}\left( \norm{B^i_l} \sqrt{\sum_{e\in A^i_l} p_e^2} \right)-2\sum_{e \in C^i \setminus E_X} p_e |e|_{T^i}.
	\end{align}
	Summing the directional derivatives of the squared distances over $T^1,\ldots, T^n$ yields
	the expression for the value of the directional derivative in the theorem.
\end{proof}
We now extend the results to the situation where $Y$ is allowed to be
on a vistal face.
In this situation there can be multiple
valid support sequences for the geodesics from $Y$ to $T^1, \ldots, T^n$.
\begin{lem}\label{lem:DDWellDefined}
	Suppose that $X$ and $Y$ are in the same multi-vistal facet, $\V$, and
	that $Y$ is on a face of $\V$ on the interior of an orthant.  The value of the directional derivative can be expressed equivalently using any valid support sequences for the geodesics from $Y$ to $T^1,\ldots,T^n$.
\end{lem}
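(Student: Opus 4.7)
The plan is to show that the formula from Theorem~\ref{thm:DirDerValue} depends only on data preserved by the single allowed ambiguity in valid support sequences at $Y$, namely reordering of pairs within equality subsequences of the signature. The argument reduces the lemma to a term-by-term invariance check.

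First, I invoke Miller's uniqueness statement (the third part of Theorem~3.25 in \cite{Miller}), which asserts that any two valid support sequences for the $Y \to T^i$ geodesic are related by permuting the pairs $(A^i_l, B^i_l)$ within the equality subsequences of the signature. Because $Y$ lies in the interior of the orthant of $\V$, the topologies of $Y$ and $T^i$ (and therefore the common-edge set $C^i$ and the partitions $E_Y \setminus C^i$, $E_{T^i} \setminus C^i$ that supply the underlying support sets) are determined by the point, not by the support sequence. Thus only the ordering, i.e.\ which set-pair receives which index $l$, is ambiguous.

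Next, I examine each of the three terms of $F'(X,Y)$ in turn. The common-edge term $-2\sum_i \sum_{e \in C^i \setminus E_X} p_e |e|_{T^i}$ is written entirely in terms of $C^i$ and so is manifestly invariant. The ``incompatible'' term $2\sum_i \sum_{l:\, \norm{A^i_l}_X=0} \norm{A^i_l}_P \norm{B^i_l}_T$ is a sum indexed by $l$; permuting the indices within an equality subsequence is a bijection of summands and leaves the sum unchanged (and the selection condition $\norm{A^i_l}_X = 0$ depends only on the set-pair, not on its index). The gradient term $\sum_{e \in E_X} p_e [\nabla F(X)]_e$ is the delicate one: for $e \in A^i_l$, the coefficient $[\nabla F(X)]_e = |e|_X\bigl(1 + \norm{B^i_l}/\norm{A^i_l}\bigr)$ involves the ratio attached to the pair containing $e$. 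Under an allowed reordering, that pair remains the same subset of edges, only its positional index shifts, so the ratio and hence the coefficient are unchanged.

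Combining these three observations shows that the formula gives the same numerical value for any valid support sequence at $Y$, establishing the lemma. The main obstacle is the gradient term: one must carefully distinguish the set-pair $(A^i_l, B^i_l)$, which is intrinsic to the geodesic and thus invariant, from its positional index $l$, which is not. The well-definedness theorem for the restricted gradient proved earlier essentially packages this distinction, so the present lemma may be viewed as extending that invariance from the gradient summand alone to the entire directional-derivative formula.
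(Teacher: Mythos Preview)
Your argument has a genuine gap. You invoke part~3 of \cite[Thm.~3.25]{Miller} to conclude that any two valid support sequences at $Y$ differ only by a permutation within equality subsequences, but that is not what the theorem says. It asserts uniqueness of the representation of a \emph{vistal cell} by a support sequence and signature up to such reordering; when $Y$ lies on a face of $\V$, several vistal cells meet there, and these can carry support sequences of \emph{different lengths}. Concretely, at a (P2) or (P3) equality one support pair $(A^i_l,B^i_l)$ can be replaced by a nontrivial partition $(C_1,D_1),(C_2,D_2)$ with $\frac{\norm{C_1}}{\norm{D_1}}=\frac{\norm{C_2}}{\norm{D_2}}=\frac{\norm{A^i_l}}{\norm{B^i_l}}$, giving two valid sequences that are not related by any permutation of a fixed collection of pairs.

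Your permutation argument therefore does not reach the local-support term $\sum_{l:\norm{A^i_l}_X=0}\norm{B^i_l}_T\,\norm{A^i_l}_P$, which is where the content lies. What must be checked is that
\[
\norm{D_1}\,\norm{C_1}_P+\norm{D_2}\,\norm{C_2}_P=\norm{B^i_l}\,\norm{A^i_l}_P,
\]
and this is \emph{not} formal: it uses that for local pairs the edge lengths in $Z_\alpha$ are $\alpha p_e$, so the ratio equalities at $Y$ transfer to equalities among the $\norm{\cdot}_P$-norms, together with $\norm{D_1}^2+\norm{D_2}^2=\norm{B^i_l}^2$. The paper's proof carries out exactly this computation. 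Your treatment of the gradient term and the common-edge term is fine (and indeed the restricted-gradient well-definedness theorem already handles the former), but the merge/split invariance of the local-support contribution is the missing step.
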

\begin{proof}
	The form of $F'(X,Y)$ is constant within an open multi-vistal facet, and changes
	at boundaries of vistal facets.
	When $Y$ reaches the boundary of a vistal facet - that is either at least one of the (P2)
	constrains reaches equality, at least one of the (P3) constraints reaches equality, 
	or when the length of an edge reaches zero or increases from zero - this is called
	the collision of $Y$ with the boundary of that vistal facet. 
	A point $T^i$,  and associated geodesic $\Gamma(T^i,Y)$ are said to generate the vistal facet collision.
	When $Y$ collides with a (P2) boundary of a vistal facet at least two support pairs for the geodesic merge; and when $Y$ collides with a (P3) boundary 
	at least two support pairs for the geodesic
	could be split in such a way that the resulting support is valid. 
	In either case there are at least two valid forms for the geodesic.
	Let $(C_1,D_1),(C_2,D_2)$ be support pairs which are formed from a partition of the support pair $(A^i_l,B^i_l)$, such that either of the following support sequences for the geodesic from $Y$ to $T^i$ is valid: $(A^i_1,B^i_1),\ldots,(A^i_l,B^i_l),\ldots,(A^i_m,B^i_m)$ or $(A^i_1,B^i_1),\ldots,(C_1,D_1),(C_2,D_2),\ldots,(A^i_m,B^i_m)$; and $ \frac{\norm{C_1}}{\norm{D_1}}=\frac{\norm{C_2}}{\norm{D_2}}=\frac{\norm{A^i_l}}{\norm{B^i_l}} $. 
	Rescaling the lengths of edges in $A^i_l$ does not change the form of the geodesic for small $\alpha$ and $l \leq k$. Parameterizing the lengths of edges in terms of $\alpha$ and canceling $\alpha$ yields  $ \frac{\sqrt{\sum_{e\in C_1} p_e^2}}{\norm{D_1}}=\frac{\sqrt{\sum_{e\in C_2} p_e^2}}{\norm{D_2}}=\frac{\sqrt{\sum_{e\in A^i_l} p_e^2}}{\norm{B^i_l}} $. That fact, and the fact that $C_1 \cup C_2$ partition $A^i_l$ and $D_1 \cup D_2$ partition $B^i_l$ implies that 
	$\norm{D_1}\sqrt{\sum_{e\in C_1} p_2^2} + \norm{D_2} \sqrt{\sum_{e\in C_2} p_e^2}=\norm{B^i_l}\sqrt{\sum_{e\in A^i_l} p_e^2}$. Thus the directional derivative is continuous across vistal facet boundaries from (P2) and (P3) constraints. 
\end{proof}

Now we extend the results for directional derivatives to directions issuing from $X$ to points $Y$ in a small enough radius such that $\Or(X) \subseteq \Or(Y)$ 
and $X$ and $Y$ share a multi-vistal facet. 
\sean{
	\begin{rem}
		An $\epsilon$-neighborhood of $X$ is the set $N_{\epsilon}(X)=\{Y | d(X,Y)^2 < \epsilon\}$. The set ${N}_{\epsilon}(X)$ can be decomposed into
		a union of sets which are intersections of Euclidean orthants with hyperspheres, that is
		$N_{\epsilon}(X)=\cup_{c \in C_q} \{Y \in \Or(q) |d(X,Y)^2\}$.
		For a small neighborhood around a point $X$, $N_\epsilon(X)$, the set of geodesic segments $\{\Gamma(X,Y)| Y \in {N}_{\epsilon}(X)\}$ are either
		contained in the same orthant as $X$ or 
		move into orthants which contain $X$ in their closure.
		Geodesic passing through $X$ in the minimal orthant containing $X$, $\Or(X)$
		equate to changes in the positive length edges of $X$.
		Geodesic issuing into higher-dimensional orthants containing
		$X$ equate to changing edge lengths and increasing the lengths of 
		edges from zero to form a tree which contains $X$ as a contraction.
	\end{rem}
}

\sean{
	\begin{lem}
		For any $X$, there exists an positive $\epsilon$ such that $\Or(X) \subseteq \Or(Y)$ and $X$ and $Y$ share a vistal facet for all $Y$ in $N_{\epsilon}(X)$.
	\end{lem}
	\begin{proof}
		The vistal facets cover treespace. 
	\end{proof}
}

\begin{lem}\label{lem:DirDerContinuous}
	The directional derivative, $F'(X,Y)$, is a continuous function of $Y$ over the set of $Y$ such that $\Or(X) \subseteq \Or(Y)$ and $X$ and $Y$ share a vistal facet.
\end{lem}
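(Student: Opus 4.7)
The plan is to combine the explicit expression for $F'(X,Y)$ provided by Thm.~\ref{thm:DirDerValue} with the agreement-of-representations result of Lem.~\ref{lem:DDWellDefined}. Within the interior of a single multi-vistal facet the supports are fixed, so continuity will be visible from the formula itself; the real work is verifying that, at shared boundaries between facets, the candidate expressions for $F'(X,Y)$ coincide so that one-sided limits match.

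First I would fix $Y_0$ in the stated domain and select a multi-vistal facet $\V$ whose relative interior contains $Y_0$. For $Y$ ranging over the interior of $\V$ with $\Or(X) \subseteq \Or(Y)$, the support pairs $(A^i_l, B^i_l)$ of each geodesic from $Y$ to $T^i$, the common-edge sets $C^i$, and the index sets $\{l : \norm{A^i_l}_X = 0\}$ are all fixed. Writing $P = Y - X$ with components $p_e = |e|_Y - |e|_X$, each term in the formula of Thm.~\ref{thm:DirDerValue} is visibly continuous in $Y$: the coordinates $p_e$ and the products $p_e|e|_{T^i}$ are linear in $Y$, the quantities $\norm{A^i_l}_P = \sqrt{\sum_{e \in A^i_l} p_e^2}$ are Euclidean norms of linear functions of $Y$, and $\norm{B^i_l}_T$ together with $[\nabla F(X)]_e$ are constants. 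Hence $F'(X,\cdot)$ is continuous throughout the relative interior of $\V$.

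Next I would show agreement at a shared boundary between two adjacent multi-vistal facets. At such a boundary some (P2) or (P3) constraint holds with equality and two valid support representations exist for at least one geodesic from $Y$ to some $T^i$. The identity used in the proof of Lem.~\ref{lem:DDWellDefined}, namely $\norm{D_1}\sqrt{\sum_{e\in C_1} p_e^2} + \norm{D_2}\sqrt{\sum_{e\in C_2} p_e^2} = \norm{B^i_l}\sqrt{\sum_{e\in A^i_l} p_e^2}$ whenever $(C_1, D_1), (C_2, D_2)$ partition $(A^i_l, B^i_l)$ at the boundary ratio, shows directly that substituting either representation into the formula of Thm.~\ref{thm:DirDerValue} yields the same numerical value. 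Combined with the previous step, this gives continuity of $F'(X,Y)$ across every interior (P2)/(P3) boundary.

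The step I expect to be most delicate is the boundary where $\Or(Y)$ drops dimension, i.e., an edge length of $Y$ shrinks to zero while the containment $\Or(X) \subseteq \Or(Y)$ is preserved. Here an edge $e$ that previously belonged to some support set $A^i_l$ or $B^i_l$ may migrate between the three regimes (support, common, or absent) in the formula. I would argue that, since $p_e$ and $\norm{A^i_l}_P$ are continuous in $|e|_Y$, the vanishing edge contributes only through quantities that go to zero continuously, so the terms dropped from one representation are matched by terms that tend to zero in the other. Bookkeeping these combinatorial transitions simultaneously across all $n$ data trees is the main obstacle, but no denominators of the form $\norm{A^i_l}$ appear in the degenerate regime $\norm{A^i_l}_X = 0$, so no singularity arises. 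Patching together the three cases yields continuity of $F'(X,Y)$ over the full set described in the statement.
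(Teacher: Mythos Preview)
Your proposal is correct and follows essentially the same approach as the paper: use the explicit formula from Thm.~\ref{thm:DirDerValue} for continuity within a facet, invoke Lem.~\ref{lem:DDWellDefined} for agreement across (P2)/(P3) boundaries, and check directly that an edge length shrinking to zero contributes a term that vanishes continuously. The paper's own proof is much terser---it states only the orthant-boundary case explicitly (``when an edge length $|e|_Y$ increases from zero its contribution to $F'(X,Y)$ is a continuous function which starts at the value zero'') and leaves the within-orthant cases implicit in the preceding results---so your version is a more complete write-up of the same argument rather than a different route.
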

\begin{proof}
	The directional derivative is a continuous function at the faces of orthants because when an edge length $|e|_Y$ increases from zero its contribution to $F'(X,Y)$ is a continuous function which starts at the value zero. Thus, when the topology of $Y$ changes $F'(X,Y)$ changes continuously as a function of the edge lengths. 
\end{proof}

The following lemma is used in the proof of Lem. \ref{lem:DirDerConvex}.

\begin{lem}\label{lem:LocalGeoTriangle}
	Let $Y^0$ and $Y^1$ be a points in $\T_r$ such that $\Or(X) \subseteq \Or(Y^0)$
	and $\Or(X) \subseteq \Or(Y^1)$.
	Let $Y^t=\Gamma(Y^0,Y^1;t)$ be the point which is proportion $t$ along the geodesic from $Y^0$ to $Y^1$.
	The point which is $\alpha$ proportion along the geodesic from $X$ to $Y^t$ is $t$ proportion along the geodesic between the point $\Gamma_{X,Y^0}(\alpha)$ and the point $\Gamma_{X,Y^1}(\alpha)$; that is, $\Gamma(X,Y^t;\alpha)=\Gamma(\Gamma(X,Y^0;\alpha),\Gamma(X,Y^1;\alpha);t)$.
\end{lem}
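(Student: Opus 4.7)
The plan is to compute both sides of the claimed equality by unpacking Theorem \ref{thm:supports}, and then to verify equality edge by edge on each leg of the geodesic between $Y^0$ and $Y^1$. The key simplification is that the hypothesis $\Or(X) \subseteq \Or(Y^i)$ forces each geodesic $\Gamma(X, Y^i)$ to stay in a single orthant, so it is just the Euclidean line segment. Concretely, for every edge $e \in E_{Y^i}$,
\begin{equation*}
    |e|_{\Gamma(X,Y^i;\alpha)} = (1-\alpha)|e|_X + \alpha |e|_{Y^i},
\end{equation*}
with $|e|_X = 0$ when $e \notin E_X$. Write $Z^0 = \Gamma(X,Y^0;\alpha)$ and $Z^1 = \Gamma(X,Y^1;\alpha)$.

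Next, I would identify an Owen-Provan support $(A_1,B_1),\ldots,(A_k,B_k)$ together with the common edge set $C$ for the geodesic $\Gamma(Y^0,Y^1)$, and show that the same combinatorial data represents $\Gamma(Z^0,Z^1)$. Because $E_X \subseteq E_{Y^0} \cap E_{Y^1}$, every edge of $X$ is compatible in both endpoints and therefore lies in $C$; the incompatible edges in $\bigcup_i A_i$ lie in $E_{Y^0}\setminus E_X$ and those in $\bigcup_i B_i$ lie in $E_{Y^1} \setminus E_X$. Consequently $|e|_{Z^0} = \alpha |e|_{Y^0}$ for $e \in \bigcup_i A_i$ and $|e|_{Z^1} = \alpha |e|_{Y^1}$ for $e \in \bigcup_i B_i$. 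The uniform scaling by $\alpha$ preserves every ratio $\norm{A_i}/\norm{B_i}$ and every norm inequality involving partitions, so conditions (P1), (P2), (P3) transfer verbatim from $(Y^0,Y^1)$ to $(Z^0,Z^1)$, and the transition ratios $\lambda_i = \norm{A_i}/(\norm{A_i}+\norm{B_i})$ are identical for both geodesics.

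Now fix a leg index $l$ and a parameter $t$ in that leg. On one hand, the explicit formula from Theorem \ref{thm:supports} gives the edge lengths of $Y^t = \Gamma(Y^0, Y^1; t)$ in terms of $|e|_{Y^0}, |e|_{Y^1}$, the norms $\norm{A_j}, \norm{B_j}$ taken in $Y^0, Y^1$, and $t$. Since $E_X \subseteq C \subseteq E_{Y^t}$, we again have $\Or(X) \subseteq \Or(Y^t)$, so $\Gamma(X, Y^t; \alpha)$ is Euclidean interpolation and $|e|_{\Gamma(X,Y^t;\alpha)} = (1-\alpha)|e|_X + \alpha|e|_{Y^t}$. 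On the other hand, the same theorem applied to $\Gamma(Z^0, Z^1; t)$ with the scaled norms yields edge lengths in terms of $|e|_{Z^0}, |e|_{Z^1}$ and the same ratios. Splitting into the four cases $e \in E_X$, $e \in C \setminus E_X$, $e \in A_j$ with $j > l$, and $e \in B_j$ with $j \leq l$, each case reduces to a one-line algebraic identity after substituting $|e|_{Z^i} = (1-\alpha)|e|_X + \alpha|e|_{Y^i}$ (or its appropriate specialization when $e \notin E_X$).

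The main obstacle, such as it is, is purely bookkeeping: keeping the four edge classes separate and checking that the $\alpha$ factor multiplies out correctly after the common factor $(1-\alpha)|e|_X$ (present only for $e \in E_X$) is added back. There is no genuine geometric subtlety, because the hypothesis $\Or(X) \subseteq \Or(Y^i)$ eliminates the branching behavior of geodesics on the sides incident to $X$ and the scaling invariance of the Owen-Provan conditions eliminates it on the opposite side.
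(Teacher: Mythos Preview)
Your proposal is correct and follows essentially the same argument as the paper: both use that $\Or(X)\subseteq\Or(Y^i)$ makes $\Gamma(X,Y^i)$ a Euclidean segment, observe that the incompatible edges in the $(Y^0,Y^1)$ support lie outside $E_X$ and hence get uniformly scaled by $\alpha$ in $(Z^0,Z^1)$, conclude that the Owen--Provan support and transition ratios are unchanged, and then verify equality of edge lengths case by case. Your write-up is in fact a bit more careful than the paper's in stating the interpolation formula and in separating the four edge classes.
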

\begin{proof}
	Let $Y^0(\alpha) = \Gamma_{X Y^0} (\alpha)$ and let $Y^1(\alpha) = \Gamma_{X Y^1} (\alpha)$.
	Let $C = E_{Y^0(\alpha)} \cap E_{Y^1(\alpha)}$. By definition $E_X \subseteq C$.
	The length of $e$ in $Y^0(\alpha)$ is
	\begin{align}\label{eq:LengthY0}
		|e|_{Y^0(\alpha)}=
		\left\{ \begin{array}{ll}
			|e|_X + \alpha |e|_{Y^0} & \textrm{if $e \in C$} \\
			\alpha |e|_{Y^0} & \textrm{if $e \in E_{Y^0} \setminus C$}\\
		\end{array} \right.
	\end{align}
	and the length of $e$ in $Y^1(\alpha)$ is
	\begin{align}\label{eq:LengthY1}
		|e|_{Y^1(\alpha)}=
		\left\{ \begin{array}{ll}
			|e|_X + \alpha |e|_{Y^1} & \textrm{if $e \in C$} \\
			\alpha |e|_{Y^1} & \textrm{if $e \in E_{Y^1} \setminus C$}.\\
		\end{array} \right.
	\end{align}
	A geodesic support sequence which is valid for the geodesic between $Y^0$ and $Y^1$ is valid for the the geodesic between $Y^0(\alpha)$ and $Y^1(\alpha)$.
	The incompatibilities of edges in $A$ and $B$ are the same for any $\alpha$.
	Suppose that a support sequence satisfies (P2) and (P3) for some $\alpha$. Factoring out $\alpha$ from the numerators and denominators of the $(P2)$ and $(P3)$ ratios reveals that the combinatorics of the geodesic between
	$Y^0(\alpha)$ and $Y^1(\alpha)$ depends on the relative proportions
	of lengths of edges in $Y^0$ and $Y^1$, and not on the value of $\alpha$.
	That is,
	\begin{align}
		\frac{ \norm{A_l}_{Y^0(\alpha)}}{\norm{B_l}_{Y^1(\alpha)}}=\frac{\sqrt{\sum_{e \in A_l} \alpha |e|_{Y^0}}}{\sqrt{\sum_{e \in B_l} \alpha |e|_{Y^1}}} = \frac{ \norm{A_l}_{Y^0}}{\norm{B_l}_{Y^1}}.
	\end{align}
	Now we show that $|e|_{\Gamma_{Y^0(\alpha) Y^1(\alpha)}(t)} = |e|_{\Gamma_{X Y^t}(\alpha)}$.
	The combinatorics of the geodesic between $Y^0(\alpha)$ and $Y^1(\alpha)$
	do not depend on $\alpha$, therefore which edges have positive lengths in the $l^{th}$ leg of $\Gamma_{Y^0(\alpha)Y^1(\alpha)}$ does not depend on $\alpha$.
	The length of edge $e$ at $\Gamma_{Y^0(\alpha) Y^1(\alpha)}(t)$ is
	\begin{align}
		|e|_{\Gamma_{Y^0(\alpha) Y^1(\alpha)}(t)}&=\displaystyle\left\{\begin{array}{ll}
			\frac{(1-t)\norm{A_j}_\alpha-t \norm{B_j}_\alpha}{\norm{A_j}_\alpha}|e|_{Y^0(\alpha)}&e\in A_j\\[1em]
			\frac{t \norm{B_j}_\alpha-(1-t)\norm{A_j}_\alpha}{\norm{B_j}_\alpha}|e|_{Y^1(\alpha)}&e\in B_j\\[1.5em]
			(1-t)|e|_{Y^0(\alpha)}+t |e|_{Y^1(\alpha)}&e\in C.
		\end{array}\right.\\
	\end{align}
	Substituting $\norm{A_j}_\alpha=\alpha \norm{A_j}$, $\norm{B_j}_\alpha=\alpha \norm{B_j}$, \ref{eq:LengthY0}, and \ref{eq:LengthY1} yields
	\begin{align}
		|e|_{\Gamma_{Y^0(\alpha) Y^1(\alpha)}(t)}&=\displaystyle\left\{\begin{array}{ll}
			\alpha \frac{(1-t)\norm{A_j}-t \norm{B_j}}{\norm{A_j}}|e|_{Y^0}&e\in A_j\\[1em]
			\alpha \frac{t \norm{B_j}-(1-t)\norm{A_j}}{\norm{B_j}}|e|_{Y^1}&e\in B_j\\[1.5em]
			|e|_X + \alpha((1-t)|e|_{Y^0}+t |e|_{Y^1})&e\in C.\\
		\end{array}\right.
	\end{align}
	Now the length of $e$ in $\Gamma_{XY^t}(\alpha)$ is 
	\begin{align}\label{eq:LengthYt}
		|e|_{\Gamma_{XY^t}(\alpha)}=
		\left\{ \begin{array}{ll}
			|e|_X + \alpha |e|_{Y^t} & \textrm{if $e \in C$} \\
			\alpha |e|_{Y^t} & \textrm{if $e \in E_{Y^t} \setminus C$}.\\
		\end{array} \right.
	\end{align}
	The length of $e$ in $Y^t$ is given by 
	\begin{align}
		|e|_{Y^t}&=\displaystyle\left\{\begin{array}{ll}
			\frac{(1-t)\norm{A_j}-t \norm{B_j}}{\norm{A_j}}|e|_{Y^0}&e\in A_j\\[1em]
			\frac{t \norm{B_j}-(1-t)\norm{A_j}}{\norm{B_j}}|e|_{Y^1}&e\in B_j\\[1.5em]
			((1-t)|e|_{Y^0}+t |e|_{Y^1})&e\in C.\\
		\end{array}\right.
	\end{align}
	Therefore  $|e|_{\Gamma_{Y^0(\alpha) Y^1(\alpha)}(t)} = |e|_{\Gamma_{X Y^t}(\alpha)}$ holds.
\end{proof}

\begin{lem}\label{lem:DirDerConvex}
	The directional derivative $F'(X,Y)$ is a convex function of $Y$ over the set of $Y$ such that $\Or(X) \subseteq \Or(Y)$ and $X$ and $Y$ share a vistal facet.
\end{lem}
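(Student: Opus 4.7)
The plan is to reduce convexity of the directional derivative to convexity of $F$ along geodesics, by routing everything through \lemref{LocalGeoTriangle}. Fix $Y^0$ and $Y^1$ in the relevant set and let $Y^t = \Gamma(Y^0,Y^1;t)$ for $t\in[0,1]$. The goal is to show
\begin{equation}
F'(X,Y^t) \;\leq\; (1-t)\,F'(X,Y^0) + t\,F'(X,Y^1).
\end{equation}

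First I would invoke \lemref{LocalGeoTriangle} to rewrite the point $\Gamma(X,Y^t;\alpha)$ appearing in the definition of $F'(X,Y^t)$ as the point $t$-proportion along the geodesic between $\Gamma(X,Y^0;\alpha)$ and $\Gamma(X,Y^1;\alpha)$. This is the key structural identity: it says that moving a small amount $\alpha$ toward an interpolated endpoint is the same as interpolating between small moves toward the two endpoints. Then I would apply the fact (from \secref{FMproblem}) that $F$ is convex along every geodesic in $\T_r$, which yields the pointwise inequality
\begin{equation}
F\bigl(\Gamma(X,Y^t;\alpha)\bigr) \;\leq\; (1-t)\,F\bigl(\Gamma(X,Y^0;\alpha)\bigr) + t\,F\bigl(\Gamma(X,Y^1;\alpha)\bigr)
\end{equation}
for every sufficiently small $\alpha$.

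Next, I would subtract $F(X) = (1-t)F(X) + tF(X)$ from both sides, divide by $\alpha$, and pass to the limit $\alpha\to 0^+$. The left side converges to $F'(X,Y^t)$ by \defref{DirDer}, and the right side splits as a sum of the two difference quotients, giving $(1-t)F'(X,Y^0) + t\,F'(X,Y^1)$ in the limit. This establishes the desired inequality and hence the convexity of $F'(X,\cdot)$.

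The only subtlety is ensuring that the geodesic interpolation identity of \lemref{LocalGeoTriangle} remains valid for all small $\alpha$, so that the convexity-of-$F$-along-geodesics inequality applies uniformly. This is precisely the content of that lemma's hypothesis $\Or(X)\subseteq \Or(Y^0), \Or(Y^1)$, which we have assumed; moreover since $X,Y^0,Y^1$ lie in a common vistal facet, the combinatorial structure of the three geodesics from $X$ is fixed as $\alpha\to 0$, so the limits are taken within a regime where the identity holds. Aside from this verification, the argument is a direct application of the geodesic-convexity of $F$ combined with the compatibility of small displacements with geodesic interpolation, so I do not expect any further obstacle.
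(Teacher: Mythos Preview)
Your proposal is correct and follows essentially the same route as the paper: invoke \lemref{LocalGeoTriangle} to identify $\Gamma(X,Y^t;\alpha)$ with the $t$-interpolant between $\Gamma(X,Y^0;\alpha)$ and $\Gamma(X,Y^1;\alpha)$, apply geodesic convexity of $F$ pointwise in $\alpha$, then subtract $F(X)$, divide by $\alpha$, and pass to the limit. The paper additionally remarks that the strict inequality from strict convexity of $F$ need not survive the limit, but otherwise the arguments coincide.
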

\begin{proof}
	Let $Y^0$ and $Y^1$ be a points in $\T_r$ such that $\Or(X) \subseteq \Or(Y^0)$
	and $\Or(X) \subseteq \Or(Y^1)$.
	Let $Y^t$ be the point which is proportion $t$ along the geodesic from $Y^0$ to $Y^1$.
	Let $\Gamma_{X Y^t}(\alpha):[0,1] \to \T_r$ be a function which parameterizes the geodesic from $X$ to $Y^t$.
	Using Lem. \ref{lem:LocalGeoTriangle} and the strict convexity of $F$ together yields
	\begin{align}
		F(\Gamma_{XY^t}(\alpha))<F(\Gamma_{XY^0}(\alpha))(1-t)+F(\Gamma_{XY^1}(\alpha))t. \label{eq:GeodesicConvexity}
	\end{align}
	The directional derivative from $X$ in the direction of $\Gamma_{XY^t}(\alpha)$ is
	\begin{align}
		F'(X,Y^t)& = \lim_{\alpha \to 0} \frac{ F(\Gamma_{XY^t}(\alpha))-F(X)}{\alpha}.
	\end{align}
	Substituting for $F(\Gamma_{XY^t}(\alpha))$ using the inequality on line (\ref{eq:GeodesicConvexity}) yields,
	\begin{align}
		F'(X,Y^t)& \leq \lim_{\alpha \to 0} \frac{ F(\Gamma_{XY^0}(\alpha))(1-t)+F(\Gamma_{XY^1}(\alpha))t-F(X)}{\alpha}.
	\end{align}
	Note that strict inequality may not hold even though the Fr\'{e}chet function is convex because
	in the limit the value may approach an infimum. 
	Simplifying by separating the fraction and limit reveals that the directional derivative is convex in $Y$,
	\begin{align}
		F'(X,Y^t) &\leq (1-t)\lim_{\alpha \to 0}\frac{F(\Gamma_{XY^0}(\alpha))-F(X)}{\alpha} + t\lim_{\alpha\to 0}\frac{F(\Gamma_{XY^1}(\alpha))-F(X)}{\alpha}\\
		&= (1-t) F'(X,Y^0) + tF'(X,Y^1).
	\end{align}
\end{proof}

\begin{lem}
	Let $X$ and $Y$ be points such that $\Or(X)\subseteq \Or(Y)$. $F'(X,Y)$ is a $C^1$ function of $Y$ on the interior of the orthant $\Or$.
\end{lem}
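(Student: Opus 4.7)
The plan is to cover the interior of $\Or$ by the multi-vistal facets it meets, establish $C^\infty$ regularity of $F'(X,\cdot)$ inside each such piece, and then upgrade the continuity argument of Lemma \ref{lem:DDWellDefined} across adjacent multi-vistal facets from $C^0$ to $C^1$.

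First, fix a multi-vistal facet $V_Y$ contained in the interior of $\Or$. Inside $V_Y$ the supports $(A^i_l,B^i_l)$ for each geodesic from $Y$ to $T^i$ and the common-edge sets $C^i$ are constant, so the formula of Theorem \ref{thm:DirDerValue} splits $F'(X,Y)$ into two affine-in-$Y$ pieces, namely $\sum_{e\in E_X}p_e[\nabla F(X)]_e$ and $-2\sum_i\sum_{e\in C^i\setminus E_X}p_e|e|_{T^i}$, plus the non-linear sum $2\sum_i\sum_{l:\norm{A^i_l}_X=0}\norm{A^i_l}_P\norm{B^i_l}_{T^i}$. Each index $l$ appearing in the non-linear sum has $|e|_X=0$ on $A^i_l$, so $p_e=|e|_Y$ and $\norm{A^i_l}_P=\sqrt{\sum_{e\in A^i_l}|e|_Y^2}$ stays bounded away from $0$ on the interior of $\Or$ (where every coordinate of $Y$ is strictly positive). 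Thus $F'(X,\cdot)$ is $C^\infty$ on the interior of each multi-vistal facet inside $\Or$.

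The remaining work is to glue across a shared face between two adjacent multi-vistal facets $V$ and $V'$ inside the interior of $\Or$. Lemma \ref{lem:DDWellDefined} already gives continuity; the upgrade to $C^1$ amounts to matching first partial derivatives from the two sides. By the case analysis used in that proof, crossing the face replaces some support pair $(A^i_l,B^i_l)$ by a split $(C_1,D_1),(C_2,D_2)$ with $C_1\cup C_2=A^i_l$, $D_1\cup D_2=B^i_l$, and with the shared-face equality
\begin{displaymath}
\frac{\norm{C_1}_Y}{\norm{D_1}}=\frac{\norm{C_2}_Y}{\norm{D_2}}=\frac{\norm{A^i_l}_Y}{\norm{B^i_l}}.
\end{displaymath}
Since every $e\in A^i_l$ has $|e|_X=0$, the $P$-norms on $A^i_l$, $C_1$, $C_2$ coincide with the corresponding $Y$-norms, and the same bookkeeping covers the internal (P3)-type splits described in Lemma \ref{lem:DDWellDefined}.

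The affine pieces contribute identical constant gradients on both sides, so it suffices to compare the gradient of the non-linear piece. For $f\in A^i_l$ on the unsplit side
\begin{displaymath}
\frac{\partial}{\partial|f|_Y}\bigl(2\norm{A^i_l}_P\norm{B^i_l}\bigr)=\frac{2|f|_Y\norm{B^i_l}}{\norm{A^i_l}_P},
\end{displaymath}
whereas on the split side, for $f\in C_1$,
\begin{displaymath}
\frac{\partial}{\partial|f|_Y}\bigl(2\norm{C_1}_P\norm{D_1}+2\norm{C_2}_P\norm{D_2}\bigr)=\frac{2|f|_Y\norm{D_1}}{\norm{C_1}_P}.
\end{displaymath}
The shared-face equalities yield $\norm{D_1}/\norm{C_1}_P=\norm{B^i_l}/\norm{A^i_l}_P$, so the two values agree; the parallel identity for $f\in C_2$ is the same computation. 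Consequently every first partial derivative is continuous on every shared face, and combined with the first paragraph this gives $C^1$ regularity of $F'(X,\cdot)$ on the interior of $\Or$. I expect the main technical obstacle to be the bookkeeping for (P3)-style splits, where a single support pair is subdivided by an admissible pair $I\subset A^i_l$, $J\subset B^i_l$; each such verification reduces to the equality case of Cauchy--Schwarz for the vectors $(\norm{C_1}_Y,\norm{C_2}_Y)$ and $(\norm{D_1},\norm{D_2})$, which is exactly the boundary condition that defines the shared face.
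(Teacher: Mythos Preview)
Your proof is correct and follows essentially the same approach as the paper: smoothness of $F'(X,\cdot)$ on each multi-vistal facet, followed by matching of first partials across shared faces. The paper's own proof is a single sentence that merely asserts the algebraic form is smooth on each multivistal face and that the gradient is continuous across their boundaries, invoking the same ratio-matching argument used earlier for $\nabla F$; you have simply made that gradient-matching computation explicit for the local-support-pair term $2\norm{A^i_l}_P\norm{B^i_l}$, which is the only non-affine contribution.
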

\begin{proof}
	Within any fixed multivistal face the algebraic form of $F$ is a sum of smooth functions, and the restricted gradient function is continuous at the boundaries of multivistal faces relative to $\Or$. 
\end{proof}

\begin{defn}
	Let $(A_1,B_1),\ldots,(A_k,B_k)$ be a support sequence for the geodesic from $Y$ to a tree $T$, as in the definition of directional derivative above, Def. \ref{def:DirDer}.
	Let any support pair $(A_l,B_l)$ such that $\norm{A_l}_X =0$ be called a \emph{local support pair}.
\end{defn}

Local support pairs will be the earliest support pairs in a support sequence for the geodesic between $Y$ and $T$. 
$Y$ and $X$ share a vistal facet; that is, their geodesics to $T$ can be represented with the same support sequence. According to $(P2)$, any support pair such that $\norm{A_l}_X =0$ must be among the first support pairs in the support sequence.
Thus, let $(A_1,B_1),\ldots,(A_m,B_m)$ be local support pairs, and let $(A_{m+1},B_{m+1}),\ldots,(A_k,B_k)$ be the rest of the geodesic support sequence being used to represent the geodesic between $Y$ and $T$.

Let $\tilde{B}$ be all edges from $T$ which are incompatible with at least one edge in $Y$ but not incompatible with any edge in $X$ and let $\tilde{A}$ be all edges from $Y$ which are incompatible
with some edge in $\tilde{B}$. 

\begin{lem}\label{lem:LocalSupportPairsComposition}
	Then any sequence of local support pairs, $(A_1,B_1),\ldots,(A_m,B_m)$ have the property that the sets $A_1,\ldots,A_m$ partition $\tilde{A}$ and $B_1,\ldots,B_m$ partition $\tilde{B}$.
\end{lem}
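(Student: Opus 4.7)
The plan is to first prove that $B_1,\ldots,B_m$ partition $\tilde B$, and then deduce the statement about $\tilde A$ from it. Throughout, I rely on the fact noted in the paragraph preceding the lemma, namely that local support pairs appear first in the sequence, which follows from (P2) evaluated at $X$.

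For the forward inclusion $\bigcup_{l=1}^m B_l \subseteq \tilde B$, any $e \in B_l$ is incompatible with some edge of $A_l \subseteq E_Y \setminus E_X$ (the containment using $\norm{A_l}_X = 0$), giving the $Y$-incompatibility required in the definition of $\tilde B$. To show $e$ is also compatible with every edge of $X$, I would argue by contradiction: if some $f \in E_X$ were incompatible with $e$, then $f$ would lie in $A_j$ for some $j$ (since $f \notin C^i$), and I would rule out each of the three cases $j = l$ (contradicts $A_l \cap E_X = \emptyset$), $j < l$ (contradicts local-pairs-first, since $f \in E_X$ would make $A_j$ non-local), and $j > l$ (contradicts (P1), which forces $A_j$ and $B_l$ compatible).

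The reverse inclusion $\tilde B \subseteq \bigcup_{l=1}^m B_l$ is the step I expect to require the most care. Given $e \in \tilde B$, $e \notin C^i$ forces $e \in B_j$ for some $j$, and the task is to show $(A_j, B_j)$ is local. Assume $\norm{A_j}_X > 0$, so $A_j \cap E_X \neq \emptyset$. If $A_j \subseteq E_X$, then the edge of $A_j$ incompatible with $e$ lies in $E_X$, directly contradicting $e \in \tilde B$. Otherwise, write $A_j = C_1 \sqcup C_2$ with $C_1 = A_j \setminus E_X$ and $C_2 = A_j \cap E_X$ both nonempty, and split $B_j = D_1 \sqcup D_2$, where $D_1$ consists of all $B_j$-edges compatible with every edge of $C_2$ (so $e \in D_1$, using $e \in \tilde B$). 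The compatibility condition in the forbidden (P3) pattern is then automatic for $(C_2, D_1)$, and the ratio inequality $\norm{C_1}/\norm{D_1} < \norm{C_2}/\norm{D_2}$ should be forced by the ordering of local versus non-local contributions near $X$, since $\norm{C_1}_X = 0$ while $\norm{C_2}_X > 0$. Turning this sketch into a clean contradiction — in particular guaranteeing $D_2 \neq \emptyset$ and converting the $X$-evaluated ratio intuition into the ratios demanded by Thm.~\ref{thm:supports} — is the main technical obstacle.

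Once the $B$-side partition is established, the $A$-side follows. For $A_l \subseteq \tilde A$ when $l \le m$: each $f \in A_l$ is incompatible with at least one $e \in B_l$ (otherwise $\{f\}$ could be separated from $(A_l, B_l)$ in a way at odds with (P3) combined with the indivisibility of the support pair), and $e \in B_l \subseteq \tilde B$, so $f \in \tilde A$. For the converse, an $f \in \tilde A$ is an edge of $Y$ incompatible with some $e \in \tilde B$; locating $e$ in its unique $B_l$ with $l \le m$ and applying the contrapositive of (P1) forces $f \in A_{l'}$ for some $l' \le l \le m$. Disjointness of both families is inherited from the original support-sequence partitions of $E_Y \setminus C^i$ and $E_{T^i} \setminus C^i$.
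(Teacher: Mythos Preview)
Your outline follows the paper's structure exactly: establish the $B$-side partition first (forward inclusion via (P1), reverse by excluding $e\in\tilde B$ from any non-local $B_j$), then deduce the $A$-side from it. The paper's argument for the step you call the ``main technical obstacle'' is in fact the single sentence ``An edge in $\tilde B$ is compatible with every edge in $X$ therefore cannot be in any of the support pairs with edges from $X$''---so your (P3) analysis is not a different route but an honest attempt to justify what the paper asserts without proof.

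On closing the gap you flag: the decisive point, which you already gesture at, is that $X$ shares the vistal cell with $Y$, so (P3) holds \emph{at $X$}. There is no need to transport anything back to $Y$: with $C_1=A_j\setminus E_X$ and $C_2=A_j\cap E_X$ one has $\norm{C_1}_X=0$ while $\norm{C_2}_X>0$, which is already the strict violation. Your residual worries ($D_2=\emptyset$, and the case $A_j\subseteq E_X$) both reduce to the claim that each element of $B_j$ is incompatible with some element of $A_j$; the paper uses this tacitly here and again in its third paragraph (for $A_l\subseteq\tilde A$). That claim follows cleanly from the polyhedral form of (P3) quoted from \cite{Miller} in Sec.~\ref{sec:VistalCells}, which---unlike the ``nontrivial partition'' phrasing in Thm.~\ref{thm:supports}---does not force both pieces of $B_j$ to be nonempty: taking $I=A_j\cap E_X$ and $J=B_j$ there, the inequality at $X$ reads $0\ge\norm{B_j}^2\,\norm{A_j\cap E_X}_X^2>0$.
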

\begin{proof}
	
	Any edge in $T$ which is incompatible with an edge in a local support pair $(A_l,B_l)$ is compatible with every edge in $X$ because for a local support pair $\norm{A_l}_X=0$. Therefore any edge from $T$ in a local support pair must be in $\tilde{B}$. 
	
	An edge in $\tilde{B}$ is compatible with every edge in $X$ therefore cannot be in any of the support pairs with edges from $X$ and thus must be in a local support pair.
	
	Suppose an edge, $e$, from $Y$ is in a local support pair, $(A_l,B_l)$; then it must incompatible with at least one edge in $T$. All the edges which are in $B_l$ must be compatible with all edges in $X$ because  $\norm{A_l}_X=0$. 
	Since $e$ is incompatible with some edge in $T$ that is not incompatible with any edge in $X$, $e$ must be in $\tilde{A}$.
	
	Let $e$ be an edge in $\tilde{A}$. Edge $e$ is not in $X$, and edge $e$ is incompatible with at least one edge in $T$ which no edge in $X$ is incompatible with. Edge $e$ must be in a support pair so that along the geodesic the length of $e$ contracts to zero before all the edges in $\tilde{B}$ can switch on. Therefore $e$ must be a support pair with 
	at least one of the edges in $\tilde{B}$ that it is incompatible with.
	Since all edges in $\tilde{B}$ are in local support pairs, all edges in $\tilde{A}$ must also be in local support pairs. 
\end{proof}

\begin{defn}\label{defn:NormalSpace}
	Let $\Or^\perp (X)$ be the orthogonal space to $\Or(X)$ at $X$, that is the union of all orthogonal spaces in all orthants containing $\Or(X)$.
\end{defn}

\begin{cor}\label{cor:LocalSupportPairs}
	Let $X,Y\in \T_n$, with $\Or(X)\subseteq \Or(Y)$ and with $X$ and $Y$ in a common multi-vistal cell, $V_{XY}$, let $Y_X$ be the projection of $Y$ onto $\Or(X)$,
	and let $Y_\perp$ be the projection of $Y$ onto $\Or^\perp(X)$ at $X$.
	Then any sequence of local support pairs which is valid for the geodesic from $Y$ to $T$ is also valid for the geodesic from $Y_\perp$ to $T$ and vice versa. 
\end{cor}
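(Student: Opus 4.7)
The plan is to establish the equivalence in both directions by observing that conditions (P1), (P2), (P3) of Thm.~\ref{thm:supports} applied to the local support pairs depend only on data that is preserved when passing between $\Gamma(Y,T)$ and $\Gamma(Y_\perp, T)$. First, invoke Lem.~\ref{lem:LocalSupportPairsComposition} to conclude that any valid sequence $(A_1, B_1),\ldots,(A_m, B_m)$ of local support pairs for $\Gamma(Y,T)$ satisfies $\bigcup_l A_l = \tilde{A} \subseteq E_Y \setminus E_X$ and $\bigcup_l B_l = \tilde{B} \subseteq E_T$. Since $Y_\perp$ has edge set $E_Y \setminus E_X$ and its lengths on these edges agree with those in $Y$ by construction of the projection onto $\Or^\perp(X)$ at $X$, the norms $\norm{A_l}_Y$ and $\norm{A_l}_{Y_\perp}$ coincide, and likewise for each $B_l$ (viewed against $T$).

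Next, verify that $\tilde{A}$ and $\tilde{B}$ form a complete set of ``non-common'' edges for the pair $(Y_\perp, T)$, so that the local support pairs collectively exhaust the support sequence for $\Gamma(Y_\perp, T)$. The inclusion $\tilde{A} \subseteq E_{Y_\perp} \setminus C_{Y_\perp T}$ and $\tilde{B} \subseteq E_T \setminus C_{Y_\perp T}$ is immediate from the definitions of $\tilde{A}$ and $\tilde{B}$. The reverse inclusions use the hypothesis that $X$ and $Y$ share a multi-vistal cell: if $e \in E_Y \setminus E_X$ is incompatible with some $f \in E_T$, then the pair $(e,f)$ must appear in the local part of the $\Gamma(Y,T)$ support, which forces $f$ to be compatible with every edge of $X$, i.e.\ $f \in \tilde{B}$, hence $e \in \tilde{A}$. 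Once these set equalities are established, (P1), (P2), (P3) for the local support pairs viewed as a support for $\Gamma(Y_\perp, T)$ hold because compatibility of splits is intrinsic to the splits themselves and the ratio norms are identical whether computed in $Y$ or in $Y_\perp$. The converse direction is symmetric: any valid support sequence for $\Gamma(Y_\perp, T)$ has its $A$-side entirely in $E_Y \setminus E_X$, hence automatically consists of pairs with $\norm{A_l}_X = 0$, and the same transfer argument shows they satisfy (P1), (P2), (P3) as a block of initial support pairs for $\Gamma(Y,T)$.

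The main obstacle is formalizing the implication of the shared multi-vistal cell hypothesis used in the reverse inclusion above: showing that no edge $f \in E_T$ incompatible with an edge in $E_Y \setminus E_X$ can simultaneously be incompatible with an edge of $X$. The approach is to assume such an $f$ exists and track how the non-local part of the $\Gamma(Y,T)$ support would be forced to combine $f$ with both $X$-edges and $E_Y \setminus E_X$ edges in a single support pair, then argue that the resulting support sequence is not compatible with the simpler support sequence for $\Gamma(X,T)$ obtained by restricting to $\Or(X)$. Since the vistal facet containing $Y$ and the vistal cell in which $X$ lies must share a face when $X$ and $Y$ lie in a common multi-vistal cell, this combinatorial mismatch would violate the multi-vistal cell hypothesis.
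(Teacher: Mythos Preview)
Your argument rests on a misreading of $Y_\perp$. By Definition~\ref{defn:NormalSpace}, $\Or^\perp(X)$ is the orthogonal space \emph{at $X$}, so the projection $Y_\perp$ retains the edges of $E_X$ with their $X$-lengths and carries the edges of $E_Y\setminus E_X$ with their $Y$-lengths; in particular $E_{Y_\perp}=E_Y$, not $E_Y\setminus E_X$. Once that is corrected, your second paragraph collapses: the local support pairs do \emph{not} exhaust the support sequence for $\Gamma(Y_\perp,T)$, because $Y_\perp$ still has all the $X$-edges and hence the non-local part of the support sequence is still present. The corollary does not assert that the local block is the entire support for $\Gamma(Y_\perp,T)$; it only asserts that the \emph{local} block is the same for both geodesics.

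This also means your ``main obstacle'' is a non-issue. The situation you try to rule out---an edge $e\in E_Y\setminus E_X$ incompatible with some $f\in E_T$ that is also incompatible with an edge of $X$---is perfectly allowed. Such an $e$ simply sits in a non-local support pair (this is exactly the case handled in part~2 of the commented-out Thm.~\ref{thm:geodesic analysis} and implicit in the proof of Lem.~\ref{lem:LocalSupportPairsComposition}), and it does so for both $\Gamma(Y,T)$ and $\Gamma(Y_\perp,T)$. No contradiction with the multi-vistal hypothesis arises, and your proposed contradiction argument would not go through.

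The paper's proof avoids all of this by applying Lem.~\ref{lem:LocalSupportPairsComposition} \emph{twice}: once to $(X,Y,T)$ and once to $(X,Y_\perp,T)$. Since $E_{Y_\perp}=E_Y$, the sets $\tilde A$ and $\tilde B$---which depend only on edge sets, not lengths---are identical in the two applications. The edges of $\tilde A$ lie in $E_Y\setminus E_X$, where $Y$ and $Y_\perp$ agree, so the (P2) and (P3) ratio conditions on the local block coincide. That is the whole argument; your first paragraph already contains the relevant norm observation, so the fix is to drop the second and third paragraphs and invoke Lem.~\ref{lem:LocalSupportPairsComposition} for $Y_\perp$ directly.
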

\begin{proof}
	Lem. \ref{lem:LocalSupportPairsComposition} implies local support pairs for the geodesic from $Y$ to $T$ and $Y_\perp$ to $T$ would be composed from
	the same sets of edges. Factoring out $\alpha$, we see that the relative lengths of edges in $\tilde{A}$ are the same in $Y$ and $Y_\perp$.
\end{proof}

\begin{thm}\label{thm:DirDerDecomposition}
	(Decomposition Theorem for Directional Derivatives)
	Let $X,Y\in \T_n$, with $\Or(X)\subseteq \Or(Y)$ and with $X$ and $Y$ in a common multi-vistal cell, $V_{XY}$, let $Y_X$ be the projection of $Y$ onto $\Or(X)$,
	and let $Y_\perp$ be the projection of $Y$ onto $\Or^\perp(X)$ at $X$.
	Then,
	\begin{align}
		F'(X,Y) = F'(X,Y_X)+F'(X,Y_\perp).
	\end{align}
\end{thm}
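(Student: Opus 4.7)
The plan is to apply Theorem~\ref{thm:DirDerValue} to each of $F'(X,Y)$, $F'(X,Y_X)$, and $F'(X,Y_\perp)$ and verify that the first expression equals the sum of the other two term by term. Set $P = Y-X$, $P_X = Y_X - X$, and $P_\perp = Y_\perp - X$. By the definition of the projections (Def.~\ref{defn:NormalSpace}), the direction vectors split cleanly over disjoint edge sets: $(P_X)_e = p_e$ if $e\in E_X$ and $0$ otherwise, while $(P_\perp)_e = p_e = |e|_Y$ if $e\in E_Y\setminus E_X$ and $0$ otherwise, so $P = P_X + P_\perp$.

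Expanding $F'(X,Y)$ via Theorem~\ref{thm:DirDerValue} with support pairs $(A^i_l,B^i_l)$ for the geodesics from $Y$ to $T^i$ produces three kinds of terms:
\begin{align*}
\text{(i)}\ &\sum_{e\in E_X} p_e\,[\nabla F(X)]_e,\\
\text{(ii)}\ &2\sum_{i=1}^n\ \sum_{l:\,\norm{A^i_l}_X=0}\norm{A^i_l}_P\,\norm{B^i_l}_{T^i},\\
\text{(iii)}\ &-2\sum_{i=1}^n\ \sum_{e\in C^i\setminus E_X} p_e\,|e|_{T^i}.
\end{align*}
By Corollary~\ref{cor:LocalSupportPairs} (together with Lemma~\ref{lem:LocalSupportPairsComposition}), every local support pair satisfies $A^i_l\cap E_X=\emptyset$, so $\norm{A^i_l}_P = \norm{A^i_l}_{P_\perp}$; hence term (ii) depends only on $P_\perp$. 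Term (iii) likewise depends only on $P_\perp$ since its index set excludes $E_X$, while term (i) depends only on $P_X$.

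Next, apply Theorem~\ref{thm:DirDerValue} to $F'(X,Y_X)$ along direction $P_X$. Since $Y_X\in\Or(X)$ shares the positively-weighted edge set $E_X$ with $X$, every support pair for the $Y_X\to T^i$ geodesic lies in $E_X$ and hence has positive norm at $X$; the local-support sum is empty. The $C\setminus E_X$ sum also vanishes because $(P_X)_e=0$ off $E_X$. Hence $F'(X,Y_X)=\sum_{e\in E_X} p_e[\nabla F(X)]_e$, which matches term (i); invariance of the restricted gradient across multi-vistal facets containing $X$ (the theorem preceding Def.~\ref{def:FrHess}) ensures the gradient value is independent of whether we used the $Y_X\to T^i$ or the $Y\to T^i$ support sequence.

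Finally, apply Theorem~\ref{thm:DirDerValue} to $F'(X,Y_\perp)$ along direction $P_\perp$. The gradient sum vanishes because $(P_\perp)_e=0$ on $E_X$. By Corollary~\ref{cor:LocalSupportPairs}, the local support pairs for the $Y_\perp\to T^i$ geodesic coincide with those for $Y\to T^i$, so the local-support contribution equals $2\sum_i\sum_{l:\norm{A^i_l}_X=0}\norm{A^i_l}_{P_\perp}\norm{B^i_l}_{T^i}$, which is exactly term (ii) thanks to the edge-disjointness from $E_X$. The common-edge set outside $E_X$ also agrees for $Y$ and $Y_\perp$ since $E_Y = E_{Y_\perp}$, so the remaining sum reproduces term (iii). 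Adding $F'(X,Y_X)+F'(X,Y_\perp)$ thus recovers $F'(X,Y)$.

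The main obstacle will be justifying the exact correspondence of local support pairs and of common-edge sets across the $Y\to T^i$, $Y_X\to T^i$, and $Y_\perp\to T^i$ geodesics, and checking that the multi-vistal hypotheses of Theorem~\ref{thm:DirDerValue} are inherited by $Y_X$ and $Y_\perp$. This bookkeeping is already packaged in Corollary~\ref{cor:LocalSupportPairs} and Lemma~\ref{lem:LocalSupportPairsComposition}, which pin down precisely which edges participate in local support pairs and show this structure is preserved under projection to $\Or(X)$ and to $\Or^\perp(X)$.
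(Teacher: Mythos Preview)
Your proposal is correct and follows essentially the same approach as the paper: decompose $P=Y-X$ into $P_X$ and $P_\perp$, apply the explicit formula from Theorem~\ref{thm:DirDerValue}, and check that the gradient term depends only on $P_X$ while the local-support and common-edge terms depend only on $P_\perp$. Your write-up is in fact more careful than the paper's---you explicitly invoke Corollary~\ref{cor:LocalSupportPairs} and Lemma~\ref{lem:LocalSupportPairsComposition} to match the local support pairs for $Y$ and $Y_\perp$, whereas the paper's proof compresses this into a one-line remark that the partial derivative vanishes for zero-length edges lying in non-local support pairs.
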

\begin{proof}
	Note that since $X$ and $Y$ are in the same orthant, the geodesic $\Gamma_{XY}$ is just the line segment $XY$. Let $P=Y-X$, and let $P_X$ and $P_\perp$ be its decomposition into the parts corresponding to $Y_X$ and $Y_\perp$.
	Let $Z$ be a point on $XY$ denoted by $|e|_Z = |e|_X + \alpha p_e$.
	Let $Z_X=X+\alpha P_X$ and let $Z_\perp=X+\alpha P_\perp$ be the component of $Z$ orthogonal to $\Or(X)$. 
	By Cor. \ref{cor:DirDerTangentSpace}, the value of the directional derivative from $X$ to $Y_X$ is
	\begin{align}
		F'(X,Y_X)=\lim_{\alpha \to 0} \frac{F(Z_X)-F(X)}{\alpha}=\sum_{e\in E_X} p_e \left[\nabla F(X)\right]_e.
	\end{align}
	and the directional derivative from $X$ to $Y_\perp$ is
	\begin{align}
		F'(X,Y_\perp) &= \lim_{\alpha \to 0} \frac{F(Z_\perp)-F(X)}{\alpha}\\
		&=2 \sum_{i=1}^n \left ( \sum_{l: \norm{A^i_l}_X = 0} \left( \norm{B^i_l}\sqrt{\sum_{e \in A^i_l} p_e^2 } \right) -\sum_{e \in C^i \setminus E_X} |e|_{T^i} p_e  \right).
	\end{align}
	The partial derivative at $X$ is well defined and is equal to zero for edges which (i) have length zero in $X$, (ii) positive length in $Y$, and (iii) are in support pairs such that $\norm{A^i_l}_X>0$. Therefore
	the claim of the theorem holds. 
	\sean{
		By Cor. \ref{cor:LocalSupportPairs}, local support pairs for the geodesic from $Z_\perp$ to $T$ are valid local support pairs for the geodesic from $Z$ to $T$. 
		Therefore the claim of the theorem holds.}
\end{proof}

\sean{
	\subsection{Analysis of local support pairs}\label{sec:AnalysisLocalSupport}
	If $X$ is a contraction of $Y$ then the geodesic from $X$ to $T^i$ 
	can differ dramatically from the geodesic from $Y$ to $T^i$ in 
	ways which are not described in the analysis of vistal cells in
	Sec. \ref{sec:VistalCells}.
	This section focuses on describing how the support sequence from $Y$ to $T^i$ behaves
	on the geodesic from $X$ to $Y$.
	This analysis will enable explicit construction
	for the directional derivative from $X$ to $Y$, which is the focus 
	of the next section. 
	
	For any set of edges $S$ and each tree $T^i$ denote by $I^i_S$ the subset of
	edges in $E_{T^i}$ which are incompatible with at least one edge in $S$.
	Let $U^i_S = I^i_S \setminus I^i_{E_X}$. For example $U^i_{\{e\}}$
	is the set of edges in $T^i$ which are incompatible with edge $e$ and compatible with every edge in $E_X$. 
	
	\begin{thm}\label{thm:geodesic analysis}
		Assume (i) and (ii) above hold. Let $\tilde{A}=\{e \in E_{Y\setminus X}| U^i_{\{e\}} \neq \emptyset \}$. Let $\tilde{B} = \{e \in E_{T^i}| e \in U^i_{E_{Y\setminus X}}\}$. Then there exists a positive number $\epsilon$ such that the combinatorial forms of $\Gamma(Y^\alpha,T^1),\ldots,\Gamma(Y^\alpha,T^n)$ are constant for $0 < \alpha < \epsilon$. Furthermore if $\alpha < \epsilon$
		\begin{itemize}
			\item[1.] If $U^i_{E_{Y \setminus X}} \neq \emptyset$
			then $(A^i_1, B^i_1)\ldots,(A^i_m,B^i_m)$, with $1 \leq m \leq k$  are comprised of the edges in $(\tilde{A},\tilde{B})$
			\item[2.] Any edge $e$ in $E_{Y \setminus X}$ such that $U^i_{\{e\}} = \emptyset$ is contained in support pair $(A^i_j,B^i_j)$ having the largest $j$ such that $I^i_{\{e\}} \subseteq I^i_{\{A_j \cup ... \cup A_m\}}$.
			\item[3.] $(A^i_1, B^i_1)\ldots,(A^i_m,B^i_m)$ could be determined by applying the Owen-Provan GTP Algorithm to $(\tilde{A},\tilde{B})$. 
			\item[4.] $(A^i_{m+1},B^i_{m+1}),\ldots,(A^i_k,B^i_k)$ could be determined by applying the Owen-Provan GTP Algorithm to $(E_X,I^i_{E_X}\setminus U^i_{E_Y})$ and inserting elements from $E_{Y/X} \setminus \tilde{A}$ into their support according to 2. 
		\end{itemize}
	\end{thm}
	\begin{proof}
		During this proof,  points and connecting lines in the incompatibility graph from the Owen-Provan GTP Algorithm extension problem will be referred to as \emph{nodes} and \emph{arcs}, and points and connecting lines in trees will be referred to here, as throughout
		this paper, as vertices and edges. 
		
		Consider how the Owen-Provan GTP Algorithm finds $\Gamma(Y^\alpha,T^i)$. 
		It iteratively solves the extension problem on each support pair in a proper path
		until (P3) is satisfied for all support pairs.
		
		Proof of 1.
		Suppose that support pair $(A^i_l,B^i_l)$ is being checked for property (P3), and there is at least one element $e \in A^i_l$, such that $e \in E_{Y \setminus X}$, $U^i_e \neq \emptyset$
		and $A^i_l \cap E_X \neq \emptyset$.
		The weight of $e \in E_{X\setminus Y}$ is $\alpha p_e /\norm{A^i_l}$ and $\norm{A^i_l} > \norm{A^i_l \cap E_X}$ therefore $w_e$ approaches 0 as $\alpha$ approaches 0.
		For small enough $\alpha$ it will cost less to cover 
		the arcs adjacent to $e$ by including $e$ rather than including elements in $I^i_e \cap B^i_l$. Thus $e$ will become part of $C_1$. 
		Any edges in $U^i_e \cap B^i_l$ will not be in a minimum cover and will therefore 
		become part of $D_1$ because arcs adjacent to elements of $U^i_e \cap B^i_l$
		will always be covered by elements of $E_{Y \setminus X} \cap A^i_l$.
		
		The weights of edges in $A^i_l \cap E_{Y \setminus X}$ can be made arbitrarily small so the sum of weights of edges in $A^i_l \cap E_X$ can be made arbitrarily close to 1. 
		If the total weight of edges in $A^i_l \cap E_X$ exceeds the total
		weight of all edges in $B^i_l$ which are each incompatible with
		at least one edge in $A^i_l \cap E_X$
		then at least one edge
		in a minimum cover will be from $B^i_l$ and at least one edge in $A^i_l \cap E_X$ will not be in a minimum cover.
		Thus $C_2$ and $D_2$ will not be empty. 
		
		So if $e \in A^i_l$ such that  $e \in E_{Y \setminus X}$, $U^i_e \neq \emptyset$
		and $A^i_l \cap E_X \neq \emptyset$ then there must be a non-trivial solution 
		to the extension problem. 
		Therefore when the Owen-Provan GTP Algorithm terminates any edges
		with $U^i_e \neq \emptyset$ will be in a support pair $(A^i_l,B^i_l)$ such that
		$A^i_l \cap E_X = \emptyset$. 
		Furthermore in any solution to the extension problem for a support pair satisfying 
		these assumptions, elements in $A^i_l \cap E_{Y \setminus X}$ such that $U^i_e \neq \emptyset$ are always in $C_1$ and edges in $U^i_{E_{Y \setminus X}}$ are always in $D_1$. If any elements in $C_1$ are from $E_X$ then at least one incompatible element must be in $D_1$.
		This concludes the proof of 1. 
		
		Proof of 2. 
		Initially, when checking property (P3) on the cone path
		if there is some element $e$ in $A^i_l$ such that $U^i_{\{e\}} = \emptyset$
		then there must be some element $f$ in $A^i_l$ such that $I^i_e \cap I^i_f \cap B^i_l \neq \emptyset$. 
		Suppose support pair $(A^i_l,B^i_l)$ is being checked for property (P3).
		Also assume there is some element $e$ in $A^i_l$ such that $U^i_{\{e\}} = \emptyset$.
		
		If $e$ is in a minimum cover then some element $b \in B^i_l$ which is incompatible with $e$ is not in that minimum cover. 
		To cover the arcs adjacent to $b$ at least some element $f$ in $A^i_l \cap E_X$ such that $I^i_f \cap I^i_e \neq \emptyset$ must also be in a minimum cover.
		
		If there are no elements $f$ such that  $I^i_f \cap I^i_e \neq \emptyset$
		are in a minimum cover then every element in $B^i_l$ which 
		is in $I^i_e$ must be in that minimum cover because $U^i_\{e\} = \emptyset$.
		Therefore $e$ cannot be in that minimum cover.
		
		Thus $e$ is in a minimum cover if and only if some edge $f$ such that $I^i_e \cap I^i_f \neq \emptyset$ is also in a minimum cover.
		Elements $f \in E_X$ such that $I^i_e \cap I^i_f \neq \emptyset$ will
		always be in $A^i_j \cup \ldots \cup A^i_m$ such that $e \in A^i_j$.
		This concludes the proof of 2.
		
		Proof of 3. In solving the extension problem, for small enough $\alpha$ elements
		in $\tilde{A}$ are all in $C_1$ and elements in $\tilde{B}$ are all in $D_1$
		until $A^i_1 = \tilde{A}$ and $B^i_1 = \tilde{B}$. 
		
		Proof of 4. For a small enough $\alpha$ deleting edges in $E_{Y\setminus X}$ from $(A^i_{m+1},B^i_{m+1}),\ldots,(A^i_k,B^i_k)$ would yield a valid
		support sequence for the geodesic from $X$ to $T^i$.
		
		
	\end{proof}

	\subsection{Differential analysis for relative interior of an orthant}\label{sec:DiffAnalysisRelInterior}
	
	\begin{lem}
		The Fr\'{e}chet function is smooth, $C^\infty$, on the interior of a multivistal cell.
		The gradient is
		\begin{equation}\label{grad}
			\left[\nabla F(X)\right]_e= 2\sum_{i = 1}^n\left\{ \begin{array}{ll}
				\left(1+ \frac{||B^i_l||}{||A^i_l||}\right)|e|_X & \textrm{if $e \in A_l^i$} \\
				\left(|e|_X-|e|_{T^i} \right) & \textrm{if $e \in C^i$}
			\end{array} \right.
		\end{equation}
		and the Hessian is
		\begin{equation}\label{FrHess}
			\frac{\partial^2 F(X)}{\partial x_e \partial x_f} = 2 \sum_{i=1}^{r} { \left\{ \begin{array}{ll}
					1+\frac{\norm{B_l^i}}{\norm{A_l^i}} - \frac{\norm{B^i_l}}{\norm{A^i_l}^3}x_e^2 & \textrm{if $e=f$ $e \in A^i_l$} \\
					1 & \textrm{if $e=f$ $e \in C^i$} \\
					-\frac{\norm{B^i_l}}{\norm{A_l^i}^3}x_e x_f & \textrm{if $e \neq f$ $e,f \in A_l^i$} \\
					0 & \textrm{otherwise}
				\end{array} \right. }
		\end{equation}
	\end{lem}
	\begin{proof}
		On the interior of a vistal cell the Fr\'{e}chet function has a constant form and its derivatives are well defined.
	\end{proof}
	
	\begin{lem}
		(\cite[Thm. 2.2]{Miller} If $X$ and $Y$ are located inside the same open orthant
		of treespace then $F(X)$ is once continuously differentiable, $C^1$.
	\end{lem}
	\begin{proof}
		This can be shown by verifying that at a fixed $T$ the gradient of $f$ is the same for every valid support and signature. The gradient of $f_l(T)$ for the support $(\A,\B)$ is given as follows. Let the variable length of edge $e$ in $T$ be written as $x_e$.
		\begin{equation}\label{graddistance}
			\frac{\partial f_l(T)}{\partial x_e}= \left\{ \begin{array}{ll}
				2 \left(1+ \frac{||B_l||}{||A_l||}\right)x_e & \textrm{if $e \in A_l$} \\
				2 \left(x_e-|e|_T' \right) & \textrm{if $e \in C$}
			\end{array} \right.
		\end{equation}
		The geodesic $\Gamma$ has a unique support$(\A,\B)$ satisfying
		\begin{equation}\label{facetcondition}
			\frac{\norm{A_1}}{\norm{B_1}} < \frac{\norm{A_2}}{\norm{B_2}} < \ldots < \frac{\norm{A_k}}{\norm{B_k}}.
		\end{equation}
		From \cite{Miller}, any other support $(\A',\B')$ for $\Gamma$ must have a signature $\mathcal{S}'$ in $(P3)$ with some equality subsequences. Suppose that $A_j'$ and $B_j'$ are in some equality subsequence in $(P2)$, and $B_j'$ contains the edge $e$, then for the support pair $A_i$ and $B_i$ such that $B_i$ contains $e$, it must hold that $\frac{\norm{A_i'}}{\norm{B_i'}} = \frac{\norm{A_j}}{\norm{B_j}}$. Now we can see that $ \left(1+ \frac{||B_j'||}{||A_j'||}\right)x_e = \left(1+ \frac{||B_i||}{||A_i||}\right)x_e$, and that the gradient of $d(T)$ is the same on every multi-vistal cell containing $T$ in $\Or$.
	\end{proof}
	\begin{cor}
		\begin{align}
			F'(X,Y) =& \sum_{e \in E} \left[\nabla F(X)\right]_e (|e|_Y-|e|_X)
		\end{align}
	\end{cor}
	

	\subsection{Differential analysis for normal space}\label{sec:DiffAnalysisNormal}
	
	
	It turns out that in normal directions to $X$, with respect to $\Or(X)$, 
	the directional derivative depends only on $E_X$ and the relative rates of increase for edges in $E_Y \setminus E_X$, and not at all
	on the lengths of the edges in $X$. 
	$F'(X,Y)$ for $Y \in {\cal N}_X$ is homogeneous and convex on treespace 
	and is once continuously differentiable on the interior of an open orthant. 
	
	Let $Y$ be a point in an open orthant $\Or$ and let $X$ be a point
	on one of the faces of the closed orthant $\bar{\Or}$. 
	This implies that all the edges in $Y$ which are not in $X$ are compatible with every edge in $X$.
	The directional derivative depends on the combinatorial form of the geodesics from 
	$Y$ to the data trees $T^1,\ldots,T^n$ as $Y$ approaches $X$ along the 
	geodesic from $X$ to $Y$.
	
	Assume that $X$ and $Y$ have the
	following properties
	\begin{itemize}
		\item[(i)] Restrict $Y$ so that $E_X \subset E_Y$ and denote the set of edges in $Y$ which are not in $X$ as $E_{Y\setminus X}$, that is $E_{Y\setminus X} = E_Y \setminus E_X$.
		\item[(ii)] $|e|_Y=|e|_X$ for all $e \in E_X \cap E_Y$.
	\end{itemize}
	Let $Y^\alpha$ be a tree on $\Gamma(X,Y;\alpha)$.
	The length of edge $e$ in $Y^\alpha$ is $|e|_{Y^\alpha} = \alpha p_e$ where $p_e = |e|_Y-|e|_X$ if $e\in E_{Y\setminus X}$ and $|e|_{Y^\alpha}=|e|_X$ if $e\in E_X$.
	Intuitively, as $Y^\alpha$ approaches $X$, the geodesic issuing from $Y^\alpha$ to $T^i$, 
	changes so that edges with lengths contracting to zero in $Y^\alpha$ will be swapped
	out for their incompatible edges in $T^i$ earlier along the geodesic.

	\begin{thm}\label{thm:DirDer}
		The directional derivative of the Fr\'{e}chet function in the direction from $X$ to $Y$, when $Y$ is perpendicular to the orthant containing $X$, is
		\begin{align}\label{DirDer}
			F'(X,Y) = \sum_{i=1}^n \sum_{l=1}^{m^i}{2\norm{B^i_l} \sqrt{ \sum_{e\in A^i_l}{p_e^2}}}+\sum_{e\in C^i \cap S}{-2p_e|e|_{T^i}} 
		\end{align}
		where $(A^1,B^1),\ldots,(A^{m^i},B^{m^i})$ are the local support 
		pairs for $\Gamma(X,T^i)$.
	\end{thm}
	\begin{proof}
		The directional derivative can be equivalently expressed in terms
		of the geodesic distances from $X$ and $Y$ to the data points, $T^1,\ldots,T^n$.
		\begin{align}\label{DirDer}
			F'(X,Y) = \lim_{\alpha \to 0} \sum_{i=1}^n \frac{d(Y^{\alpha},T^i)^2-d(X,T^i)^2}{\alpha} 
		\end{align}
		With the lengths of edges in $Y^{\alpha}$ parameterized in terms of $\alpha$, as $|e|_{Y^{\alpha}} = \alpha p_e$ with $p_e = |e|_Y - |e|_X$ if $e \in E_{Y\setminus X}$ and
		$|e|_{Y^{\alpha}} = |e|_X$ if $e \in E_X$, the difference in the numerator simplifies as follows.
		\begin{align}
			d(Y^{\alpha},T^i)^2-d(X,T^i)^2  =& \sum_{l=1}^m{(\norm{A^i_l}_{Y^{\alpha}}+\norm{B^i_l}_{T^i})^2}&-&(\norm{B^i_l}_{T^i})^2\\ &+\sum_{l=m+1}^k{(\norm{A^i_l}_{Y^{\alpha}}+\norm{B^i_l}_{T^i})^2}&- & {(\norm{A^i_l}_X+\norm{B^i_l}_{T^i})^2}\\
			&+\sum_{e \in C^i} (|e|_{Y^{\alpha}}-|e|_{T^i})^2&-&(|e|_X-|e|_{T^i})^2
		\end{align}
		
		Consider one of the first $m$ support pairs for $\Gamma(Y,T^i)$. 
		\begin{align}
			\mylim{\alpha\to 0}{\frac{(\norm{A^i_l}+\norm{B^i_l})^2-\norm{B^i_l}^2}{\alpha}}=\mylim{\alpha\to 0}{\frac{\norm{A^i_l}^2+2\norm{A^i_l}\norm{B^i_l}}{\alpha}}=2\norm{B^i_l} \sqrt{ \sum_{e\in A^i_l}{p_e^2}}
		\end{align}
		
		Consider one of the support pairs from the geodesic from $X$ to $T^i$, $(A^i_l,B^i_l)$, with $l>m$. 
		If $A^i_l$ does not contain any edges which are also in $E_{Y \setminus X} \setminus \tilde{A}$ then
		\begin{displaymath}
			\sum_{l=m+1}^k{(\norm{A^i_l}_Y+\norm{B^i_l}_{T^i})^2}-  {(\norm{A^i_l}_X+\norm{B^i_l}_{T^i})^2}=0.
		\end{displaymath} 
		On the other hand, if $A^i_l$ contains some edge(s) which are also in $E_{Y \setminus X} \setminus \tilde{A}$ then 
		\begin{align}
			\mylim{\alpha\to 0}{\frac{(\norm{A^i_l}_Y+\norm{B^i_l}_{T^i})^2-(\norm{A^i_l}_X+\norm{B^i_l}_{T^i})^2}{\alpha}}=\mylim{\alpha \to 0}{\frac{\norm{A^i_l}^2_Y-\norm{A^i_l}_X^2}{\alpha (\norm{A^i_l}_X+\norm{A^i_l}_Y)}}\\=\mylim{\alpha \to 0}{\frac{\sum_{e\in S \cap A^i_l} \alpha p_e^2}{ (\norm{A^i_l}_X+\norm{A^i_l}_Y)}}\\=0
		\end{align}
		
		Terms involving edges, $e$ in $S \cap C^i$ simplify as follows:
		\begin{align}
			\mylim{\alpha \to 0}{\frac{(\alpha p_e-|e|_{T^i})^2-(|e|_{T^i})^2}{\alpha}}=-2p_e|e|_{T^i}
		\end{align}
	\end{proof}

}

\section{Interior point methods for optimizing edge lengths}\label{sec:IntPointMethods}

In this section, the local search problem is defined, the fundamentals for iterative local search algorithm - i.e.\ initialization, an improvement method and an optimality qualification - are
discussed, and an iterative search algorithm is presented. 

Consider a variable tree $X \in \T_r$ and a fixed set of edges $E$. 
The goal is to minimize the Fr\'{e}chet function but under the restriction that 
the topology of $X$ may only have edges from $E$.
Under this restriction, the geometric location of $X$ is restricted to the orthant defined by the set of edges $E$, $\Or(E)$. 
As the edge lengths of $X$ vary the geodesic from $X$ to $T^i$ will also vary, 
and the support sequence $(\A^i,\B^i)=(A^i_1,B^i_1),\ldots,(A^i_{k^i}, B^i_{k^i})$ will change whenever $X$ crosses
the boundary of a vistal cell. 
Local search can be formulated as the following convex optimization problem.

\setcounter{equation}{0}\label{opt:P1}
\noindent Objective
\begin{align}
	\min \quad
	& F(X) =\sum_{i=1}^n \left(\sum_{l=1}^{k^i} (\norm{A^i_l}+\norm{B^i_l})^2 + \sum_{e \in C^i} (|e|_X-|e|_{T^i})^2\right)
\end{align}
Constraints
\begin{align}
	& |e|_X  \geq 0 \; \forall \; e \in E	&  & 
\end{align}
The minimizer of this optimization problem, $X^*$, satisfies $F(X^*)\leq F(X)$ for all $X$ in $\Or(E)$.

\subsubsection{Optimality Qualifications}
There are two cases for the optimal solution $X^*$: either every edge in $X^*$ has a positive length or at least one edge in $X^*$ does not. 
If every edge of $X^*$ has positive length, then $X=X^*$ if and only if $\nabla F(X)=0$ because the Fr\'{e}chet function is continuously differentiable in the interior of $\Or$. 
The optimality condition for a point
on a lower dimensional face of treespace must be expressed in terms of directional derivatives. In that case the optimality condition is
\begin{equation}
	\begin{array}{c c l}
		F'(X,Y) & \geq 0  & \textrm{for all $Y$ such that $\Or(X) \subseteq \Or(Y)$}
	\end{array}
\end{equation}
By using Thm. \ref{thm:DirDerDecomposition} to separate the directional derivative into the contribution
from the component of $Y$ in $\Or(X)$, and the component of $Y$ 
which is perpendicular to $\Or(X)$ the optimality condition becomes
\begin{equation}\label{generalOptCond}
	\begin{array}{c c l}
		\left[\nabla F(X)\right]_e & = 0 & \textrm{ for all $e: |e|_X > 0$} \\
		F'(X,Y) & \geq 0 & \textrm{for all $Y$ such that the component of $Y$ in $\Or(X)$ is 0}
	\end{array}
\end{equation}
For the local search problem i.e.\ identifying the minimizer
of the Fr\'{e}chet function on an orthant of treespace, $\Or(E)$
there must be a unique solution because the Fr\'{e}chet function is strictly convex and $\Or(E)$ is a convex set.
Also, optimality conditions for the local search problem 
are only different from global optimality conditions in one aspect, 
which is that rather than requiring $F'(X,Y)\geq 0$ for all
$Y$ perpendicular to $\Or(X)$, it is only necessary
to consider the subset of such points $Y$ that are in $\Or(E)$.

\noindent{\bf{Approximate optimality conditions}}\\
Conditions for a point $X$ on the interior of an orthant to be approximately optimal are:
\begin{equation}\label{eq:ApproxOptimal}
	\begin{array}{c c c c}
		& & \abs{\left[\nabla F(X)\right]_e} < \delta & \textrm{ for all $e: |e|_X > \epsilon$} \\
		\left[\nabla F(X)\right]_e \geq 0 & \textrm{ or } & \abs{\left[\nabla F(X)\right]_e} < \delta& \textrm{ for all $ e: |e|_X  < \epsilon$}
	\end{array}
\end{equation}
If these approximate optimality conditions are satisfied then $F(X^*)$ will not differ much
from the Fr\'{e}chet function value when the lengths of edges with positive derivatives
are set to 0.

\subsection{A damped Newton's method}\label{sec:DampedNewton}
The following algorithm is designed to find approximately optimal edge lengths for a fixed tree topology.
Detailed explanations for the steps 
of this algorithm are in the following subsections.

\begin{flushleft}
	{\bf Interior point algorithm for optimal edge lengths}\\
	{\bfseries input:} $T^1,T^2,\ldots,T^n,X^0 \in \T_r$, $\epsilon>0$, $\delta>0$, $0<c<1$\\
	{\bfseries while} approximate optimality conditions (\ref{eq:ApproxOptimal}) are not satisfied  \\
	{\bfseries do} \\
	\hspace*{1cm} compute a descent direction $P$ (Sec. \ref{sec:NewtonSteps})\\
	\hspace*{1cm} find a feasible step-length, $\alpha$, satisfying decrease condition (Sec. \ref{sec:StepLength})\\
	\hspace*{1cm} {\bfseries let} $X^{k+1}=X^k+\alpha P$\\
	\hspace*{1cm} {\bfseries if} $|e| < \epsilon$ {\bfseries then} remove $e$\\
	{\bfseries endwhile}
\end{flushleft}

\subsubsection{Newton steps}\label{sec:NewtonSteps}
A successful iterative algorithm will make substantial progress to an optimal point.
This can be achieved using a modified Newton's method. 
Newton's method uses a descent vector which points to the minimizer of a quadratic approximation of the objective function. The quadratic approximation in Newton's method uses the first three terms of the Taylor expansion of $F(X)$.
For the Fr\'{e}chet function the entries of the Hessian matrix of second order partial derivatives is given in Def. \ref{def:FrHess}.

The Hessian matrix is positive definite because the Fr\'{e}chet function
is strictly convex. 
Thus, when the Hessian and gradient are well defined,
the second order Taylor approximation is 
\begin{align}
	g(X; p)=& F(X)+ \sum_{e \in E}\nabla (F(X))_e (p_e)
	&+ \sum_{e \in E} \sum_{e' \in E} (p_e)(p_{e'}) \left[\nabla^2 F(X)\right]_{ee'} 
\end{align}
The minimizer in $p$ of $g(X;p)$ is the Newton vector $p^N = - \nabla F(X) (\nabla^2 F(X))^{-1} $. In cases where Hessian is not well defined or 
poorly scaled the Newton vector either must be modified or shouldn't be used at all.
There are two such cases: (i) the Hessian is poorly scaled if any set $A_l^i$ has $\norm{A_l^i}$ close to zero and $|A_l^i|>1$; and (ii) the Hessian is not well defined on the shared faces of multivistal cells for edges
in a support pair which changes from one side of the shared face to the other. 

In case (i), as $\norm{A_l^i}$ approaches zero
$\abs{\left[\nabla^2 F(X)\right]_{ee'}}$ increases without bound
for $e, e' \in A_l^i$. 
Thus, the Hessian matrix may become ill conditioned.
But even as $\norm{A_l^i}$ approaches zero, other entries of the gradient and Hessian are stable.
If $e \in A_l^i$, and $e' \notin A_l^i$, and $e, e' \in A_l^i$ for $l \neq q$, then
those entries $\abs{\left[\nabla^2 F(X)\right]_{ee'}}$ approaches zero. 

But even if case (i) occurs, the edges which have lengths approaching 
zero will have very little effect on the lengths of large scale edges.
The mixed second order partial derivatives, with one edge $e$ in $A_l^i$ with $\norm{A_l^i} \to 0$, and the other edge $e'$ in $A_l^i$ with $\norm{A_l^i}$ bounded positively from below, are essentially zero. Therefore, these mixed partial derivatives should have very little effect in the quadratic approximation $g(X;p)$.

\sean{
	
	In this case, two quadratic approximation could be used. 
	One approximation for edges which appear in sets $A_l^i$ with $\norm{A_l^i} \to 0$, and the another for edges which only appear in support sets $A_l^i$ with norm bound positively below. 
	Suppose that $S$ is the set of edges in supports such that $\norm{A_l^i} < \epsilon$ (some small positive constant). 
	If $S$ is not empty, the quadratic model is
	$g(X;p) = F(X)+\sum$
}

In case (ii), $\left[\nabla^2 F(X)\right]_{ee'})$ does not agree on the shared face of multi-vistal cells. In this case a direction of improvement is based on the gradient and the algebraic form of the Hessian matrix for points strictly within that shared face.
\sean{
	But how many forms of the Hessian can there be?
	Can the Newton direction be bad? How bad? Bad enough you wouldn't use it?
}
The Newton direction is obtained by solving for the minimizer of the second order Taylor series
quadratic model. The solution is obtained by solving the linear system $\nabla^2 F(X) p = -\nabla F(X)$. 
Since the entries of $\nabla^2 F(X)$ may jump across vistal cells the entries of this matrix will change
in a non-uniform way and the solution $p^N$ will jump. 

\sean{
	Note, an alternative to Newton's methods is a subspace minimization method. This is achieved by minimizing the quadratic model in the subspace spanned by the gradient and the Hessian, and bounded by a sphere or by an appropriately scaled ellipsoid when some edge lengths are approaching zero.}

\subsubsection{Choosing a step length}\label{sec:StepLength}
The taking a full step along the Newton direction
minimizes the quadratic approximation of the 
Fr\'{e}chet function, however taking a full step may result in a new point which actually has a larger
Fr\'{e}chet function value or that may
be on outside the orthant boundaries.

The first precaution is to calculate the maximum step
length $\alpha_0$ such that $|e|_{k+1}=|e|_k+\alpha_0 p_e \geq 0$ for all $e$.  
If $\alpha_0 \leq 1$ then let $\alpha = \alpha_0 c_0$
where $0<c_0 <1$.

Choosing
step-length which satisfies the following \emph{sufficient decrease condition}
will ensure  a substantial decrease
in the objective function value
at every step.
Let $0<c_1<1$.
\begin{align}\label{eq:SufficientDecrease}
	F(X^k+\alpha p) \leq F(X^k)+c_1 \alpha \nabla F_k' p
\end{align}
The \emph{curvature condition}, which rules out
unacceptably short steps, requires the step-length, $\alpha$, to satisfy
\begin{align}
	\nabla F(X^k+\alpha p)' p \geq c_2 \nabla F_k' p
\end{align}
for some constant $c_2$ in the interval $(c_1,1)$.

\sean{
	\subsubsection{Analysis of rate of local convergence for Interior Newton's Method} 
	The section concerns the rate of convergence
	of Newton steps to the optimal solution.
	
	The second order derivatives of the Fr\'{e}chet function $F(X)$ are
	discontinuous as $X$ varies from one vistal cell to another.
	Thus a Hessian-based linearization of $\nabla F(X)$ at $X^k$ is only valid
	inside the same multi-vistal cell as $X^k$. 
	Nevertheless, Newton steps will still converge to the minimizing point in $\Or(X^0)$.
	The rate of convergence will be somewhere between linear and quadratic 
	because even though the algebraic form of some of the second derivatives change $X$ varies across vistal cells,
	only the second derivatives involving edges which have new support pairs will change. 
	
	We show that $d(X^{k+1},X^*)$ is bounded in proportion to $d(X^k,X^*)$.
	Therefore assuming $X^k$ reaches a small neighborhood of $X^*$, the local rate of convergence will be linear. 
	\begin{lem}
		Assume
		that the minimizer of $F(X)$ in $\Or(X^0)$, $X^*$, is contained 
		on the interior of $\Or(X^0)$ and that at any iteration
		$\norm{A^i_l} > \delta$ for all $i=1,\ldots,n$ and $l=1,\ldots,k^i$.
		Newton's steps achieve a linear rate of convergence to $X^*$. 
	\end{lem}
	\begin{proof}
		Let $|e|_k$ denote the length of $e$ in $X^k$ and let $|e|_*$ denote the length of $e$ in $X^*$.
		\begin{align}
			d^2(X^{k+1},X^*)=\sum_e \left(|e|_{k+1}-|e|_*\right)^2
		\end{align}
		Taking a full Newton step from $X^k$ to $X^{k+1}$ equates to 
		changing the length of $e$ from $|e|_{k}$ to $|e|_{k+1}$ by
		\begin{align}
			|e|_{k+1}=|e|_k-\left[ \nabla^2 F_k^{-1} \nabla F_k\right]_e
		\end{align}
		At an equilibrium point, $X^*$, $\nabla F_*=0$, so 
		\begin{align}
			|e|_{k+1}=|e|_k-\left[ \nabla^2 F_k^{-1}( \nabla F_k-\nabla F_*)\right]_e
		\end{align}
		Using the sub-multiplicativity of matrix norms
		\begin{equation}
			d^2(X^{k+1},X^*) \leq \\ \norm{\nabla^2F_k^{-1}}^2 \sum_e \left( \sum_{e'}\left ([\nabla^2F_k]_{ee'}(|e'|_k-|e'|_*) \right )-[\nabla F_k+\nabla F_*]_{e}\right)^2
		\end{equation}
		Using the fundamental theorem of calculus to bound the difference between
		$\nabla F_k$ and $\nabla F_*$ yields
		\begin{align}
			d^2(X^{k+1},X^*)&\\
			\leq& \norm{\nabla^2 F_k^{-1}}^2 \int_0^1 \left( \norm{\nabla^2 F(X^*+(X^k-X^*)\theta)-\nabla^2 F_*}^2 \sum_e(|e|_k-|e|_*)^2  \right ) d\theta\\
			=&\norm{\nabla^2 F_k^{-1}}^2 \int_0^1 \left( \norm{\nabla^2 F(X^*+(X^k-X^*)\theta)-\nabla^2 F_*}^2 \right ) d\theta \sum_e(|e|_k-|e|_*)^2  \\
			=&\left(\norm{\nabla^2 F_k^{-1}}^2 \int_0^1 \left( \norm{\nabla^2 F(X^*+(X^k-X^*)\theta)-\nabla^2 F_*}^2 \right ) d\theta\right) d^2(X^k,X^*) 
		\end{align}
		$F$ is strictly convex, so $\norm{\nabla^2 F(X^k)^{-1}}$ is bounded by some constant $M$. The integral, $\int_0^1 \left( \norm{\nabla^2 F(X^*+(X^k-X^*)\theta)-\nabla^2 F_*}^2 \right ) d\theta$ is finite so it is bounded by some constant $C$. Therefore
		\begin{align}
			d^2(X^{k+1},X^*)\leq M C d^2(X^k,X^*)
		\end{align}
		and this suffices to show that when $X^k$ converges to $X^*$ it does so at a linear rate.
	\end{proof}
}

\subsubsection{Initialization}
For initializing an interior point search, any point in $\Or(E)$ would suffice, but it is preferential to start with a good guess for edge lengths. 
The global search algorithms presented in the next section could provide a starting point for a local search.
One good start
strategy
can be derived by noticing that the Fr\'{e}chet function can be separated into a quadratic part and a part involving sums of norms.
\begin{align}
	F(X)
	&  =\sum_{i=1}^n \sum_{l=1}^{k^i} \norm{A^i_l}^2+2\norm{A^i_l}\norm{B^i_l} +\norm{B^i_l}^2 + \sum_{e \in C^i} (|e|_X^2-|e|_{T^i})^2
\end{align}
The only terms that cannot be expressed in a quadratic function of the edge lengths are collected into function
\begin{align}
	S(X) 
	&= \sum_{i=1}^n \sum_{l=1}^{k^i} 2 \norm{A^i_l}\norm{B^i_l}
\end{align}
The quadratic parts are collected into function 
\begin{align}
	Q(X)
	& = \sum_{i=1}^n \sum_{l=1}^{k^i} \norm{A^i_l}^2+\norm{B^i_l}^2 +\sum_{e \in C^i} (|e|_X-|e|_{T^i})^2,
\end{align}
The minimizer of $Q(X)$, $X^*_Q$, can be easily found by solving $\nabla Q(X)=0$; the solution is
\begin{align}
	|e|_{X^*_Q}=\frac{\sum_{i=1}^n |e|_{T^i}}{n}.
\end{align}
The optimal value $|e|_{X^*_Q}$ is non-negative, and if $e$ is common in any of $T^1,\ldots,T^n$, then $|e|_{X^*_Q}$ is positive.
The gradient of $S(X)$ is non-negative at any feasible $X$, which implies that
the optimal edges lengths in $X^*$ must be no larger than the edge lengths in $X^*_Q$ i.e.\ the optimal solution is in the closed box
\begin{align}
	0\leq |e|_X & \leq |e|_{X^*_Q} & & \forall e \in E.
\end{align}
Thus a reasonable starting point would be $X^*_Q$. 

\sean{
	\subsection{Verifying optimality}
	Given a point $X$ on a lower dimensional face of $\bar{\Or}_E$, does there exist $Y$ such that $F'(X,Y)<0$? 
	The next lemma provides the logical base for an iterative algorithm to determine
	the minimizer of $F'(X,Y)$ in some closed orthant, $\bar{\Or}$.
	
	Let $X$, $X_0$, $Y$ and $Y_0$ be points in the same orthant $\bar{\Or}$ defined as follows.
	$X$ is a contraction of $Y$ such that $|e|_X=|e|_Y$ for $e \in E_X \cap E_Y$. 
	$X_0$ is a contraction of $X$ such that $|e|_{X_0}=|e|_X$ for $e \in E_X \cap E_{X_0}$. 
	$Y_0$ is a contraction of tree $Y$ formed by contracting edges in $E_X \setminus E_{X_0}$
	and setting $|e|_{Y_0}=|e|_Y$ for $e \in E_{Y_0} \cap E_Y$. 
	
	Lemma \ref{lem:Bound Directional Derivatives Below} implies
	that if it is not possible to decrease $F(X)$ by adding edges from a set $S$
	then it is not possible to decrease $F(X_0)$ by adding edges from a set $S$. 
	This can be used to eliminate $S$ from consideration for the edges in $X^*$.
	
	\begin{lem}\label{lem:Bound Directional Derivatives Below}
		The directional derivative from $X_0$ in the direction of $Y_0$ is greater than or equal to the directional derivative from $X$ in the direction of $Y$, that is $F'(X_0;Y_0) \geq F'(X,Y)$.
	\end{lem}
	\begin{proof}
		Let $S = E_Y \setminus E_X$. Denote the set of edges in $T^i$ which are only incompatible with edges in $S$ and not any other edges in $E_Y$ as $U^i$; and likewise denote the set of edges in $T^i$ which are only incompatible with edges in $S$ and not any other edges in $E_{Y_0}$ as
		$U^i_{0}$. 
		The set $E_{Y_0}$ is a subset
		of $E_Y$, and therefore any edge which is incompatible with some edge $E_{Y_0}$ is incompatible with some edge in $E_Y$. 
		Thus the set $U^i \subseteq U^i_{0}$.
		Let $(\tilde{A}_0,\tilde{B_0})$ and $(\tilde{A},\tilde{B})$ be as defined in Thm. 
		\ref{thm:geodesic analysis} for $F'(X,Y)$ and $F'(X_0;Y_0)$ 
		respectively. 
		Here $\tilde{A}^i = S \setminus \{e \in S | U^i_e = \emptyset\}$, 
		$\tilde{A}^i_0 = S \setminus \{e \in S | U^i_{e0} = \emptyset\}$,
		$\tilde{B}^i=U^i$, and $\tilde{B}^i_{0}=U^i_0$.
		$\tilde{A}^i \subseteq \tilde{A}^i_0$ because $U^i_e \subseteq U^i_{e0}$.
		Let $(A^i_1,B^i_1),\ldots,(A^i_m,B^i_m)$, let $(A^i_{10},B^i_{10}),\ldots,(A^i_{m0},B^i_{m0})$ be a solutions from the Owen-Provan GTP Algorithm
		for $(\tilde{A}^i,\tilde{B}^i)$ with lengths from $Y$ and $T^i$, and $(\tilde{A}^i_0,\tilde{B}^i_0)$ with lengths from $Y_0$ and $T^i$, respectively. $\tilde{A}^i \subseteq \tilde{A}^i_0$,
		$\tilde{B}^i \subseteq \tilde{B}^i_0$, and $|e|_{Y} = |e|_{Y_0}$ thus
		\begin{align}
			F'(X_0;Y_0) - F'(X,Y)= 
			\sum_{i=1}^n \sum_{l=1}^{m^i_0} \norm{A^i_{l0}}\norm{B^i_{l0}}
			- \sum_{i=1}^n \sum_{l=1}^{m^i} \norm{A^i_{l}}\norm{B^i_{l}} \geq 0
		\end{align}
		Essentially this is equivalent to
		comparing the lengths of $\Gamma(\tilde{A}^i,\tilde{B}^i)$
		and $\Gamma(\tilde{A}^i_0,\tilde{B}^i_0)$.
	\end{proof}
	
	\begin{cor}
		If introducing a set of edges $S$ to a tree $X$ cannot decrease $F(X)$
		then introducing those edges to $X_0$ a contraction of $X$ cannot decrease $F(X_0)$
	\end{cor}
	\begin{proof}
		From Lem. \ref{lem:Bound Directional Derivatives Below}, if 
		the minimizer of $F'(X,Y)$ for $Y$ in $\bar{\Or}(E_X \cup S)$
		is non-negative.
		then
		the minimizer of $F'(X_0;Y_0)$ for $Y_0$
		in $\bar{\Or}(E_{X_0} \cup S)$ is non-negative
	\end{proof}
	
	The following optimization problem can be used to check if the sign of $F'(X,Y)$
	is negative for any $Y$ in some fixed orthant $\Or$. 
	In the following formulation $Y$ is constrained to be on the 
	simplex defined by $\sum_{e\in S} p_e = 1$ and $p_e \geq 0$. 
	\begin{align}
		{\cal L}(Y,\lambda) = F'(X,Y) + \lambda (1-\sum_{e \in S} p_e)
	\end{align}
	For a formulation with a solution which represents direction of steepest descent per unit distance,
	constrain $Y$ to be on the surface of the section of the hypersphere defined by 
	$\sum_{e\in S} p_e^2=1$ and $p_e \geq 0$. The disadvantage of this later formulation
	is that the feasible region is not convex. 

	\subsubsection{Iterative Algorithm for verifying optimality} 
	The novel iterative optimization algorithm presented in this section is based on the following new optimality condition. 
	\begin{thm}
		Let $E^* \cup E_1 \cup \ldots E_k$ be a partition of the set of edges $E$ such that none of $E_1, E_2, \ldots, E_k$ is empty.  $X^*$ is the minimizer of $F(X)$ in $\Or(E)$ if and only if there exists a partition $E$, with $X^*$ having edge set $E^*$, and 
		a set of trees $Y_1,\ldots, Y_k$, with $Y_i$ having edge set $E^* \cup E_1 \cup \ldots \cup E_i$, such that 
		\begin{align}
			Y_{1}=\underset{Y \in \Or(E^*\cup E_1)}{\operatorname{argmin}} F'(X^*,Y)\\
			F'(X^*,Y_1) \geq 0\\
			Y_{i+1}=\underset{Y \in \Or(E^*\cup E_1 \cup \ldots \cup E_{i+1})}{\operatorname{argmin}} F'(Y_i,Y)\\
			F'(Y_i,Y_{i+1})\geq 0
		\end{align}
	\end{thm}
	\begin{proof}
		If $F'(Y_k,Y_{k+1}) \geq 0$ then $Y^k$ is a minimizer of $F'(Y^{k-1},Y)$ in $\Or(E)$. By induction, $Y_1$ is a minimizer of $F'(X^*,Y)$ in $\Or(E)$. Since $F'(X^*,Y) \geq 0$, $X^*$ is a minimizer of $F(X)$ in $\Or(E)$.\\
		For the other direction, if $X^*$ is optimal in $\Or(E)$, then there exists a 
		minimizer, $Y^1$ of $F'(X^*,Y)$ which satisfies the assumptions of the theorem.
		Inductively, the minimizer, $Y^{i+1}$, of $F'(Y_i,Y)$ must also satisfy the assumptions of the theorem.
	\end{proof}
	\begin{flushleft}
		{\bf Interior point algorithm for optimal edge lengths}\\
		{\bfseries input:} $T^1,T^2,\ldots,T^n,X^0 \in \T_r$, $\epsilon>0$\\
		{\bfseries initialize} $X=X^0$, $E_A=E_X$, $E_I=\emptyset$, $E_N = \emptyset$ \\
		{\bfseries while} \\
		\hspace*{1cm} \\
		{\bfseries endwhile}
	\end{flushleft}
}
\sean{
	\noindent {\bf (P3) Intersection Algorithm}\\
	\noindent{\bfseries initialize} $\Lambda = 0$, $r_{ee'} = c_{ee'}$, $z_{ee'} = 0$\\
	\noindent \hspace*{1 cm} find a maximum flow in $G(A^{\lambda,i}_l,B^{\lambda,i}_l)$\\
	\noindent{\bfseries while} $\Lambda < 1$\\
	\noindent \hspace*{1 cm} {\bfseries do} \\
	\noindent \hspace*{1 cm} find an augmenting path $P_e$ for each supply node $e$ \\
	\noindent \hspace*{1 cm} {\bfseries if} any supply node does not have an augmenting path\\
	\noindent \hspace*{2 cm} halt\\
	\noindent \hspace*{1 cm} {\bfseries endif}\\
	\noindent \hspace*{1 cm} calculate smallest $\lambda^* > \Lambda$ with a bottleneck arc \\
	\noindent \hspace*{1 cm} $r_{e'e''}^{\lambda^*}= r^\Lambda_{e'e''}+(\lambda^*-\Lambda) \left( \sum_{e:(e'',e')\in P_e} d_e-\sum_{e:(e',e'')\in P_e} d_e\right) $\\
	\noindent \hspace*{1 cm} $\Lambda = \lambda^*$\\
	\noindent{\bfseries endwhile}
	
	\begin{enumerate}
		\item check if $\norm{A_l^i}<\epsilon$ for each $i$ and $l$\\
		collect all edges for which the support pairs are collapsing to zero. \\
		There are two types of small-scale edges: in a support pair with other edges or in a support pair alone. In the case when small-scale edges appear in a support pair alone
		the Hessian by become ill-conditioned. 
		\item calculate the Newton's vector for the large scale edges
		\item calculate the Newton's vector for the small scale edges
		\item perform backtracking search on the small scale edges
		\item perform backtracking search on the large scale edges
		\item note that for the two previous steps the order is virtually irrelevant because
		the lengths of the large scale edges have almost no impact on the gradient and
		Hessian of the small scale edges and vice versa. 
		\item if Newton's direction does not provide sufficient decrease in the objective
		function use the gradient instead. 
	\end{enumerate} 
	
	Damping is restricting a step which would move an edge length 
	to 0 or past 0 by a proportion. The decision to damp the step size
	is made for each edge separately.

	\begin{flushleft}
		\mywhile\;  apprx. optimality conditions are not satisfied \\
		\;\;\; calculate $F(T)$, $\nabla F(T)$, $\nabla^2 F(T)$
		\;\;\; 
		\vspace{-10pt}
		\;\;\; \myif \\
	\end{flushleft}

}

\newpage
\section{Updating Geodesic Supports}\label{sec:updating}
In this section we consider the problem of updating the Fr\'{e}chet function while navigating an orthant of treespace.

The algebraic form of the Fr\'{e}chet function depends on the geodesics from the variable tree $X$ to the data trees $T^1,\ldots,T^n$. 
Updating these involves changing the algebraic form so it continues to represent
the individual geodesics as $X$ moves from one point of treespace to another. 
Finding the correct algebraic form from scratch takes order
$O(r^4n)$ complexity; in this section we offer more
efficient methods based on iteratively updating the geodesics. 
These methods take advantage of the polyhedral form of the vistal subdivision of squared treespace \cite{Miller}, 
which are based in turn on the optimality properties of a geodesic from \cite{OwenProvan}.
The next subsection describes the setup for systematically updating the algebraic form of the Fr\'{e}chet function.

\subsubsection{Setup and notation}
All discussion in this section takes place in squared treespace. 
Let $X^0$ and $X^1$ be fixed trees in the same orthant,
so that the geodesic between them is a line segment. 
Let $X^\lambda = (1-\lambda)X^0+\lambda X^1$ be a variable tree on this segment.
The length of each edge in  $X^\lambda$ is $|e|_{X^\lambda}= (1-\lambda)|e|_{X^0}+\lambda |e|_{X^1}$, and the change in the length of $e$ with respect to $\lambda$ is $d_e = |e|_{X^1}-|e|_{X^0}$. 
Thus  $|e|_{X^\lambda}= |e|_{X^0}+\lambda d_e$.
Let $\Gamma^{i,\lambda}$ be the geodesic from $X^\lambda$ to $T^i$ with supports $(\A^{i,\lambda},\B^{i,\lambda}) = (A^{i,\lambda}_1,B^{i,\lambda}_1),\ldots,(A^{i,\lambda}_{k^{i,\lambda}},B^{i,\lambda}_{k^{i,\lambda}}) $.
These supports will be constant in the vistal cell $\V^\lambda$ containing $X^\lambda$. We describe conditions under which $X^\lambda$ leaves $\V^\lambda$, and the associated updates to the supports. 

\subsubsection{Intersections with (P2) constraints}
The (P2) bounding inequalities for $\V^\lambda$ can be written in the form 
\begin{equation}\label{(P2)basic}
	\begin{array}{c c}
		\displaystyle \norm{B_{l+1}^{i,\lambda}}^2 \sum_{e \in A_l^{i,\lambda}}{|e|_{X^0}+\lambda d_e} \leq  \norm{B_l^{i,\lambda}}^2 \sum_{e \in A_{l+1}^{i,\lambda}}{|e|_{X^0}+\lambda d_e} & i = 1,\ldots, k^l-1, l = 1,\ldots, n.
	\end{array}
\end{equation}
Simplification yields
\begin{equation}
	a^i_l\lambda + b^i_l \geq 0
\end{equation}
where 
\begin{displaymath}
	a^i_l = \norm{B_{l}^{i,\lambda}}^2 \sum_{e \in A_{l+1}^{i,\lambda}}{d_e}-\norm{B_{l+1}^{i,\lambda}}^2 \sum_{e \in A_{l}^{i,\lambda}}{d_e},
\end{displaymath} and
\begin{displaymath}
	b^i_l = \norm{B_{l}^{i,\lambda}}^2 \sum_{e \in A_{l+1}^{i,\lambda}}{|e|_{X^0}}-\norm{B_{l+1}^{i,\lambda}}^2 \sum_{e \in A_{l}^{i,\lambda}}{|e|_{X^0}}.
\end{displaymath}
There are several cases for solutions, $\lambda^i_l = -b^i_l/a^i_l$, and each signifies a distinct situation.
The case $\lambda=0$ implies $X^0$ is on a (P2) boundary of $\V^\lambda$.
Any positive solution, $0<\lambda \leq 1$ corresponds to a point along the geodesic segment at which the segment intersects a (P2) constraint.  
Finding $\lambda > 1$ signifies an intersection with a (P2) boundary beyond $X^1$, and
if a solution is negative then there is an intersection with the geodesic in the opposite direction to $d$.
The first (P2) inequalities to be violated in moving along the geodesic segment are 
\begin{equation}
	\displaystyle \myargmin{i,l: \lambda_{i,l}>0}{ \left\{ \lambda^i_l\right\}.}
\end{equation}
If a (P2) constraint is satisfied at equality, the corresponding supports may be combined to make a new support satisfying (P2) at strict inequality, and still satisfying (P1) and (P3). Combining the current flow values into this new support pair will provide a warm start for subsequently tracking intersections with (P3) constraints, and further (P2) violations.

\subsubsection{Intersections with (P3) constraints}

From \cite[Prop. 3.3]{Miller} inequality constraints for (P3) are
\begin{equation}\label{(P3)basic}
	\begin{array}{l}
		\norm{B_l^{i,\lambda} \setminus J}^2 \sum_{e \in A_l^{i,\lambda} \setminus I} |e|_{X^0}+\lambda d_e - \norm{J} ^2\sum_{e \in I} |e|_{X^0}+\lambda d_e \geq 0 \\ \;for\;all\; i = 1,\ldots,k \; and\; subsets \;
		I\subset A_l^{i,\lambda},\;J\subset B_l^{i,\lambda} \; such \; that \; I \cup J \; is \; \\compatible.
	\end{array}
\end{equation}
Determining whether or not a support pair satisfies (P3) can be restated as the following extension problem.

\noindent{\bf Extension Problem}

\noindent{\bf Given:} Sets $A \subseteq E_X$, and $B \subseteq E_{T^i}$

\noindent{\bf Question:} Does there exist a partition $C_1  \cup C_2$ of $A$ and a partition of $D_1 \cup D_2$ of $B$ such that
\begin{itemize}
	\item[(i)] $C_2 \cup D_1$ corresponds to an independent set in $G(A,B)$,
	\item[(ii)] $\frac{ \norm{C_1}}{\norm{D_1}} \leq \frac{\norm{C_2}}{\norm{D_2}}$
\end{itemize}
The extension problem can be reformulated from a maximum independent set problem to a maximum flow problem. 
Each support pair $(A_l^{i,\lambda},B_l^{i,\lambda})$ has a corresponding incompatibility graph as defined in \cite[Sec. 3]{OwenProvan}.
The vertex weights of the incompatibility graph at a point along the geodesic segment can be parameterized in terms of $\lambda$ as
\begin{equation}
	w_e^\lambda = 
	\left \{ \begin{array}{c c}
		\frac{|e|_{X^0}+\lambda d_e}{\sum_{e' \in A_l^{i,\lambda}} |e'|_{X^0}+\lambda d_e'} & e \in A_l^i \\
		\\
		\frac{|e|_{T^i}}{\sum_{e'\in B^{i,\lambda}_l}{|e'|_{T^i}}} & e \in B_l^i 
	\end{array} \right .
\end{equation}
Although $w_e^\lambda$ is a non-linear function of $\lambda$,
matters are simplified by rescaling the lengths of edges to have sum 1 within each support pair separately. 
The approach is to complete the parametric analysis of
that extension problem and then map back 
to find $\lambda$ for the original weights.

Let $V^0$ and $V^1$ be formed
by rescaling the lengths of edges in $X^0$ and $X^1$ to sum to 1. 
Suppose that some (P3) constraint defined by $\sum b_e |e|_X \geq 0 $ is satisfied at equality by $\tilde{\lambda}$, that is $\sum b_e((1-\tilde{\lambda})|e|_{V^0}+\tilde{\lambda} |e|_{V^1})=0$.  The following gives a transformation between the solution when the weights in each support pair are scaled to have sum 1, and before scaling. 

\begin{align}
	0= &\sum b_e((1-\tilde{\lambda})|e|_{V^0}+\tilde{\lambda} |e|_{V^1})\\
	=&\sum b_e\left((1-\tilde{\lambda})\frac{|e|_{X^0}}{\sum_{e' \in A^i_l}{|e'|_{X^0}}}+\tilde{\lambda} \frac{|e|_{X^1}}{\sum_{e' \in A^i_l}{|e'|_{X^1}}}\right)\\
	=&\sum b_e\left(c_0 |e|_{X^0}+c_1 |e|_{X^1}\right)\\
	=&\sum b_e\left((1-\lambda)|e|_{X^0}+\lambda |e|_{X^1}\right)
\end{align}
Thus $\sum b_e((1-\lambda)|e|_{X^0}+\lambda |e|_{X^1})=0$ is
satisfied by 
\begin{align}
	\lambda = \frac{c_1}{c_0+c_1} = \frac{\tilde{\lambda}/\sum_{e \in A^i_l}{|e|_{X^1}}}{(1-\tilde{\lambda})/\sum_{e' \in A^i_l}{|e'|_{X^0}}+\tilde{\lambda}/\sum_{e' \in A^i_l}{|e'|_{X^1}}}.
\end{align}

Assuming the weights of edges in each support pair are already scaled to have 
sum 1, the weights are parameterized as a linear function as
\begin{equation}
	\tilde{w}_e^{\tilde{\lambda}} = 
	\left \{ \begin{array}{c c}
		|e|_{A^i_l}+\tilde{\lambda} \tilde{d}_e  & e \in A_l^i \\
		\\
		|e|_{B^i_l} & e \in B_l^i 
	\end{array} \right .
\end{equation}
where $\tilde{d}_e = \frac{|e|_{X^1}}{\sum_{e' \in A^i_l}{|e'|_{X^1}}}-\frac{|e|_{X^\lambda}}{\sum_{e' \in A^i_l}{|e'|_{X^0}}}$ is the change in capacity for the arc from the source to node $e \in A_l^{i,\lambda}$.

Parametric analysis of the extension problem will yield a method for 
updating the objective function along the segment from $X^0$ to $X^1$. 
In an incompatibility graph
the capacity for an arc from the source $s$ to a node $e$ in $A^i_l$ is
$c_{se} = w_e$, for arcs from a node $f$ in $B^i_l$ to the terminal node $t$
the capacity is $c_{ft}=w_f$ (fixed), and the capacity for an arc
from a node in $A^i_l$ to an incompatible node in $B^i_l$
is infinity. 
Assume an initial flow for the directed graph \incompGraphLambda\, is calculated for $\lambda = 0$.
For arc $(e,e')$, variable flow is $z_{ee'}$ and the flow for $\lambda = 0$ is $z_{ee'}^0$.
Residual capacity for arc $(e,e')$ is $r_{ee'}= c_{ee'}-z_{ee'}+z_{e'e}$.
Recall, that $\tilde{d}_e = |e|_{V^1}-|e|_{V^0}$ is the change in capacity for the arc from 
the source to node $e \in A_l^{i,\lambda}$ from $X^0$ to $X^\lambda$.
As $\lambda$ increases the flow may become infeasible because an arc capacity has decreased, or
an augmenting path may exist because the arc capacity has increased. 
The net change in the total arc capacity is zero because $\sum_e \tilde{d}_e = \sum_e (|e|_{V^1}-|e|_{V^0}) = 0$, and thus
the total increase in arc capacity must equal the total decrease in arc capacity.
Therefore the maximum flow will be inhibited not by a change in total capacity, but rather by a bottleneck preventing
a balance of flow as $\lambda$ increases. 
To balance the flow, excess flow from arcs with decreasing capacities must shift to arcs with increasing capacities.
The key is to identify directed cycles in the residual graph oriented along arcs with increasing capacities
and against arcs with decreasing capacities.

If $d_e>0$ then $e$ is a ``supply'' node and if $d_e < 0$ then $e$ is a ``demand" node. 
A (P3) constraint is violated precisely at the smallest positive $\lambda$ such that balancing supply and demand is not possible.
An augmenting path is a path in the residual network
from a supply node to a demand node. If there is an augmenting path
from each supply node to each demand node then it is possible to maintain
a feasible flow for some $\lambda >0$ by pushing flow along augmenting paths. 
In pushing flow along a set of augmenting paths, where $P_e$ is the
augmenting path for supply node $e$, the residual capacity along 
arc $(e',e'')$ is 
\begin{align}
	r_{e'e''}= r^0_{e'e''}+\lambda \left( \sum_{e:(e'',e')\in P_e} d_e-\sum_{e:(e',e'')\in P_e} d_e\right) .
\end{align}
For a set of augmenting paths an arc is a \emph{bottleneck} at $\lambda$ if it has
has zero residual capacity at $\lambda$.
Once a bottleneck is reached at least one augmenting path
is no longer valid.
Thus a given set of augmenting paths
cannot feasibly balance the total flow at
the smallest positive value $\lambda^*$ which has a bottleneck arc.

Each supply node with flow blocked by a bottleneck needs
a new augmenting path. If such a path cannot be found
then supply and demand cannot be balanced for $\lambda > \lambda^*$.
This signifies that $X^{\lambda^*}$ is on a (P3) boundary of its vistal cell. Thus support pair $(A^i_l,B^i_l)$
could be partitioned into a support pairs $(C_1,D_1)$ and $(C_2,D_2)$ (or even
into a sequence of support pairs as described in \cite[Lem. 3.23]{Miller}) to create
a valid support for the same geodesic from $X^\lambda$ to $T^i$.
If a supply node does not have any augmenting path, then increasing $\lambda$
will result in excess flow capacity which cannot be utilized to push flow from $s$ to $t$.

A minimum cover for $\lambda >\lambda^*$ can be constructed; and in what follows
``the minimum cover" refers to the one which is being constructed.
If $e$ does not have an augmenting path, then increasing $\lambda$ will cause
the residual capacity $r_{se}$ to become positive in a maximum flow.
Therefore $e$ cannot be part of the minimum cover. Consequently, to cover
the edges adjacent to $e$, every node adjacent
to $e$ in $B^i_l$ must be in the minimum cover. 
Supply nodes which have augmenting paths will be in the minimum cover
unless all of their adjacent arcs are adjacent to nodes in $B^i_l$ which are already
in the minimum cover. 
In summary the minimum cover is $C_1 \cup D_2$ where
$D_2$ is comprised of elements in $B^i_l$ which are adjacent to elements of $A^i_l$ which
do not have augmenting paths and 
$C_1$ is comprised of elements
in $A^i_l$ which do not have augmenting paths or which
have all their adjacent arcs covered by elements from $B^i_l$. 
The new support sequence is formed by replacing $(A^i_l, B^i_l)$ with 
$(C_1,D_1),(C_2,D_2)$ where $C_2 = A^i_l \setminus C_1$ and $D_1 = B^i_l \setminus D_2$.

Standard net flow techniques can be used to find a feasible flow from
supply nodes to demand nodes. If no feasible flow exists, then 
the cut from the Supply-Demand Theorem can be used
to construct a minimum cover for the solution to the (P3) extension problem.

\sean{
	There are many choices for how to find an augmenting path $P_e$ for supply node $e$.
	One method is to find a minimum spanning tree of the residual network for each supply node. Minimum spanning tree algorithms vary in computational cost, for example
	Prim's Algorithm has complexity $O(r^2)$. Once a minimum spanning tree is established
	for supply node $e$ removing bottleneck arcs one at a time will only 
	require adjusting at most one arc to create a new minimum spanning tree.
	If the demand node adjacent to 
	the bottleneck arc has no other incoming arcs
	with positive residual capacity then
	it can no longer be reached from any supply node.
	If there are any incoming arcs adjacent to that demand node with
	positive residual capacity, then adding one
	which is adjacent to the element of $B^i_l$ with the shortest distance from $e$
	will create a minimum spanning tree for $e$. Therefore 
	updating the minimum spanning tree for edge $e$ has cost at most equal
	to the number of elements in $B^i_l$ and the total cost of updating the minimum 
	spanning trees of all supply nodes is $O(r^2)$. 
}

\sean{
	\noindent {\bf (P3) Intersection Algorithm}\\
	\noindent{\bfseries initialize} $\Lambda = 0$, $r_{ee'} = c_{ee'}$, $z_{ee'} = 0$\\
	\noindent \hspace*{1 cm} find a maximum flow in $G(A^{\lambda,i}_l,B^{\lambda,i}_l)$\\
	\noindent{\bfseries while} $\Lambda < 1$\\
	\noindent \hspace*{1 cm} {\bfseries do} \\
	\noindent \hspace*{1 cm} find an augmenting path $P_e$ for each supply node $e$ \\
	\noindent \hspace*{1 cm} {\bfseries if} any supply node does not have an augmenting path\\
	\noindent \hspace*{2 cm} halt\\
	\noindent \hspace*{1 cm} {\bfseries endif}\\
	\noindent \hspace*{1 cm} calculate smallest $\lambda^* > \Lambda$ with a bottleneck arc \\
	\noindent \hspace*{1 cm} $r_{e'e''}^{\lambda^*}= r^\Lambda_{e'e''}+(\lambda^*-\Lambda) \left( \sum_{e:(e'',e')\in P_e} d_e-\sum_{e:(e',e'')\in P_e} d_e\right) $\\
	\noindent \hspace*{1 cm} $\Lambda = \lambda^*$\\
	\noindent{\bfseries endwhile}
}

\noindent {\bf (P3) Intersection Algorithm}\\
\noindent{\bfseries initialize} $\Lambda = 0$, $r_{ee'} = c_{ee'}$, $z_{ee'} = 0$\\
\noindent \hspace*{1 cm} find a maximum flow in $G(A^{\lambda,i}_l,B^{\lambda,i}_l)$\\
\noindent{\bfseries while} $\Lambda < 1$\\
\noindent \hspace*{1 cm} {\bfseries do} \\
\noindent \hspace*{1 cm} find a feasible flow in the residual network \\
\noindent \hspace*{1 cm} {\bfseries if} supplies and demands are infeasible \\
\noindent \hspace*{2 cm} halt, a (P3) boundary intersection \\
\noindent \hspace*{1 cm} {\bfseries endif}\\
\noindent \hspace*{1 cm} augment flow until some residual capacity reaches zero\\
\noindent \hspace*{1 cm} calculate smallest $\lambda^* > \Lambda$ with a bottleneck arc \\
\noindent \hspace*{1 cm} using the residual capacities:\\
\noindent \hspace*{1 cm} $r_{e'e''}^{\lambda^*}= r^\Lambda_{e'e''}+(\lambda^*-\Lambda) \left( \sum_{e:(e'',e')\in P_e} d_e-\sum_{e:(e',e'')\in P_e} d_e\right) $\\
\noindent \hspace*{1 cm} $\Lambda = \lambda^*$\\
\noindent{\bfseries endwhile}

\sean{
	Initialization, setting residual capacities, and finding a maximum flow has
	computational complexity $O(r^3)$.
	The (P3) Intersection Algorithm halts in fewer than $r^2$ iterations of the while loop.
	Once a bottleneck arc has residual capacity zero it will stay zero.
	If all of the bottleneck arcs, of which there are fewer than $r^2$, reach residual capacity zero then there are no augmenting paths. This is more
	than sufficient to cause at least one supply node to not have any 
	augmenting path.
	Calculating $\lambda^*$ requires updating
	$ \sum_{e:(e'',e')\in P_e} d_e-\sum_{e:(e',e'')\in P_e} d_e $
	when the augmenting path for $e$ changes. Updating these terms
	when necessary requires identifying when the augmenting path for $e$
	changes, removing $d_e$ for each arc on the old augmenting path for $e$,
	and including $d_e$ for each arc on the new augmenting path for $e$.
	In the worst case the augmenting paths for all supply nodes will change
	so the total cost of updating the rate of change in flow along every arc is $O(r^2)$.
	Therefore the (P3) Updating Algorithm will either reach $\lambda=1$ 
	or halt in fewer than $r^2$ iterations of the while loop.
	The total complexity for the (P3) Intersection Algorithm using
	minimum spanning tree updating to find augmenting paths is  $O(r^4)$.
}

\sean{
	\subsection{Updating in the original coordinate system}
	The transformation to map from $\T_r^2$ to $\T_r$ is non-linear, thus a line segment in $\T^2_r$
	will become a curved segment in $\T_r$. The details of local search methods differ enough
	between $\T_r$ and $\T_r^2$ that methods for finding the intersections of a line segment in $\T_r$
	with the pre-vistal cells are also useful.
	The main issue is the weights of the graph in the bipartite matching extension problem
	vary according to a non-linear function of $\lambda$. 
	
	The parameterized weights for the extension problem on $\Gamma(\lambda; X^0,X^1)$ in $\T_r$ are
	\begin{displaymath}
		w_e^\lambda=\left\{ 
		\begin{array}{lr}
			\frac{(|e|_{X^0}+\lambda d_e)^2}{\sum_{f\in A_l^{i,\lambda}} (|f|_{X^0}+\lambda p_f)^2} & e \in A_l^{i,\lambda}\\
			\frac{|e|_{T^i}^2}{\norm{B_l^{i,\lambda}}^2} & e \in B_l^{i,\lambda}
		\end{array}
		\right.
	\end{displaymath}
	
	Check that you can transform to and from the problem where the $\norm{A^i_l}=1$. 
	
	Intersection with a (P2) constraint can be calculated by solving a 
	quadratic equation in $\lambda$.
	There are two types of information needed to modify the (P3) Intersection Algorithm 
	\begin{enumerate}
		\item which weights are increasing and which weights are decreasing to identify supply and demand nodes
		\item parameterized flow at every arc given a feasible set of paths from supply nodes to demands nodes
	\end{enumerate}
	The derivative of the weights with respect to lambda indicates which edges are increasing and which are decreasing.
	Applying $\partial_\lambda$ to $w_e$ yields
	\begin{displaymath}
		\partial_\lambda w_e = \left\{
		\begin{array}{lr}
			\frac{ {2(|e|_{X^0}+\lambda d_e)d_e}{\sum_f ( |f|_{X^0}+ \lambda p_f)^2} - (|e|_{X^0}+\lambda)^2 2 \sum_f (|f|_{X^0}+\lambda p_f)p_f}{\left(\sum_f ( |f|_{X^0}+ \lambda p_f)^2\right)^2}  & e \in A_l^{i,\lambda}\\
			0 & e \in B_l^{i,\lambda}
		\end{array}
		\right.
	\end{displaymath}
	The sign of $\partial_\lambda w_e$ for $e \in A_l^{i,\lambda}$ depends only on the numerator since the denominator is strictly non-negative.
	There are two interesting insights which follow after observing that 
	the numerator is a cubic function in lambda. The first insight
	is that the sign of $\partial_\lambda w_e$ changes at most three times in the interval $0\leq \lambda \leq 1$. 
	The second insight is that the interval $[0,1]$ can be partitioned into intervals where the signs of $\partial_\lambda w_e$ do not change; and there will be at most $3r+1$ such intervals.
	
	The rate of change of flow is not constant so the formula for updating the residual
	capacity of arc $(e',e'')$ for a given set of augmenting paths is 
	\begin{align}
		r^{\lambda^*}_{e'e''}& = r^{\Lambda}_{e'e''}+\int_{\Lambda}^{\lambda^*} \left( \sum_{e:(e'',e')\in P_e} \partial_\lambda w_e -\sum_{e:(e',e'')\in P_e} \partial_\lambda w_e\right) \\
		&= r^{\Lambda}_{e'e''}+\sum_{e:(e'',e')\in P_e}  \left (w^{\lambda^*}_e-w^{\Lambda}_e \right) -\sum_{e:(e',e'')\in P_e}  \left (w^{\lambda^*}_e-w^{\Lambda}_e \right)
	\end{align}
	Multiplying every term in $r_{e'e''}$ by the denominator of $w_e^\lambda$, which is the same for all $e\in A^i_l$, simplifies $r_{e'e''}^{\lambda^*}=0$ to a quadratic function of $\lambda^*$. 
	
	\newpage
}

\sean{
	The following theorem could be useful for demonstrating global optimality.
	\begin{thm}
		The Fr\'{e}chet mean of $n+1$ trees based on the Fr\'{e}chet mean of $n$ trees, $\bar{T}_n$.
		The only directional derivatives which decrease are within 90 degrees of the new tree.
		If the new tree, $T_{n+1}$ is antipodal to $\bar{T}_n$ then $\bar{T}_{n+1}$ either has the same topology
		as $\bar{T}_n$, is a star tree, or is within 90 degrees of $T_{n+1}$. If $\bar{T}_{3}$ is antipodal to $\bar{T}$ will be in the cone of trees formed by rotating along the geodesic from $T_{3}$ to either of $T_1$ or $T_2$.
		If $\bar{T}_n$ is not a star tree, and $T_{n+1}$ is within 180 degrees of $\bar{T}_n$, then $\bar{T}_{n+1}$ cannot be a star tree, and ...
	\end{thm}
	\begin{proof}
		Sketch: study which/how directional derivatives from the origin change when the new tree is introduced to the data set. The directional derivatives involving edges
		in the tree decrease, whereas any other directional derivatives increase.
		Essentially this means that the only edges in $\bar{T}_{n+1}$ are either
		in $\bar{T}_n$, or involve some of the edges in $T_{n+1}$. 
		It is possible that edges which are neither in $\bar{T}_n$ nor in $T_{n+1}$
		can be present in $\bar{T}_{n+1}$. 
	\end{proof}
	
	\begin{thm}
		There must be at least one tree within $90^0$ of $\bar{T}$.
	\end{thm}
	
}

\sean{
	\section{Mapping to Squared Treespace}
	Mapping to squared treespace has an interesting effect on the objective function. 
	\begin{itemize}
		\item gradient and Hessian become unstable when 
		sets of edges in support pairs go to zero
		\item the objective function is not convex
	\end{itemize}}

\chapter{Tree data analysis with Fr\'{e}chet means}
\label{ch:AnalysisAngiographyData}
The focus of this chapter is application of Fr\'{e}chet means in tree-oriented data analysis of brain artery systems.

\section{Discussion}

As described in Ch. \ref{ch:intro}, a landmark based
shape correspondence of the cortical surface
is used in mapping angiography trees to points in BHV treespace.
In this chapter we apply the Fr\'{e}chet mean algorithms from Ch. \ref{ch:FMmethods}  to summarize the angiography dataset. 
In this section we refer to the brain
artery systems mapped to treespace via landmark embedding
as \emph{brain artery trees}.

A star tree is a phylogenetic tree which only has pendants.
The pendant lengths for minimizing the 
Fr\'{e}chet sum of squares are quite easy to obtain.
Each pendant is in every tree topology, and therefore the terms
in the Fr\'{e}chet function contributed by pendants are 
completely independent from other edges.
In fact, the terms involving a pendant are minimized by
arithmetic average length of that pendant. 
Ignoring the interior structure and optimizing pendant
lengths yields the \emph{optimal star tree}.

In earlier stages of research leading up to this thesis
it became clear that the Fr\'{e}chet mean of the 
brain artery trees was either the optimal star tree or very close to the optimal star
tree. 
This observation is based
on results from a study done by Megan Owen, a collaborator involved in the SAMSI Object Oriented Data Analysis program.
In her study she used Sturm's algorithm to approximate 
the Fr\'{e}chet mean of brain artery
trees.
Let $S_k$ be an estimate of the Fr\'{e}chet mean
after $k$ steps of Sturm's algorithm. 
In that study it was shown that in 50,000 steps of Sturm's Algorithm the Fr\'{e}chet sum of squares at $S_k$ always
exceeded the Fr\'{e}chet sum of squares of the optimal star tree.

Having a star tree or nearly star tree Fr\'{e}chet mean is consistent with two other 
observations about the distribution of artery trees.
The first observation is that the geodesic between any  pair of subjects sweeps down near the origin of the space. Take for example
the geodesic visualization in Fig. \ref{GeoVis}. 
Notice how the geodesic sweeps down drastically near the middle.
This pattern is ubiquitous among pairs in this data set.
The second observation is that the brain artery trees are all closer to the optimal star
tree than they are to any of each other. 

\sean{ In Figure \ref{GeoVis}, it is easy to see that there is little common structure between these two cortical correspondence trees.
	For the cortical correspondence the first and second subjects have 5 edges in common, 119 edges from the tree of the first subject are not compatible in the tree of the second subject, and 118 edges from the other subject are not compatible with the tree of the first. 
	Notice that subsets of edges from each subject are compatible. In the tree, half-way along the geodesic, blue edges from the tree of the first subject and red edges from the tree of the second subject are present together. 
	Figure \ref{GeoSummary} displays the number of edges and the total length of edges at each point along the geodesic between the cortical correspondence trees. 
	Although the number of edges drops along the cortical correspondence geodesic it is always at least 98. This indicates the size of compatible subsets of edges is non-negligible. The total length dips down to around 330 (mm) from a high of around 2500 (mm).
	A \emph{topological transition} occurs on a treespace path when some edges either leave and/or some edges enter the tree topology (see \cite{OwenProvan} for more information about the combinatorial structure of geodesics). 
	The geodesic between the trees in Figure \ref{GeoVis} has 16 topological transitions.
}
\begin{figure}[H]
	\begin{tabular}{l l l l l}
		\includegraphics*[width=0.3\textwidth]{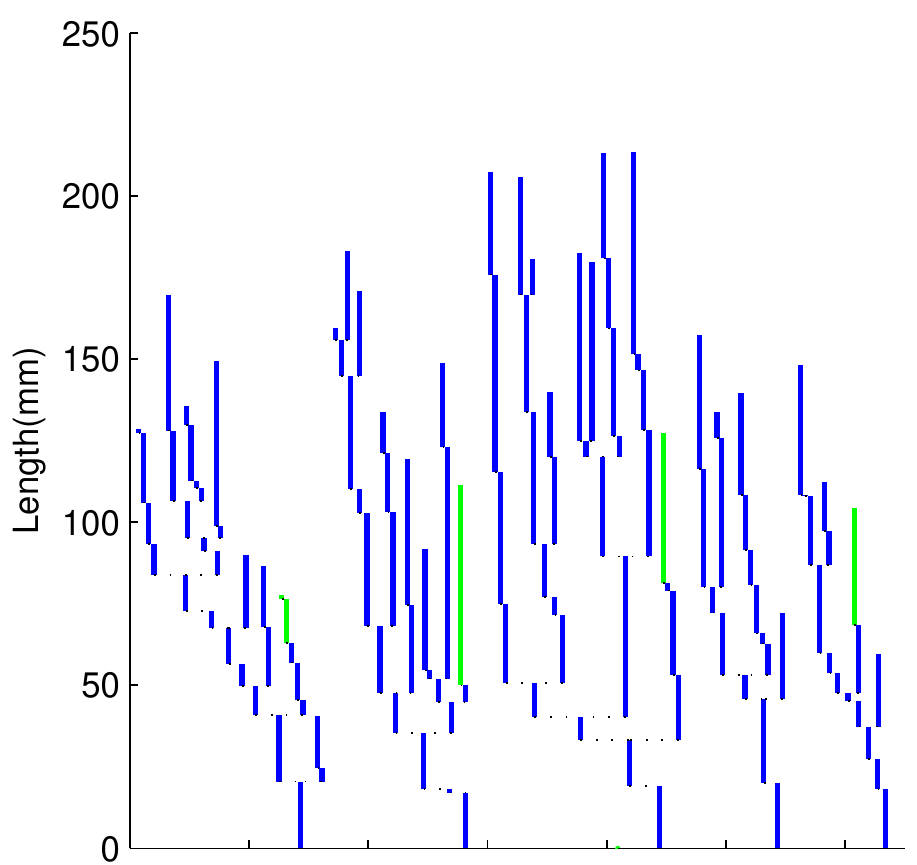} &
		\includegraphics*[width=0.3\textwidth]{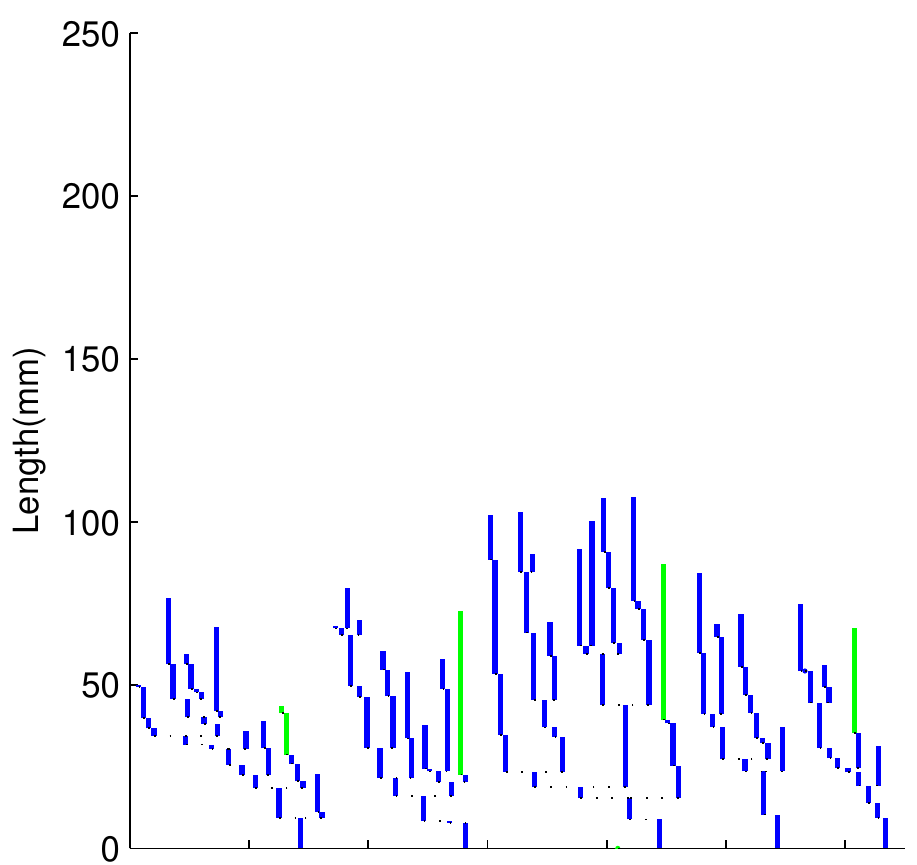} &
		\includegraphics*[width=0.3\textwidth]{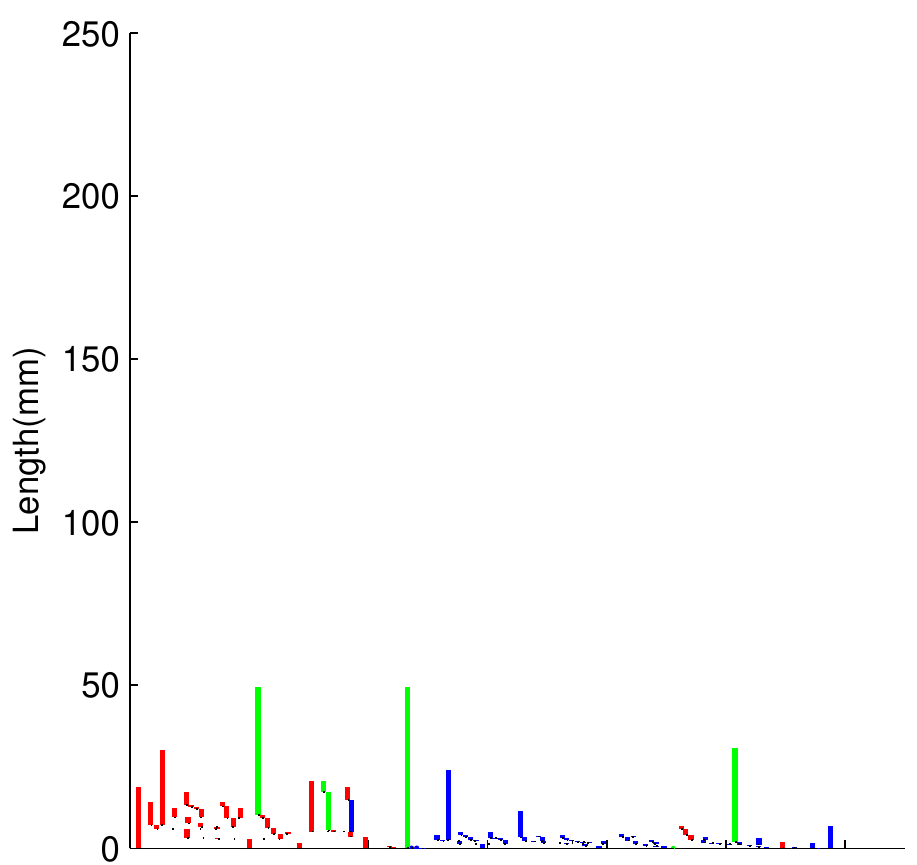} \\
		\includegraphics*[width=0.3\textwidth]{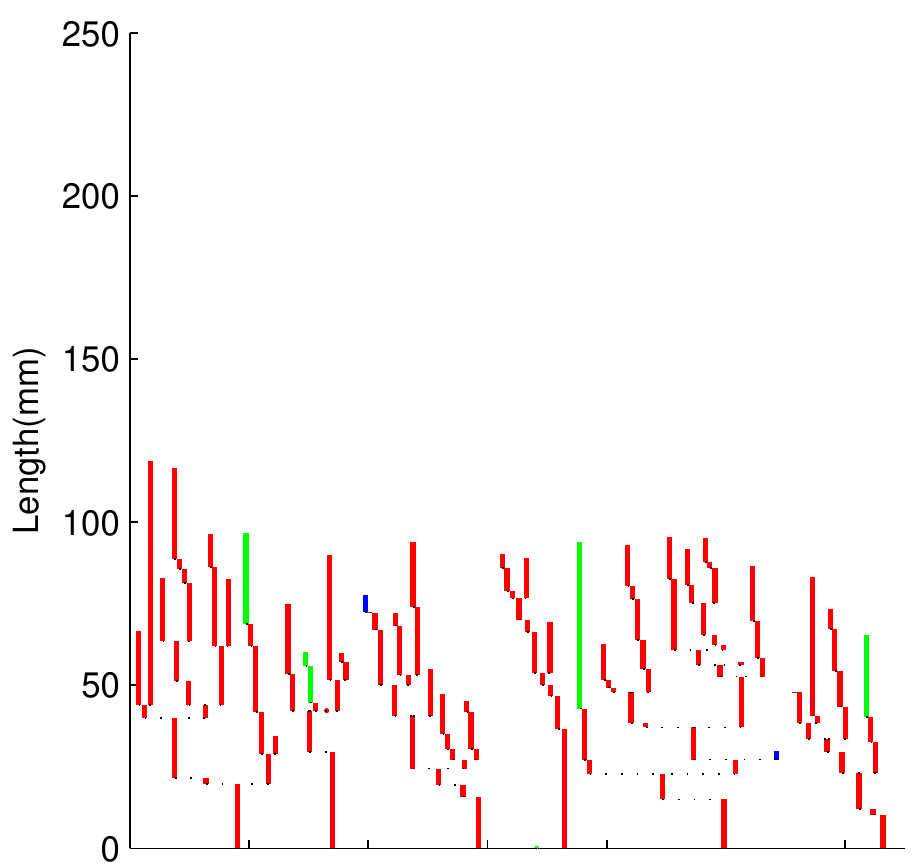} &
		\includegraphics*[width=0.3\textwidth]{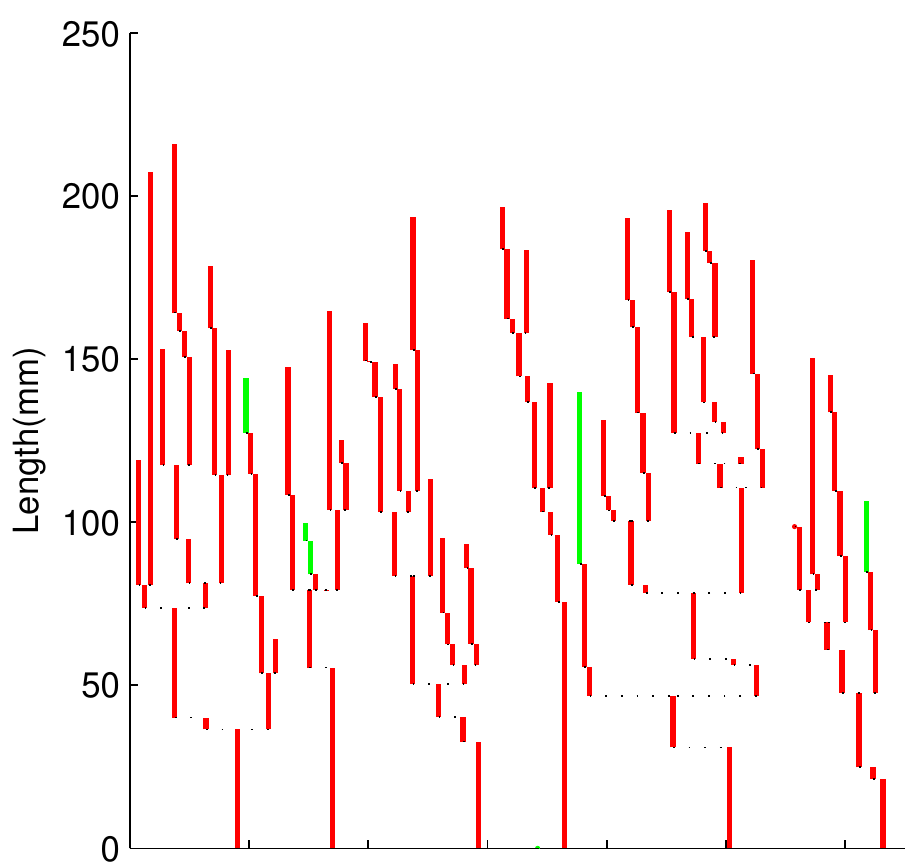} 
	\end{tabular}
	\caption{Example of a geodesic between artery trees showing five points along the geodesic. Blue edges are only in the tree of the first subject, red only in the other subject and green are common to both (pendants are not included in this visualization). The drastic dip along the geodesic path is characteristic for pairs of artery trees from this data set.}
	\label{GeoVis}
\end{figure}

The artery tree Fr\'{e}chet mean being degenerate is
not an anomalous case.
Theory about the behavior of Fr\'{e}chet sample mean will be
discussed in Ch. \ref{ch:Stickiness}, in which
the focus is formal study of the tendency for Fr\'{e}chet means
to be degenerate in treespace. 
In the next section we explore how the Fr\'{e}chet mean
varies over the range of treespace embeddings using 3 to all 128 landmarks.  

\section{Reduced landmark analysis}

Nested subsets of landmarks are used in a hierarchical approach
to studying the variability of vascular geometry with respect to
the cortical surface across this sample.
Here we study the effect of decreasing the number of landmarks on
the Fr\'{e}chet mean, the sample statistic which
describes the center of the distribution.


\begin{figure}[ht]
	\hspace{-3cm}\includegraphics{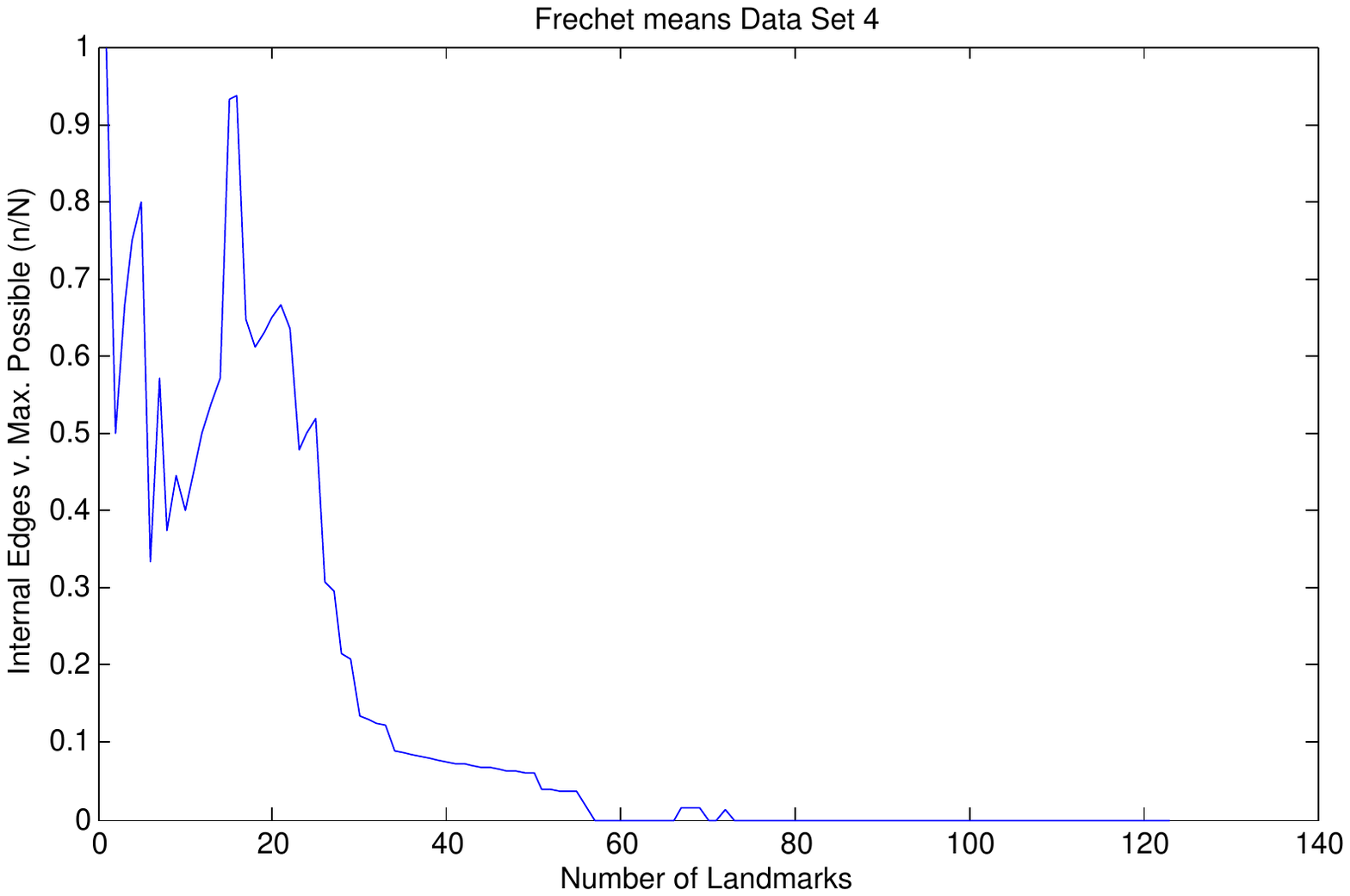}
	\caption{Ratio of number of interior edges versus maximal possible number of edges in Fr\'{e}chet means as a function of increasing the number of landmarks.
	}\label{LMRatio}
\end{figure}

The Fr\'{e}chet means for tree representations using 3 up to 128 landmarks
are calculated.
Fig. \ref{LMRatio} shows the ratio of the number of interior edges out of the maximum possible. 
Notice that in many cases the Fr\'{e}chet mean has a highly degenerate topology, with the number of interior edges dropping off steadily when around 24 landmarks are used. 

Rather than visualize the entire sequence of Fr\'{e}chet mean trees
we select a few interesting ranges for discussion.
The ratio for the number of interior edges out of the maximum possible, as shown in Fig. \ref{LMRatio}, jumps up and down in the range from 3 landmarks to 12 landmarks, and steadily climbs up until reaching 18 landmarks. 
The Fr\'{e}chet means for 6, 7, 8, and 9 landmarks are in Fig. \ref{FM6-7-8-9} (on the next page). 
There we can see the pattern of edge contractions and expansions.
Notice that the pair of landmarks 17 and 49 are stable in this range.
Views of the trees in the range from 15 to 18 landmarks are in Fig. \ref{FM15-16-17-18}, there we can see that although the ratio
of interior edges to maximal possible is steadily increasing
there are both edges contracting and expanding in this range as well,
though relatively fewer edges are contracting.

The small increase from zero interior up to 1 edge, at 69 landmarks is noteworthy. 
In fact the same interior edge is present in the Fr\'{e}chet mean using 69, 70, 71 and 74 landmarks. It is suprising that this split isn't in the Fr\'{e}chet mean for 72 or 73 landmarks. Moreover, this split has a length of about 2mm in all four of those cases. This deserves further investigation. It either suggests that the lengths
of edges which are incompatible with this split are fluctuating up and down in
the range 69 to 74, or that this split is present in the Fr\'{e}chet mean for 72 and 73 edges, but Sturm's algorithm did not capture this feature before the search terminated. This line of investigation will be pursued in future research.

\begin{figure}[ht]
	\subfigure[]{\includegraphics[width=0.45\textwidth]{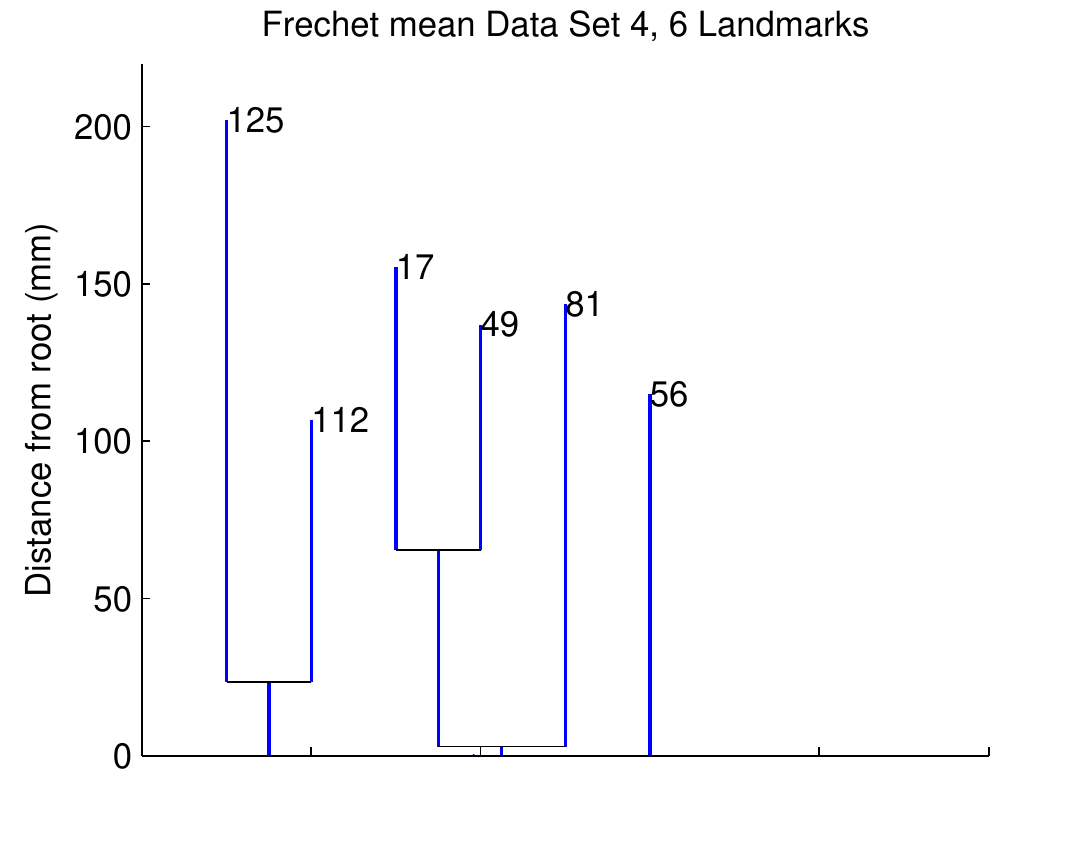}}
	\subfigure[]{\includegraphics[width=0.45\textwidth]{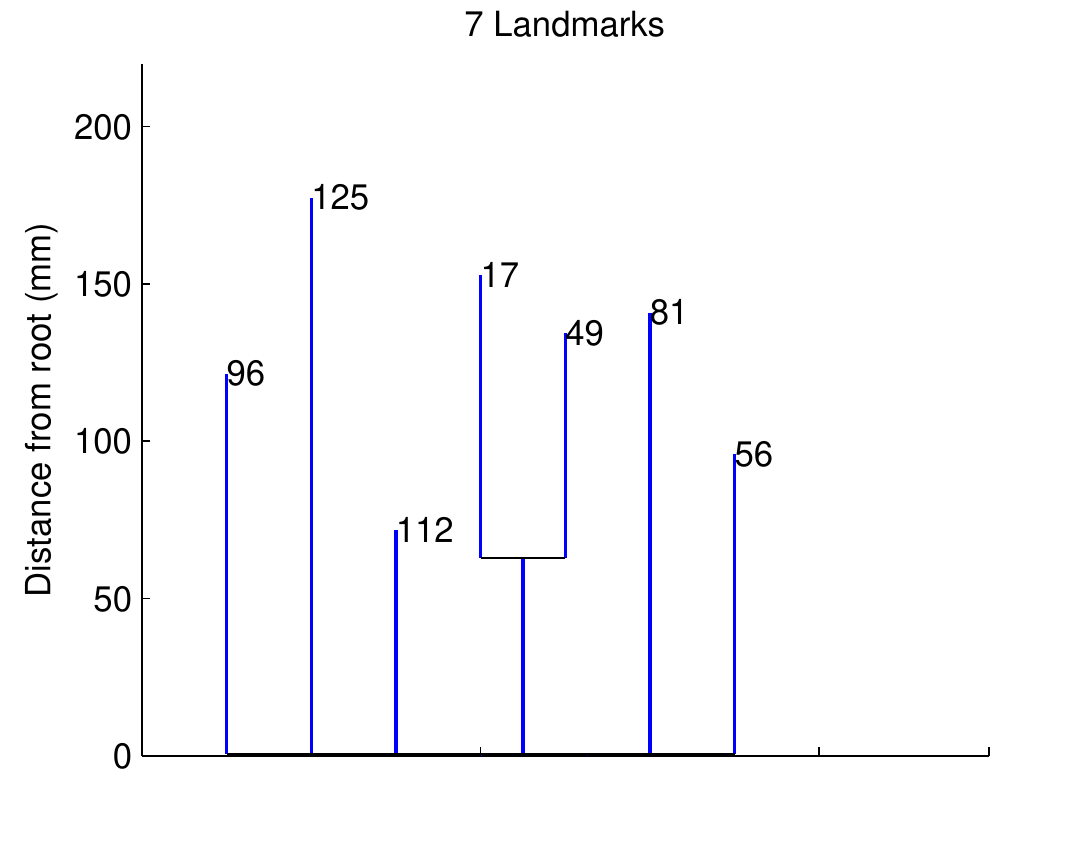}}
	\subfigure[]{\includegraphics[width=0.45\textwidth]{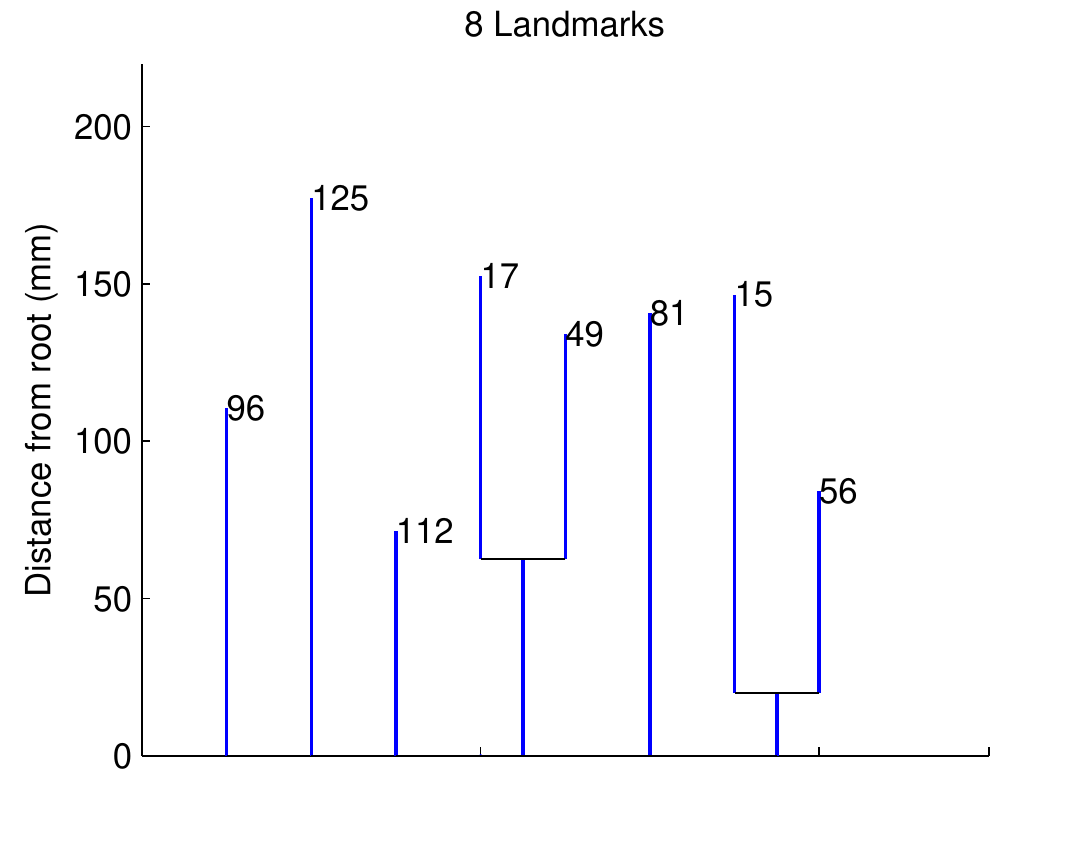}}
	\subfigure[]{\includegraphics[width=0.45\textwidth]{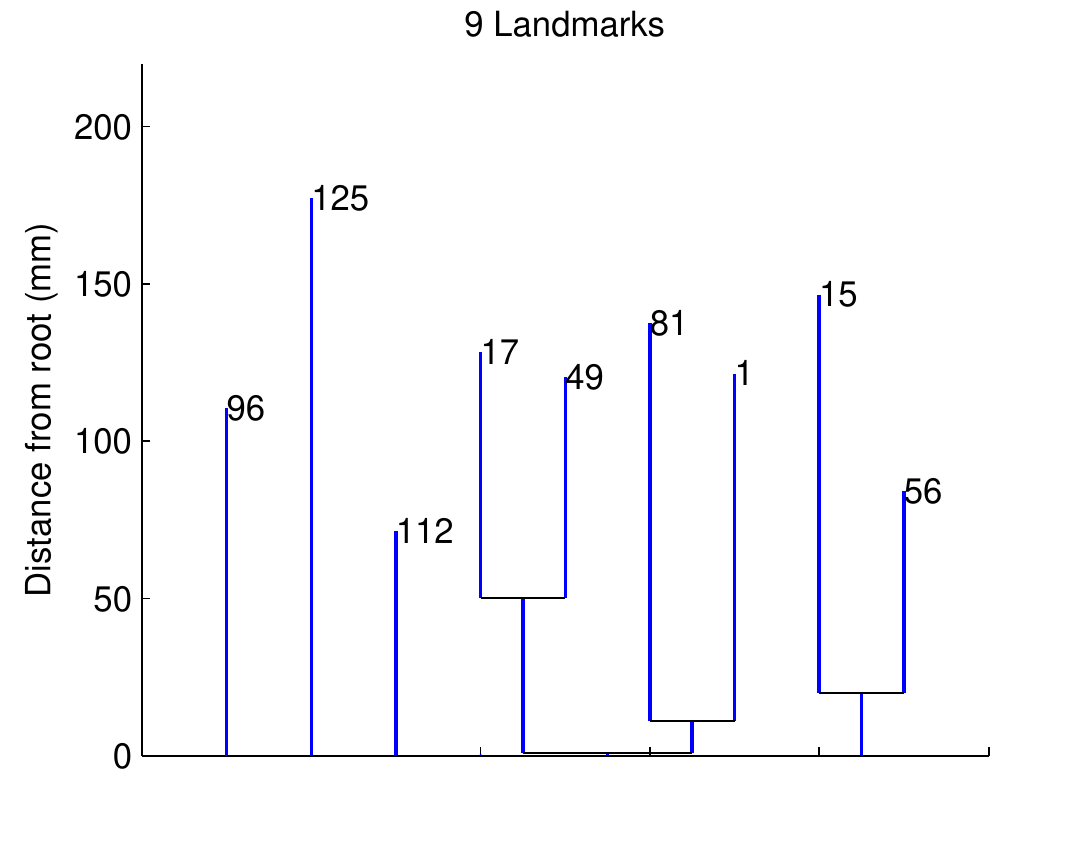}}
	\caption{Tree views for the Frechet means for data sets having 6, 7, 8, and 9 landmarks. }\label{FM6-7-8-9}
\end{figure}

\begin{figure}[ht]
	\subfigure[]{\includegraphics[width=0.45\textwidth]{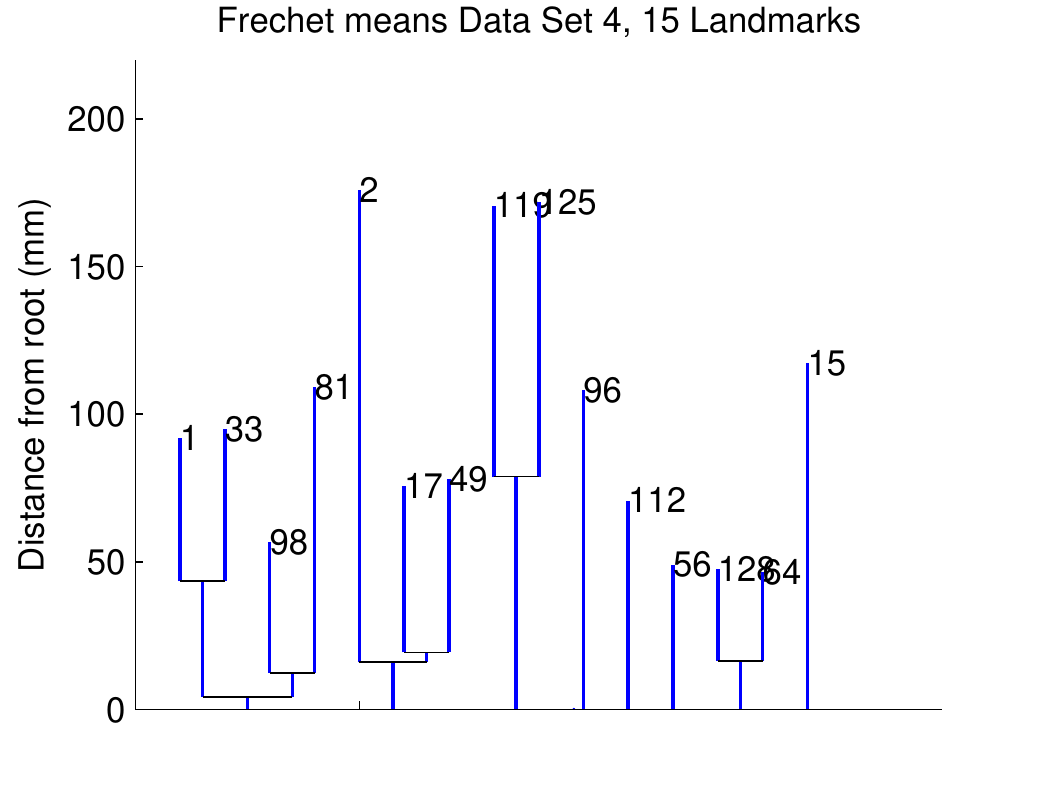}}
	\subfigure[]{\includegraphics[width=0.45\textwidth]{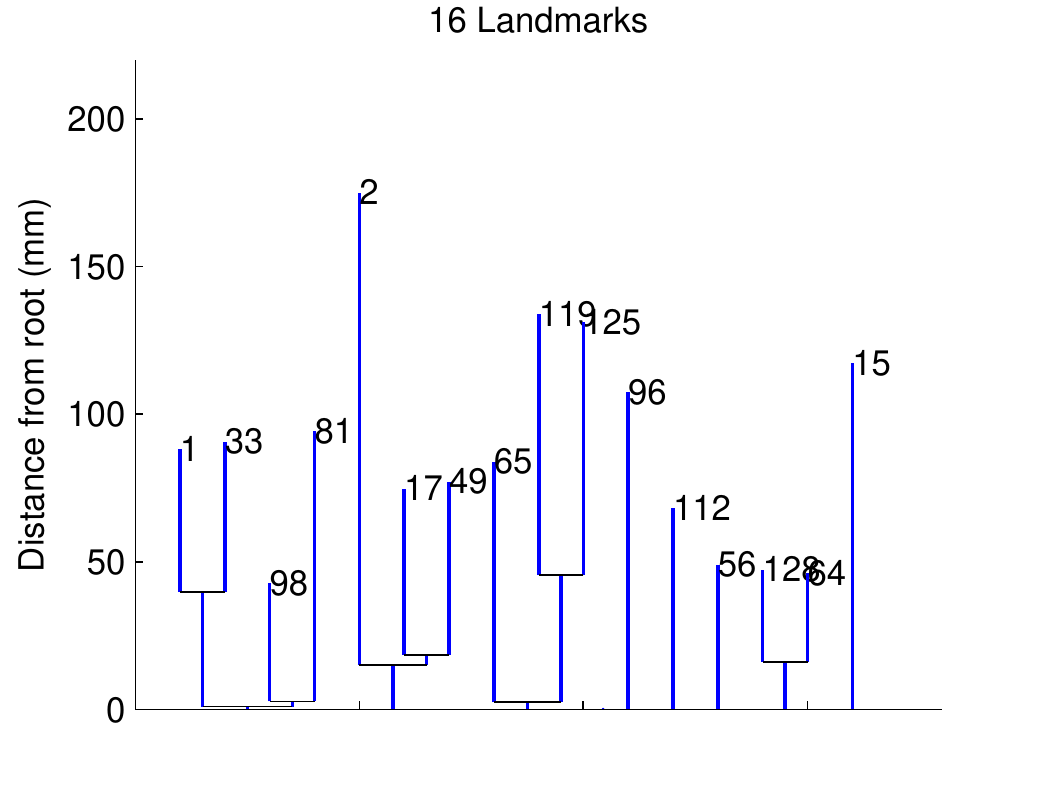}}
	\subfigure[]{\includegraphics[width=0.45\textwidth]{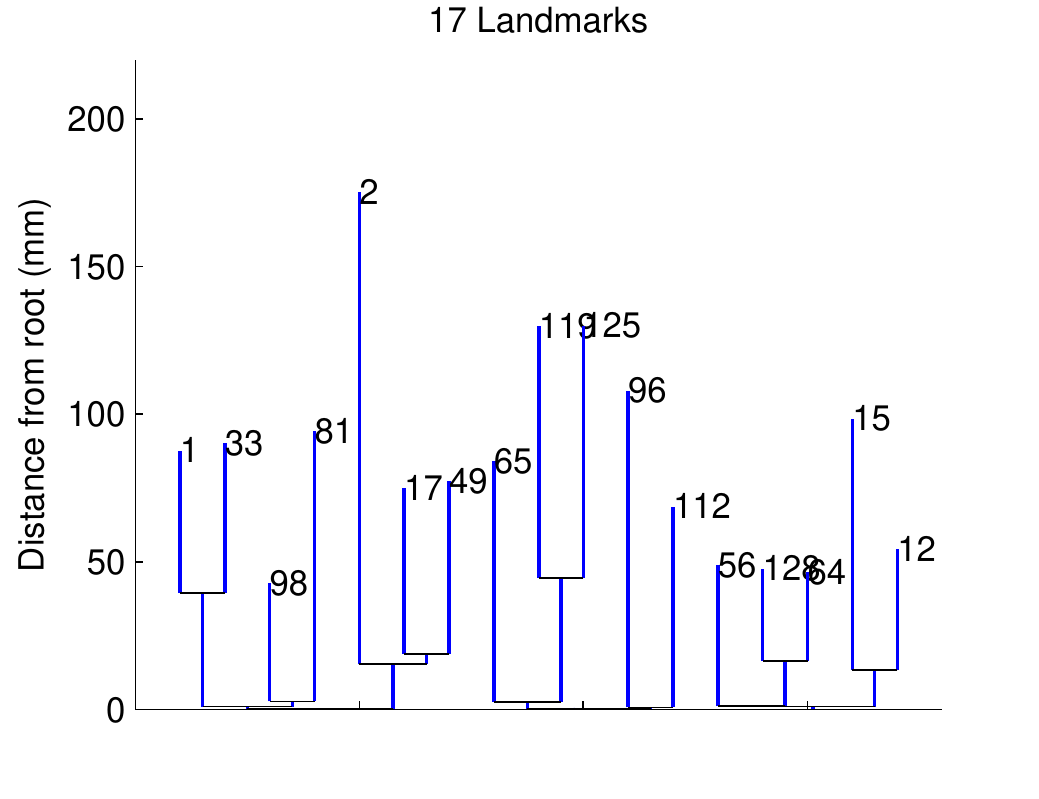}}
	\subfigure[]{\includegraphics[width=0.45\textwidth]{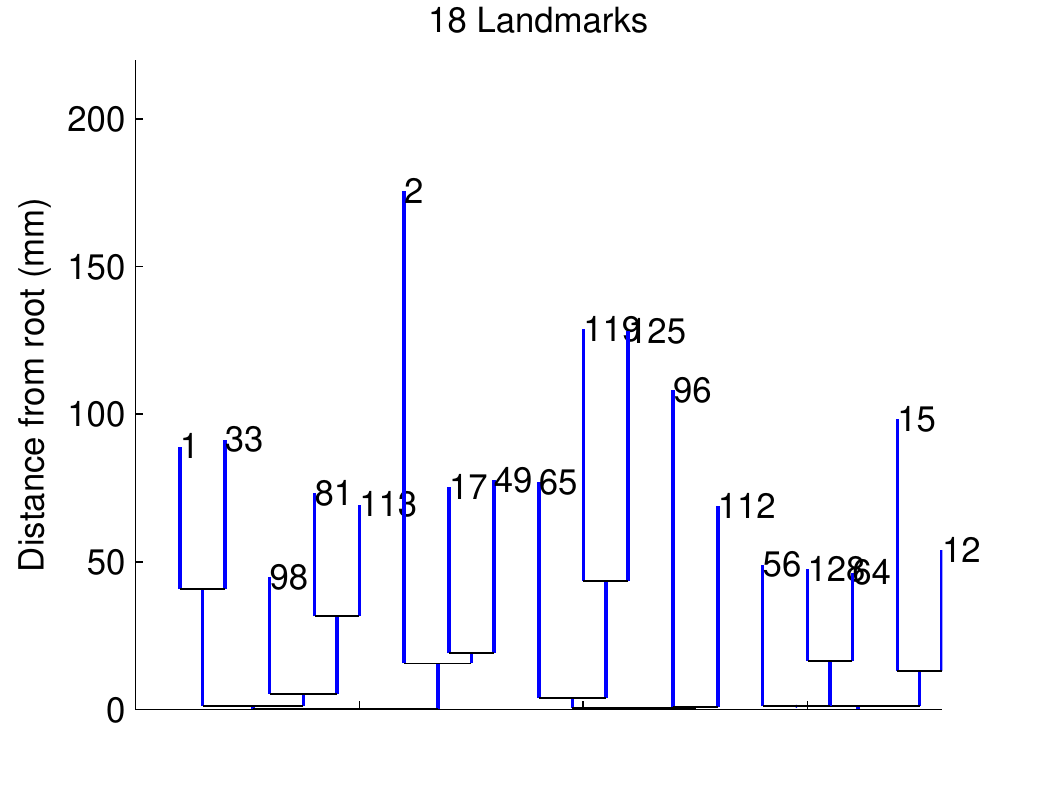}}
	\caption{Tree views for the Frechet means for data sets having 15, 16, 17, and 18 landmarks. }\label{FM15-16-17-18}
\end{figure}

The tendency of the Fr\'{e}chet mean to be degenerate can be attributed
to incompatible edges in among the data trees. 
Such incompatible edges represent that the artery system
infiltrates regions near landmarks in topologically variable ways.

In the next chapter we take a theoretical approach to understanding
the general behavior of Fr\'{e}chet means in treespace.

\sean{
	\begin{figure}[ht]
		\hspace{-3cm}\includegraphics{AttachLandmarks2.pdf}
		\caption{Cortex landmarks (red *) and attachments (red -), trimmed artery
			tree (blue), discarded arteries (cyan), landmarks labels (black $\mathbb{N}$).}
	\end{figure}}
	
	\sean{
		\section{Fr\'{e}chet means for gene trees simulated from Wright-Fisher models}
	}

\chapter{Fr\'{e}chet means and stickiness}
\label{ch:Stickiness}
\section{Introduction}
In the analysis of brain artery systems in Ch. \ref{ch:AnalysisAngiographyData}
we observed that the Fr\'{e}chet mean becomes increasingly
degenerate as the number of landmarks increases.
Stickiness describes a finer point, which is
that not only is the topology of the sample mean
tree degenerate, but that if the sample is large enough 
then the topology of the sample mean is stable at
the topology of the population mean when data points
are added or removed from the sample.

Denote by $F_n(X)$ and $\bar{T}_n$ the Fr\'{e}chet function and the Fr\'{e}chet mean for a sample of $n$ independent and identically distributed observations on BHV treespace, $T^1,T^2,\ldots, T^n$, as defined in Sec. \ref{sec:FMproblem}.
The focus of this chapter is the probability distribution of $\bar{T}_n$ that is the Fr\'{e}chet mean sampling distribution for $n$ observations.  
The main result Thm. \ref{thm:TreespaceLLN} is a \emph{sticky law of large numbers} for
sampling distributions of Fr\'{e}chet means on BHV treespace.

In treespace, depending on the population, the Fr\'{e}chet mean sampling distribution can and often will
exhibit the unusual property of stickiness.
Contrasting the typical behavior of sample means, in most contexts small changes
in the data result in small changes in the sample mean, but stickiness
refers to the phenomenon of a sample mean which does not respond 
to small changes in the data. 
More precisely stickiness refers to the tendency of a sampling distribution 
to be fully supported on a lower dimensional subset of the sample space than 
the population distribution.

\begin{defn}
	A probability measure on treespace, $\T_r$, is said to
	to have a \emph{sticky Fr\'{e}chet mean}, $\bar{T}$,
	if $\bar{T}$ is degenerate and for every point $T$ in $\T_r$ 
	there exists some $\alpha>0$ 
	such that the topology of $\bar{T}$ 
	is the same when the probability mass 
	at $T$ is increased by up to $\alpha$.
\end{defn}

\sean{
	\begin{defn}
		A probability measure on treespace, $\T_r$, is said to
		to have a \emph{half-sticky Fr\'{e}chet mean}, $\bar{T}$,
		if treespace can be 
		partitioned into two sets $U$ and $V$,
		such that for every finite point $T$ in $U$ 
		there exists some $0<\alpha <1$ 
		such that the topology of $\bar{T}$ 
		is the same when the probability mass 
		at $T$ is increased by up to $\alpha$,
		and for every finite point $T$ in $V$ 
		when the probability mass 
		at $T$ is increased at all,
		at least one
		edge switches from
		having length zero to having a positive length
		in $\bar{T}$.
	\end{defn}}
	
	Motivated to theoretically quantify
	stickiness,
	researchers have characterized the limiting distributions of sample means in BHV treespaces with 5 or fewer leaves \cite{Barden2013} and open books \cite{StickyCLT} with sticky central limit theorems.
	Sticky central limit theorems
	\begin{itemize}
		\item[1.] state the existence of a random finite sample size $N$ such
		that for sample sizes beyond $N$ the sample means stick with probability 1 (
		$N$ quantifies the intensity of stickiness); and
		\item[2.] characterize the limiting sampling distribution on its support.
	\end{itemize}
	
	The remaining contents of this chapter are as follows. 
	Sec. \ref{ch:StickinessReview} summarizes existing
	results for stickiness of sampling distributions on
	$\T_3$.
	Sec. \ref{sec:ProbMeasureTreespace} mathematically describes 
	probability measures on BHV treespace and further characterizes the notion of a sticky Fr\'{e}chet mean. 
	Sec. \ref{ch:StickyLLNTreespace} contains
	the main result of this chapter, a sticky law of large
	numbers for sampling distributions of Fr\'{e}chet means
	for population distributions spread over a negatively curved
	region of treespace.

	\section{Stickiness for Fr\'{e}chet means on $\T_3$}\label{ch:StickinessReview}
	The space  $\T_3$ of phylogenetic trees with index set $\{0,1,2,3\}$ with pendant lengths included, as depicted in Fig. \ref{T_3}, is equivalent to 
	an \emph{open half-book}. This space is composed of three copies, $L_1, L_2, L_3$ of $\mathbb{R}^5_{\geq 0}$, called \emph{pages}, pasted
	together on a four dimensional face, $L_0$, called the \emph{spine}. 
	For more details about the construction of phylogenetic treespaces see Sec. \ref{sec:TreeSpace}.
	Each page of the book corresponds
	to one of the three possible tree topologies, and the spine corresponds to the star tree topology.
	Let $L_k^+=L_k\setminus L_0$ be the portion of the $k^{th}$ page which is disjoint from the spine. Thus, $\T_3$ can be formed by a disjoint
	partition of the pages and spine, as $L_0 \cup L_1^+ \cup L_2^+ \cup L_3^+$.
	
	\begin{figure}[h!]
		\centering
		\includegraphics[width=0.95 \textwidth]{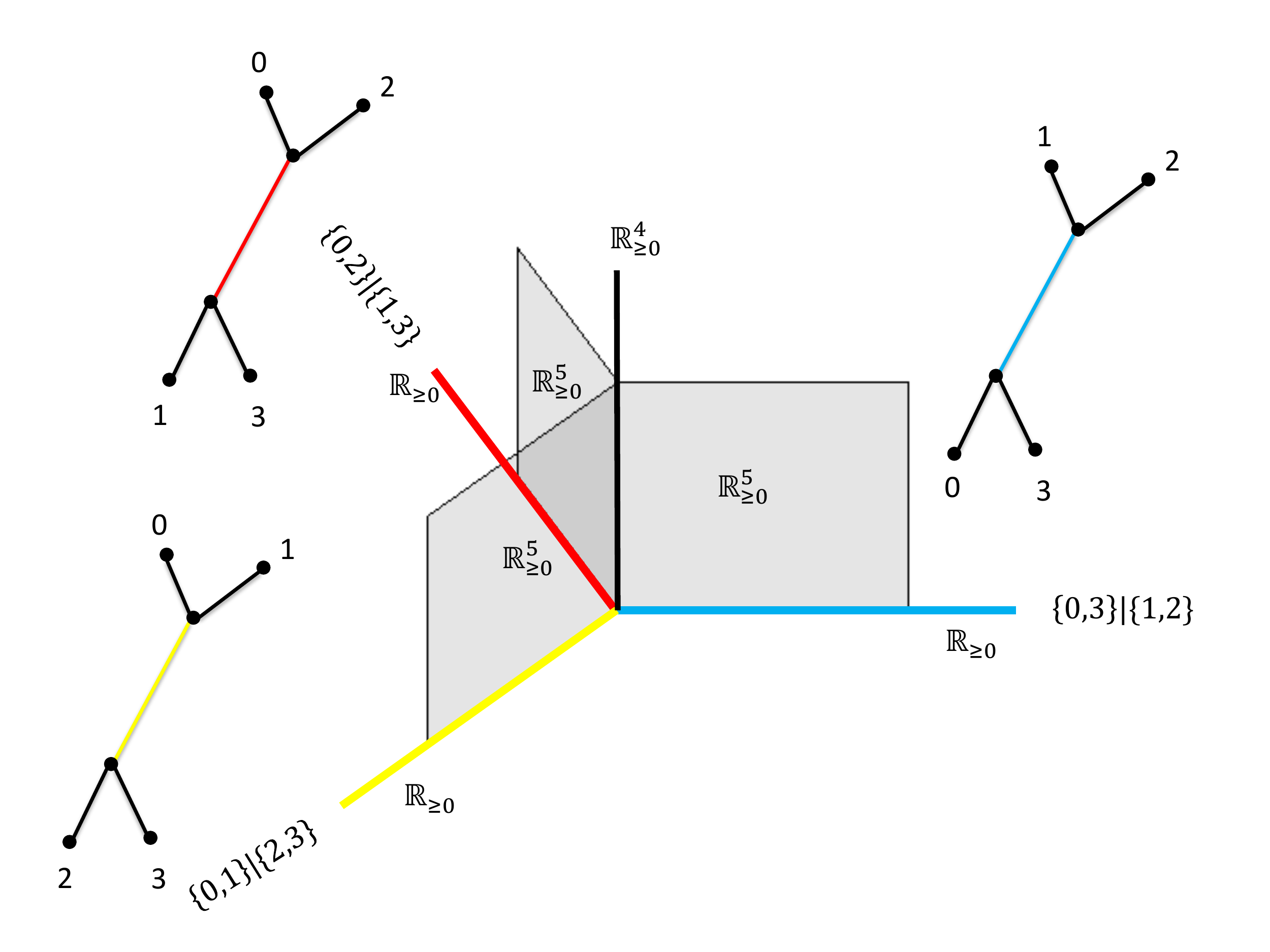}
		\caption{Depiction of $\T_3$: three copies of $\mathbb{R}^5_{\geq 0}$, $L_1,L_2,L_3$, each corresponding to one of three possible tree topologies pasted together on a copy of $\mathbb{R}^4_{\geq 0}$, denoted $L_0$. }
		\label{T_3}
	\end{figure}
	
	The sticky law of large numbers for open books \cite[Thm. 4.3]{StickyCLT}, is restated in
	this section for the special case of $\T_3$. Assume $T^1,...,T^n$ are independently sampled from a 
	probability measure, i.e.\ distribution, $\mu$ that is square-integrable and has spositive mass
	on each of $L_1^+, L_2^+, L_3^+$. 
	
	\begin{defn}
		The \emph{first moment} of $\mu$ on the $k^{th}$ page
		is the real number
		\begin{displaymath}
			m_j=\int_{L_j^+}xd\mu(x)-\sum_{i\neq j}\int_{L_k^+}xd\mu(x).
		\end{displaymath}
	\end{defn}
	
	\begin{thm}\cite[Thm. 2.9]{StickyCLT} Assuming $\mu$ is square integrable and has positive mass on each page, the moments $m_j$, $j=1,2,3$ are either\\
		\indent 1. (sticky) $m_j < 0$ for all $j\in\{1,2,3\}$,\\
		or there is exactly one index $k \in \{1,2,3\}$ such that $m_k \geq 0$, in which 
		case either \\
		\indent 2. (non-sticky) $m_k>0$, or\\
		\indent 3. (partly sticky) $m_k=0$.
	\end{thm}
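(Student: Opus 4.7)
The plan is to reduce the trichotomy to an elementary algebraic identity about the three real numbers $m_1,m_2,m_3$. First I would unpack the definition by writing $A_l := \int_{L_l^+} x\,d\mu(x)$ for $l = 1,2,3$, so that the moment on page $j$ takes the clean form $m_j = A_j - \sum_{i\neq j} A_i$. Because $\mu$ is square-integrable each $A_l$ is finite, and because $\mu$ has positive mass on $L_l^+$ and the coordinate $x$ is strictly positive on $L_l^+ = L_l\setminus L_0$, we also have $A_l > 0$ for every $l$.

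Next I would exploit the symmetric structure to derive the key identity: for any two distinct indices $j,k$ with $\{i,j,k\} = \{1,2,3\}$,
\begin{equation*}
m_j + m_k \;=\; (A_j - A_i - A_k) + (A_k - A_i - A_j) \;=\; -2A_i \;<\; 0.
\end{equation*}
At most one of the three moments can therefore be non-negative: if both $m_j \geq 0$ and $m_k \geq 0$ for some $j \neq k$, the sum would be non-negative, contradicting the identity. This forces the stated trichotomy — either all three $m_j$ are strictly negative, or exactly one distinguished index $k$ satisfies $m_k \geq 0$, and the distinguished moment is then either strictly positive or zero.

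To make contact with the labels sticky, non-sticky, partly sticky, I would then compute the one-sided directional derivative of the Fr\'{e}chet function $F$ at any spine point $p \in L_0$ along the page-$j$ direction. Using that the distance from $p$ (with interior coordinate $y=0$) to a data point $T \in L_k^+$ with interior coordinate $x_T$ behaves like $(y+x_T)$ when $j \neq k$ and like $(y-x_T)$ when $j=k$, squaring and differentiating at $y=0^+$ gives that the one-sided derivative into page $j$ equals $-2m_j$. Hence the spine is a critical set of $F$ precisely when every $m_j \leq 0$, which covers the sticky case (strict) and the partly sticky case (one equality); the Fr\'{e}chet mean lies strictly inside $L_k^+$ exactly when $m_k > 0$, which is the non-sticky case.

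The main obstacle is essentially bookkeeping rather than analysis. Two points need care: verifying that $A_l$ is strictly positive requires observing that $x > 0$ on $L_l^+$, not merely that $\mu(L_l^+) > 0$; and handling the partly sticky case, where $m_k = 0$ places the population mean on the spine but any arbitrarily small shift of mass into $L_k^+$ can dislodge it into the open page. No heavy geometric machinery is required beyond strict convexity of $F$ (Sec.~\ref{sec:FMproblem}) and the fact that geodesics in the half-book along the spine direction are Euclidean segments.
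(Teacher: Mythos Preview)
Your argument is correct. The identity $m_j + m_k = -2A_i < 0$ for any labeling $\{i,j,k\}=\{1,2,3\}$ is exactly the right observation: it immediately forces at most one $m_k \geq 0$, and the three named cases then exhaust the possibilities. Your care in noting that $A_l > 0$ requires both $\mu(L_l^+)>0$ \emph{and} $x>0$ on $L_l^+$ is well placed.

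There is nothing to compare against in this paper, however: the theorem is simply quoted from \cite[Thm.~2.9]{StickyCLT} and is not given a proof here. Your algebraic reduction is in fact the standard one, and is essentially how the cited reference handles it as well.

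One small remark on scope: the statement as written is purely a trichotomy on the real numbers $m_1,m_2,m_3$; the parenthetical labels are just names for the cases. Your second paragraph, computing the one-sided directional derivative of $F$ into page $j$ as $-2m_j$ and thereby locating the Fr\'{e}chet mean, is correct and is precisely the content of \cite[Rem.~2.11]{StickyCLT} (alluded to in the paper just after the $\T_3$ Sticky LLN), but it goes beyond what this theorem asserts. It is the right bridge to the next theorem in the paper, not part of proving this one.
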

	
	Let $\bar{T} = \argmin_{X\in\T_3} \int_{\T_3} d(X,T)^2d\mu(T)$.
	\begin{thm} ($\T_3$ Sticky LLN) \cite[Thm. 4.3]{StickyCLT}
		\begin{itemize}
			\item[1.] If the moment $m_j$ satisfies $m_j<0$, then there is a random integer $N$
			such that $\bar{T}_n \notin L_j^+$ for all $n \geq N$ with probability 1. Furthermore $\bar{T}\notin L_j^+$.
			\item[2.] If the moment $m_k$ satisfies $m_k>0$, then there is a random integer $N$
			such that $\bar{T}_n \in L^+_k$ for all $n \geq N$ with probability 1. Furthermore $\bar{T}\in L_k^+$.
			\item[3.] If the moment $m_k$ satisfies $m_k=0$, then there is a random integer $N$ such that $\bar{T}_n \in L_k$ for all $n \geq N$ with probability 1. Furthermore, $\bar{T} \in L_0$.
		\end{itemize}
	\end{thm}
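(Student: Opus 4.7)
The plan is to reduce the trichotomy to an explicit formula for the directional derivative of $F_n$ at a spine point in each of the three normal directions of $\T_3$, using the directional-derivative decomposition \thmref{DirDerDecomposition} developed in the preceding chapter. Each page $L_j$ carries a single interior edge $e_j$, so for any $X\in L_0$ the normal space at $X$ decomposes into three rays indexed by $j\in\{1,2,3\}$, and along each ray the directional derivative is a single scalar I will compute in closed form.

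First I would evaluate that scalar. For $X\in L_0$ and a direction $Y$ with $p_{e_j}>0$ and all other normal coordinates zero, the geodesic from $X$ to any $T^i\in L_k^+$ with $k\neq j$ has exactly one local support pair $(\{e_j\},\{e_k\})$, which by \thmref{DirDerValue} contributes $2p_{e_j}|e_k|_{T^i}$; data in $L_j^+$ share $e_j$ as a common edge and contribute $-2p_{e_j}|e_j|_{T^i}$; and spine data contribute nothing. Summing and normalizing,
\begin{equation*}
\tfrac{1}{n}F_n'(X,Y) \;=\; -2p_{e_j}\,\widehat{m}_j^{(n)}, \qquad \widehat{m}_j^{(n)} := \tfrac{1}{n}\!\left(\sum_{T^i\in L_j^+}|e_j|_{T^i} \;-\; \sum_{k\neq j}\sum_{T^i\in L_k^+}|e_k|_{T^i}\right).
\end{equation*}
A parallel computation on an open page via Def.\ \ref{def:FrGrad} shows $\partial F_n/\partial |e_j|$ at an interior $X\in L_j^+$ equals $2n(|e_j|_X-\widehat{m}_j^{(n)})$, so the only stationary point of $F_n$ in $L_j^+$ has $|e_j|_{\bar T_n}=\widehat{m}_j^{(n)}$, which is feasible iff $\widehat{m}_j^{(n)}>0$. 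Applying the ordinary SLLN to the integrable real variables $|e_k|_{T^i}\mathbf{1}\{T^i\in L_k^+\}$ yields $\widehat{m}_j^{(n)}\to m_j$ almost surely for each $j$, so the population versions of the same two identities, combined with uniqueness of $\bar T$, give the population placements: in case 1 no open-page stationary condition is satisfiable and $\bar T$ is the star tree of pendant averages on $L_0$; in case 2 the unique positive $m_k$ places $\bar T$ at $|e_k|=m_k$ in $L_k^+$; in case 3 the spine stationary condition $|e_k|=m_k=0$ places $\bar T$ on $L_0$.

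For the sample statements, once $\widehat{m}_j^{(n)}$ has the same sign as $m_j$ for every $j$ with $m_j\neq 0$, the open-page feasibility test forbids $\bar T_n$ from any page with $m_j<0$ and forces it uniquely into the page with $m_k>0$ when one exists. The step I expect to be most delicate is the partly sticky case $m_k=0$, where $\mathrm{sign}(\widehat{m}_k^{(n)})$ oscillates and the open-page test flickers between admitting and ruling out $L_k^+$. The resolution is that the theorem only claims $\bar T_n \in L_k = L_k^+\cup L_0$, which is the complement of the other two open pages; since the remaining two moments are strictly negative by the trichotomy, those open pages are eventually forbidden almost surely and what is left is exactly $L_k$. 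The principal technical care will be in selecting a single random $N$ that simultaneously stabilizes the signs of all three sample moments, and in verifying that the stationary-point characterization above survives degenerate cases (e.g.\ $X$ on a vistal-facet boundary inside an open page); both points reduce to the gradient continuity established in \secref{DiffAnalysis}.
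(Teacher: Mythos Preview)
The paper does not prove this theorem; it is stated with attribution to \cite[Thm.~4.3]{StickyCLT} in the review section and no proof environment follows. So there is no ``paper's own proof'' to compare against.

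Your argument is correct and is in fact the $\T_3$ specialization of the strategy the paper uses for its general Sticky LLN in \secref{ProbMeasureTreespace} and the theorem following \thmref{TreespaceLLN}: compute the directional derivative of $F_n$ at a spine point into each normal direction, identify it as $-2p_{e_j}\widehat m_j^{(n)}$, and then invoke the classical SLLN on the real-valued summands $|e_k|_{T^i}\mathbf 1\{T^i\in L_k^+\}$ to get $\widehat m_j^{(n)}\to m_j$ a.s. The paper itself remarks (just after this theorem) that \cite[Rem.~2.11]{StickyCLT} makes exactly this identification of $m_k$ with a directional derivative, and uses it as the bridge to the general treespace result. Your handling of the partly sticky case via the complement $L_k=L_k^+\cup L_0$ is the right reading of the statement, and the ``single random $N$'' issue is trivial here since only finitely many (three) sign conditions need to stabilize. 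One small point: the within-page gradient computation you sketch should also account for the pendant coordinates, but since those are common to all trees and decouple from the interior edge, they contribute a separate block that is already stationary at the pendant averages and does not affect the feasibility test for $|e_j|$.
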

	
	
	The focus in the proceeding sections is a Sticky LLN for $\T_3$.
	Analysis leading to a Sticky LLN for $\T_r$ in general is facilitated by
	a slight change of tack. 
	The analysis for open books uses the moments $m_k$ to
	precisely describe the conditions of stickiness.
	A comment, in \cite[Rem. 2.11]{StickyCLT} which states that $m_k$ is a directional derivative
	of the Fr\'{e}chet function from the spine into the $k^{th}$ page, 
	can be used to formulate an equivalent characterization of stickiness.
	Sec. \ref{sec:DiffAnalysis} gives a detailed analysis of directional derivatives
	from lower dimensional faces into higher dimensional faces of $\T_r$, which play 
	analogous roles to directional derivatives from the spine into pages of $\T_3$. 
	The next section describes probability measures on BHV treespace, and 
	characterizes levels of stickiness for the Fr\'{e}chet mean of a probability measure.
	
	\section{Fr\'{e}chet means for probability measures on treespace}\label{sec:ProbMeasureTreespace}
	Suppose that $\mu$ is a probability measure on BHV treespace, $\T_r$. Let $0$ denote the origin of $\T_r$. Assume that the distance between $0$ and $T\in \T_r$, $d(0,T)$ has bounded expectation under the measure $\mu$:
	\begin{align}\label{BoundedMeasure}
		\begin{displaystyle}
			\int_{\T_r} d(0,T) d \mu(T) < \infty.
		\end{displaystyle}
	\end{align}
	BHV treespace can be partitioned into disjoint open orthants, each corresponding to a unique phylogenetic tree topology.
	\sean{ A Borel set in a topological space is any set that can be created from
		intersections of open sets through the operations 
		of countable union, countable intersection, and relative complement.}
	Any probability measure $\mu$ on $\T_r$ decomposes uniquely as a weighted sum of probability measures
	$\mu_\Or$ on the open orthants of BHV treespace. 
	Therefore there are non-negative numbers $\{w_{\Or}\}$
	summing to 1 such that for any Borel set $A \subseteq \T_r$, the measure $\mu$ takes the value
	\begin{align}\label{MeasureDecomp}
		\mu(A)=\sum_{\Or} w_{\Or} \mu_{\Or}(A \cap \Or).
	\end{align}
	
	The Fr\'{e}chet function for a probability measure $\mu$ is 
	\begin{align}
		\begin{displaystyle}
			F_{\mu}(X) = \int_{\T_r}{d^2(X,T)}d\mu(T)
		\end{displaystyle}
	\end{align}
	The Fr\'{e}chet function is strictly convex and its minimizer is the Fr\'{e}chet mean, $\bar{T}$ \cite{Sturm}[Prop. 4.3].
	
	Properties for directional and partial derivatives for the Fr\'{e}chet function for a probability measure $\mu$
	follow a development similar to the
	properties for the Fr\'{e}chet function
	for a sample as in Ch. \ref{ch:FMmethods}.
	Thus we state their key properties now without detailed proofs.
	
	Let $T$ be a tree, let $X$ be a variable point, 
	Let $Y$ be a tree such that $\Or(X) \subseteq \Or(Y)$
	such that $X$ and $Y$ share a multi-vistal cell.
	Let $(A_1,B_1),\ldots,(A_{k^T},B_{k^T})$ be a support
	for the geodesic from $Y$ to $T$.
	Let $(A_1,B_1),\ldots,(A_{m^T},B_{m^T})$ be local support pairs, and let $(A_{m^T+1},B_{m^T+1}),\ldots,(A_{k^T},B_{k^T})$ be the rest of the geodesic support sequence being used to represent the geodesic between $Y$ and $T$.
	The restricted gradient of the Fr\'{e}chet function 
	in the minimal orthant containing $X$ is
	\begin{equation}\label{grad}
		[\nabla F_{\mu}(X)]_e= 2\int_{\T_r}\left\{ \begin{array}{ll}
			\left(1+ \frac{\norm{B_l}_T}{\norm{A_l}_X}\right)|e|_X & \textrm{if $e \in A_l$} \\
			\left(|e|_X-|e|_T \right) & \textrm{if $e \in C$}  
		\end{array} \right\} d\mu(T).
	\end{equation}
	if $|e|_X>0$ and $0$ otherwise.
	Let $p_e = |e|_Y-|e|_X$ for all $e \in E_Y$.
	The directional derivative of the Fr\'{e}chet function for probability measure $\mu$ from $X$ in the direction $Y$ is 
	\begin{align}\label{DirDer}
		F_{\mu}'(X,Y) =  \sum_{e \in E_X} p_e [\nabla F_\mu(X)]_e + \int_{\T_r} \left ( \sum_{l=1}^{m^T}{2\norm{B_l}_T \sqrt{ \sum_{e\in A_l}{p_e^2}}}+\sum_{e\in C^T \cap S}{-2p_e|e|_{T}} \right) d\mu(T).
	\end{align}
	
	The directional derivative decomposes the sum of
	the directional derivative along the component $Y_X$ of $Y$ 
	in $\Or(X)$ and the component $Y_\perp$ of $Y$ which 
	is perpendicular to $\Or(X)$ 
	\begin{align}
		F'_\mu(X,Y) = F'_\mu(X,Y_X)+F'_\mu(X,Y_\perp).
	\end{align}
	
	A point $X$ is the Fr\'{e}chet mean of $\mu$ if and only if it satisfies the Fr\'{e}chet optimality conditions
	\begin{align}
		[\nabla F_{\mu}(X)]_e=0 &\;\;\; \forall \; e \in E_X\\
		F_{\mu}'(X,Y) \geq 0 & \;\;\; \forall \;Y \textrm{ s.t. the component of $Y$ in $\Or(X)$ is 0}.
	\end{align}
	
	The Fr\'{e}chet mean, $\bar{T}$, for a measure $\mu$ is either 
	\begin{itemize}
		\item[1.] \emph{sticky} on the minimal open orthant containing $\bar{T}$ if $F_\mu'(\bar{T},Y)>0$ for all $Y$ which are perpendicular to $\Or(\bar{T})$ at $\bar{T}$,
		\item[2.] \emph{partially sticky} on the minimal open orthant containing $\bar{T}$ if $F_\mu'(\bar{T},Y)\geq 0$ for all $Y$ which are perpendicular to $\Or(\bar{T})$ at $\bar{T}$ and $F'(\bar{T},Y)=0$ for some $Y$ which is perpendicular to $\Or(\bar{T})$ at $\bar{T}$, or
		\item[3.] \emph{non-sticky} if $\bar{T}$ is non-degenerate.
	\end{itemize}
	\begin{lem}
		If $\mu$ is partially sticky, then the set of $Y$ s.t. $F_\mu'(\bar{T},Y)=0$ is a convex subset of treespace.
	\end{lem}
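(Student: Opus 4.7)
The plan is to show that the zero set $S=\{Y:F_\mu'(\bar T,Y)=0\}$ is exactly the level set at which the convex function $F_\mu'(\bar T,\cdot)$ attains its global minimum, and is therefore automatically convex. The three ingredients I would marshal are: convexity of the directional derivative in its second argument (extended from samples to measures), the Fr\'echet optimality conditions at $\bar T$, and the Decomposition Theorem.

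First I would upgrade the sample-level convexity of Lem.~\ref{lem:DirDerConvex} to the measure setting by writing
\begin{equation*}
F_\mu'(\bar T,Y)=\int_{\T_r}\left(\lim_{\alpha\to 0}\frac{d^2(\Gamma(\bar T,Y;\alpha),T)-d^2(\bar T,T)}{\alpha}\right)d\mu(T),
\end{equation*}
recognizing the integrand as the directional derivative of the squared distance to $T$ at $\bar T$ in the direction of $Y$. Pointwise convexity in $Y$ of each integrand (Lem.~\ref{lem:DirDerConvex} applied one data-point at a time) then transfers to the integral; measurability and the integrability controls furnished by assumption (\ref{BoundedMeasure}) are the routine bookkeeping. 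This establishes that $F_\mu'(\bar T,\cdot)$ is a convex function over the star of orthants containing $\Or(\bar T)$.

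Next I would invoke Thm.~\ref{thm:DirDerDecomposition} (extended to $\mu$) to split $F_\mu'(\bar T,Y)=F_\mu'(\bar T,Y_{\bar T})+F_\mu'(\bar T,Y_\perp)$, where $Y_{\bar T}$ is the projection of $Y$ onto $\Or(\bar T)$ and $Y_\perp$ is the projection onto $\Or^\perp(\bar T)$. The first term equals $\sum_{e\in E_{\bar T}}p_e[\nabla F_\mu(\bar T)]_e$, which vanishes identically in $Y_{\bar T}$ because the restricted gradient of $F_\mu$ is zero at the Fr\'echet mean. Hence $F_\mu'(\bar T,Y)=F_\mu'(\bar T,Y_\perp)$, and partial stickiness immediately implies $F_\mu'(\bar T,Y)\geq 0$ everywhere with equality attained for at least one normal direction, so $0$ is genuinely the global minimum value. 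Consequently the zero set $S$ is exactly the argmin of the convex function $F_\mu'(\bar T,\cdot)$: for any $Y_1,Y_2\in S$ and any $Y_t$ on the geodesic between them, convexity gives $F_\mu'(\bar T,Y_t)\leq (1-t)\cdot 0+t\cdot 0=0$, while non-negativity from partial stickiness gives $F_\mu'(\bar T,Y_t)\geq 0$, so $Y_t\in S$.

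The main obstacle I anticipate is geometric: I need the geodesic from $Y_1$ to $Y_2$ to remain inside the star of orthants containing $\Or(\bar T)$, which is the precise region where both the Decomposition Theorem and the sample-level convexity of Lem.~\ref{lem:DirDerConvex} are valid. Because $F_\mu'(\bar T,Y)$ depends only on the direction of $Y$ from $\bar T$ and scales positively homogeneously with displacement, this reduces to convex-combining unit tangent directions in the tangent cone at $\bar T$; that tangent cone is a CAT(0) cone over the (CAT(1)) link of $\bar T$, and its cone structure is closed under the relevant geodesic combinations. Once this is verified, the convexity chain above closes and the lemma follows.
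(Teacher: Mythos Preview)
Your proposal is correct and follows essentially the same approach as the paper: the paper's proof simply observes that $F_\mu'(\bar T,\cdot)$ is convex in $Y$, that the sublevel set $\{Y:F_\mu'(\bar T,Y)\le 0\}$ is therefore convex, and that since no $Y$ satisfies $F_\mu'(\bar T,Y)<0$ this sublevel set coincides with the zero set. Your version is more explicit---you spell out the extension of Lem.~\ref{lem:DirDerConvex} to measures, invoke the Decomposition Theorem to justify nonnegativity for \emph{all} $Y$ (not just perpendicular ones), and flag the domain issue that the paper handles only by restricting to ``a small neighborhood of $\bar T$''---but the skeleton of the argument is identical.
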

	\begin{proof}
		The directional derivative $F_\mu'(\bar{T},Y)$ is a convex function of $Y$ for $Y$ in a small neighborhood of $\bar{T}$. 
		The level set $\{Y | F_\mu'(\bar{T},Y)\leq 0\}$ is a convex set and there are no $Y$ such that
		$F_\mu'(\bar{T},Y)<0$.
	\end{proof}
	\section{LLN for sample means in BHV treespace}\label{ch:StickyLLNTreespace}
	This section starts out by stating a classical strong law of large numbers for $\T_r$ which is a special case
	of the strong law of large numbers introduced in \cite{Sturm} for metric spaces of non-positive curvature, of which treespace is a special case.
	The novel result in this section is that
	sample means in BHV treespace obey a sticky law of large numbers. 
	\begin{thm}\label{thm:TreespaceLLN} (Strong Law of Large Numbers). There is a unique point $\bar{T} \in \T_r$ such that 
		\begin{displaymath}
			\begin{displaystyle}
				\lim_{n \to \infty} \bar{T}_n = \bar{T}
			\end{displaystyle}
		\end{displaymath}
		holds with probability one. The point $\bar{T}$ is the Fr\'{e}chet mean of $\mu$.
	\end{thm}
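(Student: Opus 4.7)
The plan is to treat this as a special case of the Sturm (2003) strong law of large numbers for CAT(0) metric spaces, using the non-positive curvature of $\T_r$, the uniqueness of geodesics provided by Thm.~\ref{thm:supports}, and the strict convexity of $F_\mu$ already noted in Sec.~\ref{sec:FMproblem}. Existence and uniqueness of the population minimizer $\bar{T}$ follow from strict convexity along geodesics together with the first-moment assumption \eqref{BoundedMeasure}, which makes $F_\mu$ finite and coercive (the lower bound $F_\mu(X) \geq d^2(X,0) - 2 d(X,0) \int d(0,T) d\mu(T)$ forces any sub-level set of $F_\mu$ to be bounded).

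The key analytic tool I would install is the CAT(0) variance inequality: for any probability measure $\nu$ on $\T_r$ with finite second moment and Fr\'echet minimizer $b(\nu)$, every $Y\in\T_r$ satisfies
\begin{equation}\label{eq:varineq}
F_\nu(Y) - F_\nu(b(\nu)) \;\geq\; d^2(Y, b(\nu)).
\end{equation}
In treespace this drops out of Thm.~\ref{thm:supports} by direct computation: along the geodesic $\gamma(t)$ from $b(\nu)$ to $Y$, the function $t\mapsto d^2(\gamma(t),T)$ is $2$-convex for every fixed $T$, and first-order optimality at $b(\nu)$ kills the linear term upon integrating against $\nu$. I would then apply \eqref{eq:varineq} twice---to the empirical measure $\mu_n$ at $Y = \bar{T}$, and to $\mu$ at $Y = \bar{T}_n$---and add the two inequalities to obtain
\begin{equation}\label{eq:keybound}
2\,d^2(\bar{T}_n, \bar{T}) \;\leq\; \bigl[F_n(\bar{T}) - F_\mu(\bar{T})\bigr] \;+\; \bigl[F_\mu(\bar{T}_n) - F_n(\bar{T}_n)\bigr].
\end{equation}
The first bracket tends to $0$ a.s.\ by the classical Kolmogorov SLLN applied to the i.i.d.\ real random variables $d^2(\bar{T}, T^i)$, which have finite mean $F_\mu(\bar{T}) < \infty$.

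The main obstacle is the second bracket in \eqref{eq:keybound}: the point $\bar{T}_n$ is random and moves with $n$, so one cannot appeal to the SLLN at a fixed point. I would handle this via a tightness step followed by a uniform Glivenko--Cantelli-style argument. For tightness, note that $F_n(\bar{T}_n) \leq F_n(\bar{T})$ forces $d(0,\bar{T}_n)$ to stay bounded: expanding $d^2(\bar{T}_n, T^i) \geq (d(0,\bar{T}_n) - d(0,T^i))^2$ and averaging, combined with the a.s.\ boundedness of $\frac{1}{n}\sum d(0,T^i)$ coming from \eqref{BoundedMeasure}, confines $\{\bar{T}_n\}$ to a random but a.s.\ bounded ball $B(0,R)$. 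On such a ball the family of functions $\{T \mapsto d^2(X,T) : X \in B(0,R)\}$ is equicontinuous in $X$ (with a Lipschitz constant linear in $d(0,T)+R$), so a bracketing argument using \eqref{BoundedMeasure} upgrades pointwise to uniform a.s.\ convergence $\sup_{X\in B(0,R)} |F_n(X) - F_\mu(X)| \to 0$. Plugging this into \eqref{eq:keybound} gives $d(\bar{T}_n,\bar{T})\to 0$ almost surely, which is the claim. The identification of the limit as the Fr\'echet mean is automatic because $\bar{T}$ is the unique global minimizer of $F_\mu$.
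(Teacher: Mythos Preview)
Your argument is essentially correct and is a genuine, self-contained reconstruction of the CAT(0) strong law, whereas the paper's proof consists of a single sentence: it simply cites \cite[Prop.~6.6]{Sturm}, noting that $\T_r$ is a global NPC space so the general result applies verbatim. So there is no substantive comparison of strategies to make; you have supplied the analytic content that the paper defers to the literature. The variance-inequality-plus-uniform-convergence route you take is in fact one of the standard proofs of Sturm-type laws, and your tightness step (bounding $d(0,\bar T_n)$ via $F_n(\bar T_n)\le F_n(0)$ and the first-moment SLLN) together with the compact-ball bracketing argument is clean and correct.

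One small correction: you write that the first-moment assumption \eqref{BoundedMeasure} ``makes $F_\mu$ finite.'' It does not; finiteness of $F_\mu$ at even a single point is equivalent to a finite second moment $\int d(0,T)^2\,d\mu(T)<\infty$. The paper is equally loose here (it only states \eqref{BoundedMeasure} but then invokes Sturm's results, which are formulated for square-integrable measures), so the gap is inherited rather than introduced by you. Your argument actually uses the second moment twice---once for the pointwise SLLN $F_n(X_j)\to F_\mu(X_j)$ at the net points, and once implicitly when you assert $F_\mu(\bar T)<\infty$---so you should state the hypothesis accordingly. The first-moment bound is what drives your Lipschitz/bracketing and tightness steps, and that part is fine as written.
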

	\begin{proof}
		This is a special case of \cite[Prop. 6.6]{Sturm} which gives 
		a Strong Law of Large Numbers for distributions on globally non-positively curved metric spaces. 
	\end{proof}
	The treespace Strong LLN (Thm. \ref{thm:TreespaceLLN}) guarantees, that for any $\epsilon > 0$, there exists a finite integer $N$ such that for all $n > N$, 
	$d(\bar{T}_n,\bar{T}) < \epsilon$. This guarantees
	that for large enough samples, every $e \in E_{\bar{T}}$ close to its length in $\bar{T}$,
	and
	that if $\bar{T}$ is degenerate then any edges in $\bar{T}_n$ which are not in $E_{\bar{T}}$
	must have lengths which are bounded above by $\epsilon$.
	The following Sticky LLN specifies that when $\bar{T}$ is degenerate
	then for large enough samples the Fr\'{e}chet sample mean, $\bar{T}_n$, will not contain
	any edges which are not in $\bar{T}$. 
	\begin{thm}(Sticky Law of Large Numbers)
		\begin{itemize}
			\item[1.] Sticky case: If $F'(\bar{T},Y)>0$ for all $Y$ which are perpendicular to $\Or(\bar{T})$ at $\bar{T}$ then there exists a random
			integer $N$ such that for all $n>N$, $F'_n(X,Y) >0$ for all $Y$ which are perpendicular to $\Or(\bar{T})$ and $E_{\bar{T}_n} = E_{\bar{T}}$.
			\item[2.] Partially sticky case: If $F'(\bar{T},Y)\geq 0$ for all $Y$ in the normal space of $\bar{T}$ and $F'(\bar{T},Y)=0$ for some $Y$ in the normal space of $\bar{T}$ then there exists a random integer $N$ 
			such that for all $n>N$, if $F'(\bar{T};Y)>0$ then $F'(X,Y)>0$, and if $F'(\bar{T};Y)=0$ then
			the sample mean $\bar{T}_n$ has a positive probability of being in $\Or(Y)$.
			\item[3.] \emph{Non-sticky case}: If $\bar{T}$ is non-degenerate then there exists a random integer $N$ such that for all $n>N$ $E_{\bar{T}_n}=E_{\bar{T}}$.
		\end{itemize}
	\end{thm}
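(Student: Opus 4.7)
The plan is to combine the Strong Law of Large Numbers in treespace (Thm.~\ref{thm:TreespaceLLN}) with the optimality conditions~(\ref{generalOptCond}) and the decomposition of directional derivatives (Thm.~\ref{thm:DirDerDecomposition}). Thm.~\ref{thm:TreespaceLLN} immediately gives $\bar{T}_n\to\bar{T}$ almost surely. The non-sticky case~(3) then follows at once: a non-degenerate $\bar{T}$ lies in the interior of a maximal orthant $\Or$, which is open, so the almost-sure limit yields a random $N$ with $\bar{T}_n\in\Or$ for every $n>N$, hence $E_{\bar{T}_n}=E_{\bar{T}}$.

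For the sticky case, at $\bar{T}_n$ any admissible direction $Y$ decomposes by Thm.~\ref{thm:DirDerDecomposition} into a tangent part in $\Or(\bar{T})$ and a normal part in $\Or^\perp(\bar{T})$ (Def.~\ref{defn:NormalSpace}). Optimality of $\bar{T}_n$ forces the restricted gradient to vanish on $E_{\bar{T}_n}$, so the tangent part contributes nothing and the entire argument reduces to controlling the normal part. By hypothesis $F'_\mu(\bar{T},Y_\perp)>0$ for every nonzero $Y_\perp$; by Lem.~\ref{lem:DirDerContinuous} and Lem.~\ref{lem:DirDerConvex} this function is continuous, convex, and positively homogeneous in $Y_\perp$, so restricting to the unit sphere in $\Or^\perp(\bar{T})$, which is a finite union of closed spherical simplices hence compact, gives a uniform lower bound $F'_\mu(\bar{T},Y_\perp)\geq c>0$.

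The main step I would then carry out is a uniform law of large numbers
\begin{equation*}
  \sup_{\|Y_\perp\|=1}\Bigl|\tfrac{1}{n}F_n'(\bar{T}_n,Y_\perp) - F'_\mu(\bar{T},Y_\perp)\Bigr|\longrightarrow 0 \quad\text{a.s.}
\end{equation*}
Thm.~\ref{thm:DirDerValue} expresses each summand of $F'_n$ as a combination of the contributions $\|B^i_l\|_{T^i}\|A^i_l\|_P$ and $-p_e|e|_{T^i}$, whose Lipschitz constants in $Y_\perp$ are controlled by $d(0,T^i)$, which is $\mu$-integrable by~(\ref{BoundedMeasure}). Combining this integrable envelope with the almost-sure convergence $\bar{T}_n\to\bar{T}$ and the local stability of the support pairs $(A^i_l,B^i_l)$ from Cor.~\ref{cor:LocalSupportPairs}, a Glivenko-Cantelli style argument applied separately on each of the finitely many maximal sub-cones of $\Or^\perp(\bar{T})$ on which the support-pair combinatorics are constant delivers the uniform limit. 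Once it is in hand, $\tfrac{1}{n}F'_n(\bar{T}_n,Y_\perp)\geq c/2$ on the whole unit sphere for all $n$ beyond a random $N$, so by homogeneity $F'_n(\bar{T}_n,Y)>0$ throughout $\Or^\perp(\bar{T})$, and the optimality conditions~(\ref{generalOptCond}) then prevent any edge outside $E_{\bar{T}}$ from entering $E_{\bar{T}_n}$.

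The partially sticky case decomposes $\Or^\perp(\bar{T})$ into the closed convex sub-cone $Z=\{Y_\perp: F'_\mu(\bar{T},Y_\perp)=0\}$ and its complement. The complement is handled exactly as in the sticky case, while on $Z$ the sample directional derivative fluctuates around zero at order $1/\sqrt{n}$, producing the asserted strictly positive probability that $\bar{T}_n$ slides into the higher-dimensional orthant $\Or(Y)$ for $Y\in Z$. The hard part of the whole plan will be the uniform convergence step: the normal directional derivative is only piecewise Lipschitz in $Y_\perp$, with support-pair discontinuities along vistal-cell boundaries whose locations themselves shift with $\bar{T}_n$. The remedy is to work inside one multi-vistal sub-cone at a time, use the integrable envelope $d(0,T^i)$ to dominate fluctuations, and stitch the finitely many cones together using the continuity guaranteed by Lem.~\ref{lem:DDWellDefined}.
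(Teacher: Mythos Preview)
Your plan is correct and in fact more careful on one point than the paper's own argument, but it misses a simplification that the paper exploits and that would cut out much of your work. The paper observes (and you can read this off from Thm.~\ref{thm:DirDerValue} and Cor.~\ref{cor:LocalSupportPairs}) that for $Y$ perpendicular to $\Or(\bar{T})$ the value of the normal directional derivative $F'_n(X,Y_\perp)$ does not depend on the base point $X$ at all, as long as $X\in\Or(\bar{T})$: the local support pairs and the common-edge contributions in the formula involve only the edges in $E_Y\setminus E_{\bar{T}}$ and the data trees $T^i$, never the positive edge lengths of $X$. So once the SLLN has forced $E_{\bar{T}}\subseteq E_{\bar{T}_n}$, you may simply evaluate $F'_n$ at any fixed $X\in\Or(\bar{T})$ rather than at the moving point $\bar{T}_n$; your whole apparatus for tracking the shift of vistal-cell boundaries with $\bar{T}_n$ becomes unnecessary. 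The paper then partitions the \emph{source} space $\T_r$ into the finitely many vistal cells for the fixed pair $(X,Y)$ and applies the classical LLN cell by cell to get $F'_n(X,Y)\to F'_\mu(X,Y)$ a.s.\ for each fixed $Y$, whereas you partition the \emph{direction} space $\Or^\perp(\bar{T})$ and run a Glivenko--Cantelli argument. Your route buys you something the paper glosses over: the paper's pointwise convergence in $Y$ does not by itself give a single random $N$ that works simultaneously for every perpendicular $Y$, and your compactness-plus-uniform-LLN step is exactly what closes that gap. A clean synthesis would be to adopt the base-point independence to fix $X=\bar{T}$, then run your uniform argument over the compact sphere in $\Or^\perp(\bar{T})$.
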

	\begin{proof}
		The treespace Strong Law of Large Numbers (Thm. \ref{thm:TreespaceLLN} guarantees the existence of a finite integer $N$
		such that for all $n > N$, $|e|_{\bar{T}_n} > 0$ for all $e \in E_{\bar{T}}$. 
		
		Next we show 
		that there must exist some finite integer $N$ such 
		that for all $n>N$, $|e|_{\bar{T}_n}=0$ for 
		all $e$ such that $|e|_{\bar{T}}=0$.
		Essentially the proof is based on the Fr\'{e}chet optimality conditions.
		Let $X$ be a point in $\Or(\bar{T})$. 
		For every $Y$ such that the component of $Y$ in $\Or(\bar{T})$ is 
		zero the values of the directional derivatives $F'_\mu(X,Y)$ and $F'_n(X,Y)$ are independent of the precise edge lengths of $X$, provide that every edge in $\Or(\bar{T})$ has a positive length.
		We show the directional derivative 
		$F'_n(X,Y)$ for $Y$ perpendicular to $\Or(\bar{T})$ 
		approaches $F'_\mu(X,Y)$ with probability 1.
		for any $X \in \Or(\bar{T})$.
		This will suffice to prove the claim
		since we have already established that $E_{\bar{T}} \subseteq E_{\bar{T}_n}$ 
		using the Strong Law of Large Numbers.
		
		Consider a point in the minimal open orthant containing the Fr\'{e}chet mean, $X\in \Or(\bar{T})$, 
		and a point $Y$ such that the component of $Y$ in $\Or(\bar{T})$ is zero.
		Since the component of $Y$ in $\Or(\bar{T})$ is zero
		\begin{displaymath}
			F'_n(X,Y) = \frac{1}{n}\sum_{i=1}^n \left (\sum_{l=1}^{m^i}{2\norm{B^i_l} \sqrt{ \sum_{e\in A^i_l}{p_e^2}}}+\sum_{e\in C^i \cap S}{-2p_e|e|_{T^i}}  \right)
		\end{displaymath}
		Treespace $\T_r$ can be subdivided into a disjoint set of the open interiors
		of vistal cells such that for each vistal cell $\V$, the geodesic from $Y$ to any $T$ in $\V$ can be represented with the same combinatorial form.
		Thus the directional derivative from $X$ to $Y$ can be separated
		into a sum of terms each contributed by a vistal cell $\V$.
		Let $1_{T \in \textrm{int}(\V)}$ be an indicator function with value 1 if
		$T \in \textrm{int}(\V)$ and 0 otherwise.
		\begin{align}
			F'_n(X,Y) = \sum_{\V} \left( \sum_{i=1}^n \left (1_{T^i \in \textrm{int}(\V)} \frac{1}{n}\sum_{l=1}^{m^i}\left({2\norm{B^i_l} \sqrt{ \sum_{e\in A^i_l}{p_e^2}}}\right)+\sum_{e\in C^i \cap S}{-2p_e|e|_{T^i}}\right) \right)
		\end{align}
		For a fixed $\V \subseteq \Or$, each $T^i \in \V$  is equivalent to a random vector in $\Or$; thus by the classical Law of Large Numbers \cite[Sec. 7.2.5]{Shapiro2009}
		\begin{align}
			\lim_{n\to \infty} \frac{1}{n}\sum_{i=1}^n \left (1_{T^i \in \textrm{int}(\V)} \sum_{l=1}^{m^i}\left({2\norm{B^i_l} \sqrt{ \sum_{e\in A^i_l}{p_e^2}}}\right)+\sum_{e\in C^i \cap S}{-2p_e|e|_{T^i}}\right) \\= 
			\mu(\textrm{int}(\V))\int_{\textrm{int}(\V)} \left ( \sum_{l=1}^{m^T}{2\norm{B_l}_T \sqrt{ \sum_{e\in A_l}{p_e^2}}}+\sum_{e\in C^T \cap S}{-2p_e|e|_{T}} \right) d\mu_{\textrm{int}(\V)}(T)
		\end{align}
		with probability 1. Hence, by summing over all vistal cells,
		\begin{align}
			\lim_{n\to \infty} F'_n(X,Y) \to F'_{\mu}(X,Y)
		\end{align}
		with probability 1. 
	\end{proof}
	%
	
	\subsection{Extensions}
	
	In treespace and the open book, where stickiness has been observed and studied theoretically, the population is distributed on a negatively curved
	region of the sampling space. It is conjectured that 
	positively curved geometry leads to ``antistickiness" and negatively curved geometry leads to ``stickiness" \cite{StickyCLT}.
	Although the topic of antistickiness is not explored any further here, it may
	be a topic of future research. The 2-dimensional sphere would be an ideal
	model space to study antistickiness phenomenon for sample means
	on positively curved metric spaces. There may also be theoretical connections
	between robust estimation, sparse estimation, and stickiness, although investigating
	these connections is outside of the scope of this thesis.
	
	It is plausible that sticky central limit theorems will generalize
	to any dimension of treespace
	because they all share the property of global non-positive curvature.
	The sticky LLN here is a precursor for a central limit theorem because
	it characterizes the limiting distribution as sticky i.e.\ characterizes
	the support of the limiting distribution. 
	In this sense this result
	further justifies the conjecture that sticky central limit
	theorems are generalizable to any dimension of treespace. 
	A sticky central limit theorem for samples on BHV treespace
	would be an interesting topic for further research.

\chapter{Treespace kernel smoothing}
\label{ch:TreespaceKernelSmoothing}
\section{Introduction}
One way that methods for statistical analysis of populations of complex objects e.g. tree-structured data, functional data, and manifold data,  can contribute to biological sciences, such as neuroscience and developmental biology, is through the use of flexible models for studying trends in parts of anatomy. 
The motivation of this chapter is to study the variation in branching and geometry of 
3D anatomical trees as a function of age. In particular, in this chapter, Fr\'{e}chet
kernel smoothing of brain artery trees over age is introduced for
analysis of the CASILab cross-sectional dataset of human brain artery systems 
described in Ch \ref{ch:intro}. 

Kernel smoothing of tree-structured data objects was first studied in \cite{WangY}. 
The findings in that paper serve as a motivation to continue studying
non-parametric regression for tree-structured data. 
In that paper they found an increase in common structure among 20 to 30 year old subjects, and a decrease in common structure 
among 30 to 50 year old subjects. 
The method in that paper is based on connectivity only, and the goal
of this chapter is to obtain deeper insights by including additional
structural information in the tree-data object representation. 

The approach in this chapter, and that previous approach, both
use the framework of kernel smoothing (see \cite{Fan1996, Hardle1991, Wand1994} for books on kernel smoothing),  but 
differ in their representations of artery systems,
and thus in their choice of metric.
The previous work represents artery systems as binary trees, which are purely topological objects.
It uses the integer tree metric, which counts
the total number of non-common nodes between two trees \cite{WangH}.
In this chapter artery systems are represented as points in BHV treespace.
This representation captures the topological structure of the entire tree and lengths
of individual artery segments.


The organization of the rest of this chapter is as follows.
In Sec. \ref{sec:FKSmethod}, a novel method for kernel smoothing of data points
which have tree-structured response, and predictive values in a Euclidan space is presented.
In Sec. \ref{sec:FKSAngiographyDataAnalysis}
we present a case study of the sample
of human brain artery systems of normal adults collected at the UNC Chapel Hill CASILab discussed in Sec. \ref{sec:MapBrainArteryData}.
Sec. \ref{sec:MinLengthRepSeqAlg} is an appendix for descriptions of algorithms used in this research.

\section{Methodology}\label{sec:FKSmethod}

\subsection{Treespace Kernel Smoothing}
Consider a dataset of bivariate observations $(x_1,T^1),\ldots,(x_n,T^n)$, where $x_1,\ldots,x_n$ are explanatory
observations in $\mathbb{R}$ and $T^1,\ldots, T^n$ are dependent observations in a BHV treespace, $\T$.
Kernel smoothing can be used to study the relationship between
explanatory and dependent variables in a flexible way.

A weighting for a dataset is an assignment of a weight, a positive real number $w_i$, to each observation $i=1,\ldots, n$, such that $\sum w_i = 1$.
In the context of smoothing, a kernel $K$ is a function used to create weightings. A kernel takes as input two points $x$ and $x'$ in the space of explanatory variables. The operations of the kernel are (1) calculate the distance between $x$ and $x'$, (2) scale their distance by division by the \emph{band width} parameter $h$, and (3) pass the scaled distance into a non-negative, symmetric window function $D$. In summary a kernel function is
\begin{displaymath}
	K(x,x'; h) = D\left(\frac{\norm{x-x'}}{h}\right)
\end{displaymath}
For kernel $K$ and bandwidth $h$, at a given point $x$, the \emph{kernel-weight function} assigns a weight to observation $i$
\begin{displaymath}
	w_i(x) = \frac{K(x,x_i; h)}{\displaystyle\sum_i K(x,x_i;h)}
\end{displaymath}
Here, a Gaussian density function is used for the kernel.
The \emph{weighted Fr\'{e}chet 
	function} $F: {\mathcal T} \to \mathbb{R}_{\geq 0}$, which generalizes the 
Fr\'{e}chet function introduced in Ch. \ref{ch:FMmethods}, is a sum of
weighted mean squared distances
\begin{displaymath}\label{wFrF}
	F(T;w) = \sum_{i=1}^{n}{w_i d(T,T^i)^2}.
\end{displaymath}
Following a development parallel to methods for
Fr\'{e}chet means,
the unique minimizer of the weighted Fr\'{e}chet function is called the \emph{weighted Fr\'{e}chet sample mean} 
\begin{displaymath}\label{wFrM}
	\bar{T}(w) = \argmin_{T \in {\mathcal T}}{F(T;w)}.
\end{displaymath}

\begin{figure}[h!]
	\centering
	\includegraphics[width=0.90 \textwidth]{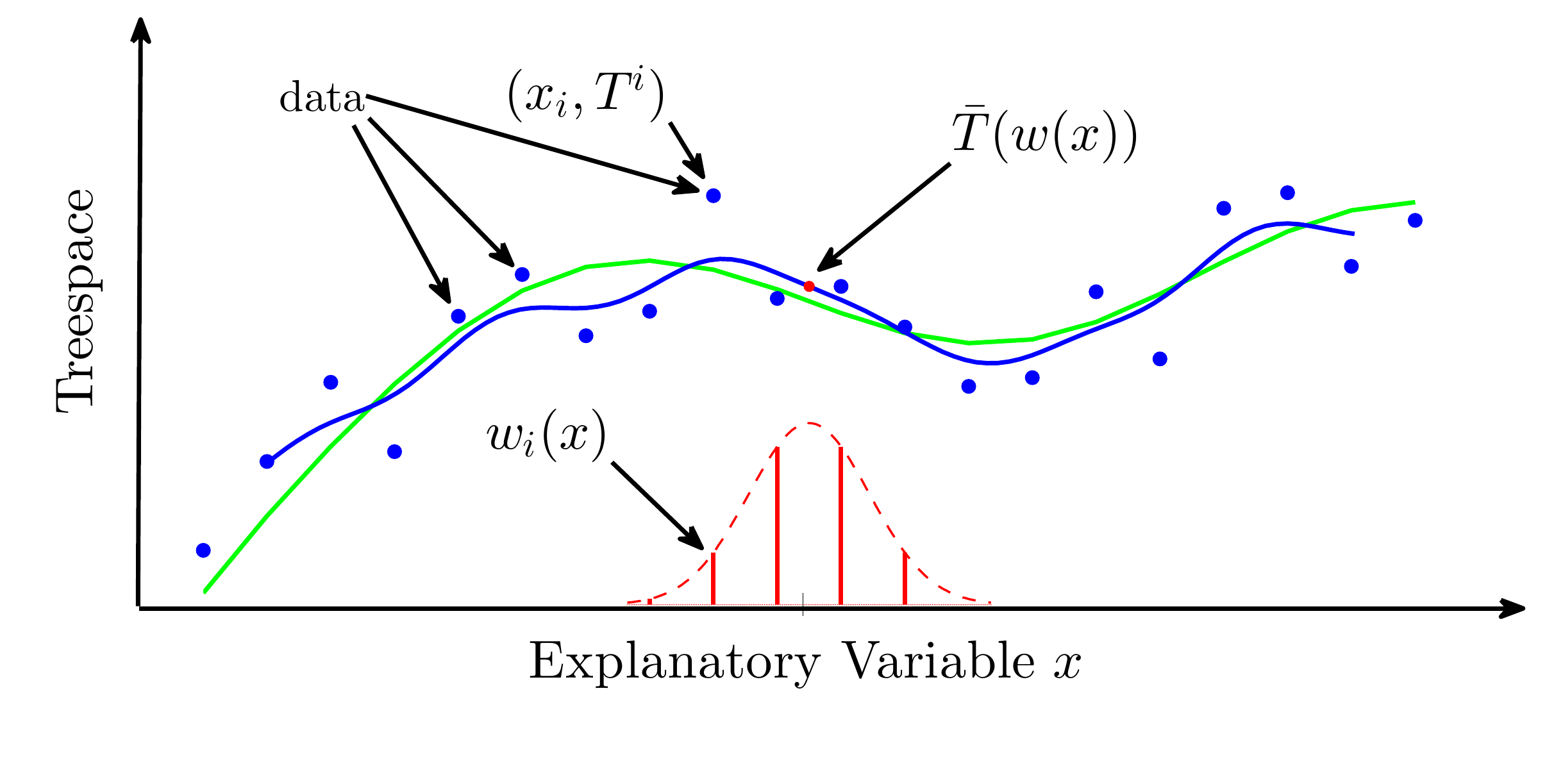}
	\caption{Schematic of Fr\'{e}chet kernel smoothing. Data are blue dots, a noisy sample from the green curve; the red dashed curve is a Gaussian kernel function and the vertical bars represent the relative weights of data points; red dot is kernel weighted mean at $x$, $\bar{T}(w(x))$; and the blue curve is the kernel smooth. }
	\label{fig:KernelSmoothExample}
\end{figure}

Fig. \ref{fig:KernelSmoothExample} shows a schematic of Fr\'{e}chet kernel smoothing. There we can see how an estimate for the response value at the point $x$ is created using a local Fr\'{e}chet mean. 

The results of kernel smoothing are dependent on the bandwidth $h$. 
When $h$ is very small, the weights are concentrated on data points within a narrow range, and the smoothed values are very localized. An extreme case, when $h=0$, results in local estimates which are either empty or just the data points themselves.
At the other extreme, when the bandwidth is very large, all data points have very similar weights, and the smoothed tree for any value $x$ is very similar to the overall Fr\'{e}chet mean of the dataset. 

The bandwidth parameter, $h$, in classic nonparametric regression,
is selected to optimize some expected error, or its empirical (e.g. cross-validation) or
asymptotic estimate. See \cite{Fan1996, Hardle1991, Wand1994} for detailed descriptions of such techniques. 
A scale space approach to the bandwidth selection problem,
which compares smooths across a range
of window sizes to look for
consistent patterns as recommended in \cite{Chaudhuri1999}, 
is used here.

Adapting the optimization algorithms 
from Ch. \ref{ch:FMmethods} to solve for a weighted Fr\'{e}chet mean $\bar{T}(w)$
is straightforward.
The step length in each iterative step of a cyclic split proximal point algorithm can be modified
to account for the weight, $w_i$, on each term in the Fr\'{e}chet sum of squares.
For a random split proximal point algorithm (a.k.a Sturm's algorithm), the probability of sampling a data point, $T^i$, at each iteration is equal to its weight, $w_i$. 
For descent methods, the only changes
come from 
appropriately multiplying terms in the Fr\'{e}chet function 
and its derivatives by the weights, $w_1,\ldots, w_n$.
\sean{However, kernel smoothing involves finding a set
	of related of Fr\'{e}chet means, thus
	more efficient methods which
	re-optimize the Fr\'{e}chet function
	after small changes in the weights
	could be of use in large scale problems.
	Algorithms for Fr\'{e}chet kernel smoothing optimization 
	problems are described in Sec. \ref{ch:SmoothingAlgorithms}.}

\subsection{Methods for summarizing treespace smooths}
Summarizing treespace smooths is a critical step in studying the relationship
between predictive and response variables.
Key summaries of each tree, such as the total number of interior edges,
and the sum of interior edge lengths are just the first step.
Comparing these summaries 
across smoothing bandwidths can provide insights
into the data.
But these comparisons are somewhat limited
in that they don't capture topological variation
across the smooth. 
Tree topologies are discussed in the introduction to phylogenetic trees in Sec. \ref{sec:PhylogeneticTrees}.

\subsubsection{Minimum Length Representative Sequences}\label{sec:MinLengthRepSeq}
Summarizing length and number of edges does not
capture the variation in tree topologies in a tree-smooth.
While counting the number of edges quantifies degeneracy,
it does not consider the variety of splits in tree topologies
across the smooth. Thus the aim here is to summarize
the topological variability in an ordered set of trees.

For a set of trees $T^1,\ldots,T^n$ a \emph{representative tree topology}, $R$,
is a tree topology such that every element, $T^i$, has a subset of the edges in the topology $R$.
A representative tree topology may not exist for an arbitrary set of trees. 
For $T^1,\ldots,T^n$, a \emph{representative set of tree topologies}, $R^1,\ldots,R^k$, has the property that every
element $T^i$ has a subset of the edges in topology $R^j$ for some $j$.
For an ordered set of trees $T^1,\ldots,T^n$ a \emph{representative sequence} is a 
sequence of tree topologies $R^1,\ldots,R^k$ such that $R^1$ is a representative tree topology for 
the first $i_1$ trees, $T^1,\ldots,T^{i_1}$, $R^2$ is a representative tree topology for the next $i_2$ trees,
$T^{i_1+1},\ldots,T^{i_2}$, and so on.
A \emph{minimum length representative sequence} is a representative sequence with
the fewest number of representative tree topologies possible. 
An algorithm for finding a minimum length representative sequence is in
the last section of this chapter, Sec. \ref{sec:MinLengthRepSeqAlg}.

In data analysis, a minimum length representative sequence
for the predicted response trees in a tree-smooth 
will be used to capture the variety of tree topologies over the range of the predictive variable. 

\section{Case study of CASILab angiography dataset}\label{sec:FKSAngiographyDataAnalysis}
%

In Ch. \ref{ch:AnalysisAngiographyData} we studied how the ratio of interior edges
to the maximum possible number of interior edges varied over reduced sets of landmarks. When this ratio is close to 1 it indicates a relatively higher degree of common structure.  In this analysis, we will work with 17 landmarks because of the relatively high degree of common structure.

This data set contains 85 subjects with ages approximately uniformly distributed in the range 20 to 80 years.
Because the density of observations affects the results a closer look at the distribution of ages will make for a more informed analysis. As we can see
in Fig. \ref{KDEAge}, there are only 2 subjects over 70. While we will use all the available data in the analysis, due to the sparsity of data over 70, we will not attempt to make any estimates in that range.
\begin{figure}[h]
	\hspace{0cm}\includegraphics{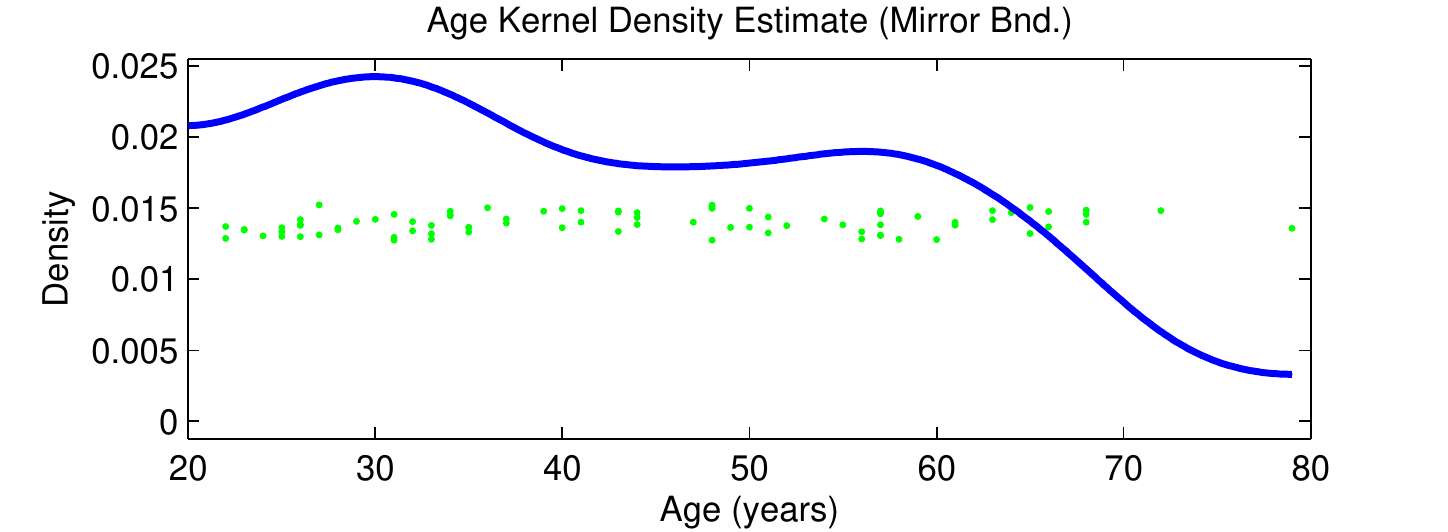}
	\caption{Kernel density estimate for distribution of subject ages.}\label{KDEAge}
\end{figure}

A Gaussian probability density is used for the kernel window function.
The effect of scale is studied across six smoothing levels: $h=1, 2, 3, 4, 5, 6$ years.
In this data analysis, the kernel estimator will be evaluated at the ages from data points.
These point estimates are adequate to study the relationship between artery trees and ages.

\begin{figure}[t]
	\subfigure[]{\includegraphics[width=0.6\textwidth]{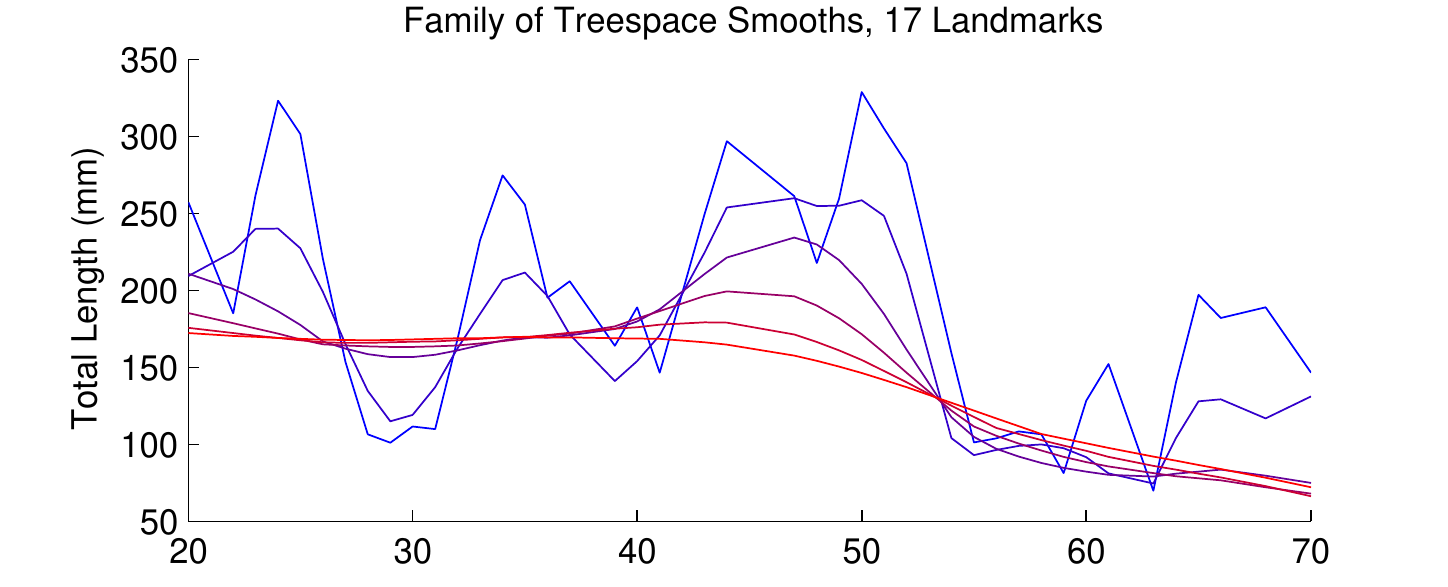}}
	\subfigure[]{\includegraphics[width=0.6\textwidth]{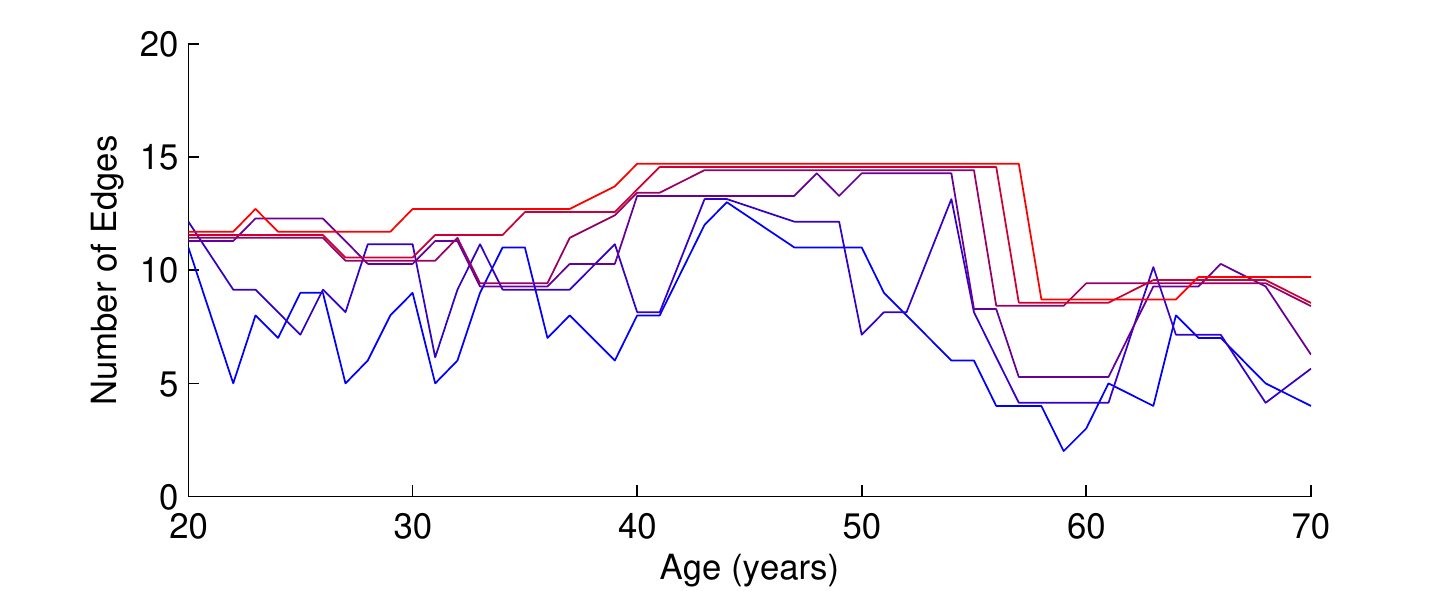}}
	\caption{Summaries from treesmooth family with bandwidths $h=1,2,3,4,5,6$ colored from blue to burgundy to red. Sum of lengths of interior edges over age (top); and number of interior edges over age. }\label{fig:TreeSmoothNEdgesLengths}
\end{figure}

Now we begin discussing the results.
We discuss the results in terms of two key summaries of treesmooths, the sum of interior edge lengths and the number of interior edges. Then we study the topological variability across ages as summarized by minimum length representative sequences.

Fig. \ref{fig:TreeSmoothNEdgesLengths} displays the sum of the interior edge lengths (top) and the number of interior edges. 
Notice that for the total length there is a clear overall downward trajectory which becomes increasingly evident as the bandwidth increases. This is consistent with a previous study, which showed a significant decrease in total length of the entire artery system with age \cite{Skwerer2014}.
Fluctuations in total length for smooths using narrower windows, $h=1,2,3$, colored blue to purple in Fig. \ref{fig:TreeSmoothNEdgesLengths}, are positively correlated
with fluctuations in the number of edges. This pattern is shown clearly
in the scatterplots of number of edges and total length in Fig. \ref{fig:TreeSmoothNEdges_TotalLength}.
For smoothing levels $h=4,5,6$, burgundy to red, a striking pattern in the scatterplots  emerges. 
The line of trees along the top of the plot coincides with the flat segment 
over the range of 40 to 55 years in Fig. \ref{fig:TreeSmoothNEdgesLengths}. We will offer a further interpretation
for this pattern in our discussion of topological variability across the smooth.

\begin{figure}[t]
	\hspace{-2cm}\includegraphics{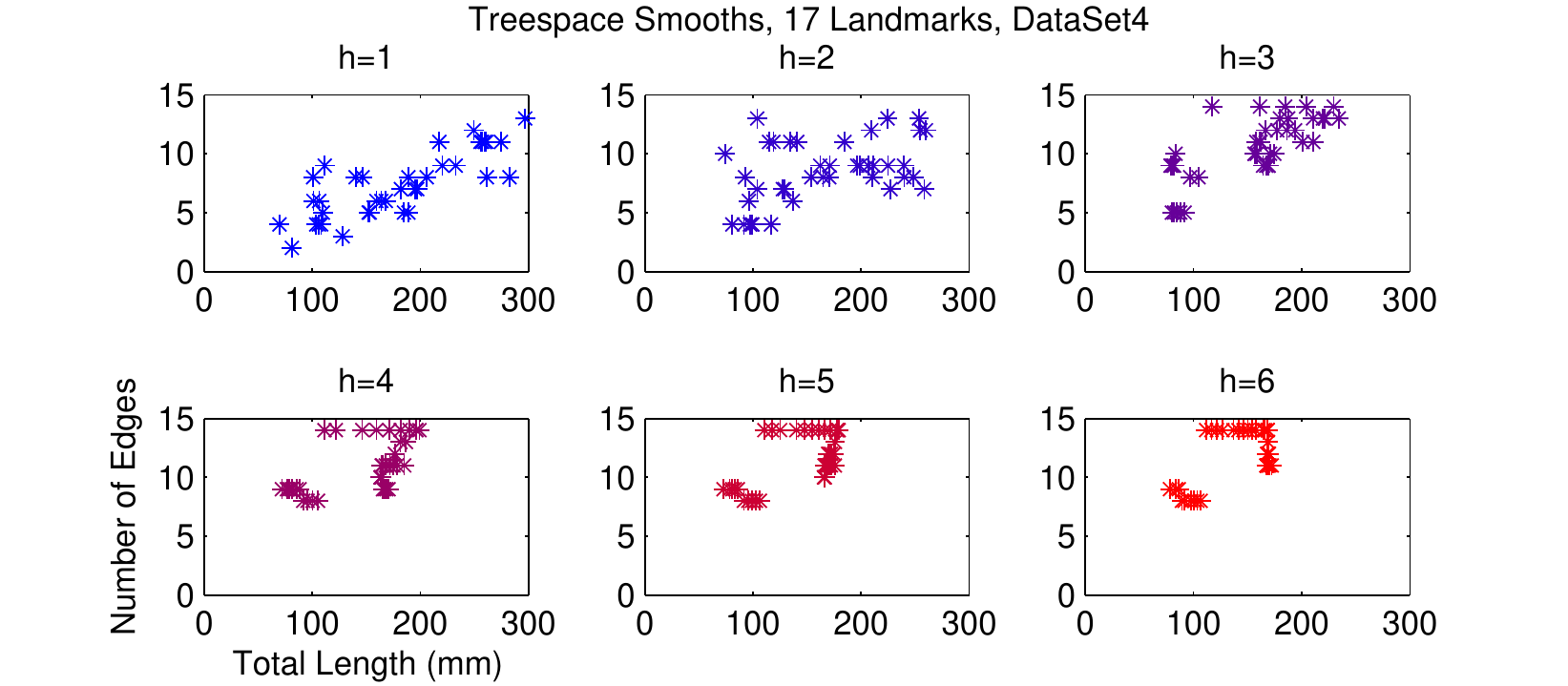}
	\caption{Scatter plots of number of interior edges and total length for smoothing windows $h=1,2,3,4,5,6$ years with the same colors as Fig. \ref{fig:TreeSmoothNEdgesLengths}. We see a positive correlation between these variables indicating that their up and down fluctuations across ages coincide. }\label{fig:TreeSmoothNEdges_TotalLength}
\end{figure}

For summarizing the topologies across the smooth we use a minimum length
representative sequence as described in Sec. \ref{sec:MinLengthRepSeq}.
Smoothing dramatically decreases the length of the minimum representative sequence. 
The sequence of data trees $T^1,\ldots,T^{85}$, ordered by ages, $x_1, \ldots, x_{85}$ are their
own minimal length representative sequence. The number of representative topologies by smoothing bandwidth are summarized in Table \ref{table: RepTop}.

\begin{table}[b]
	\centering
	\begin{tabular}{|c|c|}
		\hline
		$h$ yrs & \# Top. \\
		\hline
		0&85\\
		1&7\\
		2&2\\
		3&2\\
		4&1\\
		5&1\\
		6&1\\
		\hline
	\end{tabular}
	\caption[Minimal number of representative topologies by bandwidth]{Number of topologies in the minimal representative sequence for each smoothing window.}
	\label{table: RepTop}
\end{table}

The representative topology is the same for bandwidths $h=4,5,6$. In fact this representative topology is the topology of the overall Fr\'{e}chet mean of the entire dataset, shown in Fig. \ref{fig: RepTop456}. The patterns we see in Fig. \ref{fig:TreeSmoothNEdgesLengths} and Fig. \ref{fig:TreeSmoothNEdges_TotalLength} represent the smooth trees varying between various degenerate trees that are contractions of this representative topology. 

\begin{figure}[t]
	\includegraphics{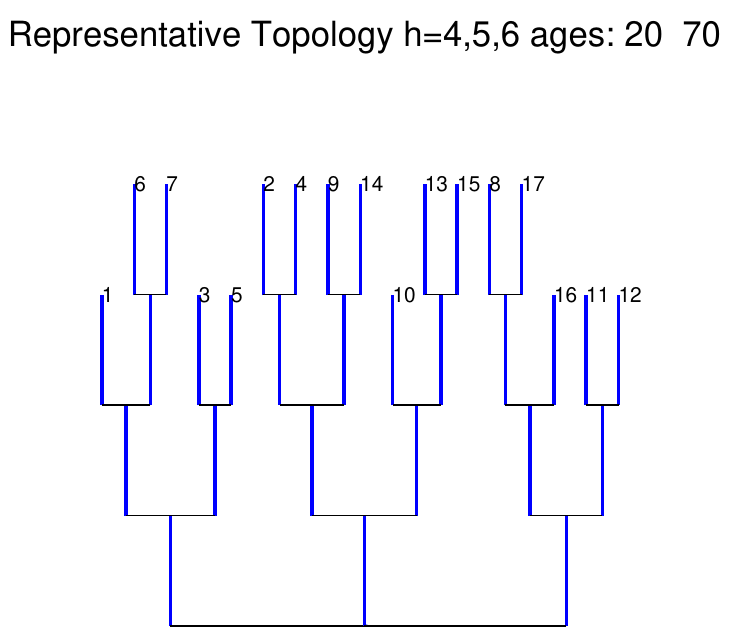}
	\caption{Representative topology for treespace smooths with bandwidths $h=4,5,6$. These bandwidths are large enough so the representative topology for the treesmooth is equal to the topology of the overall Fr\'{e}chet mean.}\label{fig: RepTop456}
\end{figure}

For smoothing bandwidths $h=1,2,3$ several representative topologies
are needed to capture different ranges of ages. 

For bandwidth $h=2$ the first representative topology captures ages 20 to 38, and the second captures ages 39 to 68.
For $h=3$ the first representative topology captures ages 20 to 39 and the second captures ages 40 to 68. The representative topologies are the same for bandwidths $h=2,3$; these are shown in Fig. \ref{fig:RepTop23}, where we can see that the representative topology for ages 20 to 38 only has one edge which is not shared by the overall Fr\'{e}chet mean, and the representative topology for ages 39 to 70 is the topology of the overall Fr\'{e}chet mean.

\begin{figure}[t]
	\includegraphics{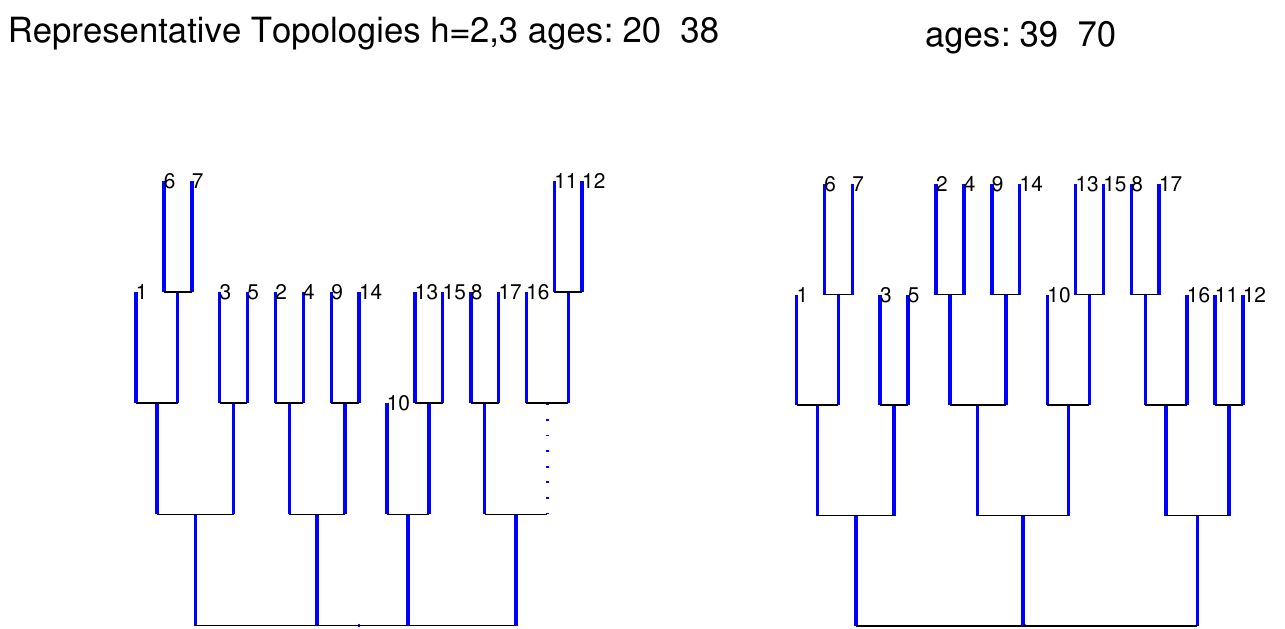}
	\caption{Representative topologies for treesmooths for both bandwidths $h=2,3$. The dotted edge in the left hand tree is the only edge which is not in the overall Fr\'{e}chet mean. }
	\label{fig:RepTop23}
\end{figure}

For $h=1$ the first representative topology captures the topologies in the treesmooth for ages 20 to 24, the second captures ages 25 to 33, the third captures 34 to 42, the fourth captures 43 to 50, the fifth captures 51 to 52, the sixth captures 54 to 62 and the seventh captures 63 to 68. 
\sean{These representative topologies are shown in Fig. \ref{fig:RepTop1}, and although seven representative topologies are required we see that most edges are shared with the overall Fr\'{e}chet mean. }

\sean{
	\begin{figure}[h]
		\hspace{-2cm}\includegraphics{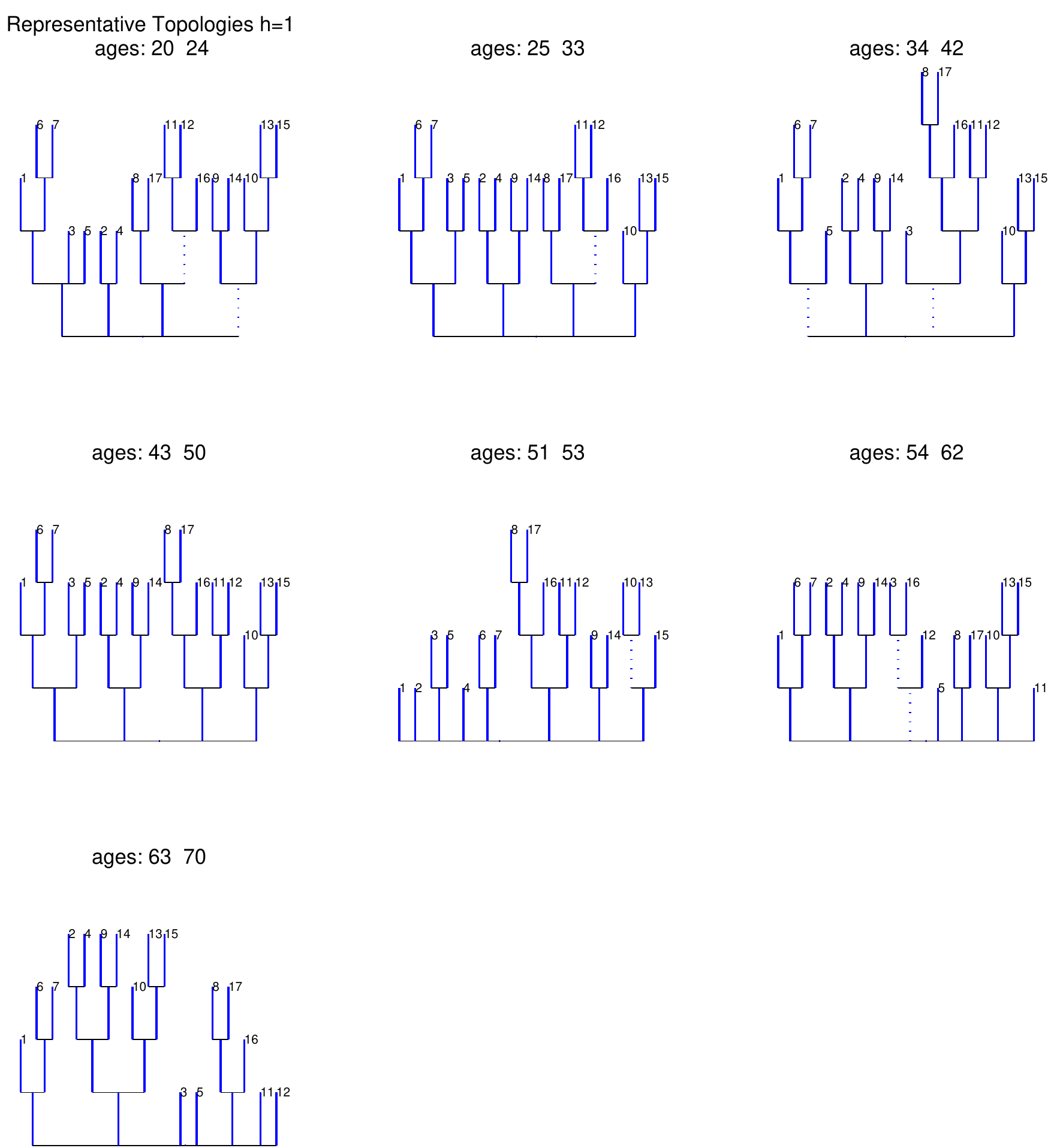}
		\caption{Representative topologies for treesmooths with bandwidth $h=1$. Solid edges are also in the overall Fr\'{e}chet mean. }
		\label{fig:RepTop1}
	\end{figure}
}

\sean{
	\section{Smoothing Optimization Algorithms}\label{ch:SmoothingAlgorithms}
	Here, new algorithms are developed which take advantage of the fact that
	kernel smoothing estimates are calculated as solutions to optimizations with similar parameters. \\ \hspace{1pt}
	The new algorithms introduced here will use some Fr\'{e}chet mean finding algorithm as a subroutine.
	Here an algorithm which calculates a weighted Fr\'{e}chet mean of $\F$ with weights $w$ will be generically written Fmean($\F,w$). This paper will not focus on the properties of algorithms for calculating a single Fr\'{e}chet mean, but rather focuses on algorithms for Fr\'{e}chet kernel smoothing.
	The new algorithms in this paper focus on taking advantage of the fact that kernel smoothing
	will find Fr\'{e}chet means that are nearby to provide good starting points for optimization.\\
	Two new algorithms are introduced here which 
	are designed for kernel smoothing problems in treespace. 
	One algorithm calculates a Fr\'{e}chet kernel smooth, and the other 
	calculates the family of smooths with bandwidth $h$ varying from 0 to some large $M$. 
	The algorithm for calculating a Fr\'{e}chet kernel smooth detects portion of the smooth
	where the topology is static, which allows the algorithm to avoid an expensive combinatorial search step.
	The algorithm for calculating a family of smooths uses the solution from the previous smoothing level
	to give the optimization at the next smoothing level a good starting point.\\
	
	\noindent{\bf Algorithm 0: Minimization in a fixed topology} 
	\noindent{\bf Local search} within a fixed topology can be performed with non-linear optimization. 
	A generic function which finds a minimizer of $F$ within a fixed topology will be called
	\\{minFixedTopo()}. Details of optimization algorithms for minimizing within a fixed topology 
	come after an overview of the smoothing algorithms.
	
	\noindent{\bf Algorithm 1: kernel smoother} Let $a_1 < \ldots < a_{N-1}$ be equally spaced points along the real interval from $a_0$ to $a_N$. For simplicity in the exposition assume that $N$ is a power of two. Appendix has details for revising the algorithm when $N$ is not a power of two. 
	This algorithm calculates $\bar{T}_{w(a_i)}$ $i = 1,\ldots,n$ for a given forest $\F$, kernel $K$ and band width $h$.
	$\{\bar{T}_{w(a_j)}|a_0\leq \ldots \leq a_N\}$, which detects when $a_i$ and $a_j$ are close enough so that $T_{w(a_i)}$ and $T_{w(a_j)}$ have the same topology.   \\ \hspace{1pt}

	\noindent{\bf Algorithm 2: family of smooths}
	The second algorithm we introduce is for calculating the family of smooths with bandwidth $h$ varying from 0 to some large $M$. 
	\sean{
		\noindent{\bf Updating the objective function}
		The Fr\'{e}chet function $F$ is piecewise continuous function with a form determined
		by the algebraic representation of geodesic distances form the current search point $X$ to 
		the data points $T^1,\ldots,T^n$. Computing all of these geodesic distances 
		from scratch has order $O(np^4)$ using an $O(p^3)$ algorithm for the Minimum Weight Vertex Cover Problem.
		It is possible, when the search tree $X$ is only moved a small amount to $X'$, to find
		the algebraic forms for $X'$ much faster.
		
		Let $X^0$ and $X^1$ be fixed trees in the same orthant, normalized, and mapped into the squared tree space $\T_p^2$. Rescaling and mapping, which simplifies the analysis, can be reversed at the end. 
		Let $X^\lambda$ be a variable tree in the parameterized set $\{X^{\lambda} = \Gamma(\lambda; X^0, X^1) | 0 \leq \lambda \leq 1\}$. Since $X^0$ and $X^1$ are in the same orthant, their topology is fixed, and hence $X^\lambda$ has the same topology, and the length of each edge is $|e|_{X^\lambda}= (1-\lambda)|e|_{X^0}+\lambda |e|_{X^1}$. The change in the length of $e$ with respect to $\lambda$ is $d_e = |e|_{X^1}-|e|_{X^0}$. 
		Thus  $|e|_{X^\lambda}= |e|_{X^0}+\lambda d_e$.
		Let $\Gamma^{i,\lambda}$ be the geodesic from $X^\lambda$ to $T^i$ with supports $(\A^{i,\lambda},\B^{i,\lambda}) = (A^{i,\lambda}_1,B^{i,\lambda}_1),\ldots,(A^{i,\lambda}_{k^{i,\lambda}},B^{i,\lambda}_{k^{i,\lambda}}) $.
	}
}
\sean{
	A numerical study will verify that smoothing works as expected. The expectation
	is that smoothing will approximate a functional relationship even when 
	the observed functional values are perturbed by random noise.
	We study the effects of two types of noise processes: Wright-Fisher and Isotropic Noise.
	If the function values are perturbed with Wright-Fisher noise it is anticipated
	that the Fr\'{e}chet kernel smooth will give a biased estimate of the original values. 
	And on the other hand, if the function values are perturbed with Isotropic Noise it
	is anticipated that the Fr\'{e}chet kernel smooth will give unbiased estimates
	of the original values. 
	The Epanechnikov kernel is used in this study.
	The effect of window size will be studied by examining the family of smooths. 
	
	\vspace{-1.5in}\hspace{-1.5in}\includegraphics{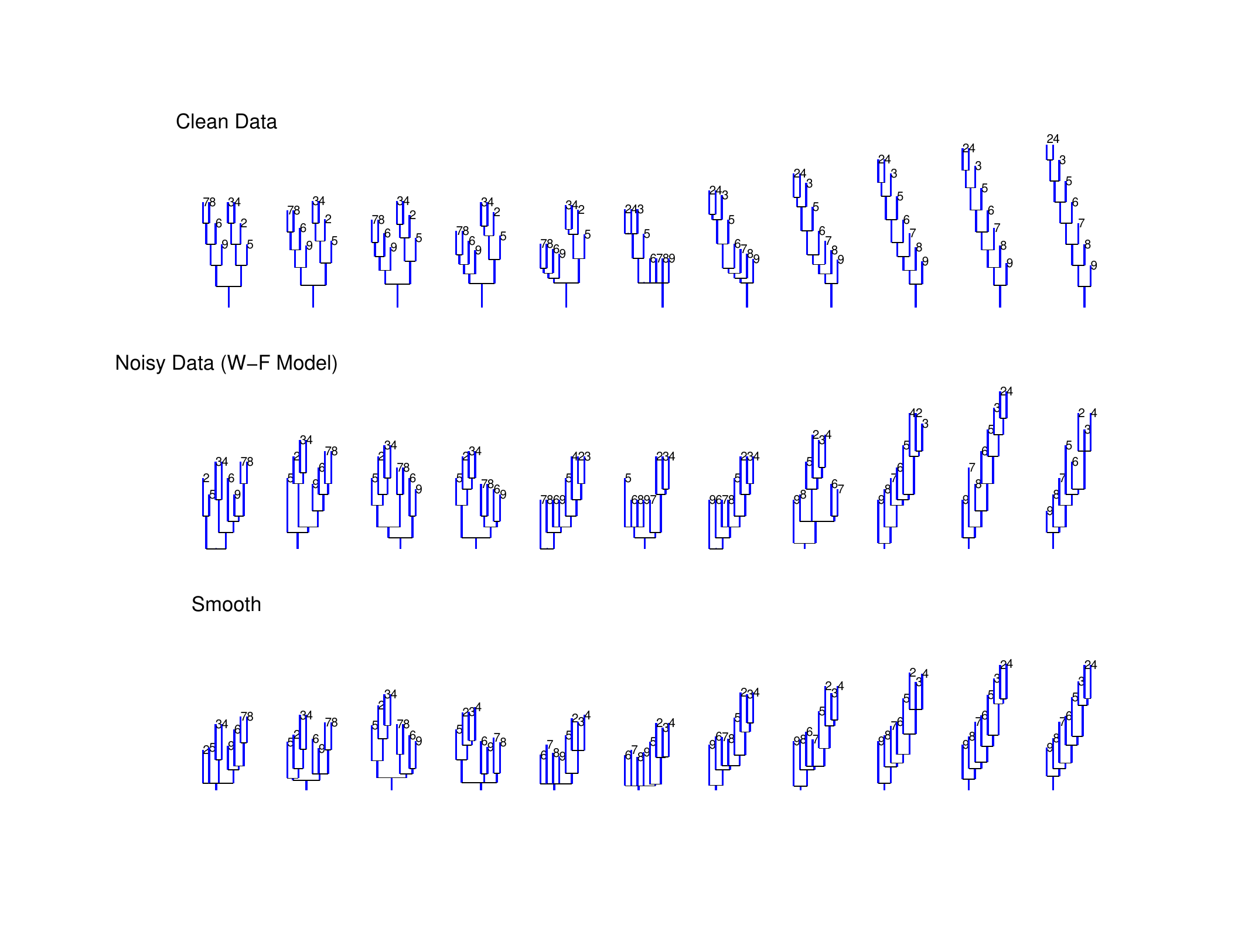}
	
	\subsection{Wright-Fisher}
	The Wright-Fisher model, from evolutionary biology, 
	models genetic inheritance, assuming a fixed history of species.
}

\section{Minimum Length Representative Sequence Algorithm}\label{sec:MinLengthRepSeqAlg}
In this section we describe an algorithm to find a minimum length representative sequence for a sequence of  trees $T^1,\ldots,T^n$. 

Let $R(i,1),\ldots,R(i,k^i)$ be a minimum length representative
sequence for $T^1,\ldots,T^i$. When $i=1$ the minimum length representative sequence is $R(1,1)=T^1$.

Let $S(i,n)$ be a representative topology for $(T^{i+1},\ldots,T^{n}$) if  a representative topology exists.
Let $k^0=0$.
Let $i^*$ be an non-negative integer which minimizes $k^{i}$ among all $i$ such that $S(i,n)$ exists.
\begin{thm}
	A minimum length representative sequence for $T^1,\ldots,T^n$ is
	\begin{align}
		R(i^*,1),\ldots,R(i^*,k^{i*}),S(i^*,n)
	\end{align}
\end{thm}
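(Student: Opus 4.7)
The plan is to establish the theorem in two steps: first verify the proposed sequence is a valid representative sequence, then argue optimality by a cut-last-block argument that leverages the definition of $i^*$.

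First I would verify that the proposed sequence $R(i^*,1),\ldots,R(i^*,k^{i^*}),S(i^*,n)$ is indeed a representative sequence for $T^1,\ldots,T^n$. By the inductive definition, $R(i^*,1),\ldots,R(i^*,k^{i^*})$ covers $T^1,\ldots,T^{i^*}$ as consecutive blocks, and by the definition of $S(i^*,n)$, the trees $T^{i^*+1},\ldots,T^n$ all admit $S(i^*,n)$ as a common representative topology. Concatenating these gives a representative sequence of length $k^{i^*}+1$. The base case $i^*=0$ corresponds to $S(0,n)$ being a single representative topology for the entire sequence, with $k^0 = 0$, and the endpoint $i^* = n-1$ is always feasible since $S(n-1,n) = T^n$ trivially exists, so the minimization defining $i^*$ is well-posed.

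Next, for optimality I would argue by contradiction. Suppose there exists a representative sequence $R'(1),\ldots,R'(m)$ for $T^1,\ldots,T^n$ with $m < k^{i^*}+1$. The final block $R'(m)$ is the representative topology for some suffix $T^{j+1},\ldots,T^n$ (where $0 \le j \le n-1$), so in particular $S(j,n)$ exists. Then $R'(1),\ldots,R'(m-1)$ is a valid representative sequence for $T^1,\ldots,T^j$, so by the definition of $k^j$ as the length of a minimum length representative sequence on this prefix, we have $k^j \le m-1$. Combining, $k^j + 1 \le m < k^{i^*}+1$, hence $k^j < k^{i^*}$. But this contradicts the definition of $i^*$ as a minimizer of $k^i$ over all $i$ for which $S(i,n)$ exists, since $j$ is also such an index.

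The argument relies inductively on $R(i^*,1),\ldots,R(i^*,k^{i^*})$ being a minimum length representative sequence for $T^1,\ldots,T^{i^*}$, so I would phrase the proof as induction on $n$, with the base case $n=1$ trivial (single tree, single representative topology). The main subtlety, and the step I would take most care over, is ensuring that the ``cut the last block'' operation is legitimate, i.e.\ that any representative sequence on a prefix has length at least the minimum $k^j$, which is exactly the inductive hypothesis; and that $S(j,n)$ being a representative topology for the suffix follows from $R'(m)$ being one. Both are immediate from the definitions, so the proof is essentially a clean dynamic-programming optimality argument, with no significant obstacle beyond careful bookkeeping of the indices.
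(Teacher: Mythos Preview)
Your proposal is correct and follows essentially the same approach as the paper's proof: assume a shorter representative sequence exists, cut off its last block to obtain a representative sequence of length $m-1$ for some prefix $T^1,\ldots,T^j$ with $S(j,n)$ existing, and derive a contradiction with the minimality of $k^{i^*}$. Your version is in fact more carefully written than the paper's, as you explicitly verify validity of the proposed sequence, note the well-posedness of the minimization, and make the inductive structure explicit.
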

\begin{proof}
	The proof will be by contradiction.
	Suppose there exists representative sequence $R^1,\ldots,R^l$ with length $l$ less than $k^{i*}+1$ for $T^1,\ldots,T^n$. Suppose that $R^l$ is representative for $T^{j+1},\ldots,T^n$ then $R^1,\ldots,R^{l-1}$ are a representative sequence for $T^1,\ldots,T^{j}$
	having length $l-1 < k^{i*}$. But this is a contradiction because $k^{i*}$ must be less than or equal to the length of any representative sequence for $T^1,\ldots,T^j$. 
\end{proof}


\clearpage
\phantomsection

{\def\chapter*#1{} 
	\begin{singlespace}
		\addcontentsline{toc}{chapter}{REFERENCES}
		\begin{center}
			\textbf{REFERENCES}
			\vspace{17pt}
		\end{center}
		
		\bibliographystyle{apalike}
		\bibliography{thesis,citations,library,trees}
	\end{singlespace}
}

\end{document}